\documentclass[
  aps,
  pra,
  reprint,
  superscriptaddress,
]{revtex4-2}

\usepackage[tbtags]{mathtools}
\usepackage{amsthm}
\usepackage{newtx}
\usepackage{bm}

\usepackage{derivative}
\usepackage{mleftright}

\usepackage{microtype}
\usepackage[dvipsnames]{xcolor}

\usepackage[
  colorlinks=true,
  linkcolor=blue,
  citecolor=blue,
  urlcolor=MidnightBlue,
]{hyperref}

\NewCommandCopy{\mc}{\mathcal}
\NewCommandCopy{\mb}{\mathbb}

\NewDocumentCommand{\id}{}{I}

\NewDocumentCommand{\C}{}{\mb C}
\NewDocumentCommand{\R}{}{\mb R}
\NewDocumentCommand{\Z}{}{\mb Z}

\NewDocumentCommand{\F}{}{\mc F}
\NewDocumentCommand{\G}{}{\mc G}
\NewDocumentCommand{\K}{}{\mc K}

\let\Re\relax

\DeclareMathOperator{\Pf}{Pf}
\DeclareMathOperator{\Re}{Re}
\DeclareMathOperator{\SR}{SR}
\DeclareMathOperator{\Tr}{Tr}
\DeclareMathOperator{\Var}{Var}
\DeclareMathOperator{\diag}{diag}
\DeclareMathOperator{\im}{im}
\DeclareMathOperator{\rank}{rank}
\DeclareMathOperator{\spn}{span}

\DeclarePairedDelimiter{\set}{\lbrace}{\rbrace}
\DeclarePairedDelimiter{\abs}{\lvert}{\rvert}
\DeclarePairedDelimiter{\norm}{\lVert}{\rVert}
\DeclarePairedDelimiter{\ceil}{\lceil}{\rceil}
\DeclarePairedDelimiter{\floor}{\lfloor}{\rfloor}
\DeclarePairedDelimiter{\evaluate}{.}{\rvert}
\DeclarePairedDelimiter{\bra}{\langle}{\rvert}
\DeclarePairedDelimiter{\ket}{\lvert}{\rangle}
\DeclarePairedDelimiterX{\braket}[2]{\langle}{\rangle}{#1\delimsize\vert#2}

\NewDocumentCommand{\chiG}{}{\chi_{\mathrm{G}}}
\NewDocumentCommand{\bchiG}{}{\underline{\chi}_{\mathrm{G}}}
\NewDocumentCommand{\achiG}{}{\chi_{\mathrm{G}}^{(\delta)}}
\NewDocumentCommand{\abchiG}{}{\underline{\chi}_{\mathrm{G}}^{(\delta)}}

\theoremstyle{plain}
\newtheorem{theorem}{Theorem}
\newtheorem{lemma}[theorem]{Lemma}
\newtheorem{corollary}[theorem]{Corollary}
\newtheorem{proposition}[theorem]{Proposition}

\theoremstyle{definition}
\newtheorem{definition}[theorem]{Definition}

\theoremstyle{remark}

\NewDocumentCommand{\wgr}{m}{\textcolor{MidnightBlue}{(wgr:\space#1)}}
\NewDocumentEnvironment{zwblock}{}{\color{BrickRed}zw:\space\ignorespaces}{\ignorespacesafterend}
\NewDocumentEnvironment{wfcblock}{}{\color{ForestGreen}wfc:\space\ignorespaces}{\ignorespacesafterend}
\NewDocumentEnvironment{wgrblock}{}{\color{MidnightBlue}wgr:\space\ignorespaces}{\ignorespacesafterend}

\RenewDocumentCommand{\wgr}{m}{}
\RenewDocumentEnvironment{wgrblock}{+b}{}{}

\begin{document}

\title{Robust exponential lower bounds for fermionic and bosonic Gaussian ranks}

\author{Fuchuan Wei}
\thanks{These authors contributed equally to this work.}
\affiliation{Yau Mathematical Sciences Center, Tsinghua University, Beijing 100084, China}

\author{Kong-Wing Wu}
\thanks{These authors contributed equally to this work.}
\affiliation{Qiuzhen College, Tsinghua University, Beijing 100084, China}

\author{Zhengwei Liu}
\email{liuzhengwei@mail.tsinghua.edu.cn}
\affiliation{Yau Mathematical Sciences Center, Tsinghua University, Beijing 100084, China}
\affiliation{Department of Mathematics, Tsinghua University, Beijing 100084, China}
\affiliation{Yanqi Lake Beijing Institute of Mathematical Sciences and Applications, Beijing 100407, China}

\author{Zi-Wen Liu}
\email{zwliu0@tsinghua.edu.cn}
\affiliation{Yau Mathematical Sciences Center, Tsinghua University, Beijing 100084, China}

\date{\today}

\begin{abstract}
  The power and limitations of classical simulation are central to understanding quantum computational advantages.
  A leading simulation paradigm is based on coherent decomposition into classically tractable free states where the  decomposition rank determines the simulation complexity.
  Proving strong lower bounds on this number is a notoriously difficult and mathematically rich problem, as exemplified by the qubit stabilizer rank problem.
  Here, we study the Gaussian version of this problem in both bosonic and fermionic systems and establish robust exponential lower bounds on Gaussian rank.
  In particular, we prove that for every pure non-Gaussian state on finitely many modes, with definite parity in the fermionic case, the approximate border Gaussian rank of its tensor powers grows at least exponentially at any fixed norm error below one.
  Our proofs combine reduction to four modes with Majorana spectral bounds for fermions, and Gaussian postselection with an entropy-based rank bound for bosons.
  As concrete examples, we derive explicit exponential lower bounds for the four-mode fermionic GHZ state and the bosonic single-photon state.
  The bosonic results require no assumption on the energy of the target state.
  Our results show that non-Gaussianity universally entails exponential Gaussian decomposition complexity, setting fundamental limitations on decomposition-based classical simulation of bosonic and fermionic systems.

\end{abstract}

\maketitle

\section{Introduction}
\label{sec:unified-framework}

Understanding and characterizing the advantage of quantum over classical computation is a central quest of quantum information science.
This motivates the extensive study of  the efficiency of classically simulating quantum systems and dynamics: the more efficient general classical simulation methods can be, the less room there is for quantum speedups.
A standard framework for classical simulation is based on decomposing quantum states or operations into combinations of classically tractable components whose properties can be computed efficiently and then combining their contributions to achieve the overall simulation~\cite{bravyi2015tradingclassicalquantumcomputational,bravyi2019simulationquantumcircuitslowrank,dias2025samplingcomplexitycoherentsuperpositions}. In this formalism, the free decomposition rank, namely the minimum number of free components needed in the decomposition, is a key parameter governing the  simulation cost.

A prominent example is the longstanding \emph{stabilizer rank} problem, which comes down to the question of how the minimum number of stabilizer states in a coherent decomposition of $T$-state tensor powers grows with the number of copies, which controls the cost of stabilizer-decomposition-based simulation of universal quantum circuits~\cite{bravyi2015tradingclassicalquantumcomputational,bravyi2019improvedclassicalsimulationquantum,bravyi2019simulationquantumcircuitslowrank}.
In particular, subexponential stabilizer rank growth, namely the existence of subexponential-size stabilizer decompositions, would enable subexponential-time classical simulation of universal quantum computation and thereby rules out any exponential quantum speedup.
However, despite its evident importance and sustained efforts over the past decade~\cite{bravyi2015tradingclassicalquantumcomputational,bravyi2019improvedclassicalsimulationquantum,bravyi2019simulationquantumcircuitslowrank,huang2018approximatestabilizerrankimproved,qassim2019cliffordrecompilationfasterclassical,kocia2022improvedstrongsimulationuniversal,huang2021feynmanpathtypesimulationusing,qassim2021improvedupperboundsstabilizer,labib2022stabilizerrankhigherorderfourier,peleg2022lowerboundsstabilizerrank,lovitz2022newtechniquesboundingstabilizer,kissinger2022classicalsimulationquantumcircuits,desilva2024basesoptimisingstabiliserdecompositions,mehraban2024quadraticlowerboundsapproximate,mehraban2024improvedboundstestinglow,arunachalam2025learningstabilizerstructurequantum,kjelstrøm2026efficientsimulationhighlevelquantum,gosset2026quantumstatepreparationoptimal,kalra2026stabilizerranksbarneswall,labib2026stabilizerrankboundsmagicstate}, 
proving good lower bounds has remained notoriously difficult, with the best bounds for $T$-state tensor powers remaining only nearly quadratic both for exact rank and at fixed approximation error~\cite{mehraban2024quadraticlowerboundsapproximate,gosset2026quantumstatepreparationoptimal}, leaving a substantial gap to the conjectured exponential lower bound.

For fermionic and bosonic systems, the corresponding classically tractable free states are Gaussian states.
Fermionic linear optics with Gaussian inputs and occupation-number measurements admits efficient classical simulation~\cite{valiant2002quantumcircuitssimulatedclassicallypolynomialtime,terhal2001classicalsimulationnoninteractingfermionquantum,bravyi2004lagrangianrepresentationfermioniclinear,jozsa2008matchgatesclassicalsimulationquantum}, as do bosonic Gaussian circuits with Gaussian inputs, homodyne or heterodyne detection, and adaptive feed-forward~\cite{bartlett2002efficientclassicalsimulationcontinuous}.
As with magic in the stabilizer setting, non-Gaussian resources extend these models to universal computation: pure non-Gaussian parity eigenstates enable universal matchgate computation, while cubic-phase resources supply the nonlinearity for continuous-variable universality~\cite{hebenstreit2019purefermionicnongaussianstates,lloyd1998quantumcomputationcontinuousvariables,gottesman2001encodingqubitoscillator}. This motivates the study of Gaussian rank and related measures of non-Gaussianity, which underlie decomposition-based simulation algorithms in both settings~\cite{cudby2025gaussiandecompositionmagicstates,dias2024classicalsimulationnongaussianfermionic,reardonsmith2024improvedsimulationquantumcircuits,dias2025classicalsimulationnongaussianbosonic,dias2026erratumclassicalsimulationnongaussianbosoniccircuits,hahn2026classicalsimulationquantumresource,dias2026optimalimprovedgatedecompositions}.

In this work, we resolve the exponential lower bound problem for Gaussian rank in both fermionic and bosonic systems, proving unconditional exponential lower bounds on the approximate Gaussian rank of tensor powers of the standard four-mode fermionic GHZ state and the bosonic single-photon state, and more generally of every fixed pure non-Gaussian state on finitely many modes.
Notably, the Gaussian setting entails subtleties and difficulties absent in the stabilizer setting.
In particular, unlike stabilizer states, Gaussian states form continuous families. 
A sequence of superpositions of $r$ Gaussian states can converge to a target that has no exact decomposition with $r$ terms: the Gaussian states may approach one another while their coefficients diverge and cancel.
We therefore consider the \emph{border Gaussian rank} given by the smallest $r$ for which such $r$-term superpositions approximate the target arbitrarily accurately.
Our bounds allow both these limits and any constant norm error below one, also without input energy assumptions in the bosonic case.

Also note that a recent work proves a superpolynomial coherent-state-rank bound for single-photon tensor powers in the vanishing error limit~\cite{cottier2026lowerboundscoherentstate}.
In comparison, our bounds allow the larger family of all pure Gaussian states including squeezed states, and hold at nonvanishing error, so the coherent-state-rank lower bounds do not imply bounds for the full Gaussian rank studied here. Bosonic simulation has also been studied through stellar structure~\cite{chabaud2021classicalsimulationgaussianquantum}, whose associated rank measures polynomial degree rather than the number of Gaussian summands.



\section{Gaussian ranks and the main result}
\label{sec:gaussian-rank-setting}


In this section, we formally define Gaussian ranks and its border and approximate variants, and summarize our main results.
All logarithms are natural, and entropies are measured in nats.

Throughout, pure fermionic states are assumed to have definite parity.
For a fermionic or bosonic state $\ket{\psi}$, the Gaussian rank $\chiG(\ket{\psi})$ is the smallest number of pure Gaussian states needed to express $\ket{\psi}$ as a linear combination.
In the fermionic case, Gaussian states of either parity are allowed.
The border Gaussian rank $\bchiG(\ket{\psi})$ is the smallest $r$ for which $\ket{\psi}$ can be approximated arbitrarily closely in norm by vectors of Gaussian rank at most $r$.
If no finite $r$ suffices, the corresponding rank is $+\infty$.


The single-photon state $\ket{1}$ illustrates the difference between exact and border Gaussian rank.
A  coherent state of a single bosonic mode is $\ket{\alpha} \coloneqq e^{-\abs{\alpha}^2 / 2} \sum_{m = 0}^{\infty} \frac{\alpha^m}{\sqrt{m!}} \ket{m}$, where $\alpha \in \C$ and $\ket{m}$ is the state with exactly $m$ photons.
These Gaussian states satisfy $(\ket{\varepsilon} - \ket{0}) / \varepsilon \longrightarrow \ket{1}$ in norm as real $\varepsilon \to 0$, so the single-photon state $\ket{1}$ has border Gaussian rank two, although its exact Gaussian rank is infinite~\cite{hahn2026classicalsimulationquantumresource}.
Appendix~\ref{appendix:gaussian-border-gap} proves this statement and also gives a fermionic separation for border and non-border rank.

For a normalized target and $0\le\delta<1$, define
\begin{equation}
  \abchiG(\ket{\psi})
  \coloneqq
  \min_{\norm{\ket{\psi}-\ket{\eta}}_2\le\delta}
       \bchiG(\ket{\eta}),
  \label{eq:unified-approx-border-rank}
\end{equation}
where the approximant $\ket{\eta}$ need not be normalized.
A lower bound on this quantity also bounds exact, border, and approximate Gaussian rank.
The restriction $\delta<1$ excludes the zero approximant.
Replacing $\bchiG$ with $\chiG$ defines the approximate Gaussian rank $\achiG$.
For $0<\delta<1$, these quantities satisfy
\begin{equation}
  \abchiG(\ket{\psi})
  \le\achiG(\ket{\psi})
  \le\bchiG(\ket{\psi})
  \le\chiG(\ket{\psi}).
  \label{eq:setting-rank-hierarchy}
\end{equation}
At $\delta=0$, the approximate ranks reduce to their corresponding exact or border versions.
Therefore, a lower bound on $\abchiG$ controls all four notions.
Full definitions and derivations are given in in Appendices~\ref{appendix:gaussian-rank-definitions} and~\ref{appendix:gaussian-border-gap}.

Non-Gaussianity alone forces exponential rank growth:

\begin{theorem}[Robust exponential Gaussian rank]
  \label{thm:unified-dichotomy}
  Let $\ket{\psi}$ be a fixed normalized pure non-Gaussian state on finitely many fermionic or bosonic modes.
  There exists $c_\psi>0$ such that, for every fixed $0\le\delta<1$,
  \begin{equation}
    \abchiG(\ket{\psi}^{\otimes k})\ge e^{c_\psi k}
    \label{eq:unified-exponential-rank}
  \end{equation}
  for all sufficiently large $k$.
\end{theorem}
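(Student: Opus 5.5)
The plan is to split into the fermionic and bosonic cases, as the abstract indicates. In each case I reduce to a fixed ``witness'' non-Gaussian state on a few modes, and then prove an exponential bound for its tensor powers. The bound must be robust under the three operations hidden in $\abchiG$: border limits, nonzero approximation error $\delta<1$, and Gaussian operations.

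The basic monotonicity facts come first.
\begin{enumerate}
\item Gaussian unitaries preserve Gaussian rank.
\item Gaussian postselection does not increase it, where postselection means projecting some modes onto a Gaussian state and renormalizing. This holds because projecting a pure Gaussian state onto a Gaussian state on a subsystem leaves an (unnormalized) Gaussian state.
\item Tensor products are submultiplicative.
\end{enumerate}
Each fact passes to border rank by continuity, since the operations are bounded linear maps. For the approximate version, a bounded postselection map $P$ sends an approximant $\eta$ with $\|\psi^{\otimes k}-\eta\|\le\delta$ to $P\eta$. So I need a postselection whose success amplitude on $\psi$ is not too small relative to $\delta$. I will handle this by choosing witnesses for which the target is mapped with amplitude bounded below on each copy, and by accepting that the per-copy error is tracked through an overlap bound rather than a norm bound.

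\emph{Fermionic case.} For a definite-parity non-Gaussian $\psi$, I first show that some Gaussian unitary followed by Gaussian postselection onto a fixed Gaussian state of the remaining modes maps $\psi$ to a four-mode state with nonzero GHZ-type component. This uses the known fact that non-Gaussian parity states fail a quadratic Majorana constraint $\Lambda\psi\otimes\psi\neq0$, where $\Lambda=\sum_j c_j\otimes c_j$, and that this failure localizes to four modes. The key step is then a Majorana spectral bound. For any pure Gaussian $g$ on $4k$ modes and the witness $\phi^{\otimes k}$, I aim to show $|\langle g|\phi^{\otimes k}\rangle|\le \lambda^k$ with $\lambda<1$. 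I would prove this via the Gaussian operator $\Lambda$: Gaussian states are $+$eigenvectors, and on $\phi^{\otimes k}$ the relevant quadratic form has spectrum bounded away from the Gaussian value, so it tensorizes.

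From there I need a rank bound in terms of overlaps. If $\eta=\sum_{i\le r}a_ig_i$ is $\delta$-close to $\phi^{\otimes k}$, I cannot bound the coefficients, because border limits let them diverge. I therefore use a Gram/dimension argument instead. The condition $\mathrm{Re}\langle\phi^{\otimes k}|\eta\rangle\ge 1-\delta$ forces $\phi^{\otimes k}$ to have projection at least $1-\delta$ onto $\mathrm{span}(g_i)$. This is a closed condition on $r$-dimensional subspaces, so it survives border limits: limits of $r$-term sums lie in limits of $r$-dimensional spans, and the Grassmannian is compact. The projection onto an $r$-dimensional space spanned by Gaussian states yields, through the fidelity bound, $r\ge(1-\delta)^2\lambda^{-2k}$ up to constants. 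One caveat: a limit of Gaussian spans is not spanned by Gaussians. So I replace the overlap bound by a bound on the operator norm of the projector onto $\phi^{\otimes k}$ restricted to any such limit subspace, derived from an averaged (twirled) quantity that is linear in the subspace projector and hence closed.

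\emph{Bosonic case.} I take Gaussian postselection to produce a single-photon-like witness from any non-Gaussian $\psi$. I then use an entropy bound: for Gaussian-spanned subspaces of dimension $r$, measuring photon number (or heterodyne) produces a distribution whose entropy exceeds that from Gaussians by at most $\log r$, while $\psi^{\otimes k}$ exhibits an entropy deficit linear in $k$. Without an energy assumption, I control the tails by postselecting onto a Gaussian that damps high photon numbers.

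\textbf{Main obstacle.} The hardest step is making the overlap/entropy bound closed under border limits with divergent coefficients, i.e.\ proving it for all limits of $r$-dimensional Gaussian spans rather than for the spans themselves.
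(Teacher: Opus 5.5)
The step that fails is the passage from an overlap bound to a rank bound. You want $\abs{\braket{g}{\Phi}}\le\lambda^k$ for every Gaussian $g$, where $\Phi=\ket{\phi}^{\otimes k}$, to force $r\ge c(1-\delta)^2\lambda^{-2k}$. But take $W=\spn\{g_1,\dots,g_r\}$ and $v_i=\braket{g_i}{\Phi}$. Then $\norm{\Pi_W\Phi}_2^2=v^\dagger G^{-1}v$, where $G$ is the Gram matrix of the $g_i$. The overlap bound controls only $v^\dagger v\le r\lambda^{2k}$. And $G^{-1}$ is unbounded exactly in the border regime you flagged, where summands become nearly parallel and their coefficients diverge and cancel.

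Even the constant-free inequality fails at $k=1$, $\delta=0$. Since $\ket{1}=\lim_{\varepsilon\to0}(\ket{\varepsilon}-\ket{0})/\varepsilon$, we have $\bchiG(\ket{1})=2$. Yet a Bargmann computation gives $\max_g\abs{\braket{g}{1}}^2=3\sqrt{3}/(4e)<1/2$. Nothing in your step uses Gaussianity. It would equally turn the exponentially decaying stabilizer fidelity of $T$-state powers into an exponential approximate stabilizer-rank bound, which the paper's introduction notes is open.

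Your repair via a twirled functional linear in $\Pi_W$ meets the same obstruction. $\Tr(X\Pi_W)$ is not controlled by the values $\bra{g_i}X\ket{g_i}$ on a non-orthogonal spanning set. The bosonic entropy step has the same defect. The outcome entropy of a superposition of $r$ Gaussian vectors with unbounded coefficients is not bounded by a Gaussian value plus $\log r$. There is also no uniform Gaussian baseline, since coherent and squeezed states have arbitrarily large photon-number entropy.

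The missing idea is a rank-type certificate instead of an overlap or an entropy. You need a quantity that is subadditive under sums, lower semicontinuous under norm limits (so border decompositions need no coefficient control), uniformly small on each single Gaussian vector, and robustly large on the target.

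For fermions the paper uses $\rank F_p(\ket{\eta})=\dim(\mc A_p\ket{\eta})$. This equals $C_{n,p}$ on every pure Gaussian state, because Gaussian unitaries conjugate $\mc A_p$ to itself. On $\ket{M}^{\otimes k}$ it has $D_{k,p}$ orthonormal columns (a weighted Gram spectrum for $\ket{M_t}$), and Eckart--Young converts the error $\delta$ into a rank deficit. For bosons the paper uses the Schmidt rank after a Gaussian coupling to an environment and a photon cutoff, which is at most $\binom{n+p}{p}$ per Gaussian term. This is compared via H\"older with the environment R\'enyi entropy of the target, so entropy enters only as a lower bound on the number of Schmidt coefficients needed.

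Your transfer step is also incomplete. Postselecting all $k$ copies leaves the target branch with norm $p^{k/2}$, while the approximation error in that branch need not shrink. A per-copy amplitude bound therefore does not protect the approximant. The paper instead uses a complete Gaussian instrument, whose branch errors have squared norms integrating to at most $\delta^2$. It then selects, by Markov and Chebyshev, a record with about $p_\psi k$ successes and rescaled error at most $\delta'<1$.
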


The positive rate $c_\psi$ is independent of $\delta$, although how large $k$ must be may depend on $\delta$.
Since Gaussian states satisfy $\abchiG (\ket{\psi}^{\otimes k}) = 1$, this theorem gives a dichotomy for the rank behavior of pure states.

Specifically, for the four-mode fermionic state $\ket{M} = (\ket{0000} + \ket{1111}) / \sqrt{2}$ and the bosonic single-photon state $\ket{1}$, Proposition~\ref{proposition:fermion-balanced-main} and Corollary~\ref{thm:boson-single-photon-main} give exponential lower bounds with bases $343 / 243$ and $11 / 10$, respectively.
Table~\ref{tab:canonical-resource-scaling} presents finite-copy lower bounds for exact border rank, obtained by optimizing the fermionic estimate at each copy number and applying a stronger counting bound for the bosonic state.


\begin{table}[t]
  \begin{ruledtabular}
    \begin{tabular}{rrr}
      $k$ & Fermion: $\ket{M}^{\otimes k}$ & Boson: $\ket{1}^{\otimes k}$ \\
      \colrule
      $1$  & $2$ & $2$ \\
      $2$  & $3$ & $2$ \\
      $3$  & $4$ & $3$ \\
      $4$  & $6$ & $3$ \\
      $5$  & $8$ & $4$ \\
      $10$ & $42$ & $11$ \\
      $20$ & $1{,}298$ & $98$ \\
      $50$ & $40{,}267{,}004$ & $78{,}435$
    \end{tabular}
  \end{ruledtabular}
  \caption{
    Finite-copy lower bounds on the border Gaussian rank of tensor powers of the four-mode fermionic state $\ket{M} = (\ket{0000} + \ket{1111}) / \sqrt{2}$ and the bosonic single-photon state $\ket{1}$.
    The fermionic bounds follow from Eq.~\eqref{eq:short-certificate-main}, optimized over $p$ at $\delta = 0$.
    The bosonic bounds follow from Eq.~\eqref{equation:single-photon-border-count} at $p = k$.
    All values are rounded up to integers.
    Both border Gaussian ranks are upper bounded by $2^k$.
  }
  \label{tab:canonical-resource-scaling}
\end{table}


We next present quantitative bounds and the mechanisms behind them for fermions in Sec.~\ref{sec:fermionic-results} and bosons in Sec.~\ref{sec:bosonic-results}.

\section{Fermionic quantitative results}
\label{sec:fermionic-results}


We now ask how efficiently Gaussian superpositions can represent many copies of a fermionic resource state. Even when the Gaussian components may correlate all copies, every non-Gaussian target requires exponentially many terms.

\begin{theorem}[Exponential fermionic Gaussian rank bound]
  \label{thm:fermion-universal-main}
  Let $\ket{\psi}$ be a normalized pure non-Gaussian state on finitely many fermionic modes.
  There exist $p_\psi, t \in (0, 1]$, depending only on $\ket{\psi}$, such that for every fixed $0 \le \delta < 1$ and $0 < c < p_\psi \log B_t$,
  \begin{equation}
    \abchiG (\ket{\psi}^{\otimes k}) \ge e^{c k}
  \end{equation}
  for all sufficiently large $k$.
  Here $B_t > 1$ is defined in Theorem~\ref{thm:weighted-spectral-bound}.
\end{theorem}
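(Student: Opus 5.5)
We will prove Theorem~\ref{thm:fermion-universal-main} by reducing to a four-mode spectral estimate. Concretely, we first reduce the general non-Gaussian $\ket{\psi}$ to a four-mode resource. Then we bound, via Majorana spectral bounds, the overlap of any rank-$r$ Gaussian superposition with many copies of that resource. Finally, we convert the overlap bound into a rank lower bound.

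The first step is a reduction lemma. For a pure non-Gaussian parity eigenstate on $n$ modes, there is a Gaussian (free) operation on $\ket{\psi}$ that succeeds with a probability $p_\psi>0$ independent of $k$. Concretely, this is a Gaussian unitary followed by projection of all but four modes onto a Gaussian state. On success it outputs a four-mode state of the form $\sqrt{1-t}\,\ket{0000}+\sqrt{t}\,\ket{1111}$ up to a Gaussian unitary, with $t\in(0,1]$ a non-Gaussianity parameter. The tool here is the known fact that a parity eigenstate is Gaussian iff its reductions by Gaussian projections keep correlation matrix purity, i.e.\ $MM^T=-I$. A non-Gaussian state therefore has some four-mode Gaussian slice that is non-Gaussian, and every even four-mode state is Gaussian-unitarily equivalent to such a form.

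Key is that Gaussian postselection maps Gaussian states to (unnormalized) Gaussian states. It therefore does not increase border Gaussian rank, and by continuity it also does not increase $\abchiG$ after renormalizing the error. Applying it independently to each of the $k$ copies, the success pattern is binomial, so with overwhelming probability at least $(p_\psi-\epsilon)k$ copies succeed. Fixing one success pattern of that size gives a target $\ket{\phi_t}^{\otimes m}$ with $m\approx p_\psi k$. The approximation error must be tracked carefully: we average over patterns, weighting by their probability, so that some pattern retains fidelity bounded away from zero. This yields
\[
\abchiG(\ket{\psi}^{\otimes k})\ \ge\ \underline{\chi}_{\mathrm G}^{(\delta')}(\ket{\phi_t}^{\otimes m})
\]
with $\delta'<1$ depending only on $\delta$.

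Next, Theorem~\ref{thm:weighted-spectral-bound} supplies, for the four-mode state, the bound
\[
\max_{\text{Gaussian }\ket{g}}\abs{\braket{g}{\phi_t^{\otimes m}}}^2\le B_t^{-m}.
\]
We expect its proof to rest on a weighted Majorana operator $W$, built from the degree-four Majorana monomial and quadratic terms, whose expectation in Gaussian states is controlled by Wick's theorem and which is sharply peaked at $\ket{\phi_t}$. The bound is then tensorized, since the Gaussian expectation of $W^{\otimes m}$ factorizes appropriately. We then apply the standard overlap-to-rank conversion: if $\ket{\eta}=\sum_{i\le r}c_i\ket{g_i}$ with $\norm{\ket{\eta}-\ket{\phi}}\le\delta'$, then $(1-\delta')\le\abs{\braket{\phi}{\eta}}$. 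The obstacle is that $\abs{c_i}$ may diverge in border limits. To avoid this, we use a projection argument: the component of the target in $\spn\{\ket{g_i}\}$ has squared norm at least $(1-\delta')^2$. Writing a norm bound in terms of the Gram matrix fails, so instead we bound the overlap of the target with any $r$-dimensional Gaussian-spanned subspace by $r\,B_t^{-m}$ through the spectral operator. This is closed under limits, hence valid for border rank. It gives $r\ge(1-\delta')^2B_t^{m}$, i.e.\ $r\ge e^{ck}$ for any $c<p_\psi\log B_t$ and large $k$.

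The main obstacle is this subspace bound robust to border limits. A pointwise overlap bound does not suffice, because near-parallel Gaussians with cancelling coefficients can span directions absent from individual overlaps. We would first try to get it from an operator inequality: establish $P_{\mathrm{span}}\preceq$ (something whose expectation on $\phi_t^{\otimes m}$ is at most $rB_t^{-m}$), using that $W$ is a positive operator with Gaussian expectation values at most $1$ and target expectation $B_t^m$.
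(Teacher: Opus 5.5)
Your reduction and transfer steps follow the paper's route in outline. The paper also uses a Gaussian unitary plus vacuum postselection onto a four-mode normal form, then a binomial success count together with a probability-weighted error budget to pick one record with $m_k/k\to p_\psi$ and rescaled error at most $\delta'$. Two smaller issues sit in the reduction. First, it is only asserted: the paper obtains it by maximizing the vacuum overlap, which kills all two-particle amplitudes, and then postselecting onto a minimal nonempty support $S$ with $\abs{S}\ge 4$. Second, your $\sqrt{1-t}\ket{0000}+\sqrt{t}\ket{1111}$ is Gaussian at $t=1$, whereas the $t$ in $B_t$ is the amplitude ratio in $\ket{M_t}\propto\ket{0000}+t\ket{1111}$.

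The genuine gap is the four-mode step. Theorem~\ref{thm:weighted-spectral-bound} is not an overlap bound. It states directly that $\abchiG(\ket{M_t}^{\otimes k})\ge(1-\delta^2)f_t(k)B_t^k$, and invoked that way it would close your argument exactly as in the paper. As written, however, your proof goes through $\max_g\abs{\braket{g}{\phi_t^{\otimes m}}}^2\le B_t^{-m}$. It then needs $\norm{P_{\mathcal S}\ket{\phi}}_2^2\le rB_t^{-m}$ for the target $\ket{\phi}=\ket{\phi_t}^{\otimes m}$ and every span $\mathcal S$ of $r$ Gaussian vectors. You leave that inequality open, and the mechanism you propose for it cannot work.

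Suppose an argument uses only $W\ge0$, $\bra{g}W\ket{g}\le1$ on normalized Gaussian states, and a large target expectation. It then applies equally to $W=\ket{\phi}\bra{\phi}/\Lambda$ with $\Lambda=\max_g\abs{\braket{g}{\phi}}^2$, and would yield $\norm{P_{\mathcal S}\ket{\phi}}_2^2\le r\Lambda$. This already fails for $\ket{\Psi_6}$ of Proposition~\ref{prop:six-mode-rank-gap}. That state lies within any $\epsilon>0$ of a two-dimensional Gaussian span, yet a short BCS computation gives $\Lambda=4/9$, so $r\Lambda=8/9<1$. Small overlap with free states bounds extent-type quantities, not rank, precisely because nearly parallel Gaussian vectors with divergent coefficients can span non-Gaussian directions.

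What is missing is a certificate that is linear in the state and counts rank: the matrix $F_p(\ket{\phi})$ with columns $\gamma_J\ket{\phi}$, $\abs{J}=p$.
\begin{itemize}
\item Gaussian unitaries preserve $\mathcal A_p=\spn\{\gamma_J:\abs{J}=p\}$, so $\rank F_p(\ket{G})=C_{n,p}$ for every pure Gaussian state (Lemma~\ref{lem:pure-rho-app}).
\item Subadditivity and lower semicontinuity of rank then give $\rank F_p(\ket{\eta})\le rC_{n,p}$ whenever $\bchiG(\ket{\eta})\le r$. Border limits are therefore handled automatically.
\item On $\ket{M_t}^{\otimes k}$, sixteen local monomials give columns whose per-block Gram matrix is a direct sum of eight $2\times2$ blocks with unit diagonal and off-diagonal entry $\sigma_t$.
\item Unitarity of each $\gamma_J$ together with Eckart--Young (Lemma~\ref{lem:eckart-young-app}) forces $rC_{4k,p}$ eigenvalues of the normalized Gram matrix to carry weight at least $1-\delta^2$. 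A weighted type selection then yields the base $B_t$.
\end{itemize}
In subspace terms, the paper does prove a projection bound. But it is for many target vectors $\gamma_J\ket{\phi}$ against the enlarged space $\mathcal A_p\mathcal S$, of dimension at most $rC_{n,p}$, not for the single vector $\ket{\phi}$ against $\mathcal S$.
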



To prove this theorem, we first reduce the target state $\ket{\psi}$ to four modes, then establish a lower bound on Gaussian rank there, and transfer the bound back with controlled approximation error.

Four modes are the smallest system supporting a pure non-Gaussian fermionic state of definite parity.
To reduce the target state to four modes, we apply a Gaussian unitary, measure all but four modes in the occupation (computational) basis, and postselect the vacuum outcome.
With probability $p_\psi > 0$, the output is $\ket{M_t} \coloneqq (\ket{0000} + t \ket{1111}) / \sqrt{1 + t^2}$, with $t \in (0, 1]$, up to an overall phase and a Majorana factor for odd target states, both of which preserve Gaussian rank and norm.
Proposition~\ref{prop:postselected-four-mode-reduction} gives the full reduction.

We first consider the case $t = 1$, corresponding to the state
\begin{equation}
  \ket{M} \coloneqq \frac{\ket{0000} + \ket{1111}}{\sqrt2}.
  \label{eq:M-main}
\end{equation}

\begin{proposition}
  \label{proposition:fermion-balanced-main}
  For every $0 \le \delta < 1$, there exists a constant $c_\delta > 0$ such that, for all $k \ge 1$,
  \begin{equation}
    \abchiG (\ket{M}^{\otimes k})
    \ge c_\delta \mleft(\frac{343}{243}\mright)^k.
  \end{equation}
\end{proposition}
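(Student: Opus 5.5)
The plan is to specialize the general four-mode machinery to $t=1$ and then optimize the free parameter $p$ in the certificate Eq.~\eqref{eq:short-certificate-main}. I would attach to $\ket{M}^{\otimes k}$ a \emph{Majorana spectral witness} $W_p$. This is a positive operator built from quartic Majorana monomials $\gamma_a\gamma_b\gamma_c\gamma_d$ acting on $p$ of the $k$ four-mode blocks at a time, averaged over the choice of blocks. The one-block monomials stabilize $\ket{M}$, since $\ket{M}$ is a $+1$ eigenvector of the appropriate products $\gamma_1\gamma_3\gamma_5\gamma_7$, etc. Hence $\bra{M}^{\otimes k}W_p\ket{M}^{\otimes k}$ is large, normalized to $1$.

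The input I would take from Theorem~\ref{thm:weighted-spectral-bound} at $t=1$ is a uniform bound of the form
\begin{equation*}
  \sup_{\ket{g}\ \text{Gaussian},\ \norm{\ket{g}}=1}\bra{g}W_p\ket{g}\le B_1^{-f(p,k)}.
\end{equation*}
The bound on Gaussian states comes from their Wick structure: every quartic expectation is a Pfaffian of the covariance matrix $\Gamma$ with $\Gamma^T\Gamma\le \id$. Maximizing such Pfaffians over $\Gamma$ is the spectral problem that produces $B_t$.

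Next I would convert this into a rank bound. Let $\ket{\eta}=\sum_{i=1}^r c_i\ket{g_i}$, and let $V=\spn\{\ket{g_i}\}$. The quantity I would control is $\lambda_{\max}(P_VW_pP_V)$, where $P_V$ is the orthogonal projector onto $V$. The certificate Eq.~\eqref{eq:short-certificate-main} should state that, for an $r$-dimensional subspace spanned by Gaussian vectors, this top eigenvalue is at most $r$ times the Gaussian bound (possibly up to a $p$-dependent power from a H\"older or moment step). Crucially, this bound depends only on $V$ and not on the coefficients $c_i$.

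This coefficient-independence is what makes the border version work. Gaussian states of a fixed mode number form a compact set, so limits of $r$-dimensional Gaussian-spanned subspaces in the Grassmannian are again subspaces of dimension at most $r$. The bound on $\lambda_{\max}$ passes to these limits by continuity. Hence it holds for any $\ket{\eta}$ with $\bchiG(\ket{\eta})\le r$, even if the coefficients diverge along the approximating sequence.

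Robustness in $\delta$ follows next. If $\norm{\ket{M}^{\otimes k}-\ket{\eta}}\le\delta$ and $\ket{\eta}\in V$, then $\norm{P_V\ket{M}^{\otimes k}}\ge 1-\delta$. This should give
\begin{equation*}
  \bra{M}^{\otimes k}P_VW_pP_V\ket{M}^{\otimes k}\gtrsim (1-\delta)^{2}-O(\delta\norm{W_p}),
\end{equation*}
so $r\ge c_\delta\,B_1^{f(p,k)}$. Here $c_\delta$ is a power of $(1-\delta)$ multiplied by $p$-dependent constants.

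Finally I would choose $p$ proportional to $k$ (or fixed and optimized) so that $B_1^{f(p,k)}\ge(343/243)^k$. The numbers suggest the optimum is a rational point such as a weight ratio of $1/3$ versus $2/3$, since $343/243=7^3/3^5$; I would verify that the one-parameter optimization lands exactly there.

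The main obstacle I expect is the step from single Gaussian states to Gaussian-spanned subspaces. A general vector of $V$ is not Gaussian, and the Gram matrix of the $\ket{g_i}$ can be arbitrarily ill-conditioned, so trace-type arguments fail. My first attempt would use the $p$-block structure of $W_p$: write $\lambda_{\max}(P_VW_pP_V)\le\Tr\bigl[(P_VW_pP_V)^{p}\bigr]^{1/p}$ and expand the trace in the Gaussian spanning set, so that each term is a Wick-computable cross matrix element $\bra{g_i}W_p\ket{g_j}$. The Pfaffian bound would then need to be extended to these off-diagonal elements, which are Gaussian-to-Gaussian overlaps. I would expect Theorem~\ref{thm:weighted-spectral-bound} to be stated in exactly the weighted form needed for this.
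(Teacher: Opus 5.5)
\textbf{There is a genuine gap in this proposal.} It sits in the step you flagged yourself, and it is not a technicality. You need the inequality $\lambda_{\max}(P_VWP_V)\le r\,\sup_g\bra{g}W\ket{g}$ for $V$ spanned by $r$ Gaussian vectors. This inequality is false, even for exact spans.

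The paper's own example shows this. In the proof of Proposition~\ref{prop:six-mode-rank-gap}, the exact two-term span $V_\epsilon=\spn\{\bigotimes_{b=1}^3(\ket{00}+\epsilon\ket{11}),\ket{0^6}\}$ contains vectors converging to $\ket{\Psi_6}$. So with $W=\ket{\Psi_6}\bra{\Psi_6}$, the top eigenvalue $\norm{P_{V_\epsilon}\ket{\Psi_6}}_2^2$ tends to one.

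On the other hand, every Gaussian state has fidelity at most $4/9$ with $\ket{\Psi_6}$. To see this, write a Gaussian with nonzero vacuum amplitude in BCS form with pairing amplitudes $\lambda_1,\lambda_2,\lambda_3\ge 0$. Its fidelity with $\ket{\Psi_6}$ is at most $(\sum_k\lambda_k)^2/[3\prod_k(1+\lambda_k^2)]\le 4/9$, and density extends this to all Gaussians. Your inequality would therefore give $1\le 8/9$.

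The mechanism behind the failure is general:
\begin{itemize}
\item Nearly parallel Gaussians span tangent vectors $X\ket{g}$ with $X\in\mc{A}_2$. These are non-Gaussian, and their witness values are not controlled by Gaussian ones.
\item Expectation-value witnesses naturally bound the $\ell_1$-norm of decomposition coefficients, an extent-type quantity, not the number of terms.
\item Your expansion of $\Tr[(P_VW_pP_V)^p]$ reintroduces the coefficients through the inverse Gram matrix inside $P_V$. That matrix degenerates exactly in border limits.
\end{itemize}
So the Grassmannian-compactness step has no valid inequality to carry to the limit.

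The inputs you attribute to the paper are also not there. Theorem~\ref{thm:weighted-spectral-bound} is not a Pfaffian bound on Gaussian expectation values, and Eq.~\eqref{eq:short-certificate-main} is a ratio of column counts.

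\textbf{The missing idea is to use matrix rank instead of an expectation value.} Rank is subadditive under linear combinations with arbitrary coefficients.
\begin{itemize}
\item The paper uses the linear map $\ket{\psi}\mapsto F_p(\ket{\psi})$, whose columns are $\gamma_J\ket{\psi}$ for $\abs{J}=p$.
\item Invariance of $\mc{A}_p$ under Gaussian conjugation gives $\rank F_p(\ket{G})=C_{n,p}$ for every Gaussian $\ket{G}$.
\item Subadditivity and lower semicontinuity of rank then give $\rank F_p(\ket{\eta})\le rC_{n,p}$ whenever $\bchiG(\ket{\eta})\le r$. No control of the coefficients is needed.
\item On the target, sixteen local monomials with degree enumerator $P(z)=1+8z+7z^2$ give $D_{k,p}$ orthonormal columns of $F_p(\ket{M}^{\otimes k})$.
\item Majorana monomials are unitary, so the Frobenius error is at most $D_{k,p}\delta^2$. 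Eckart--Young then yields $\abchiG(\ket{M}^{\otimes k})\ge(1-\delta^2)D_{k,p}/C_{4k,p}$.
\end{itemize}

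Finally, the statement requires a constant uniform in all $k\ge 1$, so subexponential losses are not allowed. At $p=\floor{8k/9}$, the saddle-point and Stirling estimates show that the $1/\sqrt{k}$ prefactors of $D_{k,p}$ and $C_{4k,p}$ cancel. This gives exactly $\Theta\bigl((343/243)^k\bigr)$, with $343/243=P(2/7)/(9/7)^4$. The base does not come from a weight ratio of $1/3$ versus $2/3$, as you guessed.
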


We now sketch the argument for exact rank.
For an $n$-mode vector $\ket{\psi}$ and an integer $0 \le p \le n$, let $F_p (\ket{\psi})$ be the matrix whose columns are $\gamma_{j_1} \cdots \gamma_{j_p} \ket{\psi}$ for $1 \le j_1 < \cdots < j_p \le 2 n$.
By linearity of $F_p$ and subadditivity of matrix rank, any decomposition into $r$ Gaussian states $\ket{G_i}$ satisfies
\begin{equation}
  \rank F_p (\ket{\psi}) \le \sum_{i = 1}^r \rank F_p (\ket{G_i}) = r C_{n, p},
\end{equation}
where $C_{n, p}$ is the common  rank of $F_p$ on pure Gaussian states.
For the target $\ket{M}^{\otimes k}$, a suitable local family of Majorana monomials produces orthogonal states, whose tensor products give $D_{k, p}$ orthonormal columns of $F_p (\ket{M}^{\otimes k})$.
Comparing the two matrix ranks gives, for integers $k \ge 1$ and $0 \le p \le 2 k$,
\begin{equation}
  \chiG (\ket{M}^{\otimes k}) \ge \frac{\rank F_p (\ket{M}^{\otimes k})}{C_{4 k, p}} \ge \frac{D_{k, p}}{C_{4 k, p}}.
\end{equation}
Taking $p = \floor{8 k / 9}$ makes this ratio grow as $(343 / 243)^k$ up to positive constant factors.

The above arguments extends to border Gaussian rank, since the matrix rank bound persists under norm limits.
For $0 \le \delta < 1$, orthogonality of the selected columns and norm preservation allow us to apply matrix anlysis to control the error.
Appendix~\ref{appendix:formal-main} gives the construction and counting estimates.

For $0 < t < 1$, the selected excitations are no longer orthogonal, so we use the spectrum of a weighted Gram matrix to control approximation error.
This yields an asymptotic exponential lower bound on approximate border Gaussian rank for every fixed $0 < t \le 1$ and error below one (Theorem~\ref{thm:weighted-spectral-bound}).
The base $B_t > 1$ is continuous, with $B_1 = 343 / 243$ and $B_t \to 1$ as $t \to 0$, where the state approaches the Gaussian vacuum.

Combining the above procedures establishes Theorem~\ref{thm:fermion-universal-main}.
The full argument, including control of the approximation error under Gaussian postselection, is given in Appendix~\ref{appendix:theorem-two-proof}.

\section{Bosonic quantitative results}
\label{sec:bosonic-results}

For bosons, the infinite-dimensional Fock space requires control of photon-number tails as well as rank.
Nevertheless, every fixed pure non-Gaussian target state on finitely many modes has a positive exponential lower bound, without any assumption on its energy.

\begin{theorem}[Exponential bosonic Gaussian rank bound]
  \label{thm:boson-universal-main}
  Let $\ket{\psi}$ be a normalized non-Gaussian pure state on $n$ bosonic modes, with $n \in \Z_{> 0}$.
  There exist $p_\psi \in (0, 1]$ and $B_\psi > 1$, such that for every fixed $0 \le \delta < 1$ and $0 < c < p_\psi \log B_\psi$,
  \begin{equation}
    \abchiG (\ket{\psi}^{\otimes k}) \ge e^{c k}
    \label{eq:main-quantitative}
  \end{equation}
  for all sufficiently large $k$.
\end{theorem}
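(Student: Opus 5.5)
The plan mirrors the fermionic argument: first reduce to a canonical non-Gaussian resource by Gaussian postselection, then bound the rank of many copies of that resource with a counting argument. The abstract says the counting is entropy-based.

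\textbf{Step 1: Gaussian postselection.} Given a non-Gaussian pure state $\ket{\psi}$ on $n$ modes, I would find a Gaussian unitary $U$ and a Gaussian postselection (projecting $n-1$ modes onto a coherent or squeezed vacuum) that leaves a single-mode non-Gaussian state $\ket{\phi}$ with probability $p_\psi>0$. The natural route is through the stellar (Husimi/Bargmann) representation. The Bargmann function of $\ket{\psi}$ is the exponential of a quadratic form times an entire factor $F$. Non-Gaussianity means $F$ is not the exponential of a polynomial of degree at most two. A generic line restriction, implemented by a passive unitary followed by projection onto coherent states, preserves this, and the restricted single-mode function is again non-Gaussian. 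Gaussian postselection maps a decomposition into $r$ Gaussians to one into $r$ Gaussians and preserves norm limits, so border rank is monotone. I would then apply this to the $k$ copies independently. By a Chernoff bound, about $p_\psi k$ copies succeed. Averaging the error over outcomes, some outcome pattern with at least $(p_\psi-\epsilon)k$ successes keeps the postselected, renormalized error $\delta'<1$ bounded away from one. This is why the rate carries the factor $p_\psi$, as in the fermionic case.

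\textbf{Step 2: lower bound for $\ket{\phi}^{\otimes m}$.} For the single-mode state, I would use a rank-plus-entropy bound. Let $\ket{\eta}=\sum_{i\le r}c_i\ket{G_i}$ with $\norm{\ket{\eta}-\ket{\phi}^{\otimes m}}\le\delta'$. For each $p$, consider the linear map $F_p$ that sends a vector to its partial derivatives of order $p$ in the Bargmann representation, that is, the span of $\{a_{j_1}\cdots a_{j_p}\ket{\eta}\}$ modulo lower-order terms. Equivalently, truncate to the subspace of degree-$p$ components after contracting with Gaussian dual vectors. For a Gaussian state the Bargmann function is $e^{\text{quadratic}}$, and each derivative is a polynomial of degree at most $p$ times that same exponential. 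Its relevant derivative space therefore has dimension controlled by $\binom{m+p}{p}$ restricted in a way that lets one choose $p$ to beat the target. For $\ket{\phi}^{\otimes m}$, the derivatives generate a space of dimension at least about $(\text{local dim})^{m}$-type counts. Dividing gives the rank lower bound $B^m$.

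To handle infinite dimensions without an energy assumption, I would replace hard rank by an entropic effective rank. Use the Rényi or von Neumann entropy of a normalized Gram matrix of the selected columns. Subadditivity then gives $e^{S}\le r\cdot C$. Continuity of entropy (Fannes-type, with error scaled by $\delta'<1$) shows that $\ket{\eta}$ keeps entropy at least $(1-\delta'')S_{\text{target}}$ minus lower-order terms, and this survives norm limits. Photon-number tails are handled by working with coherent-state (Husimi) sampling of the Gram matrix rather than with Fock truncation, which needs no energy bound.

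\textbf{Main obstacle.} I expect the main obstacle to be Step 2's robust estimate: making the Gaussian-side dimension count uniform over all squeezed Gaussians, which form a noncompact family, while keeping the approximation-error loss subexponential. I would first try to prove it for $\ket{1}^{\otimes m}$, giving base $11/10$. There the target columns are orthonormal Fock-type vectors, so the argument goes through as in Proposition~\ref{proposition:fermion-balanced-main}. I would then pass to general $\ket{\phi}$ by a further Gaussian postselection onto a state close to a photon-added Gaussian.
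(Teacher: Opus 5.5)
There is a genuine gap at the heart of Step~2: the counting you describe cannot beat the Gaussian cap, even for the single photon. The Bargmann function of $\ket{1}^{\otimes m}$ is $z_1\cdots z_m$. Its order-$p$ derivatives span exactly the $\binom{m}{p}$ multilinear monomials of degree $m-p$. A single squeezed vacuum $\exp(\tfrac12 z^{\mathsf T}Az)$ with $A$ invertible already has $\binom{m+p-1}{p}\ge\binom{m}{p}$ linearly independent order-$p$ derivatives, because their leading terms $(Az)^\alpha$ are independent. So the ratio never exceeds one. Summing over all orders $\le p$ does not help either, since $\binom{m+p}{p}=\sum_j\binom{m}{j}\binom{p}{p-j}\ge\sum_{j\le p}\binom{m}{j}$. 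Bosons have no exclusion principle, so the Gaussian cap is not small, and your claim that $\ket{1}^{\otimes m}$ ``goes through as in Proposition~\ref{proposition:fermion-balanced-main}'' is false. Weighting the derivatives does not rescue this. Attenuated annihilation operators are exactly what a passive beam-splitter coupling to an environment produces. Under any passive coupling the environment is effectively an $n$-mode state, so its entropy is at most the thermal value $d_n(\mu)$ at its mean photon number $\mu$. That is precisely the growth rate of the cap $\binom{nk+p}{p}$ at $p\approx\mu k$. The missing idea is an \emph{active} coupling. The paper uses the isometry $S_rB_\theta$, which includes a two-mode squeezer and so adds creation-type directions. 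Lemma~\ref{lem:photon-cap} still bounds the cut-off output of every Gaussian term by Schmidt rank $\binom{n+p}{p}$. Meanwhile, for small $t$, the environment R\'enyi entropy of states near $\ket{m}$ strictly exceeds $d_n(\mu)$ (Lemma~\ref{lem:number-state-entropy-gap}).

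Your record-selection argument in Step~1 (Markov on the rescaled error plus concentration of the success count) matches the paper's, but two other ingredients fail.

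First, robustness. A Fannes-type bound costs a $\delta$-dependent fraction of an entropy that exceeds the cap's rate only slightly. For the single photon the gap is about $\log(11/10)$, against a R\'enyi entropy above $20\log(20/19)\approx1$. Your rate therefore collapses for $\delta$ near one, whereas the theorem needs $B_\psi$ independent of $\delta$. The paper never uses entropy continuity. The cut-off image $\Pi^E_{\le p}V\ket{\eta}$ of an approximant has Schmidt rank $R\le\bchiG(\ket{\eta})\binom{n+p}{p}$. Hence the $R$ largest environment eigenvalues of the target carry weight at least $1-\delta^2-\varepsilon$. H\"older at order $s>1$ then gives $R\ge(1-\delta^2-\varepsilon)^{s/(s-1)}e^{H_s(\rho_\phi)}$, so $\delta$ enters only a constant prefactor.

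Second, complete bosonic Gaussian instruments have continuous outcomes, so no fixed $\ket{\phi}$ is produced with positive probability. A positive-probability success event yields a family $\set{\ket{\phi_\omega}}$. The selected record then gives a heterogeneous product $\bigotimes_j\ket{\phi_{\omega_j}}$ that is far from any $\ket{\phi}^{\otimes m}$, so the bound must be uniform over the family. The paper secures this by postselecting onto states within $\varepsilon$ of $\ket{m}\otimes\ket{0}^{\otimes(n-1)}$ in the second-moment norm $\norm{\cdot}_D$. The steps are displacement to a Bargmann zero, passive rotation, heterodyne, and attenuation. This yields uniform bounds on the mean and variance of $N_E$, which justify the cutoff $\ceil{\mu k+k^{2/3}}$ by Chebyshev, and a uniform R\'enyi gap. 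This, not Husimi sampling, is how photon tails are controlled without an energy assumption on $\ket{\psi}$.
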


As a corollary, for the single-photon state $\ket{1}$, we obtain

\begin{corollary}
  \label{thm:boson-single-photon-main}
  For every fixed $0 \le \delta < 1$ and all sufficiently large integers $k$,
  \begin{equation}
    \abchiG (\ket{1}^{\otimes k})
    \ge \mleft(\frac{11}{10}\mright)^k.
    \label{eq:boson-single-photon-main}
  \end{equation}
\end{corollary}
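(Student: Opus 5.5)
The plan is to read Corollary~\ref{thm:boson-single-photon-main} off Theorem~\ref{thm:boson-universal-main}, using explicit values of $p_\psi$ and $B_\psi$ for $\ket{\psi} = \ket{1}$. Since $\ket{1}$ is a single-mode state, no Gaussian postselection is needed to reach a canonical resource, so I would take $p_\psi = 1$. Then any $c < \log B_{\ket 1}$ is admissible. The corollary follows as soon as $B_{\ket{1}} > 11/10$, since then $c = \log(11/10)$ lies strictly below $p_\psi \log B_\psi$ and the theorem gives $\abchiG(\ket{1}^{\otimes k}) \ge (11/10)^k$ for all sufficiently large $k$.

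To get an explicit $B_{\ket 1}$, I would redo the entropy-based rank bound directly for $\ket{1}^{\otimes k}$, which is the $k$-mode Fock state $\ket{1,\dots,1}$. The steps are as follows.

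First, fix a truncation level $p$ and consider the projection $\Pi_p$ onto total photon number exactly $p$. Applying $\Pi_p$ to an $r$-term Gaussian superposition $\sum_i a_i \ket{G_i}$ gives a vector lying in the span of the degree-$p$ components $\Pi_p\ket{G_i}$.

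Second, for a pure Gaussian state on $k$ modes, $\Pi_p\ket{G}$ is (up to normalization) a symmetric tensor built from one linear form and one quadratic form. It is $e^{\frac12 a^\dagger A a + \beta\cdot a^\dagger}$ restricted to degree $p$. I would bound a rank-type quantity of this tensor, for instance the rank of the flattening that separates the modes into occupations $\le 1$ versus the rest, or the entropy of its photon-number distribution. The aim is an estimate of the form "each Gaussian contributes at most $C_{k,p}$ dimensions to a chosen flattening."

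Third, for the target, the component at $p = k$ is $\ket{1,\dots,1}$ itself, the fully multilinear monomial $a_1^\dagger\cdots a_k^\dagger$. Its partial derivatives of order $j$ (equivalently, applying $j$ annihilation operators) give $\binom{k}{j}$ orthonormal vectors. Comparing this with the corresponding count for Gaussians, where derivatives of $e^{\frac12 a^\dagger A a + \beta a^\dagger}$ stay inside a space controlled by the exponential structure, gives $r \ge \binom{k}{j}/C_{k,j}$. This is the counting bound referred to as Eq.~\eqref{equation:single-photon-border-count} at $p=k$.

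To pass to border rank, I would note that lower semicontinuity of matrix rank preserves the bound under norm limits. For $\delta>0$, the selected columns are orthonormal and the operators are norm-preserving, so a perturbation argument applies. At most a fraction $\delta^2$ of the orthonormal mass can be lost, which keeps roughly $(1-\delta)^2\binom{k}{j}$ singular values bounded away from zero. This costs only a constant factor, and that constant is absorbed by taking $k$ large.

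The main obstacle is the optimization step: choosing $j = \theta k$ and bounding $C_{k,j}$ sharply enough that $\binom{k}{\theta k}/C_{k,\theta k}$ grows at a rate at least $11/10$. I would first try the entropy form of the bound. The ratio is about $\exp\bigl(k[h(\theta) - g(\theta)]\bigr)$, where $h$ is the binary entropy and $g$ is the growth rate of the Gaussian derivative space, which is essentially the number of monomials of degree $\le j$ in the quadratic and linear data at the photon number considered. I would then check numerically that $\max_\theta\bigl(h(\theta)-g(\theta)\bigr) > \log(11/10) \approx 0.0953$. If the margin is strict, the "sufficiently large $k$" clause absorbs both the polynomial prefactors and the $\delta$-dependent constant.
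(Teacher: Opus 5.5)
\paragraph{The central counting step fails.} Applying $j$ annihilation operators does not separate $\ket{1}^{\otimes k}$ from Gaussians in the bosonic setting. In the Bargmann picture a pure Gaussian vector is $C_0\exp(\frac12 z^{\mathsf T}Az+\bm b^{\mathsf T}z)$. Applying $a^x$ with $\abs{x}=j$ multiplies it by a polynomial $Q_x$ whose top-degree part is $(Az)^x$. When $A$ is invertible, as for a genuinely squeezed state, these top parts are linearly independent. So the order-$j$ annihilation span of a single Gaussian already has dimension $\binom{k+j-1}{j}\ge\binom{k}{j}$. Any cap $C_{k,j}$ valid for all Gaussians therefore gives $\binom{k}{j}/C_{k,j}\le 1$, and the bound is vacuous. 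In rates, your $g(\theta)$ is $d_1(\theta)=(1+\theta)\log(1+\theta)-\theta\log\theta$. Then $h(\theta)-d_1(\theta)=-(1-\theta)\log(1-\theta)-(1+\theta)\log(1+\theta)<0$ for every $\theta\in(0,1]$, by convexity of $y\log y$. The numerical check you plan would fail at every $\theta$. Unlike the fermionic Majorana flattening, where $C_{n,p}$ is far below $\binom{2n}{p}$, the derivative counts of $\ket{1}^{\otimes k}$ are smaller than those of squeezed states. Projecting first to a fixed photon number does not change this.

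\paragraph{The robustness step also fails.} Bosonic annihilation operators are unbounded, not norm-preserving. So $\norm{a^x(\ket{1}^{\otimes k}-\ket{\eta})}_2$ is not controlled by $\delta$, since approximants may carry arbitrary energy. For the same reason, norm convergence $\ket{\eta_h}\to\ket{\eta}$ does not pass through $a^x$ in the border-rank limit.

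\paragraph{The missing idea is a Gaussian coupling to an environment.} The paper applies $V=S_rB_\theta$ with $t=1/6$, sending each mode into two environment modes by loss plus two-mode squeezing. It truncates the environment photon number at $p\approx\mu k$ and compares Schmidt ranks. By Lemma~\ref{lem:photon-cap}, each Gaussian contributes at most $\binom{k+p}{p}$, a growth rate $d_1(\mu)$ fixed by the single system mode per copy. Because $\Pi^E_{\le p}V$ is a contraction, the approximation error survives the truncation. H\"older's inequality with R\'enyi order $s=21/20$ then turns the retained Schmidt weight into a rank bound. The target's two-mode environment state has $\mu=17/36$ and $H_s(\rho_E)-d_1(17/36)>\log(11/10)$ by Eq.~\eqref{eq:single-photon-explicit-gap}. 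Proposition~\ref{prop:finite-copy-certificate} applied to $\mathcal K=\{\ket{1}\}$ then gives the corollary. Pure loss alone is essentially your annihilation count: its environment entropy is $h(t)<d_1(t)$. It is the squeezer that supplies the entropy excess.

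Two smaller points. Eq.~\eqref{equation:single-photon-border-count} is this Schmidt-rank count $N_{k,p}/\binom{k+p}{p}$, used only for the zero-error $(5/4)^k$ bound, not a derivative count. And Theorem~\ref{thm:boson-universal-main} as stated does not let you set $p_\psi=1$ with an explicit $B_\psi$; you should apply the proposition to $\ket{1}$ directly.
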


At zero error, a direct Schmidt-rank count gives the stronger border-rank bound
\begin{equation}
  \bchiG (\ket{1}^{\otimes k})
  > (5/4)^k
  \label{equation:single-photon-border-bound}
\end{equation}
for every integer $k \ge 1$, as proved in Appendix~\ref{appendix:boson-single-photon-rate}.

To prove Theorem~\ref{thm:boson-universal-main}, we first reduce the target by Gaussian postselection to states near a nonvacuum Fock state, then establish an exponential rank bound for their products and transfer it back with controlled approximation error.
This reduction succeeds with probability $p_\psi > 0$.

Now we bound the Gaussian rank of products of these reduced states. We first consider exact decompositions.
Couple a normalized $n$-mode  $\ket{\phi}$ to a finite-mode environment by a Gaussian isometry, and denote the environment state by $\rho_\phi$.
A cutoff at $p \in \Z_{\ge 0}$ environment photons bounds each Gaussian term's Schmidt rank by $\binom{n + p}{p}$ (Lemma~\ref{lem:photon-cap}).
An exact decomposition of $\ket{\phi}$ into $r$ Gaussian vectors therefore gives a truncated output of Schmidt rank at most $r \binom{n + p}{p}$.
If the cutoff discards probability at most $\varepsilon$, with $0 \le \varepsilon < 1$, the largest $r \binom{n + p}{p}$ squared Schmidt coefficients of the full output must sum to at least $1 - \varepsilon$.
The eigenvalues of $\rho_\phi$ are the squared Schmidt coefficients of the joint output.
For fixed R\'enyi order $s > 1$, a large entropy prevents a small number of Schmidt coefficients from carrying a fixed positive weight.
Comparing the Schmidt rank required to retain this weight with the contribution of the Gaussian terms gives
\begin{equation}
  \log \chiG (\ket{\phi})
  \ge H_s (\rho_\phi) - \log \binom{n + p}{p} - O_{s, \varepsilon} (1),
  \label{equation:bosonic-certificate-main}
\end{equation}
where $H_s (\rho) \coloneqq \log \Tr (\rho^s) / (1 - s)$ is the R\'enyi entropy.
For error $0 \le \delta < 1$, the same bound holds for the approximate border Gaussian rank $\abchiG (\ket{\phi})$, with a bounded correction depending also on $\delta$ (Appendix~\ref{appendix:certificate}).

Uniform control of the photon tails allows products of successful outputs to be truncated while retaining asymptotically all their weight.
Since the environment entropy adds across outputs, it can force the required Schmidt rank to grow exponentially faster than a single Gaussian term can supply.
\eqref{equation:bosonic-certificate-main} then requires exponentially many Gaussian terms (Proposition~\ref{prop:finite-copy-certificate}).
Gaussian postselection and a suitable coupling ensure both the tail control and this separation uniformly, giving a common base $B_\psi > 1$ even for products of different successful outputs (Proposition~\ref{prop:local-core}).

Finally, we transfer this exponential bound to the original tensor powers.
Completeness and typicality select a record with a fraction $p_\psi + o (1)$ of successful outputs and controlled approximation error.
Since Gaussian postselection does not increase border rank, the lower bound transfers to the original tensor powers with rate $p_\psi \log B_\psi$ (Theorem~\ref{thm:bosonic-exponential-transfer}).
An explicit coupling gives the single-photon base $11 / 10$.

\section{Discussion and outlook}

In this work, we established robust exponential lower bounds on suitably defined Gaussian ranks for both fermionic and bosonic systems, showing that non-Gaussianity alone forces Gaussian decomposition complexity to grow exponentially with the number of copies.
This resolves the Gaussian analogue of the longstanding exponential lower bound problem for free rank decomposition. By revealing fundamental limitations of decomposition-based classical simulation, our results offer new insights into the classical–quantum computational boundary. 

Our results suggest that the difficulty of proving stabilizer rank lower bounds may partly reflect the different correlation structures available for free in the qubit stabilizer setting. Stabilizer states need not satisfy the Gaussian constraints underpinning our proofs, so some key techniques developed here do not extend directly to stabilizer rank.

Several interesting questions are worth further studying. For $\ket{M}^{\otimes k}$ and $\ket{1}^{\otimes k}$, determining the optimal exponential growth rates of approximate border Gaussian rank requires matching upper and lower bounds and understanding whether these rates are independent of approximation error.
An important direction is to develop structural certificates analogous to our Majorana matrices or environment spectra for proving lower bounds on stabilizer rank for tensor powers of standard qubit magic states.
More generally, it remains to identify additional structural properties of a family of free states that ensure exponential decomposition complexity for tensor powers of every fixed pure state outside the family.
Finally, tensor powers exhibit exponential Gaussian rank without correlations between copies, this motivates distinguishing the accumulation of independent local resources from collective non-Gaussianity that cannot be removed by local operations, and investigating whether the latter provides robust signatures of quantum phases, critical phenomena, or many-body dynamics.




\begin{acknowledgments}
  F.W.\ and Z.-W.L.\ are supported in part by NSFC under Grant No.~12475023, Dushi Program, and a startup funding from YMSC.
  Z.L. acknowledges support from the Beijing Natural Science Foundation (Grant No. Z220002).
  F.W.\ acknowledges support from the Shuimu Tsinghua Scholar Program.
\end{acknowledgments}


%

\onecolumngrid

\appendix

\section{Gaussian states and ranks}

Mode numbers, copy numbers, occupation numbers, and monomial degrees are integers throughout.
We write $[k] = \{1, \ldots, k\}$ for $k \ge 1$ and $[0] = \varnothing$.
The identity operator is $I$, and $\ket{0^n}$ denotes the vacuum on $n$ modes.
The norms $\norm{\cdot }_2$, $\norm{\cdot }$, and $\norm{\cdot }_1$ denote the vector norm induced by the inner product, the operator norm, and the trace norm, respectively.
For a complex matrix $A$, its Frobenius norm satisfies $\norm{A}_{\mathrm F}^2 = \Tr (A^\dagger A)$.
A density operator is a positive trace-class operator of trace one.
For a finite or countable probability vector $q$, define
\begin{equation}
  H (q) \coloneqq -\sum_j q_j \log q_j, \quad 0 \log 0 \coloneqq 0.
  \label{eq:shannon-entropy}
\end{equation}
Binary entropy is $h (x) \coloneqq H (x, 1 - x)$, for $0 \le x \le 1$.
The von Neumann entropy $H (\rho)$ is the Shannon entropy of the eigenvalues of $\rho$.
Its R\'enyi entropy of order $s > 1$ is
\begin{equation}
  H_s (\rho) \coloneqq \frac{\log \Tr \rho^s}{1 - s} = -\frac{s}{s - 1} \log \norm{\rho}_s, \quad \norm{\rho}_s \coloneqq (\Tr \rho^s)^{1 / s}.
  \label{eq:bosonic-renyi-entropy}
\end{equation}
For probability vectors, set $H_s (q) \coloneqq H_s (\diag q)$.

\subsection{Fermionic Gaussian states}
\label{appendix:conventions}

The creation and annihilation operators of $n$ fermionic modes satisfy
\begin{equation}
  \{a_i, a_j^\dagger\} = \delta_{i j}
  I, \quad \{a_i, a_j\} = \{a_i^\dagger, a_j^\dagger\} = 0.
\end{equation}
The occupation-number basis of the Fock space $\mc H_n$ is
\begin{equation}
  \ket{x} \coloneqq (a_1^\dagger)^{x_1}\cdots(a_n^\dagger)^{x_n} \ket{0^n},
  \quad x \in \{0, 1\}^n.
\end{equation}
The order of the creation operators fixes the signs; in tensor products, modes are ordered by tensor factor.
Writing $\abs{x} = \sum_j x_j$, the parity sectors are
\begin{equation}
  \mc H_n = \mc H_n^+ \oplus \mc H_n^-, \quad
  \mc H_n^\pm = \spn \{\ket{x} : (-1)^{\abs{x}} = \pm 1\}.
\end{equation}
The parity superselection rule permits only pure states of definite parity.

We use the Majorana operators
\begin{equation}
  \gamma_{2 j - 1} = a_j + a_j^\dagger, \quad
  \gamma_{2 j} = i (a_j^\dagger - a_j),
\end{equation}
so that $\gamma_\mu^\dagger = \gamma_\mu$ and $\{\gamma_\mu, \gamma_\nu\} = 2 \delta_{\mu \nu}
I$~\cite{bravyi2004lagrangianrepresentationfermioniclinear}.
Up to an overall phase, a fermionic Gaussian unitary has the form
\begin{equation}
  U_A = \exp \mleft(\frac14 \sum_{\mu, \nu = 1}^{2 n} A_{\mu \nu} \gamma_\mu \gamma_\nu\mright), \quad A^{\mathsf T} = -A, \quad A \in \R^{2 n \times 2 n}.
\end{equation}
These unitaries preserve parity.
A pure fermionic Gaussian state is obtained, up to phase, by applying $U_A$ to the vacuum in the even sector or to a one-particle reference state in the odd sector~\cite{bravyi2004lagrangianrepresentationfermioniclinear,hackl2024bosonicfermionicgaussianstates}.
For $n \le 3$, every pure fermionic state is Gaussian.
Four modes are the first nontrivial case.
Appendix~\ref{appendix:theorem-two-proof} reduces general pure non-Gaussian states to this case by Gaussian postselection.

\subsection{Bosonic Gaussian states}
\label{appendix:statement}

\paragraph{Fock space and photon number.}
The creation and annihilation operators of $n$ bosonic modes satisfy
\begin{equation}
  [a_i, a_j^\dagger] = \delta_{i j}
  I, \quad [a_i, a_j] = [a_i^\dagger, a_j^\dagger] = 0.
\end{equation}
The Fock space $\F_n$ has the orthonormal basis
\begin{equation}
  \ket{x} \coloneqq \prod_{j = 1}^n \frac{(a_j^\dagger)^{x_j}}{\sqrt{x_j!}} \ket{0^n},
  \quad x \in \Z_{\ge 0}^n,
\end{equation}
where $a_j \ket{0^n} = 0$.
For $j \in [n]$, let $e_j \in \Z^n$ be the standard coordinate vector.
Then
\begin{equation}
  a_j \ket{x} = \sqrt{x_j} \ket{x - e_j}, \quad
  a_j^\dagger\ket{x} = \sqrt{x_j + 1} \ket{x + e_j},
\end{equation}
with the first expression equal to zero when $x_j = 0$.
We call each excitation a photon and write
\begin{equation}
  N_{\mathrm{tot}} \coloneqq \sum_{j = 1}^n a_j^\dagger a_j, \quad N_{\mathrm{tot}} \ket{x} = \abs{x} \ket{x}, \quad \abs{x} \coloneqq \sum_j x_j.
\end{equation}
To control second photon-number moments, we use
\begin{equation}
  \norm{\ket{u}}_D^2 \coloneqq \sum_{x \in \Z_{\ge 0}^n} (1 + \abs{x})^2 \abs{u_x}^2,
  \quad \ket{u} = \sum_x u_x \ket{x},
  \label{eq:photon-moment-norm}
\end{equation}
allowing the value $ + \infty$.
This norm is finite precisely when the second moment is finite, and then equals $\norm{(N_{\mathrm{tot}} + I) \ket{u}}_2$.
For zero modes, set $\F_0 = \C$.

\paragraph{Gaussian states and coherent states.}
A bosonic Gaussian unitary is a finite product of operators $e^{-i t H}$, where $t \in \R$ and $H$ is a symmetric polynomial of degree at most two in the ladder operators.
For finitely many modes, such polynomials are essentially self-adjoint on finite occupation sums; their self-adjoint closures define the exponentials.
Linear and constant terms include displacements and global phases.
A normalized pure state is Gaussian if it is obtained from the vacuum by a Gaussian unitary~\cite{hackl2024bosonicfermionicgaussianstates}.
We call a nonzero scalar multiple of such a state a \emph{pure Gaussian vector} and write $\G_n$ for the set of these vectors; in particular, $\G_0 = \C\setminus\{0\}$.

For $\bm\alpha \in \C^n$, the displacement and coherent state are
\begin{equation}
  D (\bm\alpha) \coloneqq \exp \mleft[ \sum_{j = 1}^n (\alpha_j a_j^\dagger - \overline{\alpha_j} a_j) \mright], \quad \ket{\bm\alpha} \coloneqq D (\bm\alpha) \ket{0^n}.
\end{equation}
Normal ordering gives
\begin{equation}
  \ket{\bm\alpha} = e^{-\abs{\bm\alpha}^2 / 2}
  \sum_{x \in \Z_{\ge 0}^n} \frac{\bm\alpha^x}{\sqrt{x!}} \ket{x},
  \label{eq:coherent-occupation-expansion}
\end{equation}
For $z \in \C^n$ and $x \in \Z_{\ge 0}^n$, we use $z^x \coloneqq \prod_j z_j^{x_j}$, $x! \coloneqq \prod_j x_j!$, and the Euclidean norm $\abs{z}$.
The measure $\odif[order={2 n}]{z}$ is Lebesgue measure on $\C^n \cong \R^{2 n}$.

The coherent states resolve the identity:
\begin{equation}
  \int_{\C^n} \ket{\bm\alpha} \bra{\bm\alpha}\,
  \frac{\odif[order={2n}]{\bm\alpha}}{\pi^n} = I_{\F_n},
  \label{eq:coherent-resolution}
\end{equation}
which follows by Gaussian integration of the occupation-basis matrix elements in Eq.~\eqref{eq:coherent-occupation-expansion}.

\paragraph{Bargmann representation.}
For every $\ket{\psi} = \sum_x \psi_x \ket{x} \in \F_n$, its \emph{Bargmann function} $F_\psi : \C^n \to \C$ is
\begin{equation}
  F_\psi (z) \coloneqq \sum_{x \in \Z_{\ge 0}^n} \psi_x \frac{z^x}{\sqrt{x!}} = e^{\abs{z}^2 / 2} \braket{\overline z}{\psi}.
  \label{eq:bargmann-function}
\end{equation}
The overlap identity follows from Eq.~\eqref{eq:coherent-occupation-expansion}.
Cauchy--Schwarz gives
\begin{equation}
  \sum_x \abs{\psi_x} \frac{\abs{z^x}}{\sqrt{x!}}
  \le \norm{\ket{\psi}}_2
  \mleft(\sum_x \frac{\abs{z^x}^2}{x!}\mright)^{1 / 2}
  = \norm{\ket{\psi}}_2 e^{\abs{z}^2 / 2}.
  \label{eq:bargmann-growth}
\end{equation}
Applying the same estimate to the tails of the series gives uniform convergence on compact sets, so $F_\psi$ is entire.
The Bargmann representation preserves the norm:
\begin{equation}
  \int_{\C^n} \abs{F_\psi (z)}^2 e^{-\abs{z}^2}
  \frac{\odif[order={2 n}]{z}}{\pi^n}
  = \norm{\ket{\psi}}_2^2.
  \label{eq:bargmann-norm}
\end{equation}
The norm identity follows by Gaussian integration: the monomials $z^x / \sqrt{x!}$ are orthonormal.

The ladder operators act on Bargmann functions as follows~\cite{chabaud2022holomorphicrepresentationquantumcomputations}:
\begin{equation}
  \ket{x} \longleftrightarrow \frac{z^{x}}{\sqrt{x!}},
  \quad
  a_j \longleftrightarrow \partial_{z_j},
  \quad
  a_j^\dagger \longleftrightarrow z_j,
  \quad
  N_{\mathrm{tot}} \longleftrightarrow \sum_{j = 1}^n z_j \partial_{z_j}.
  \label{eq:bargmann-dictionary}
\end{equation}
Pure Gaussian vectors correspond exactly to
\begin{equation}
  F (z) = C_0 \exp \mleft(\frac12 z^{\mathsf T} A z + \bm b^{\mathsf T} z\mright), \quad C_0 \neq 0, \quad A = A^{\mathsf T}, \quad \norm{A} < 1.
  \label{eq:pure-gaussian-bargmann}
\end{equation}
Here $z \in \C^n$, $C_0 \in \C \setminus \{0\}$, $A \in \C^{n \times n}$, and $\bm b \in \C^n$; the condition $\norm{A} < 1$ ensures normalizability.

To verify this characterization, use a unitary change of modes to put the complex symmetric matrix $A$ in Takagi form, with diagonal entries $d_j \ge 0$.
The Gaussian integral in Eq.~\eqref{eq:bargmann-norm} is finite precisely when every $d_j < 1$: its real quadratic coefficients are $ - (1 - d_j)$ and $ - (1 + d_j)$.
In these coordinates, differentiating the exponential gives $(a_j - d_j a_j^\dagger - b_j) \ket{\psi} = 0$.
Displacements satisfying $\alpha_j - d_j\overline{\alpha_j} = b_j$ and squeezes with $\tanh r_j = d_j$ reduce these equations to the vacuum equations.
Conversely, a Gaussian unitary transforms the vacuum annihilation equations into affine linear equations in the ladder operators, whose normalized solution has the form in Eq.~\eqref{eq:pure-gaussian-bargmann}.

\subsection{Gaussian ranks}
\label{appendix:gaussian-rank-definitions}

The following definitions apply to both fermionic and bosonic vectors.
In either case, a pure Gaussian vector is a nonzero scalar multiple of a normalized pure Gaussian state.
Gaussian rank measures the number of terms in a Gaussian decomposition; in the bosonic setting, it was introduced as a simulation resource in Ref.~\cite{hahn2026classicalsimulationquantumresource}.

\begin{definition}[Gaussian rank]
  For $\ket{\psi} \ne 0$, define
  \begin{equation}
    \chiG (\ket{\psi}) \coloneqq \inf \set*{r \in \Z_{> 0} : \ket{\psi} = \sum_{j = 1}^r \ket{G_j}, \quad \ket{G_j} \text{ pure Gaussian vectors}},
    \label{eq:gaussian-rank-definition}
  \end{equation}
  with value $ + \infty$ if no finite decomposition exists, and set $\chiG (0) = 0$.
\end{definition}

\begin{definition}[Border Gaussian rank]
  For $\ket{\psi} \ne 0$, define
  \begin{equation}
    \bchiG (\ket{\psi}) \coloneqq \inf \set*{r \in \Z_{> 0} : \text{there are } \ket{\psi_h} \to \ket{\psi} \text{ with } \chiG (\ket{\psi_h}) \le r \text{ for all } h},
    \label{eq:border-gaussian-rank-definition}
  \end{equation}
  where convergence is in $\norm{\cdot }_2$.
  Set $\bchiG (0) = 0$, and take the infimum of the empty set to be $ + \infty$.
\end{definition}

For every integer $r \ge 0$, the vectors of border Gaussian rank at most $r$ form the norm closure of those of Gaussian rank at most $r$.
Both ranks are unchanged by multiplication by a nonzero scalar.
In a sequence realizing border rank, individual Gaussian summands may diverge while their sum converges.
The examples in Appendix~\ref{appendix:gaussian-border-gap} show that allowing such limits can lower the rank.

Fix a normalized target $\ket{\psi}$ and $0 \le \delta < 1$.
Approximants below lie in the same Fock space as the target and need not be normalized.

\begin{definition}[Approximate Gaussian rank]
  The $\delta$-approximate Gaussian rank is
  \begin{equation}
    \achiG (\ket{\psi}) \coloneqq \min_{\norm{\ket{\psi} - \ket{\eta}}_2 \le \delta} \chiG (\ket{\eta}).
    \label{eq:approx-gaussian-rank}
  \end{equation}
\end{definition}

\begin{definition}[Approximate border Gaussian rank]
  The $\delta$-approximate border Gaussian rank is
  \begin{equation}
    \abchiG (\ket{\psi}) \coloneqq \min_{\norm{\ket{\psi} - \ket{\eta}}_2 \le \delta} \bchiG (\ket{\eta}).
    \label{eq:approx-border-rank}
  \end{equation}
\end{definition}

The restriction $\delta < 1$ excludes the zero approximant.
At $\delta = 0$, these definitions reduce to Gaussian rank and border Gaussian rank, respectively.
For $\delta > 0$, both minima are finite: fermionic occupation states span the finite-dimensional Fock space, and finite coherent-state superpositions are dense in bosonic Fock space by Eq.~\eqref{eq:coherent-resolution}.
The minima are attained because the set of attainable finite ranks is a nonempty set of nonnegative integers.
For $0 < \delta < 1$,
\begin{equation}
  \abchiG (\ket{\psi}) \le \achiG (\ket{\psi}) \le \bchiG (\ket{\psi}) \le \chiG (\ket{\psi}).
  \label{eq:gaussian-rank-comparison}
\end{equation}
For the middle inequality, when the border rank is finite, a sequence realizing it eventually enters the approximation ball; when it is infinite, the inequality is immediate.
Hence a lower bound on approximate border Gaussian rank bounds all four quantities.
For a fermionic target, all four definitions may be restricted to its parity sector: parity projection removes opposite-parity Gaussian summands, preserves norm convergence, and cannot increase the approximation error.

\subsection{Bosonic Gaussian measurements and photon cutoffs}
\label{appendix:bosonic-measurements-and-cutoffs}

\paragraph{Gaussian operators and instruments.}
The bosonic Gaussian operators used below are compositions of Gaussian unitaries, vacuum insertions, and coherent-state contractions.
Each is bounded, and their tensor products are Gaussian.
Input and output mode numbers may differ, including zero-mode outputs in $\F_0 = \C$.
A bounded Gaussian operator $K : \F_n \to \F_{n'}$ preserves pure Gaussian vectors, with zero allowed as an output, even in the presence of unmeasured modes:
\begin{equation}
  (K \otimes I_{\F_\ell})\G_{n + \ell} \subseteq \G_{n' + \ell} \cup \{0\}, \quad \ell \ge 0.
  \label{eq:complete-gaussian-preservation}
\end{equation}
This includes Gaussian vectors entangled across several copies of the input.
A Gaussian isometry appends vacuum modes and applies a Gaussian unitary.

A Gaussian instrument is specified here by a weakly measurable family of bounded Gaussian Kraus operators $K_\omega : \F_n \to \F_{n_\omega}$, indexed by a measurable outcome space $\Omega$ with measure $\nu$.
The input mode number $n$ is fixed; the output mode number $n_\omega \ge 0$ may depend on the outcome.
It is complete when
\begin{equation}
  \int_\Omega K_\omega^\dagger K_\omega\,\odif{\nu} (\omega) = I
  \label{eq:gaussian-instrument-completeness}
\end{equation}
in the weak operator sense.
For a normalized input $\ket{\psi}$, the unnormalized output at $\omega$ is $K_\omega \ket{\psi}$, and its outcome law $\mb P$ satisfies
\begin{equation}
  \mb P (A) = \int_A \norm{K_\omega \ket{\psi}}_2^2\,\odif{\nu} (\omega)
\end{equation}
for every measurable set $A \subseteq \Omega$.
We also call $(K_\omega)_{\omega \in \Omega}$ a Kraus density relative to $\nu$.
For heterodyne detection, the outcome is $\bm\alpha \in \C^n$, the Kraus operator is $\bra{\bm\alpha}$, and the measure is $\pi^{-n}\odif[order={2n}]{\bm\alpha}$.
The coherent-state resolution in Eq.~\eqref{eq:coherent-resolution} proves completeness.

\paragraph{Beam splitters and two-mode squeezers.}
A beam splitter and a two-mode squeezer coupling a system mode $a$ to environment modes $b_1, b_2$ are
\begin{equation}
  B_\theta = \exp[\theta(b_1^\dagger a-a^\dagger b_1)], \quad
  S_r = \exp[r(a^\dagger b_2^\dagger-a b_2)],
  \label{eq:beam-splitter-and-squeezer}
\end{equation}
with $\sin^2 \theta = \sinh^2 r = t$, $0 < t < 1$, $0 < \theta < \pi / 2$, and $r > 0$.
The number-basis formulas are
\begin{align}
  B_\theta \ket{m, 0}
  & = \sum_{j = 0}^m \sqrt{\binom m j} (1 - t)^{j / 2} t^{(m - j) / 2} \ket{j, m - j}, \label{eq:beam-splitter-expansion} \\
  S_r \ket{j, 0}
  & = \sum_{h \ge 0} \sqrt{\binom{j + h}{j}}
  \frac{t^{h / 2}}{(1 + t)^{(j + h + 1) / 2}} \ket{j + h, h}.
  \label{eq:two-mode-squeezer-expansion}
\end{align}
Eq.~\eqref{eq:beam-splitter-expansion} follows by expanding the transformed creation operator with the binomial theorem.
For Eq.~\eqref{eq:two-mode-squeezer-expansion}, solve the vacuum annihilation equations at $j = 0$, then apply $(S_r a^\dagger S_r^\dagger)^j / \sqrt{j!}$, using $S_r a^\dagger S_r^\dagger = a^\dagger\cosh r - b_2\sinh r$.
The generator has real matrix elements in the occupation basis; continuity from $r = 0$ then fixes the phase.

\paragraph{Environment states and Schmidt rank.}
For $n \ge 1$ and $\ell \ge 0$, consider a joint vector $\ket{\Psi} \in \F_n \otimes \F_\ell$.
The first factor is the system $S$ and the second is the environment $E$.
Its Schmidt rank is the least number of product terms in an expansion
\begin{equation}
  \ket{\Psi} = \sum_{j = 1}^R \ket{u_j}_S \otimes \ket{v_j}_E.
  \label{eq:schmidt-rank-definition}
\end{equation}
We write this rank as $\SR_{S : E} (\ket{\Psi})$, with value $ + \infty$ if no finite expansion exists and $\SR_{S : E} (0) = 0$.
For a fixed Gaussian isometry $V : \F_n \to \F_n \otimes \F_\ell$ and a normalized input $\ket{\phi}$, write
\begin{equation}
  \rho_\phi \coloneqq \Tr_S (V \ket{\phi} \bra{\phi}
  V^\dagger).
  \label{eq:environment-state-definition}
\end{equation}
This density operator acts on $\F_\ell$.

\paragraph{Photon cutoffs and moments.}
Expand a joint vector in environment occupations:
\begin{equation}
  \ket{\Psi} = \sum_{x \in \Z_{\ge 0}^\ell} \ket{\Psi_x} \otimes \ket{x}, \quad
  \ket{\Psi_x} \coloneqq (I \otimes \bra{x}) \ket{\Psi}.
  \label{eq:environment-coefficient-vectors}
\end{equation}
The environment photon-number operator satisfies
\begin{equation}
  N_E \ket{x} = \abs{x} \ket{x}, \quad \abs{x} = \sum_{j = 1}^\ell x_j.
  \label{eq:environment-number-operator}
\end{equation}
Its projection onto at most $p$ photons acts by
\begin{equation}
  \Pi_{\le p}^E \ket{\Psi} = \sum_{\abs{x} \le p} \ket{\Psi_x} \otimes \ket{x}, \quad p \in \Z_{\ge 0}.
  \label{eq:photon-cutoff-action}
\end{equation}
On joint vectors, the system identity in $I_S \otimes \Pi_{\le p}^E$ is implicit.
For normalized $\ket{\Psi}$, orthogonality gives the discarded probability
\begin{equation}
  \norm{(I - \Pi_{\le p}^E) \ket{\Psi}}_2^2
  = \sum_{\abs{x} > p} \norm{\ket{\Psi_x}}_2^2
  = \Tr \mleft[(I-\Pi_{\le p}^E)\Tr_S\ket{\Psi}\bra{\Psi}\mright].
  \label{eq:photon-cutoff-probability}
\end{equation}
For a density operator $\rho$ on the environment with finite second moment, $\Var (N_E;\rho) \coloneqq \Tr (N_E^2 \rho) - (\Tr (N_E \rho))^2$.

\paragraph{Thermal entropy.}
A thermal mode of mean photon number $y \ge 0$ has probabilities $y^j / (1 + y)^{j + 1}$, for $j \ge 0$, with the vacuum understood at $y = 0$.
For $n \ge 1$, summing the entropies of $n$ independent thermal modes, each of mean photon number $x / n$, gives
\begin{equation}
  d_n (x) \coloneqq (n + x) \log (n + x) - n \log n - x \log x, \quad x > 0, \quad d_n (0) = 0.
  \label{eq:dn-definition}
\end{equation}
In particular, $d_n (x) = n d_1 (x / n)$.
This function also gives the exponential growth rate of the photon-cutoff dimension in Proposition~\ref{prop:finite-copy-certificate}.

\subsection{Separations between Gaussian rank and border Gaussian rank}
\label{appendix:gaussian-border-gap}
\label{appendix:one-photon-separation}

Group the six fermionic modes into the pairs $(1, 2)$, $(3, 4)$, and $(5, 6)$.

\begin{proposition}[Gaussian rank and border Gaussian rank of $\ket{\Psi_6}$]
  \label{prop:six-mode-rank-gap}
  The even-parity state
  \begin{equation}
    \ket{\Psi_6} \coloneqq \frac{\ket{110000} + \ket{001100} + \ket{000011}}{\sqrt3}
    \label{eq:six-mode-gap-state}
  \end{equation}
  has border Gaussian rank two and exact Gaussian rank three:
  \begin{equation}
    \bchiG (\ket{\Psi_6}) = 2 < 3 = \chiG (\ket{\Psi_6}).
  \end{equation}
\end{proposition}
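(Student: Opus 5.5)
The proof has three parts: a border construction showing $\bchiG(\ket{\Psi_6})\le 2$, the lower bound $\bchiG(\ket{\Psi_6})\ge 2$, and the exact-rank statement $\chiG(\ket{\Psi_6})=3$.

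\emph{Border rank at most two.} Write $c_j^\dagger \coloneqq a_{2j-1}^\dagger a_{2j}^\dagger$ for the pair $j\in\{1,2,3\}$. These pair operators commute with each other and square to zero. Hence
\begin{equation}
  \exp\mleft(\varepsilon\sum_j c_j^\dagger\mright)\ket{0^6}=\prod_j(1+\varepsilon c_j^\dagger)\ket{0^6}
\end{equation}
is a (BCS-type) even Gaussian state, because its exponent is quadratic in the creation operators. Expanding the product gives
\begin{equation}
  \ket{0^6}+\varepsilon\sqrt3\ket{\Psi_6}+O(\varepsilon^2).
\end{equation}
Therefore
\begin{equation}
  \frac{\exp\mleft(\varepsilon\sum_j c_j^\dagger\mright)\ket{0^6}-\ket{0^6}}{\sqrt3\,\varepsilon}\longrightarrow\ket{\Psi_6}
\end{equation}
as $\varepsilon\to0$, with error $O(\varepsilon)$. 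This shows $\bchiG\le 2$.

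\emph{Border rank at least two.} The set of Gaussian vectors together with $0$ is closed; it is the cone over the closed Gaussian manifold. So $\bchiG=1$ would force $\ket{\Psi_6}$ to be Gaussian. To rule this out, note that $\ket{\Psi_6}$ is even and its correlation matrix $M_{\mu\nu}=\tfrac{i}{2}\bra{\Psi_6}[\gamma_\mu,\gamma_\nu]\ket{\Psi_6}$ would have to satisfy $M^2=-I$ if the state were Gaussian. I will compute $M$ directly. The state has definite occupation $2/3$ in every mode and no $a a$ terms pairing distinct pairs, so $M$ has eigenvalues of modulus at most $1/3$ in the intra-pair blocks. Hence $M^2\ne -I$, and the state is not Gaussian.

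\emph{Exact rank: upper bound.} The rank is at most $3$ because $\ket{\Psi_6}$ is a sum of three occupation basis states, each of which is Gaussian.

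\emph{Exact rank: lower bound.} This is the main obstacle: I must show $\ket{\Psi_6}\ne\ket{G_1}+\ket{G_2}$ for Gaussian vectors $G_1,G_2$. By the parity remark, both summands may be assumed even. I will split into two cases.

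In the first case, $G_1$ and $G_2$ have nonzero vacuum overlap. Each can then be written as $\lambda_i\exp\bigl(\tfrac12\sum A^{(i)}_{pq}a_p^\dagger a_q^\dagger\bigr)\ket{0^6}$ with $A^{(i)}$ antisymmetric. Matching the vacuum coefficients gives $\lambda_1=-\lambda_2$. Matching the two-particle coefficients gives $\lambda_1(A^{(1)}-A^{(2)})=A_\Psi$, where $A_\Psi$ is the block matrix with entries $A_{12}=A_{34}=A_{56}$ and $\sqrt3\,\lambda_1(A^{(1)}-A^{(2)})=$ a rank-$6$ matrix. Matching the four-particle coefficients then requires $\Pf$-minors of $A^{(1)}$ and $A^{(2)}$ to agree, i.e.\ their $4\times4$ sub-Pfaffians to coincide. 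Similarly, matching the six-particle coefficients requires $\Pf A^{(1)}=\Pf A^{(2)}$. I expect that writing $A^{(2)}=A^{(1)}+D$ with $D$ of full rank and expanding $4\times4$ Pfaffians, which are quadratic in the entries, yields linear constraints that force a contradiction. In particular, the $\{12,34\}$ minor produces cross terms like $A^{(1)}_{12}D_{34}+D_{12}A^{(1)}_{34}+D_{12}D_{34}$, and combining the three symmetric conditions should fail.

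In the second case, some $G_i$ is orthogonal to the vacuum. I will handle this by applying a pair-wise particle-hole transformation, which preserves Gaussianity, to reduce to the first case. Alternatively, I will use Wick-type quadratic constraints (the Pfaffian/Plücker-type relations satisfied by even Gaussian states, $\sum_\mu\gamma_\mu\ket{G}\otimes\gamma_\mu\ket{G}=0$) and check that $\ket{\Psi_6}$ minus any Gaussian vector violates them.

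I expect this second case and the Pfaffian bookkeeping in the first to be the hard part; I would try first to organize it via the $\mathrm{U}(1)^3$ pair symmetry, averaging to restrict $A^{(i)}$ to block-diagonal form where possible.
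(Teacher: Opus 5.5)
Your border construction, the bound $\chiG(\ket{\Psi_6})\le 3$, and the closedness argument are fine. The correlation-matrix test is also a valid alternative to the paper's Pfaffian argument for non-Gaussianity. One slip: each mode has mean occupation $1/3$, and the occupation is not definite. Since $\abs{1-2\langle n_j\rangle}=1/3$ in either case, $M^2=-\tfrac19 I\ne -I$ still follows.

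\textbf{The gap is the exact lower bound $\chiG(\ket{\Psi_6})\ge 3$.} Vacuum matching forces the two summands' vacuum amplitudes to be both zero or both nonzero. In the zero case, your particle-hole reduction does not work. A Gaussian unitary applied to both sides also transforms the target, which then acquires a vacuum amplitude and nonvanishing four-particle amplitudes. So the relations that drive your first case are lost: $\lambda_1=-\lambda_2$, and the vanishing of the four- and six-particle sectors. Nothing is actually reduced. The Wick-constraint alternative, as written, only restates the goal.

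The missing idea is a property of a single Gaussian vector. If an even pure Gaussian vector $s$ has vacuum amplitude $s_0=0$, its antisymmetric pair-amplitude matrix $Q_s$ has rank at most two. The paper gets this from $\Pf((Q_s)_I)=s_0 s_I$ for $\abs{I}=4$, using the BCS form plus density. It also follows from the canonical form $\prod_k(u_k+v_k b_k^\dagger b_{\bar k}^\dagger)\ket{0^6}$, where $s_0=0$ forces some $u_k=0$. Then $Q_g+Q_h$ would have to equal the target's rank-six pair matrix, giving $6\le 2+2$, a contradiction.

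In the nonzero case your setup is correct, but the mechanism you predict is not: the four-particle conditions by themselves are consistent. Use your $A^{(2)}=A^{(1)}+D$, where $D$ is a nonzero multiple of the target's pair matrix. In terms of the associated two-forms, the four-particle conditions read $\omega_{A^{(1)}}\wedge\omega_D+\tfrac12\,\omega_D\wedge\omega_D=0$. These are solved by $A^{(1)}=-D/2$ and $A^{(2)}=D/2$. For instance, the $\{1,2,3,4\}$, $\{1,2,5,6\}$ and $\{3,4,5,6\}$ minors only give $A^{(1)}_{12}=A^{(1)}_{34}=A^{(1)}_{56}=-D_{12}/2$, with no contradiction.

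The contradiction comes only from the six-particle condition. To use it along your route, you must first show that this solution is unique: wedging with the nondegenerate $\omega_D$ is injective on two-forms in six dimensions. Then $\Pf(-D/2)=-\Pf(D)/8\ne\Pf(D)/8=\Pf(D/2)$.

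The paper avoids solving for $A^{(1)}$. With $R=X_1-X_2$ and $S=X_1+X_2$, which commute, the four-particle condition says $RS=0$. Hence $X_1^3-X_2^3=\tfrac14(3S^2R+R^3)=\tfrac14R^3$. Here $R$ is a nonzero multiple of the target's pair-creation operator, whose cube sends $\ket{0^6}$ to $6\ket{1^6}\ne 0$.
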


\begin{proof}
  For $\epsilon > 0$, define the unnormalized vector
  \begin{equation}
    \ket{\phi_\epsilon} \coloneqq \frac{\bigotimes_{b = 1}^3 \mleft(\ket{00} + \epsilon \ket{11}\mright)_{2 b - 1, 2 b} - \ket{0^6}}{\sqrt3\,\epsilon}.
  \end{equation}
  Both terms in the numerator are Gaussian: the first is a tensor product of two-mode Gaussian vectors, and the second is the vacuum.
  Consequently, $\chiG (\ket{\phi_\epsilon}) \le 2$ for every $\epsilon > 0$.
  Expanding gives
  \begin{equation}
    \ket{\phi_\epsilon} = \ket{\Psi_6} + \frac{\epsilon}{\sqrt3} \mleft(\ket{111100} + \ket{110011} + \ket{001111}\mright) + \frac{\epsilon^2}{\sqrt3} \ket{111111},
  \end{equation}
  so $\ket{\phi_\epsilon} \to \ket{\Psi_6}$ as $\epsilon \to 0$.
  Therefore $\bchiG (\ket{\Psi_6}) \le 2$.

  At fixed mode number, normalized pure fermionic Gaussian states form a finite union of compact Gaussian-unitary orbits.
  Their scalar cone together with zero is therefore norm closed, so a nonzero vector has border Gaussian rank one if and only if it is Gaussian.

  We use one elementary consequence of the Pfaffian relations for pure Gaussian states.
  For an even pure Gaussian vector $s$, let $s_0$ be its vacuum amplitude, let $Q_s$ be its antisymmetric two-particle amplitude matrix, and let $s_I$ be its four-particle amplitude on the mode set $I$.
  Then
  \begin{equation}
    \Pf \mleft((Q_s)_I\mright) = s_0 s_I,
    \quad \abs{I} = 4.
    \label{eq:six-mode-pfaffian-relations}
  \end{equation}
  For $s_0 \ne 0$, this follows from the Bardeen--Cooper--Schrieffer (BCS) representation~\cite{hackl2024bosonicfermionicgaussianstates}; since vectors with nonzero vacuum amplitude are dense in the even pure Gaussian cone and both sides are polynomial in the amplitudes, the identity extends to $s_0 = 0$.
  In particular, $s_0 = 0$ implies $\rank (Q_s) \le 2$, because all $4 \times 4$ principal Pfaffians vanish.

  Suppressing the normalization, define
  \begin{equation}
    D \coloneqq a_1^\dagger a_2^\dagger + a_3^\dagger a_4^\dagger + a_5^\dagger a_6^\dagger,
  \end{equation}
  so the target is $D \ket{0^6}$.
  Its two-particle amplitude matrix is
  \begin{equation}
    M_6 \coloneqq J \oplus J \oplus J,
    \quad
    J \coloneqq
    \begin{pmatrix}
      0   & 1 \\
      - 1 & 0
    \end{pmatrix},
    \label{eq:six-mode-target-matrix}
  \end{equation}
  which has rank six.
  Since the target has zero vacuum amplitude, Eq.~\eqref{eq:six-mode-pfaffian-relations} shows that it is not Gaussian.
  Closedness of the Gaussian cone together with zero therefore rules out border Gaussian rank one, and hence $\bchiG (\ket{\Psi_6}) = 2$.

  Each of the three Fock-basis states in Eq.~\eqref{eq:six-mode-gap-state} is a Slater determinant, giving $\chiG (\ket{\Psi_6}) \le 3$.
  It remains to rule out a sum of two pure Gaussian vectors.
  Suppose for contradiction that
  \begin{equation}
    D \ket{0^6} = g + h,
  \end{equation}
  where scalar coefficients have been absorbed into two nonzero pure Gaussian vectors.
  Parity forces both summands to be even.
  Let $g_0$, $h_0$ be their vacuum amplitudes and $Q_g$, $Q_h$ their two-particle amplitude matrices.
  Comparison of the vacuum and two-particle components gives
  \begin{equation}
    h_0 = -g_0,
    \quad
    Q_g + Q_h = M_6.
  \end{equation}

  First suppose $g_0 = h_0 = 0$.
  Equation~\eqref{eq:six-mode-pfaffian-relations} gives $\rank (Q_g)$, $\rank (Q_h) \le 2$.
  The relation $Q_g + Q_h = M_6$ would then imply
  \begin{equation}
    6 = \rank (M_6)
    \le \rank (Q_g) + \rank (Q_h)
    \le 4,
  \end{equation}
  a contradiction.

  It remains to consider $g_0 = \alpha$ and $h_0 = -\alpha$ with $\alpha \ne 0$.
  By the same BCS representation, there are quadratic pair-creation operators $X_1$, $X_2$ such that
  \begin{equation}
    g = \alpha e^{X_1} \ket{0^6},
    \quad
    h = -\alpha e^{X_2} \ket{0^6}.
  \end{equation}
  On six modes, the expansion terminates at
  \begin{equation}
    e^{X_\ell} \ket{0^6}
    = \mleft(1 + X_\ell + \frac12 X_\ell^2 + \frac16 X_\ell^3\mright) \ket{0^6}.
  \end{equation}
  Disjoint quadratic creation monomials commute after four anticommutations, while products with a repeated creation operator vanish; hence $X_1 X_2 = X_2 X_1$.
  Comparing the two-particle components gives
  \begin{equation}
    \alpha (X_1 - X_2) \ket{0^6} = D \ket{0^6}.
  \end{equation}
  Since the two-particle Fock basis is linearly independent,
  \begin{equation}
    R \coloneqq X_1 - X_2 = \lambda D, \quad \lambda \coloneqq \alpha^{-1} \ne 0.
  \end{equation}
  The target has no four-particle component, and hence
  \begin{equation}
    0 = (X_1^2 - X_2^2) \ket{0^6} = R S \ket{0^6},
    \quad
    S \coloneqq X_1 + X_2.
  \end{equation}
  The four-particle Fock basis is also linearly independent, so $R S = 0$ as a quartic creation operator.
  Finally, the absence of a six-particle component would require
  \begin{equation}
    0 = (X_1^3 - X_2^3) \ket{0^6}.
  \end{equation}
  Because $R$ and $S$ commute and $R S = 0$,
  \begin{equation}
    X_1^3 - X_2^3 = \frac14 \mleft(3 S^2 R + R^3\mright) = \frac14 R^3 = \frac{\lambda^3}{4} D^3.
  \end{equation}
  However,
  \begin{equation}
    D^3 \ket{0^6}
    = 6 a_1^\dagger a_2^\dagger a_3^\dagger a_4^\dagger a_5^\dagger a_6^\dagger\ket{0^6}
    = 6 \ket{1^6} \ne 0,
  \end{equation}
  which is a contradiction.
  Thus no exact two-Gaussian decomposition exists, and $\chiG (\ket{\Psi_6}) = 3$.
\end{proof}

\begin{proposition}[Gaussian rank and border Gaussian rank of $\ket{1}$]
  \label{prop:one-photon-rank-gap}
  For the one-photon state,
  \begin{equation}
    \bchiG (\ket{1}) = 2 < \chiG (\ket{1}) = +\infty.
    \label{eq:one-photon-rank-gap}
  \end{equation}
\end{proposition}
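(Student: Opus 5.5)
We must show three things: $\bchiG(\ket{1}) \le 2$, $\bchiG(\ket{1}) \ne 1$, and $\chiG(\ket{1}) = +\infty$.

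\emph{Upper bound.} By Eq.~\eqref{eq:coherent-occupation-expansion}, for real $\varepsilon \ne 0$,
\[
\frac{\ket{\varepsilon}-\ket{0}}{\varepsilon} - \ket{1} = \frac{e^{-\varepsilon^2/2}-1}{\varepsilon}\ket{0} + \bigl(e^{-\varepsilon^2/2}-1\bigr)\ket{1} + e^{-\varepsilon^2/2}\sum_{m\ge 2}\frac{\varepsilon^{m-1}}{\sqrt{m!}}\ket{m}.
\]
The first two coefficients are $O(\varepsilon)$ and $O(\varepsilon^2)$. The tail has squared norm at most $\sum_{m\ge2}\varepsilon^{2m-2}/m! \le \varepsilon^2 e^{\varepsilon^2}$. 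So the difference tends to zero in norm, and since each approximant is a sum of two Gaussian vectors, $\bchiG(\ket{1}) \le 2$.

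\emph{Border rank is not one.} We show that the set $\G_1 \cup \{0\}$ is norm closed and does not contain $\ket{1}$. Suppose Gaussian vectors $g_h \to \ket{1}$. Normalize them, writing $g_h = c_h \hat g_h$ with $|c_h| = \norm{g_h}_2 \to 1$. Pass to a subsequence so that $c_h \to c$. Then $\hat g_h \to \bar c\ket{1}$, a normalized vector. By Eq.~\eqref{eq:pure-gaussian-bargmann}, the Bargmann functions are $F_h(z) = C_h e^{A_h z^2/2 + b_h z}$ with $|A_h| < 1$. Norm convergence implies locally uniform convergence of Bargmann functions, by the Cauchy--Schwarz estimate Eq.~\eqref{eq:bargmann-growth} applied to differences. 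Hence $F_h(z) \to \bar c z$ locally uniformly.

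This already gives a contradiction. Each $F_h$ is nowhere vanishing, the limit $\bar c z$ is not identically zero, and it has a zero at $z = 0$. By Hurwitz's theorem, a locally uniform limit of nowhere-vanishing holomorphic functions is either nowhere vanishing or identically zero. So the limit cannot be $\bar c z$, and $\bchiG(\ket{1}) \ge 2$. This Hurwitz argument sidesteps any compactness analysis of the parameters $(A_h, b_h)$, which is where I expected the main difficulty to lie.

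\emph{Exact rank is infinite.} Suppose $z = \sum_{j=1}^r C_j e^{A_j z^2/2 + b_j z}$ as entire functions. First merge terms with identical $(A_j, b_j)$, discarding any that cancel. The resulting exponents $q_j(z) = A_j z^2/2 + b_j z$ are pairwise distinct polynomials. Now apply the classical Borel lemma: if $\sum_j P_j e^{q_j} = 0$ with polynomial coefficients $P_j$ and pairwise distinct polynomial exponents $q_j$ (none of whose differences is constant), then all $P_j = 0$. Here the identity reads
\[
\sum_j C_j e^{q_j} - z\, e^{0} = 0.
\]
If no $q_j$ equals $0$, the lemma forces the coefficient $z$ to vanish, which is false. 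If some $q_j = 0$, the coefficient $C_j - z$ of $e^{0}$ is a nonzero polynomial, again a contradiction. So no finite decomposition exists and $\chiG(\ket{1}) = +\infty$.

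For completeness I would prove Borel's lemma by induction on $r$. Divide by $e^{q_r}$ and differentiate $\deg P_r + 1$ times to kill the $P_r$ term. Each remaining term becomes $\tilde P_j e^{q_j - q_r}$, where $\tilde P_j$ is a nonzero polynomial whenever $P_j \ne 0$, because $q_j - q_r$ is nonconstant. Applying the inductive hypothesis to these terms shows the remaining $P_j$ vanish, and then $P_r e^{q_r} = 0$ gives $P_r = 0$. This is the only step requiring real care.
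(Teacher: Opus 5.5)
Your proof is correct. The upper bound is the paper's own argument, with the $O(\varepsilon)$ remainder made explicit.

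The two lower bounds follow a different route. The paper does not prove them in place. It cites trace-norm closedness of Gaussian states \cite{lami2018gaussianquantumresourcetheories} to conclude that $\G_1 \cup \{0\}$ is norm closed. It cites Appendix~D.1 of \cite{hahn2026classicalsimulationquantumresource} for $\chiG(\ket{1}) = +\infty$.

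You instead work entirely in the Bargmann picture the paper already sets up:
\begin{itemize}
\item The growth bound \eqref{eq:bargmann-growth} turns norm convergence into locally uniform convergence.
\item The nowhere-vanishing form \eqref{eq:pure-gaussian-bargmann} of Gaussian Bargmann functions then feeds Hurwitz's theorem.
\item Injectivity of the Bargmann map, from \eqref{eq:bargmann-norm}, reduces the exact-rank claim to Borel's lemma on exponential polynomials.
\end{itemize}
Your inductive sketch of Borel's lemma is sound. The step you flag follows from $(P e^{q})' = (P' + q'P)e^{q}$: when $q$ is nonconstant and $P \neq 0$, the term $q'P$ has strictly larger degree than $P'$.

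The citation route is shorter and rests on a stronger statement, namely closedness of the whole Gaussian set. Your route is elementary and self-contained, and it yields more:
\begin{itemize}
\item Hurwitz shows that any nonzero norm limit of Gaussian vectors has a zero-free Bargmann function. Combined with the zero-free characterization in Appendix~\ref{appendix:regularization}, this reproves closedness of $\G_1 \cup \{0\}$.
\item Borel's lemma gives infinite exact Gaussian rank for every vector whose Bargmann function is $P e^{q}$ with $P$ a nonconstant polynomial. This covers every Fock state $\ket{m}$ with $m \ge 1$.
\end{itemize}

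One small simplification: the normalization and subsequence step is unnecessary. Hurwitz's theorem applies directly to $F_{g_h} \to z$.
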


\begin{proof}
  As real $\varepsilon \to 0$, the coherent-state expansion in Eq.~\eqref{eq:coherent-occupation-expansion} gives
  \begin{equation}
    \frac{\ket{\varepsilon} - \ket{0}}{\varepsilon}
    = \ket{1} + O (\varepsilon)
    \longrightarrow \ket{1}.
    \label{eq:one-photon-coherent-border}
  \end{equation}
  Thus $\bchiG (\ket{1}) \le 2$.

  Gaussian density operators form a trace-norm-closed set~\cite[Appendix A, Lemma 1]{lami2018gaussianquantumresourcetheories}.
  Hence $\G_1 \cup \set{0}$ is closed in $\norm{\cdot }_2$.
  Since $\ket1$ is non-Gaussian, $\bchiG (\ket{1}) > 1$.

  The exact Gaussian rank is $\chiG (\ket{1}) = +\infty$~\cite[Appendix D.1]{hahn2026classicalsimulationquantumresource}.
\end{proof}

\section{Fermionic proofs}
\label{appendix:fermionic-proofs}

\subsection{Gaussian rank lower bounds for \texorpdfstring{\IfFontSeriesContextTF{bf}{\boldmath}{}$\ket{M}^{\otimes k}$}{the k-fold tensor power of ket M}}
\label{appendix:formal-main}

The finite-copy bound is governed by two combinatorial quantities.
For a polynomial $f (z)$, we write $[z^d] f (z)$ for the coefficient of $z^d$.
For integers $n \ge 1$ and $0 \le p \le n$, set
\begin{equation}
  C_{n, p} \coloneqq \sum_{q = 0}^{\floor{p / 2}} \binom{n}{p - 2 q}.
  \label{eq:rho-main}
\end{equation}
For integers $k \ge 1$ and $0 \le p \le 2 k$, set
\begin{equation}
  D_{k, p} \coloneqq [z^p] (1 + 8 z + 7 z^2)^k.
  \label{eq:D-shell-app}
\end{equation}

\begin{theorem}[Gaussian rank lower bounds for $\ket{M}^{\otimes k}$]
  \label{thm:main_app}
  For every $k \ge 1$, $0 \le p \le 2 k$, and $0 \le \delta < 1$,
  \begin{equation}
    \abchiG (\ket{M}^{\otimes k}) \ge (1 - \delta^2) \frac{D_{k, p}}{C_{4 k, p}}.
    \label{eq:finite-main}
  \end{equation}
  Consequently, there is an absolute constant $c > 0$ such that
  \begin{equation}
    \abchiG (\ket{M}^{\otimes k}) \ge c (1 - \delta^2) \mleft(\frac{343}{243}\mright)^k
    \label{eq:exponential-main}
  \end{equation}
  for every $k \ge 1$ and $0 \le \delta < 1$.
\end{theorem}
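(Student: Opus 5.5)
The plan follows the sketch in Sec.~\ref{sec:fermionic-results}. For an $n$-mode vector $\ket{\psi}$ and $0\le p\le n$, let $F_p(\ket{\psi})$ be the $2^n\times\binom{2n}{p}$ matrix whose columns are $\gamma_J\ket{\psi}$, with $J$ ranging over $p$-subsets of $[2n]$. Three facts are needed.

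\textbf{Step 1 (Gaussian rank of $F_p$).} For every pure Gaussian vector $\ket{G}$ we have $\rank F_p(\ket{G})=C_{n,p}$. Since $U_A$ acts on Majorana operators by an orthogonal rotation, it maps the span of degree-$p$ monomials to itself. Hence the column space of $F_p(U_A\ket{0^n})$ is $U_A$ applied to the column space of $F_p(\ket{0^n})$, and likewise for the odd reference state. For the vacuum, write $\gamma_{2j-1}=a_j+a_j^\dagger$ and $\gamma_{2j}=i(a_j^\dagger-a_j)$. On $\ket{0^n}$, a monomial using both Majoranas of mode $j$ acts on that mode as a scalar. A monomial using exactly one of them creates a particle in mode $j$. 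So $\spn\{\gamma_J\ket{0^n}:|J|=p\}$ is the span of occupation states $\ket{x}$ with $|x|=p-2q$, where $q$ is the number of fully used pairs. This gives dimension $\sum_q\binom{n}{p-2q}=C_{n,p}$. The odd reference state is handled identically up to a relabeling.

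\textbf{Step 2 (lower semicontinuity).} The map $\ket{\psi}\mapsto F_p(\ket{\psi})$ is linear and continuous. Matrix rank is subadditive. The set of matrices of rank at most $R$ is closed. Therefore $\bchiG(\ket{\eta})\le r$ implies $\rank F_p(\ket{\eta})\le rC_{n,p}$.

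\textbf{Step 3 (orthonormal columns for the target).} For one copy of $\ket{M}$ on four modes, there are one monomial of degree $0$, eight of degree $1$, and seven of degree $2$ whose images of $\ket{M}$ are orthonormal. The eight degree-one images $\gamma_\mu\ket{M}$ are automatically orthonormal, since $\bra{M}\gamma_\mu\gamma_\nu\ket{M}=\delta_{\mu\nu}$ in the odd-degree sector. For degree two, the even states orthogonal to $\ket{M}$ span a $7$-dimensional space. I expect the fifteen states $\gamma_\mu\gamma_\nu\ket{M}$, $\mu<\nu$, to span exactly these seven dimensions. The reason is that $\ket{M}$ is stabilized by a quartic product such as $\gamma_1\gamma_3\gamma_5\gamma_7$, so the bivectors pair up. Then I will choose seven pairs, for example $\gamma_1\gamma_j$ for $j=2,\dots,8$, and check orthonormality by computing $\bra{M}\gamma_1\gamma_j\gamma_1\gamma_k\ket{M}$ directly. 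This explicit verification is the step I expect to require the most care, including fermionic signs.

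\textbf{Step 4 (tensor products).} Tensor products of the local columns, with total degree $p$, give columns of $F_p(\ket{M}^{\otimes k})$. Up to signs from reordering the Majorana operators, these are tensor products of orthonormal vectors, hence orthonormal. Their number is $D_{k,p}$. Let $P$ be the $2^{4k}\times D_{k,p}$ submatrix they form, so $P^\dagger P=I$.

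\textbf{Step 5 (error control).} Suppose $\norm{\ket{M}^{\otimes k}-\ket{\eta}}_2\le\delta$. Let $Q$ be the corresponding submatrix of $F_p(\ket{\eta})$. Each Majorana monomial is unitary, so every column of $P-Q$ has norm at most $\delta$. Hence $\norm{P-Q}_{\mathrm F}^2\le\delta^2D_{k,p}$. For any matrix $Q$ of rank $R$, the best Frobenius approximation of an isometry by a rank-$R$ matrix gives $\norm{P-Q}_{\mathrm F}^2\ge D_{k,p}-R$. Combining the two bounds yields $R\ge(1-\delta^2)D_{k,p}$. By Step 2, $R\le\rank F_p(\ket{\eta})\le\bchiG(\ket{\eta})\,C_{4k,p}$. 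Together these give Eq.~\eqref{eq:finite-main}.

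\textbf{Step 6 (asymptotics).} Take $p=\floor{8k/9}$. For the numerator, use the saddle point of $(1+8z+7z^2)^k/z^p$, located where $z(8+14z)/(1+8z+7z^2)=8/9$. For the denominator, note $C_{4k,p}\le\sum_{m\le p}\binom{4k}{m}$. Since $p\le 2k$ lies below the midpoint $4k/2$ of this binomial row, this sum is at most a constant times $\binom{4k}{p}$. The growth rate is $\binom{4k}{8k/9}^{1/k}$ by Stirling's formula. I will check that the ratio of the two rates equals $343/243$ and that the polynomial prefactors cancel, since both counts are Gaussian-type with a $\Theta(k^{-1/2})$ local behavior. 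This gives Eq.~\eqref{eq:exponential-main} with an absolute constant $c$.
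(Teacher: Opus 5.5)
Your proposal is correct and is essentially the paper's proof: invariance of $\mathcal{A}_p$ under Gaussian conjugation plus the vacuum count, lower semicontinuity of rank, a local orthonormal frame tensored into $D_{k,p}$ columns, Eckart--Young, and saddle-point/Stirling asymptotics at $p=\lfloor 8k/9\rfloor$. Your quadratic monomials $\gamma_1\gamma_j$, $j=2,\dots,8$, are a valid alternative to the paper's set, since $\{\gamma_1\gamma_j\ket{M}\}_{j=1}^{8}$ is the image of the orthonormal family $\{\gamma_j\ket{M}\}_{j=1}^{8}$ under the unitary $\gamma_1$ and its $j=1$ member is $\ket{M}$ itself.

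The only imprecision is in Step 6. There, $\sum_{m\le p}\binom{4k}{m}=O\bigl(\binom{4k}{p}\bigr)$ holds because $p/(4k)\to 2/9$ stays bounded away from $1/2$, so consecutive ratios are at most about $2/7$. The condition $p\le 2k$ alone is not enough: at $p=2k$ this ratio grows like $\sqrt{k}$.
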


For comparison, the computational-basis expansion is
\begin{equation}
  \ket{M}^{\otimes k}
  = 2^{-k / 2} \sum_{x \in \set{0, 1}^k}
  \bigotimes_{b = 1}^k \ket{x_b x_b x_b x_b},
  \label{eq:M-fock-expansion-main}
\end{equation}
whose $2^k$ terms are Gaussian Slater determinants.
Set $c_\delta \coloneqq c (1 - \delta^2)$.
For every $0 < \delta < 1$, Theorem~\ref{thm:main_app}, Eq.~\eqref{eq:gaussian-rank-comparison}, and the occupation-basis expansion in Eq.~\eqref{eq:M-fock-expansion-main} give
\begin{equation}
  c_\delta (343 / 243)^k \le \abchiG (\ket{M}^{\otimes k}) \le \achiG (\ket{M}^{\otimes k}) \le \bchiG (\ket{M}^{\otimes k}) \le \chiG (\ket{M}^{\otimes k}) \le 2^k.
  \label{eq:gap-main}
\end{equation}
At $\delta = 0$, the approximate ranks reduce to their exact counterparts, and the corresponding comparison is
\begin{equation}
  c_0 (343 / 243)^k \le \bchiG (\ket{M}^{\otimes k}) \le \chiG (\ket{M}^{\otimes k}) \le 2^k.
\end{equation}

The proof compares matrices whose columns are obtained by applying degree-$p$ Majorana monomials to a state.
Appendix~\ref{appendix:flattenings} shows that the full matrix has rank $C_{n, p}$ on every pure Gaussian state and rank at most $r C_{n, p}$ on every state of border Gaussian rank $r$.
Appendix~\ref{appendix:short-proof} then selects a restricted matrix with $D_{k, p}$ orthonormal columns on $\ket{M}^{\otimes k}$.
In Appendix~\ref{appendix:spectral}, we use Lemma~\ref{lem:eckart-young-app} with the Gaussian rank bound, orthonormality, and the Frobenius-norm error estimate to obtain the finite-copy lower bound in Eq.~\eqref{eq:finite-main}.
Appendix~\ref{appendix:exponential-estimate} completes the argument by taking $p = \floor{8 k / 9}$ and estimating the resulting ratio, which gives Eq.~\eqref{eq:exponential-main}.

\subsubsection{Matrices generated by Majorana monomials}
\label{appendix:flattenings}

For an index set $J = \set{j_1 < \cdots < j_p}$, subscripts will denote products in increasing order.
Thus $\gamma_J \coloneqq \gamma_{j_1} \cdots \gamma_{j_p}$, with the same convention for other indexed operators.
For $\ket{\psi} \in \mc H_n$ and $0 \le p \le 2 n$, let $F_p (\ket{\psi})$ be the matrix whose columns are indexed by the $p$-element subsets $J \subset [2 n]$, with column $J$ equal to $\gamma_J \ket{\psi}$.
The matrix depends linearly on $\ket{\psi}$ and has $\binom{2 n}{p}$ columns.
For a nonempty collection $\mc L$ of $p$-element subsets of $[2 n]$, retain only the columns indexed by $\mc L$:
\begin{equation}
  F_{\mc L} (\ket{\psi}) e_J \coloneqq \gamma_J \ket{\psi}, \quad J \in \mc L.
  \label{eq:restricted-majorana-action}
\end{equation}
Here $e_J \in \C^{\mc L}$ is the standard coordinate vector indexed by $J$, so $F_{\mc L} (\ket{\psi}) : \C^{\mc L} \to \mc H_n$.
Restricting the columns cannot increase rank, so $\rank F_{\mc L} (\ket{\psi}) \le \rank F_p (\ket{\psi})$.
The space spanned by the degree-$p$ Majorana monomials will also be useful:
\begin{equation}
  \mc A_p \coloneqq \spn \set*{\gamma_J : \abs{J} = p}, \quad \mc A_p \ket{\psi} \coloneqq \spn \set*{A \ket{\psi} : A \in \mc A_p}.
  \label{eq:majorana-operator-space}
\end{equation}
The Majorana monomials are linearly independent, so $\dim \mc A_p = \binom{2 n}{p}$.
By construction,
\begin{equation}
  \im F_p (\ket{\psi}) = \mc A_p \ket{\psi}, \quad \rank F_p (\ket{\psi}) = \dim (\mc A_p \ket{\psi}).
  \label{eq:majorana-action-image}
\end{equation}

Eq.~\eqref{eq:pure-gaussian-majorana-rank} below expresses the dimension of the osculating filtration of the spinor variety in the Fock representation of the Clifford algebra~\cite{landsberg2002projectivegeometryhomogeneousspaces}.
For $p = 2 s$, the projectivization of $\im F_p (\ket{G})$ is the order-$s$ osculating space described explicitly in Proposition~3.24 and Corollary~3.25 of Ref.~\cite{freire2018projectiveaspectsgeometrylagrangian}; the case of odd $p$ follows from the same exterior-algebra model of Clifford multiplication~\cite{manivel2009spinorvarietiessecants,bhoja2023spinorspurespinors}.

\begin{lemma}[Rank of $F_p$ for pure Gaussian states]
  \label{lem:pure-rho-app}
  For every nonzero pure Gaussian state $\ket{G}$ on $n$ modes and every $0 \le p \le n$,
  \begin{equation}
    \rank F_p (\ket{G}) = C_{n, p}.
    \label{eq:pure-gaussian-majorana-rank}
  \end{equation}
\end{lemma}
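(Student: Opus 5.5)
We reduce to the vacuum and then count dimensions in the exterior-algebra model of the Fock space. Every pure Gaussian state is $\ket{G} = c\,U\ket{0^n}$ or $c\,U\gamma_1\ket{0^n}$ for a Gaussian unitary $U$. Conjugation by $U$ acts on Majorana operators as $U^\dagger \gamma_\mu U = \sum_\nu O_{\mu\nu}\gamma_\nu$ with $O \in SO(2n)$. Hence $U^\dagger \gamma_J U$ is a sum of degree-$p$ monomials: the antisymmetrized product of the rotated generators stays homogeneous of degree $p$, because the $O$-rotated generators still anticommute pairwise and square to one. So $U^\dagger \mc A_p U = \mc A_p$, which gives $\dim(\mc A_p \ket{G}) = \dim(\mc A_p \ket{0^n})$ in the even case. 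For the odd reference state $\gamma_1\ket{0^n}$, the map $\gamma_1$ is an orthogonal reflection. Up to the automorphism $\gamma_\mu \mapsto -\gamma_1\gamma_\mu\gamma_1$ (a reflection in $O(2n)$), $\gamma_1$ also preserves $\mc A_p$ under conjugation. Therefore $\mc A_p\gamma_1\ket{0^n} = \gamma_1\mc A_p\ket{0^n}$, which has the same dimension. By Eq.~\eqref{eq:majorana-action-image}, it remains to compute $\dim(\mc A_p\ket{0^n})$.

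Next rewrite $\mc A_p$ in terms of ladder operators. Since $\gamma_{2j-1},\gamma_{2j}$ and $a_j,a_j^\dagger$ span the same space, $\mc A_p$ is the span of products $\prod_j (\text{one of } 1, a_j, a_j^\dagger, a_j^\dagger a_j \text{ or } a_ja_j^\dagger)$ containing exactly $p$ generators in total. More precisely, choose $J$ with $|J\cap\{2j-1,2j\}|\in\{0,1,2\}$ for each $j$. A pair contributes $\gamma_{2j-1}\gamma_{2j} = i(I - 2a_j^\dagger a_j)$ up to sign, and a singleton contributes a linear combination of $a_j,a_j^\dagger$. Acting on the vacuum, each $a_j$ kills it and each $a_j^\dagger a_j$ acts as zero, so a pair acts as the scalar $i$ and a singleton acts as $a_j^\dagger$. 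Thus $\gamma_J\ket{0^n} = (\text{phase})\,\ket{x}$, where $x$ is the indicator of the set $S$ of modes meeting $J$ in exactly one index. The $p$ indices split as $|S|$ singletons plus $q$ pairs, so $|S| = p-2q$ with $q\ge 0$ and $q + |S| \le n$.

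Conversely, every $x$ with $|x| = p-2q$ and $0\le q\le \lfloor p/2\rfloor$ is reached, provided there are enough remaining modes to host the $q$ pairs, i.e. $q \le n - (p-2q)$. This holds whenever $p \le n$, since then $p - q \le n$. The image $\mc A_p\ket{0^n}$ is therefore the span of all occupation states with $|x|\in\{p, p-2, \dots\}$. These are orthonormal, so the dimension is $\sum_{q=0}^{\lfloor p/2\rfloor}\binom{n}{p-2q} = C_{n,p}$. The hypothesis $p\le n$ is used exactly at this point; for $p>n$ the count would truncate.

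The main obstacle is making the invariance $U^\dagger\mc A_p U = \mc A_p$ rigorous. The clean route is to note that the antisymmetrization map $\Lambda^p(\C^{2n}) \to \mc A_p$, $v_1\wedge\cdots\wedge v_p \mapsto \frac1{p!}\sum_\sigma \operatorname{sgn}\sigma\, \gamma(v_{\sigma 1})\cdots\gamma(v_{\sigma p})$, is $O(2n)$-equivariant. For orthonormal vectors the anticommutation relations make this antisymmetrized product equal to the plain product, and its image is precisely $\mc A_p$. Once this is in place, the remaining vacuum computation is routine bookkeeping.
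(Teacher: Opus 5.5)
Your proof is correct and follows essentially the paper's route: invariance of $\mc A_p$ under the orthogonal action of Gaussian unitaries (your antisymmetrization-equivariance argument is equivalent to the paper's determinant expansion $U^\dagger\gamma_J U=\sum_{J'}\det(R_{J,J'})\gamma_{J'}$), the same $\gamma_1$-conjugation step for odd parity, and the vacuum count that uses $p\le n$ to place the $q$ pairs on modes disjoint from the singletons. The differences are cosmetic. You reach each basis state $\ket{x}$ with a single monomial $\gamma_J$ instead of the paper's $a_K^\dagger Z_Q$. Also, your loose phrase listing both $a_j^\dagger a_j$ and $a_ja_j^\dagger$ as per-mode factors should be dropped in favor of your own precise statement that the degree-two factor is $\gamma_{2j-1}\gamma_{2j}=i(I-2a_j^\dagger a_j)$.
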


\begin{proof}
  We begin with an even-parity state $\ket{G} = U \ket{0^n}$, where $U$ is a Gaussian unitary.
  Conjugation by $U$ rotates the Majoranas by a real orthogonal transformation:
  \begin{equation}
    U^\dagger \gamma_i U = \sum_{j = 1}^{2 n} R_{i j} \gamma_j, \quad R \in \mathrm{SO} (2 n).
    \label{eq:majorana-rotation}
  \end{equation}
  For $p$-element subsets $J$, $J' \subset [2 n]$, let $R_{J, J'}$ denote the corresponding $p \times p$ submatrix.
  Expanding the rotated Majoranas and using their anticommutation relations gives
  \begin{equation}
    U^\dagger \gamma_J U = \sum_{\abs{J'} = p} \det (R_{J, J'}) \gamma_{J'}.
    \label{eq:rotated-majorana-product}
  \end{equation}
  Thus conjugation $A \mapsto U^\dagger A U$ preserves $\mc A_p$.
  The inverse conjugation $A \mapsto U A U^\dagger$ also preserves $\mc A_p$, and hence
  \begin{equation}
    U^\dagger \mc A_p U = \mc A_p.
    \label{eq:majorana-space-invariance}
  \end{equation}
  Consequently,
  \begin{equation}
    \mc A_p \ket{G} = \mc A_p U \ket{0^n} = U (U^\dagger \mc A_p U) \ket{0^n} = U \mc A_p \ket{0^n}.
    \label{eq:gaussian-majorana-image}
  \end{equation}
  Since multiplication by $U$ preserves dimension,
  \begin{equation}
    \rank F_p (\ket{G}) = \dim (\mc A_p \ket{G}) = \dim (\mc A_p \ket{0^n}) = \rank F_p (\ket{0^n}).
    \label{eq:gaussian-vacuum-rank}
  \end{equation}

  It remains to evaluate the rank on the vacuum.
  Fix a $p$-element subset $J \subset [2 n]$.
  For each mode $j$, the set $J$ contains zero, one, or both of the indices $2 j - 1$ and $2 j$.
  Let $K$ contain the modes for which exactly one index occurs, and let $Q$ contain those for which both occur.
  It follows that $\abs{K} = p - 2 \abs{Q}$, and $\gamma_J \ket{0^n}$ differs from $a_K^\dagger \ket{0^n}$ only by a phase.
  Thus every column of $F_p (\ket{0^n})$ lies in the sector with $p - 2 q$ particles for some $0 \le q \le \floor{p / 2}$.
  Denoting the $q$-particle sector by $\mc H_{n, q}$, we find
  \begin{equation}
    \im F_p (\ket{0^n}) \subset \bigoplus_{q = 0}^{\floor{p / 2}} \mc H_{n, p - 2 q}.
    \label{eq:vacuum-majorana-image}
  \end{equation}
  For the reverse inclusion, choose a basis state $\ket{K} = a_K^\dagger \ket{0^n}$ with $\abs{K} = p - 2 q$.
  Since $p \le n$, there is a $q$-element subset $Q \subset [n] \setminus K$.
  For each $j \in Q$, the quadratic Majorana operator $Z_j \coloneqq - i \gamma_{2 j - 1} \gamma_{2 j}$ fixes the vacuum.
  The operator $a_K^\dagger Z_Q$ belongs to $\mc A_p$ and maps $\ket{0^n}$ to $\ket{K}$.
  Every basis state on the right-hand side of Eq.~\eqref{eq:vacuum-majorana-image} is therefore attained, proving the reverse inclusion and the rank formula in Eq.~\eqref{eq:pure-gaussian-majorana-rank}.

  The odd-parity case differs only by a reflection.
  Indeed, after a suitable choice of Majorana basis, every odd-parity pure Gaussian state takes the form $U \gamma_1 \ket{0^n}$ for some Gaussian unitary $U$.
  Conjugation by $U \gamma_1$ again acts orthogonally on the Majoranas, with $\gamma_1$ supplying the reflection.
  Hence it preserves $\mc A_p$, and the argument leading to Eq.~\eqref{eq:gaussian-vacuum-rank} applies without change.
\end{proof}

Lemma~\ref{lem:pure-rho-app} immediately gives a useful rank obstruction.
If $\chiG (\ket{\phi}) \le r$, linearity and subadditivity of matrix rank give $\rank F_p (\ket{\phi}) \le r C_{n, p}$.
The same bound holds for border Gaussian rank.
To see this, suppose that $\bchiG (\ket{\phi}) \le r$, and choose a sequence $\ket{\phi_h} \to \ket{\phi}$ with $\chiG (\ket{\phi_h}) \le r$.
Since $F_p (\ket{\phi_h}) \to F_p (\ket{\phi})$ and matrix rank is lower semicontinuous,
\begin{equation}
  \rank F_p (\ket{\phi}) \le \liminf_{h \to \infty} \rank F_p (\ket{\phi_h}) \le r C_{n, p}.
  \label{eq:majorana-rank-cap-app}
\end{equation}

\subsubsection{Orthonormal states generated by Majorana monomials}
\label{appendix:short-proof}

We now select a family of Majorana monomials that produces orthonormal states from a single copy of $\ket{M}$.
On one four-mode block, let
\begin{align}
  \mc C_0 & \coloneqq \set{\varnothing}, \label{eq:short-degree-zero-columns} \\
  \mc C_1 & \coloneqq \set{\set{1}, \set{2}, \set{3}, \set{4}, \set{5}, \set{6}, \set{7}, \set{8}}, \label{eq:short-degree-one-columns} \\
  \mc C_2 & \coloneqq \set{\set{1, 2}, \set{1, 3}, \set{2, 3}, \set{1, 5}, \set{2, 5}, \set{1, 7}, \set{2, 7}}, \label{eq:short-degree-two-columns} \\
  \mc C   & \coloneqq \mc C_0 \cup \mc C_1 \cup \mc C_2,
  \label{eq:short-local-columns-app}
\end{align}
where $\gamma_\varnothing$ is the identity operator.

\begin{lemma}[Orthonormality of the states $\gamma_J \ket{M}$, $J \in \mc C$]
  \label{lem:short-local-frame-app}
  The states $\set*{\gamma_J \ket{M} : J \in \mc C}$ are orthonormal.
\end{lemma}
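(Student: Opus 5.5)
The plan is to reduce orthonormality to a parity-and-occupation bookkeeping on the symmetric differences $J \mathbin{\triangle} J'$ for $J, J' \in \mc C$.

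\emph{Normalization.} For any index set $J$, $\gamma_J$ is a product of distinct Hermitian, pairwise anticommuting Majoranas that square to $I$. Hence $\gamma_J^\dagger \gamma_J = I$, so $\gamma_J$ is unitary and each $\gamma_J \ket{M}$ has unit norm.

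\emph{Reduction of inner products.} For $J \ne J'$, the anticommutation relations give $\gamma_J^\dagger \gamma_{J'} = \pm \gamma_{J \triangle J'}$. It therefore suffices to show that
\begin{equation}
  \bra{M} \gamma_S \ket{M} = 0, \quad S = J \triangle J' \neq \varnothing,
\end{equation}
for every pair of distinct $J, J' \in \mc C$.

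To evaluate such matrix elements, split $S$ by modes as in the proof of Lemma~\ref{lem:pure-rho-app}. Let $K$ be the set of modes with exactly one index in $S$, and let $Q$ be the set of modes with both indices in $S$. Up to a phase, $\gamma_S$ maps each occupation state $\ket{x}$ to $\ket{x \oplus 1_K}$, and it acts diagonally by the factor $\prod_{j \in Q} (-1)^{x_j}$ coming from $Z_j = -i\gamma_{2j-1}\gamma_{2j}$. Since $\ket{M}$ is supported on $\ket{0000}$ and $\ket{1111}$, three cases arise.
\begin{itemize}
  \item If $K \notin \set{\varnothing, [4]}$, the support of $\gamma_S \ket{M}$ is disjoint from that of $\ket{M}$, so the matrix element vanishes. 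This includes every odd $\abs{S}$.
  \item If $K = \varnothing$, the matrix element is proportional to $\tfrac12\bigl(1 + (-1)^{\abs{Q}}\bigr)$, which vanishes whenever $\abs{Q}$ is odd.
  \item The case $K = [4]$ would require a genuine computation, and the plan is to show it never occurs.
\end{itemize}

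\emph{Case check over pairs.} I will go through the pairs by degree.
\begin{itemize}
  \item Pairs from $\mc C_0 \times \mc C_1$ and $\mc C_1 \times \mc C_2$ have odd $\abs{S}$, hence vanish.
  \item For $\mc C_1 \times \mc C_1$, $S = \set{a, b}$. Either both indices belong to one mode, giving $K = \varnothing$ and $\abs{Q} = 1$, or $\abs{K} = 2$. Both vanish.
  \item For $\mc C_0 \times \mc C_2$, $S$ is a set in $\mc C_2$. The set $\set{1,2}$ gives $K = \varnothing$ and $\abs{Q} = 1$. Every other set in $\mc C_2$ touches two different modes, so $\abs{K} = 2$.
  \item For $\mc C_2 \times \mc C_2$, overlapping pairs give $\abs{S} = 2$, handled exactly as above. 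For example, $\set{1,5} \triangle \set{2,5} = \set{1,2}$ and $\set{1,3} \triangle \set{1,5} = \set{3,5}$.
  \item The disjoint pairs in $\mc C_2$ are those pairing a set containing $1$ but not $2$ with a set containing $2$ but not $1$. Their union therefore contains both $1$ and $2$, so mode $1 \in Q$ and $\abs{K} \le 2$. Each such $S$, for instance $\set{1,2,5,7}$ with $K = \set{3,4}$, has $\abs{K} = 2$ and vanishes.
\end{itemize}

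\emph{Main obstacle.} The only delicate point is excluding $K = [4]$, where $\bra{1111} \gamma_S \ket{0000}$ and its mirror term might fail to cancel. The observation above resolves it: no $S$ arising here has $\abs{K} = 4$. Odd or two-element $S$ trivially have $\abs{K} \le 2$, and every four-element $S$ contains both indices $1$ and $2$ of mode $1$. I will present this as an explicit short table of the $\binom{16}{2}$ symmetric differences grouped by degree, so that no sign tracking is needed.
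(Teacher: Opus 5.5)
Your proof is correct, and for the even sector it takes a different route from the paper. For the odd columns the paper argues as you do for two-element $S$: if $i$ and $j$ lie on different modes, $\gamma_i\gamma_j$ maps $\ket{0000}$ and $\ket{1111}$ outside $\spn\set{\ket{0000},\ket{1111}}$, and if they lie on the same mode the $Z_j$ expectations cancel. For the even columns, however, the paper computes the seven states $\gamma_J\ket{M}$, $J\in\mc C_2$, explicitly. Up to phases they are $(\ket{0000}-\ket{1111})/\sqrt{2}$ and the pairs $(\ket{1100}\pm\ket{0011})/\sqrt{2}$, $(\ket{1010}\pm\ket{0101})/\sqrt{2}$, $(\ket{1001}\pm\ket{0110})/\sqrt{2}$, and orthonormality is read off from their supports; cross-parity pairs are handled by parity. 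Your uniform reduction to $\bra{M}\gamma_{J\triangle J'}\ket{M}=0$, together with the $(K,Q)$ classification, avoids all fermionic sign bookkeeping and shows exactly which $S$ could cause trouble. The paper's explicit states, in turn, are what it reuses for the weighted Gram matrix of $\ket{M_t}$. There, precisely your $K=\varnothing$, $\abs{Q}=1$ matrix elements become the nonzero overlaps $\sigma_t$.

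One correction to your last paragraph: $K=[4]$ is not the only case that must be excluded. $K=\varnothing$ with $\abs{Q}=2$ is equally fatal. For instance, $\gamma_1\gamma_2\gamma_3\gamma_4=-Z_1Z_2$ gives $\bra{M}\gamma_1\gamma_2\gamma_3\gamma_4\ket{M}=-1$, which is why $\set{1,3}$ and $\set{2,4}$ could not both be selected. Your observation that every four-element $S$ contains both indices of mode $1$ only gives $\abs{Q}\ge 1$. That rules out $K=[4]$ but not this case. Your disjoint-pair bullet does state the correct conclusion $\abs{K}=2$, but it supports it only by an example. The missing sentence is that the two remaining indices are distinct elements of $\set{3,5,7}$, which are all odd and hence lie on distinct modes, so $\abs{Q}=1$ exactly and the matrix element vanishes. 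With that sentence added, the argument is complete.
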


\begin{proof}
  Consider first the eight states of odd parity.
  Since $\gamma_i^2 = 1$, each $\gamma_i \ket{M}$ has unit norm.
  If $i$ and $j$ belong to different modes, $\gamma_i \gamma_j$ maps both $\ket{0000}$ and $\ket{1111}$ outside their span, and hence $\bra{M} \gamma_i \gamma_j \ket{M} = 0$.
  If they are the two Majoranas of the same mode, their product has opposite expectation values on $\ket{0000}$ and $\ket{1111}$, so the expectation value again vanishes.
  Thus $\bra{M} \gamma_i \gamma_j \ket{M} = \delta_{i j}$, and the odd-parity states are orthonormal.

  For the even-parity states, we compute the action of the quadratic monomials explicitly.
  Acting on $\ket{M}$, the seven quadratic monomials in Eq.~\eqref{eq:short-degree-two-columns} give, up to individual phases, the following state and three pairs of states:
  \begin{equation}
    \frac{\ket{0000} - \ket{1111}}{\sqrt{2}}, \quad \frac{\ket{1100} \pm \ket{0011}}{\sqrt{2}}, \quad \frac{\ket{1010} \pm \ket{0101}}{\sqrt{2}}, \quad \frac{\ket{1001} \pm \ket{0110}}{\sqrt{2}}.
    \label{eq:even-local-majorana-frame}
  \end{equation}
  These seven states are orthogonal to $\ket{M}$ and to one another.
  Finally, states of opposite parity are orthogonal, so the two families together are orthonormal.
\end{proof}

Apply the construction in Lemma~\ref{lem:short-local-frame-app} to each tensor factor.
For a word $\bm J = (J_1, \cdots, J_k) \in \mc C^k$, let $\abs{\bm J}$ denote the sum of the cardinalities of its entries.
We write $\gamma_{\bm J}$ for the corresponding blockwise Majorana product in increasing block order.
Using the same block order, we identify $\bm J$ with the corresponding subset of $[8 k]$.
For $0 \le p \le 2 k$, define the set of words of total degree $p$ in $\mc C^k$ by
\begin{equation}
  \mc L_{k, p} \coloneqq \set*{\bm J \in \mc C^k : \abs{\bm J} = p}.
  \label{eq:short-shell-size-app}
\end{equation}
There is one monomial of degree zero, eight of degree one, and seven of degree two.
The generating polynomial that counts these monomials by degree is therefore $1 + 8 z + 7 z^2$, and Eq.~\eqref{eq:D-shell-app} gives $\abs{\mc L_{k, p}} = D_{k, p}$.
Each state $\gamma_{\bm J} \ket{M}^{\otimes k}$ is, up to an overall fermionic sign, the tensor product of the local states $\gamma_{J_i} \ket{M}$.
These states are orthonormal by Lemma~\ref{lem:short-local-frame-app}, since the signs do not affect their inner products.

\subsubsection{Finite-copy lower bound via matrix approximation}
\label{appendix:spectral}

\begin{lemma}[Eckart--Young theorem~\cite{eckart1936approximationonematrixanotherlowerrank,mirsky1960symmetricgaugefunctionsunitarilyinvariantnorms}]
  \label{lem:eckart-young-app}
  Let $A \in \C^{d_1 \times d_2}$ have singular values $\sigma_1 (A) \ge \cdots \ge \sigma_{\min \set{d_1, d_2}} (A) \ge 0$.
  Let $B$ range over $\C^{d_1 \times d_2}$.
  For every integer $R \ge 0$,
  \begin{equation}
    \sum_{j = R + 1}^{\min \set{d_1, d_2}} \sigma_j (A)^2
    = \min_{\rank B \le R} \norm{A - B}_{\mathrm F}^2.
    \label{eq:eckart-young-app}
  \end{equation}
  The minimum is attained by truncating a singular-value decomposition of $A$ after its first $\min \set{R, d_1, d_2}$ terms.
  In particular, every matrix $B$ with $\rank B \le R$ obeys
  \begin{equation}
    \sum_{j = R + 1}^{\min \set{d_1, d_2}} \sigma_j (A)^2
    \le \norm{A - B}_{\mathrm F}^2.
    \label{eq:eckart-young-bound}
  \end{equation}
\end{lemma}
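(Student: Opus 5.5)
The statement to prove is the Eckart--Young--Mirsky theorem for the Frobenius norm. The plan is to prove the inequality \eqref{eq:eckart-young-bound} for arbitrary $B$ of rank at most $R$, and then exhibit the truncated SVD attaining equality. Together these give the minimum formula \eqref{eq:eckart-young-app}. Write $m = \min\set{d_1, d_2}$ and fix a singular value decomposition $A = \sum_{j=1}^m \sigma_j u_j v_j^\dagger$ with orthonormal $\set{u_j}$ and $\set{v_j}$. If $R \ge m$, the left-hand side is an empty sum and $B = A$ attains zero, so we assume $R < m$.

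\emph{Attainment.} For attainment, set $B_R = \sum_{j \le R} \sigma_j u_j v_j^\dagger$, which has rank at most $R$. By orthonormality of the singular vectors,
\begin{equation*}
  \norm{A - B_R}_{\mathrm F}^2 = \sum_{j > R} \sigma_j^2 .
\end{equation*}
This settles the attainment statement.

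\emph{Lower bound.} For the inequality, we use Weyl's perturbation inequality for singular values: $\sigma_{i+j-1}(X+Y) \le \sigma_i(X) + \sigma_j(Y)$. This follows from the Courant--Fischer min-max characterization of singular values, by intersecting the relevant subspaces of codimensions $i-1$ and $j-1$. Apply it with $X = A - B$, $Y = B$, and $j = R+1$. Since $\rank B \le R$, we have $\sigma_{R+1}(B) = 0$, so
\begin{equation*}
  \sigma_{i+R}(A) \le \sigma_i(A - B), \qquad 1 \le i \le m - R.
\end{equation*}
Squaring and summing over $i$ gives
\begin{equation*}
  \sum_{j > R} \sigma_j(A)^2 \le \sum_i \sigma_i(A-B)^2 = \norm{A - B}_{\mathrm F}^2 .
\end{equation*}

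An alternative route to the lower bound avoids Weyl's inequality. Let $P$ be the orthogonal projection onto $\im B$, which has rank at most $R$. Then $\norm{A-B}_{\mathrm F}^2 \ge \norm{(I-P)A}_{\mathrm F}^2$, because $(I-P)B = 0$ and $\norm{(I-P)X}_{\mathrm F} \le \norm{X}_{\mathrm F}$ for any $X$. Next, $\norm{(I-P)A}_{\mathrm F}^2 = \Tr(AA^\dagger) - \Tr(PAA^\dagger)$. Ky Fan's maximum principle bounds $\Tr(PAA^\dagger) \le \sum_{j \le R} \sigma_j^2$ for any rank-$R$ projection $P$. Combining these gives the same inequality.

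The main obstacle is justifying the Weyl inequality or the Ky Fan step. We would handle it via Courant--Fischer. For Ky Fan, write $\Tr(PAA^\dagger) = \sum_j \sigma_j^2 \norm{P u_j}^2$, where the weights $\norm{P u_j}^2$ lie in $[0,1]$ and sum to at most $R$. A linear function of such weights is maximized by putting weight one on the $R$ largest $\sigma_j^2$, which gives $\sum_{j \le R} \sigma_j^2$.
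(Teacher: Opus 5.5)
Your proof is correct, and both routes to the lower bound are sound. In the Weyl step, $\sigma_{i+R}(A)\le\sigma_i(A-B)$ holds for $1\le i\le m-R$ because $\sigma_{R+1}(B)=0$. In the Ky Fan step, the weights $\norm{Pu_j}_2^2$ lie in $[0,1]$ and sum to at most $\Tr P\le R$, which is exactly what the rearrangement argument needs.

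The paper gives no proof of this lemma; it only cites Eckart--Young and Mirsky, so there is no internal argument to compare against. Your Weyl-inequality route is essentially Mirsky's. The termwise domination $\sigma_{i+R}(A)\le\sigma_i(A-B)$, combined with monotonicity of symmetric gauge functions, shows that truncating the singular-value decomposition is optimal in every unitarily invariant norm, not only the Frobenius norm. Your projection/Ky Fan route is Frobenius-specific, since it relies on $\norm{A-B}_{\mathrm F}^2\ge\norm{(I-P)A}_{\mathrm F}^2$ and a trace identity. In exchange, it is more elementary and self-contained: it needs only the weight-maximization bound on $\Tr(PAA^\dagger)$, not Courant--Fischer.

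Either route suffices here, because the paper only ever uses the Frobenius inequality \eqref{eq:eckart-young-bound}.
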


\begin{proof}[Proof of the first assertion of Theorem~\ref{thm:main_app}]
  Fix $0 \le p \le 2 k$, and let $\ket{\eta}$ satisfy
  \begin{equation}
    \norm{\ket{\eta} - \ket{M}^{\otimes k}}_2 \le \delta.
    \label{eq:gaussian-rank-approximant}
  \end{equation}
  Write $r \coloneqq \bchiG (\ket{\eta})$ and abbreviate $\mc L \coloneqq \mc L_{k, p}$.
  The orthonormal family constructed in Appendix~\ref{appendix:short-proof} gives
  \begin{equation}
    F_{\mc L} (\ket{M}^{\otimes k})^\dagger F_{\mc L} (\ket{M}^{\otimes k}) = I_{D_{k, p}}.
    \label{eq:tensor-shell-isometry}
  \end{equation}
  Restriction to a subset of columns and Eq.~\eqref{eq:majorana-rank-cap-app} give
  \begin{equation}
    \rank F_{\mc L} (\ket{\eta}) \le r C_{4 k, p}.
    \label{eq:approximant-shell-rank}
  \end{equation}
  Moreover, every Majorana monomial is unitary, and hence
  \begin{equation}
    \norm[\big]{F_{\mc L} (\ket{M}^{\otimes k}) - F_{\mc L} (\ket{\eta})}_{\mathrm F}^2
    = D_{k, p} \norm[\big]{\ket{M}^{\otimes k} - \ket{\eta}}_2^2
    \le D_{k, p} \delta^2.
    \label{eq:tensor-shell-distance}
  \end{equation}
  All $D_{k, p}$ singular values of $F_{\mc L} (\ket{M}^{\otimes k})$ are one.
  Applying Eq.~\eqref{eq:eckart-young-bound} with $R = \rank F_{\mc L} (\ket{\eta})$, and then using Eqs.~\eqref{eq:approximant-shell-rank} and \eqref{eq:tensor-shell-distance}, gives
  \begin{equation}
    D_{k, p} - r C_{4 k, p} \le D_{k, p} - \rank F_{\mc L} (\ket{\eta}) \le \norm[\big]{F_{\mc L} (\ket{M}^{\otimes k}) - F_{\mc L} (\ket{\eta})}_{\mathrm F}^2 \le D_{k, p} \delta^2.
    \label{eq:tensor-shell-rank-comparison}
  \end{equation}
  Rearranging and minimizing over all $\ket{\eta}$ satisfying Eq.~\eqref{eq:gaussian-rank-approximant} proves Eq.~\eqref{eq:finite-main}.
\end{proof}

Combining the selected monomials of all degrees up to $p$ with the same parity gives the stronger finite-copy bound underlying Table~\ref{tab:canonical-resource-scaling}.
For integers $k \ge 1$ and $0 \le p \le 2 k$ and every $0 \le \delta < 1$,
\begin{equation}
  \abchiG (\ket{M}^{\otimes k}) \ge (1 - \delta^2) \frac{\sum_{j = 0}^{\floor{p / 2}} D_{k, p - 2 j}}{C_{4 k, p}}.
  \label{eq:short-certificate-main}
\end{equation}

\begin{proof}[Proof of Eq.~\eqref{eq:short-certificate-main}]
  Fix integers $k \ge 1$ and $0 \le p \le 2 k$.
  Let $F (\ket{\phi})$ have columns $\gamma_{\bm J} \ket{\phi}$ indexed by $\bm J \in \bigcup_{j = 0}^{\floor{p / 2}} \mc{L}_{k, p - 2 j}$, and write $N \coloneqq \sum_{j = 0}^{\floor{p / 2}} D_{k, p - 2 j}$ for their number.
  Lemma~\ref{lem:short-local-frame-app} and its tensor-product construction show that these $N$ columns are orthonormal on $\ket{M}^{\otimes k}$, including columns of different total degrees.

  For a pure Gaussian state $\ket{G}$, the vacuum-sector calculation and Gaussian conjugation in the proof of Lemma~\ref{lem:pure-rho-app} give
  \begin{equation}
    \mc{A}_{p - 2 j} \ket{G} \subset \mc{A}_p \ket{G}, \quad 0 \le j \le \floor{p / 2}.
  \end{equation}
  Thus $\rank F (\ket{G}) \le C_{4 k, p}$.
  Linearity and lower semicontinuity of matrix rank imply $\rank F (\ket{\eta}) \le r C_{4 k, p}$ whenever $\bchiG (\ket{\eta}) \le r$.
  If $\norm{\ket{\eta} - \ket{M}^{\otimes k}}_2 \le \delta$, unitarity of each Majorana monomial gives
  \begin{equation}
    \norm[\big]{F (\ket{M}^{\otimes k}) - F (\ket{\eta})}_{\mathrm F}^2
    = N \norm[\big]{\ket{M}^{\otimes k} - \ket{\eta}}_2^2
    \le N \delta^2.
  \end{equation}
  Since all $N$ target singular values equal one, Lemma~\ref{lem:eckart-young-app} yields $N - r C_{4 k, p} \le N \delta^2$.
  Rearranging and minimizing over $\ket{\eta}$ proves Eq.~\eqref{eq:short-certificate-main}.
\end{proof}

\subsubsection{Exponential lower bound via coefficient asymptotics}
\label{appendix:exponential-estimate}

\begin{theorem}[Saddle-point estimate for coefficients of large powers~\cite{flajolet2009analyticcombinatorics}]
  \label{thm:large-powers-saddle}
  Let $B (z) = \sum_{j \ge 0} b_j z^j$ have nonnegative coefficients and radius of convergence $R > 0$, with $B (0) > 0$ and $\gcd \set*{j : b_j > 0} = 1$.
  Let $A$ be analytic in $\abs{z} < R$ with nonnegative Taylor coefficients, and set $T \coloneqq \lim_{z \to R^-} z B' (z) / B (z)$.
  For $\lambda \in (0, T)$, let $\zeta = \zeta (\lambda) \in (0, R)$ be the unique saddle and $\xi = \xi (\lambda)$ its curvature, defined by
  \begin{equation}
    \frac{\zeta B' (\zeta)}{B (\zeta)} = \lambda, \quad \xi \coloneqq \evaluate*{\frac{d^2}{d z^2} \bigl(\log B (z) - \lambda \log z\bigr)}_{z = \zeta}.
    \label{eq:large-powers-saddle-point}
  \end{equation}
  As $n \to \infty$ through positive integers,
  \begin{equation}
    [z^{\lambda n}] A (z) B (z)^n = \frac{A (\zeta) B (\zeta)^n}{\zeta^{\lambda n + 1} \sqrt{2 \pi n \xi}} \bigl(1 + o (1)\bigr),
    \label{eq:large-powers-saddle}
  \end{equation}
  uniformly for $\lambda n \in \Z$ and $\lambda$ in any compact subset of $(0, T)$.
\end{theorem}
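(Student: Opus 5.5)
The plan is the standard saddle-point argument via Cauchy's coefficient formula on the circle of radius $\zeta$. Setting $N \coloneqq \lambda n \in \Z$, I would write
\begin{equation*}
  [z^{N}] A(z) B(z)^n = \frac{1}{2\pi} \int_{-\pi}^{\pi} \frac{A(\zeta e^{i\theta}) B(\zeta e^{i\theta})^n}{\zeta^{N} e^{i N \theta}}\,\odif{\theta},
\end{equation*}
which is legitimate because $0 < \zeta < R$. First I would justify the saddle. Put $\phi(u) \coloneqq \log B(e^u)$ for $u < \log R$. Since $B$ has nonnegative coefficients and is not a monomial (it has $b_0 > 0$ and, by the gcd condition, some $b_j > 0$ with $j \ge 1$), $\phi$ is strictly convex. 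Moreover $\phi'(u) = zB'(z)/B(z)$ at $z = e^u$ increases strictly from $0$ to $T$, so $\zeta(\lambda)$ exists, is unique, and depends continuously on $\lambda$. One then checks $\xi = \phi''(\log\zeta)/\zeta^2 > 0$: with $\lambda$ fixed, $\frac{d^2}{dz^2}(\log B - \lambda \log z)$ at $\zeta$ equals $\zeta^{-2}\phi''$, the first-derivative term vanishing at the saddle. The quantity $\phi''$ is the variance of the tilted law $b_j\zeta^j/B(\zeta)$, and it is positive.

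Next I would split the integral into a central arc $|\theta| \le \theta_n \coloneqq n^{-2/5}$ and its complement. On the central arc I expand
\begin{equation*}
  n\log B(\zeta e^{i\theta}) - iN\theta = n\log B(\zeta) - \tfrac12 n\sigma^2\theta^2 + O(n\theta^3),
\end{equation*}
where $\sigma^2 = \phi''(\log\zeta) = \zeta^2\xi$. The linear term cancels exactly because $\zeta B'(\zeta)/B(\zeta) = \lambda$. Since $n\theta_n^3 \to 0$, and $A(\zeta e^{i\theta}) = A(\zeta)(1 + O(\theta))$, the central contribution is
\begin{equation*}
  \frac{A(\zeta)B(\zeta)^n}{\zeta^N}\,\frac{1}{2\pi}\sqrt{\frac{2\pi}{n\sigma^2}}\,\bigl(1+o(1)\bigr),
\end{equation*}
after extending the Gaussian integral to $\R$ (the error is of size $e^{-c n^{1/5}}$). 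Substituting $\sigma^2 = \zeta^2\xi$ gives exactly $A(\zeta)B(\zeta)^n/(\zeta^{N+1}\sqrt{2\pi n\xi})$.

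The main obstacle is the tail $|\theta| > \theta_n$, and this is where aperiodicity enters. The key claim is that $|B(\zeta e^{i\theta})| < B(\zeta)$ for every $0 < |\theta| \le \pi$. If equality held, all terms $b_j\zeta^j e^{ij\theta}$ with $b_j > 0$ would share one argument; since $b_0 > 0$ fixes that argument as $0$, every such $j$ would satisfy $j\theta \in 2\pi\Z$, and $\gcd = 1$ then forces $\theta \in 2\pi\Z$. Near $\theta = 0$ the quadratic expansion gives
\begin{equation*}
  |B(\zeta e^{i\theta})| \le B(\zeta)\,e^{-c\theta^2}.
\end{equation*}
Combining this with compactness of $\{\delta_0 \le |\theta| \le \pi\}$ yields $|B(\zeta e^{i\theta})|/B(\zeta) \le e^{-c\min(\theta^2,1)}$. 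The tail is therefore at most
\begin{equation*}
  \sup_{|z| = \zeta}|A(z)|\;\frac{B(\zeta)^n}{\zeta^N}\;e^{-cn^{1/5}},
\end{equation*}
which is negligible relative to the main term, since that term is only polynomially smaller than $B(\zeta)^n\zeta^{-N}$.

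Finally, uniformity over $\lambda$ in a compact $\Lambda \subset (0,T)$ follows because $\zeta(\lambda)$ ranges over a compact subset of $(0,R)$. On a compact annulus the Taylor remainder constants for $\log B$ and $A$, the lower bound on $\sigma^2$, and the constant $c$ can all be chosen uniformly. For $c$ this is joint continuity of $(\zeta,\theta) \mapsto |B(\zeta e^{i\theta})|/B(\zeta)$ together with the strict inequality on a compact set. Note also that $A(\zeta) > 0$ unless $A \equiv 0$, so the relative error is well defined.
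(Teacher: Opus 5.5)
Your proposal is correct and is the standard saddle-point argument: Cauchy's integral on the circle $\abs{z} = \zeta$, a Gaussian approximation on a central arc of width $n^{-2/5}$, an exponentially small tail from the strict inequality $\abs{B(\zeta e^{i\theta})} < B(\zeta)$ (which you correctly derive from $b_0 > 0$ and the gcd condition), and constants made uniform over a compact range of $\zeta$. The paper gives no proof of its own; it cites Flajolet--Sedgewick, whose proof is exactly this argument, and your supporting details, namely $\xi = \phi''(\log\zeta)/\zeta^2$ and $A(\zeta) > 0$ unless $A \equiv 0$, are sound.
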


\begin{proof}[Proof of the second assertion of Theorem~\ref{thm:main_app}]
  It remains to estimate the ratio in Eq.~\eqref{eq:finite-main}.
  We take $p_k \coloneqq \floor{8 k / 9}$.
  This value maximizes the exponential base, as calculated in Eqs.~\eqref{eq:weighted-balanced-derivative}--\eqref{eq:weighted-balanced-base}.

  The degree enumerator of the selected one-block monomials is
  \begin{equation}
    P (z) \coloneqq 1 + 8 z + 7 z^2.
    \label{eq:local-shell-main}
  \end{equation}
  Put $\lambda_k \coloneqq p_k / k$.
  The polynomial $P$ satisfies the hypotheses of Theorem~\ref{thm:large-powers-saddle} with $T = \lim_{z \to \infty} z P' (z) / P (z) = 2$.
  Choose a compact interval in $(0, 2)$ containing $8 / 9$ in its interior.
  Since $\lambda_k \to 8 / 9$ and $k \lambda_k = p_k \in \Z$, the uniform estimate applies for all sufficiently large $k$.

  Apply Theorem~\ref{thm:large-powers-saddle} with $A (z) = 1$, $B (z) = P (z)$, and $\lambda = \lambda_k$.
  Write $\zeta_k \coloneqq \zeta (\lambda_k)$ and $\xi_k \coloneqq \xi (\lambda_k)$.
  The saddle equation becomes
  \begin{equation}
    \frac{\zeta_k}{P (\zeta_k)}
    P' (\zeta_k) = \lambda_k.
    \label{eq:tensor-shell-moving-saddle}
  \end{equation}
  The left-hand side of Eq.~\eqref{eq:tensor-shell-moving-saddle}, viewed as a function of $\zeta_k$, is strictly increasing from zero to two on the positive real axis.
  At $\zeta = 2 / 7$, it has value $8 / 9$ and derivative $112 / 81$.
  The implicit-function theorem and $\abs{\lambda_k - 8 / 9} < 1 / k$ therefore give
  \begin{equation}
    \zeta_k = \frac{2}{7} + O (1 / k), \quad \xi_k \to \frac{392}{81}.
    \label{eq:tensor-shell-saddle-limits}
  \end{equation}

  By Eqs.~\eqref{eq:D-shell-app} and \eqref{eq:large-powers-saddle},
  \begin{equation}
    U_k \coloneqq \frac{D_{k, p_k} \zeta_k^{p_k + 1} \sqrt{2 \pi k \xi_k}} {P (\zeta_k)^k} \longrightarrow 1.
    \label{eq:tensor-shell-saddle-normalization}
  \end{equation}

  We turn next to $C_{4 k, p_k}$.
  For every integer $q \ge 0$, define the normalized summand
  \begin{equation}
    r_{k, q} \coloneqq
    \begin{cases*}
      \binom{4 k}{p_k - 2 q} \binom{4 k}{p_k}^{-1}
      & if $0 \le 2 q \le p_k$, \\
      0
      & if $2 q > p_k$.
    \end{cases*}
    \label{eq:pure-gaussian-rank-cap-normalized-summand}
  \end{equation}
  These summands admit the product representation
  \begin{equation}
    r_{k, q} = \prod_{j = 1}^{q} \frac{(p_k - 2 j + 2) (p_k - 2 j + 1)} {(4 k - p_k + 2 j - 1) (4 k - p_k + 2 j)}.
    \label{eq:pure-gaussian-rank-cap-summand-product}
  \end{equation}
  The empty product is understood to be one.
  For each fixed $q$, Eq.~\eqref{eq:pure-gaussian-rank-cap-summand-product} gives
  \begin{equation}
    r_{k, q} \longrightarrow \mleft(\frac{4}{49}\mright)^q.
    \label{eq:pure-gaussian-rank-cap-summand-limit}
  \end{equation}

  To control the tail uniformly in $q$, observe that whenever $r_{k, q}$ is nonzero, each factor in Eq.~\eqref{eq:pure-gaussian-rank-cap-summand-product} is at most its value at $j = 1$.
  For the remaining values of $q$, the following bound is immediate:
  \begin{equation}
    0 \le r_{k, q} \le \mleft( \frac{p_k (p_k - 1)} {(4 k - p_k + 1) (4 k - p_k + 2)} \mright)^q.
    \label{eq:pure-gaussian-rank-cap-summand-bound}
  \end{equation}
  The base in Eq.~\eqref{eq:pure-gaussian-rank-cap-summand-bound} tends to $4 / 49$.
  Hence, for all sufficiently large $k$, the bound $r_{k, q} \le 2^{-q}$ holds for every integer $q \ge 0$.
  For any integer $N \ge 0$, it follows that
  \begin{equation}
    \begin{split}
      \abs*{
        \sum_{q = 0}^{\infty} r_{k, q}
        - \sum_{q = 0}^{\infty} \mleft(\frac{4}{49}\mright)^q
      }
      & \le \sum_{q = 0}^{N} \abs*{r_{k, q} - \mleft(\frac{4}{49}\mright)^q}
      + \sum_{q = N + 1}^{\infty} r_{k, q}
      + \sum_{q = N + 1}^{\infty} \mleft(\frac{4}{49}\mright)^q \\
      & \le \sum_{q = 0}^{N} \abs*{r_{k, q} - \mleft(\frac{4}{49}\mright)^q}
      + 2^{-N}
      + \frac{(4 / 49)^{N + 1}}{1 - 4 / 49}.
    \end{split}
    \label{eq:pure-gaussian-rank-cap-truncation-bound}
  \end{equation}
  For fixed $N$, the finite sum tends to zero as $k \to \infty$ by Eq.~\eqref{eq:pure-gaussian-rank-cap-summand-limit}.
  Taking the limit superior in Eq.~\eqref{eq:pure-gaussian-rank-cap-truncation-bound}, and then sending $N$ to infinity, proves that
  \begin{equation}
    R_k \coloneqq C_{4 k, p_k} \binom{4 k}{p_k}^{-1} = \sum_{q = 0}^{\infty} r_{k, q} \longrightarrow \sum_{q = 0}^{\infty} \mleft(\frac{4}{49}\mright)^q = \frac{49}{45}.
    \label{eq:pure-gaussian-rank-cap-binomial-ratio}
  \end{equation}

  The remaining binomial coefficient is estimated by Stirling's formula.
  Let $\mu_k \coloneqq p_k / (4 k)$, so that $\mu_k = 2 / 9 + O (1 / k)$.
  We obtain
  \begin{equation}
    S_k \coloneqq \binom{4 k}{p_k} \sqrt{8 \pi k \mu_k (1 - \mu_k)} \mu_k^{p_k} (1 - \mu_k)^{4 k - p_k} \longrightarrow 1.
    \label{eq:binomial-stirling-limit}
  \end{equation}

  We now combine these estimates.
  Put $\theta_k \coloneqq 8 k / 9 - p_k$, so that $0 \le \theta_k < 1$.
  Eqs.~\eqref{eq:tensor-shell-saddle-normalization}, \eqref{eq:pure-gaussian-rank-cap-binomial-ratio}, and \eqref{eq:binomial-stirling-limit} give the exact factorization
  \begin{equation}
    \begin{split}
      \mleft(\frac{243}{343}\mright)^k \frac{D_{k, p_k}}{C_{4 k, p_k}}
      & = \frac{U_k}{R_k S_k}
      \frac{2}{\zeta_k}
      \sqrt{\frac{\mu_k (1 - \mu_k)}{\xi_k}}
      \mleft(\frac{\zeta_k (1 - \mu_k)}{\mu_k}\mright)^{\theta_k} \\
      & \quad \times \mleft(\frac{7 P (\zeta_k)}{27}\mright)^k
      \mleft(\frac{2}{7 \zeta_k}\mright)^{8 k / 9}
      \mleft(\frac{9 \mu_k}{2}\mright)^{8 k / 9}
      \mleft(\frac{9 (1 - \mu_k)}{7}\mright)^{28 k / 9}.
    \end{split}
    \label{eq:coefficient-ratio-factorization}
  \end{equation}

  Every factor on the first line of Eq.~\eqref{eq:coefficient-ratio-factorization} converges to a positive limit.
  Each base on the second line is positive and differs from one by $O (1 / k)$; its logarithm is therefore $O (1 / k)$.
  Since the exponents are proportional to $k$, the logarithms of the corresponding powers remain bounded.
  These powers are therefore bounded above and away from zero, giving
  \begin{equation}
    \liminf_{k \to \infty} \mleft(\frac{243}{343}\mright)^k \frac{D_{k, p_k}}{C_{4 k, p_k}} > 0.
    \label{eq:coefficient-ratio-asymptotic}
  \end{equation}
  Eq.~\eqref{eq:coefficient-ratio-asymptotic} gives the required estimate for all sufficiently large $k$.
  The remaining finitely many ratios $D_{k, p_k} / C_{4 k, p_k}$ are positive, so decreasing the constant extends the estimate to every $k \ge 1$.
  Combining this estimate with Eq.~\eqref{eq:finite-main} proves Eq.~\eqref{eq:exponential-main}.
  The upper bound on the same factors also gives, with $p = \floor{8 k / 9}$,
  \begin{equation}
    \frac{D_{k, p}}{C_{4 k, p}}
    = \Theta \mleft(\mleft(\frac{343}{243}\mright)^k\mright).
    \label{eq:short-rate-main}
  \end{equation}
\end{proof}

\subsection{Proof of Theorem~\ref{thm:fermion-universal-main}}
\label{appendix:theorem-two-proof}

We first show how a lower bound on the approximate border Gaussian rank of a postselected output transfers to tensor powers of the target state.
We then construct a four-mode output of the form $\ket*{M_t}$ and obtain its rank bound from a weighted Gram matrix.
Together, these steps prove Theorem~\ref{thm:fermion-universal-main}.

\subsubsection{Gaussian postselection and Gaussian rank}
\label{appendix:gaussian-postselection-rank-lifting}

We reduce the number of modes by selecting a coefficient vector in an occupation-basis expansion.
For $\ket*{\eta} \in \mc H_{n + \ell}$, expansion in the occupation basis of the last $\ell$ modes gives
\begin{equation}
  \ket*{\eta} = \sum_{x \in \set*{0, 1}^\ell} \ket*{\eta_x} \otimes \ket*{x},
  \quad
  \ket*{\eta_x} = (I \otimes \bra*{x}) \ket*{\eta} \in \mc H_n.
  \label{eq:occupation-coefficient-vectors}
\end{equation}
Postselection selects a nonzero coefficient vector and normalizes it.
Orthogonality of the occupation basis gives
\begin{equation}
  \sum_{x \in \set*{0, 1}^\ell} \norm*{\ket*{\eta_x}}_2^2
  = \norm*{\ket*{\eta}}_2^2.
  \label{eq:occupation-coefficient-norms}
\end{equation}

\begin{lemma}[Preservation of Gaussianity under occupation measurement]
  \label{lem:gaussian-occupation-projection}
  Let $n$, $\ell$ be nonnegative integers and $\ket*{G} \in \mc H_{n + \ell}$ a pure Gaussian vector.
  For every $x \in \set*{0, 1}^\ell$, the coefficient vector $\ket*{G_x}$ in Eq.~\eqref{eq:occupation-coefficient-vectors} is either zero or a pure Gaussian vector in $\mc H_n$.
\end{lemma}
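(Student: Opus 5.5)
We reduce to the case of a single occupied mode and then induct on $\ell$. Write $\ket*{G} = c\, U \ket*{\phi_0}$, where $U$ is a Gaussian unitary on $n + \ell$ modes and $\ket*{\phi_0}$ is the vacuum or a one-particle reference state. The coefficient map $\ket*{\eta} \mapsto (I \otimes \bra*{x}) \ket*{\eta}$ factors through the single-mode maps $I \otimes \bra*{x_j}$, taken one mode at a time. It therefore suffices to treat $\ell = 1$ and to show that $(I \otimes \bra*{0})\ket*{G}$ and $(I \otimes \bra*{1})\ket*{G}$ are each either zero or pure Gaussian vectors on $n$ modes. Iterating this over the $\ell$ measured modes, with zero propagating to zero, gives the lemma.

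For $\ell = 1$, we use the characterization of pure Gaussian vectors by a maximal isotropic annihilator. A nonzero vector $\ket*{G}\in\mc H_{n+1}$ of definite parity is Gaussian exactly when there are $n+1$ linearly independent operators $b_i = \sum_\mu w_{i\mu}\gamma_\mu$ with $b_i \ket*{G} = 0$ and $\{b_i, b_j\} = 0$. Equivalently, the annihilating subspace $W\subset\C^{2n+2}$ is maximal isotropic for the bilinear form $\sum_\mu w_\mu w'_\mu$. This follows from the form $U\ket*{\phi_0}$ together with Eq.~\eqref{eq:majorana-rotation}, and the converse follows from the uniqueness of the Fock vacuum of a complete set of annihilators.

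Now let $\ket*{G_0}$ and $\ket*{G_1}$ be the coefficient vectors, and note that $\ket*{G} = \ket*{G_0}\otimes\ket*{0} + \ket*{G_1}\otimes\ket*{1}$. Consider the subspace $W' = \{w \in W : w \text{ has no component along } \gamma_{2n+1}, \gamma_{2n+2}\}$. Elements of $W'$ act only on the first $n$ modes, up to the Jordan--Wigner parity string, which merely multiplies each coefficient vector by a sign. Hence they annihilate both $\ket*{G_0}$ and $\ket*{G_1}$. By dimension counting, $\dim W' \ge (n+1) - 2 = n-1$.

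We then split into cases according to the projection of $W$ onto the last two Majorana coordinates. If the projection contains the isotropic vector $(1, i)$ or $(1,-i)$, which corresponds to $a_{n+1}$ or $a_{n+1}^\dagger$, then a suitable element $w = w_{\text{sys}} + a_{n+1}^{(\dagger)}$ of $W$ produces an extra annihilator or relation for one coefficient vector. In the remaining cases we combine two elements of $W$ whose tails along the last mode cancel. In each case we exhibit one more operator $b$ on the first $n$ modes with $b\ket*{G_x}=0$ (or $\ket*{G_x} = 0$), still isotropic with $W'$ and independent of it. This yields a maximal isotropic annihilator for $\ket*{G_x}$ on $n$ modes, so $\ket*{G_x}$ is Gaussian whenever it is nonzero.

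This case analysis is the main obstacle. Isotropy must be checked through the relation $\{b_i,b_j\}=0$, and a component of $w$ along the last mode can couple $\ket*{G_0}$ to $\ket*{G_1}$ rather than annihilate either one. As an alternative route, for even $\ket*{G}$ with nonzero vacuum amplitude we could use the BCS form $g_0 e^{X}\ket*{0^{n+1}}$ from the proof of Proposition~\ref{prop:six-mode-rank-gap}. Splitting $X = X' + a_{n+1}^\dagger L$, with $L$ linear in creation operators, gives $\ket*{G_0} = g_0 e^{X'}\ket*{0^n}$ and $\ket*{G_1} = \pm g_0 L e^{X'}\ket*{0^n}$. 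Both are Gaussian: the second is a linear creation operator applied to a Gaussian state, which is Gaussian or zero. The cases of zero vacuum amplitude and odd parity then follow by density and by the closedness of the Gaussian cone used in Proposition~\ref{prop:six-mode-rank-gap}, since the coefficient map is continuous.
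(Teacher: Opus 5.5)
\textbf{Gap in the primary route.} As written, your primary route stops exactly where the content of the lemma lies. The $\ell=1$ case analysis is only described, and you flag it yourself as unresolved. It closes with two observations you are missing.

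First, isotropy never has to be checked. If $b_w=\sum_\mu w_\mu\gamma_\mu$ and $b_{w'}$ both annihilate a nonzero vector, applying $\{b_w,b_{w'}\}=2(w\cdot w')I$ to it gives $w\cdot w'=0$. So the annihilator space of \emph{any} nonzero vector is isotropic, and Gaussianity of a nonzero $\ket{G_x}$ reduces to its annihilator having dimension $n$.

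Second, $W'$ is isotropic in $\C^{2n}$, so $n-1\le\dim W'\le n$ and only two cases occur. If $\dim W'=n$, then $W'$ alone does the job for both coefficient vectors. Otherwise the projection of $W$ onto the last two coordinates is all of $\C^2$, so $W$ contains $s_-+a_{n+1}$ and $s_++a_{n+1}^\dagger$ with $s_\pm$ supported on the first $2n$ Majoranas. In the paper's tensor ordering, $a_{n+1}$ acts on $\ket{u}\otimes\ket{y}$ as $P\ket{u}\otimes a\ket{y}$, with $P$ the parity of the first $n$ modes. Reading off components then gives $s_-\ket{G_1}=0$ and $s_+\ket{G_0}=0$. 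The remaining components are the coupling relations $s_-\ket{G_0}=-P\ket{G_1}$ and $s_+\ket{G_1}=-P\ket{G_0}$ that worried you; you can simply discard them.

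Finally, $s_-\notin W'$. Otherwise the vector of $a_{n+1}$ would lie in $W$ alongside $s_++a_{n+1}^\dagger$, contradicting isotropy since $\{a_{n+1},s_++a_{n+1}^\dagger\}=I$. Likewise $s_+\notin W'$. Hence $W'\oplus\C s_-$ and $W'\oplus\C s_+$ are $n$-dimensional annihilators of $\ket{G_1}$ and $\ket{G_0}$, respectively.

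\textbf{Gap in the fallback route.} Your BCS route is fine for even $\ket{G}$, including the zero-vacuum-amplitude case by density and closedness of the Gaussian cone. The claim that $L\ket{\phi}$ is Gaussian or zero for $L$ linear in creation operators also follows from the annihilator count above. But the odd case does not follow ``by density''. The parity sectors are orthogonal closed subspaces, so no odd vector is a limit of even ones, and closedness adds nothing there.

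The repair is the reduction used in the odd case of Proposition~\ref{prop:postselected-four-mode-reduction}. For $n\ge1$, $\ket{G'}\coloneqq\gamma_1\ket{G}$ is an even Gaussian vector, and $\gamma_1$ on the first retained mode commutes with $I\otimes\bra{x}$. So $\ket{G_x}=\gamma_1\ket{G'_x}$. For $n=0$ the claim is trivial.

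\textbf{Comparison with the paper.} With either repair you get a self-contained proof. The paper instead argues in a few lines from Bravyi's Gaussian-operator calculus: $I\otimes\ket{x}\bra{x}$ is a product of the Gaussian projectors $a_ja_j^\dagger$ and $a_j^\dagger a_j$, and Gaussian operators send Gaussian states to Gaussian states or zero. The reduced state of the pure Gaussian state $\ket{G_x}\otimes\ket{x}$ on the first $n$ modes is then pure and Gaussian.
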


\begin{proof}
    The single-mode projectors $a_j a_j^\dagger$ and $a_j^\dagger a_j$ select an empty and an occupied mode, respectively.
    Both are Gaussian operators, as noted immediately after Definition~2 in Ref.~\cite[p.~224]{bravyi2004lagrangianrepresentationfermioniclinear}.
    For each of the last $\ell$ modes, choose the projector specified by $x$.
    Their product is $I \otimes \ket*{x} \bra*{x}$, which is Gaussian by Corollary~1 of the same reference.
    Acting on $\ket*{G}$ gives $\ket*{G_x} \otimes \ket*{x}$, which is Gaussian or zero by Corollary~2 of the same reference~\cite[p.~225]{bravyi2004lagrangianrepresentationfermioniclinear}.
    In the nonzero case, the reduced state on the first $n$ modes is pure and Gaussian, so $\ket*{G_x}$ is a Gaussian vector.
\end{proof}

A rank bound for an output obtained with positive probability also constrains tensor powers of the target state.

\begin{proposition}[Gaussian rank inequality under postselection]
  \label{prop:typical-branch-lifting}
  Let $\ket*{\psi} \in \mc{H}_n$ and $\ket*{\phi} \in \mc{H}_{n'}$ be normalized pure states.
  Let $K_\omega : \mc H_n \to \mc H_{n_\omega}$, indexed by a finite set $\Omega$, be linear maps formed by composing Gaussian unitaries and occupation-coefficient selections.
  Suppose that $n_{\omega_*} = n'$ and
  \begin{equation}
    \sum_{\omega \in \Omega}
    K_\omega^\dagger K_\omega = I, \quad K_{\omega_*} \ket*{\psi} = \sqrt{p} \ket*{\phi}, \quad \omega_* \in \Omega, \quad 0 < p \le 1.
    \label{eq:completed-protocol-maps}
  \end{equation}
  For every fixed $0 \le \delta < \delta' < 1$, there are positive integers $m_k$, defined for all sufficiently large $k$, such that $m_k / k \to p$ and
  \begin{equation}
    \abchiG (\ket*{\psi}^{\otimes k})
    \ge \bchiG^{(\delta')} (\ket*{\phi}^{\otimes m_k}).
    \label{eq:typical-branch-rank-transfer}
  \end{equation}
\end{proposition}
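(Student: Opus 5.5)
Apply the instrument $\{K_\omega\}$ independently to each of the $k$ copies. For a record $\bm\omega = (\omega_1, \ldots, \omega_k) \in \Omega^k$, set $K_{\bm\omega} \coloneqq K_{\omega_1} \otimes \cdots \otimes K_{\omega_k}$, with the fermionic ordering convention fixed blockwise. Completeness gives $\sum_{\bm\omega} K_{\bm\omega}^\dagger K_{\bm\omega} = I$. Each $K_{\bm\omega}$ is a composition of Gaussian unitaries and occupation-coefficient selections acting on disjoint blocks. By Lemma~\ref{lem:gaussian-occupation-projection}, and because Gaussian unitaries preserve Gaussianity, $K_{\bm\omega}$ maps pure Gaussian vectors, including those entangled across copies, to pure Gaussian vectors or zero.

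Linearity gives $\chiG(K_{\bm\omega}\ket{\eta}) \le \chiG(\ket{\eta})$. Boundedness of $K_{\bm\omega}$ preserves norm convergence, so also $\bchiG(K_{\bm\omega}\ket{\eta}) \le \bchiG(\ket{\eta})$. For a record in which exactly $m$ entries equal $\omega_*$, the output is
\begin{equation}
  K_{\bm\omega}\ket{\psi}^{\otimes k} = p^{m/2}\, \ket{\phi}^{\otimes m} \otimes \ket{\chi_{\bm\omega}},
\end{equation}
up to a reordering sign. Here $\ket{\chi_{\bm\omega}}$ collects the other outputs and has norm squared $\prod_{i : \omega_i \ne \omega_*} \norm{K_{\omega_i}\ket{\psi}}_2^2$. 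The total probability of such records is $\binom{k}{m} p^m (1-p)^{k-m}$.

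Now take an approximant $\ket{\eta}$ with $\norm{\ket{\eta} - \ket{\psi}^{\otimes k}}_2 \le \delta$ and $\bchiG(\ket{\eta}) = r$. Completeness gives
\begin{equation}
  \sum_{\bm\omega} \norm{K_{\bm\omega}(\ket{\eta} - \ket{\psi}^{\otimes k})}_2^2 \le \delta^2 .
\end{equation}
By a Markov/averaging argument, inside any record class $S$ of probability $P(S)$ there is a record $\bm\omega \in S$ with
\begin{equation}
  \norm{K_{\bm\omega}(\ket{\eta} - \ket{\psi}^{\otimes k})}_2^2 \le \frac{\delta^2}{P(S)}\, \norm{K_{\bm\omega}\ket{\psi}^{\otimes k}}_2^2 .
\end{equation}
Choose $S$ to be the typical records, those with $m$ in a window $|m - pk| \le k^{2/3}$. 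Then $P(S) \to 1$ by the law of large numbers, so for large $k$ we have $\delta^2 / P(S) \le \delta'^2$.

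Fix such a record, normalize by $\norm{K_{\bm\omega}\ket{\psi}^{\otimes k}}_2$, and get rid of the spectator factor $\ket{\chi_{\bm\omega}}$. I would contract with $\bra{\hat\chi_{\bm\omega}}$, the normalized spectator. The problem is that $\ket{\chi_{\bm\omega}}$ is generally not Gaussian, so this contraction may not preserve rank. Two fixes are available:
\begin{itemize}
  \item Further expand $\ket{\chi}$ in the occupation basis and pick a single coefficient. This is a Gaussian selection and also averages the error.
  \item More simply, note that the paper's reductions already fully measure the other branches, so $\ket{\chi}$ lies in zero modes or is a basis vector.
\end{itemize}
In general I would apply, blockwise, a further complete occupation measurement on the non-$\omega_*$ outputs, merging it into the instrument. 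After this step every non-$\omega_*$ output is a scalar in $\mc H_0 = \C$ or a Fock vector, and projecting onto that Fock vector is a Gaussian coefficient selection. The error term is unaffected, since selecting one basis component of a product with a Fock vector is isometric on that factor.

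This yields
\begin{equation}
  \bchiG^{(\delta')}(\ket{\phi}^{\otimes m}) \le r
\end{equation}
for some $m = m(\bm\omega)$ in the typical window. The same error bound, after normalization, keeps the approximant in the $\delta'$-ball.

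The remaining issue is that $m$ depends on the chosen record. I would handle this by restricting $S$ further to the single most probable value $m_k$ in the window. Its class probability is only of order $k^{-1/2}$, so a direct averaging would cost a factor $k^{1/2}$ in the error, which is not acceptable.

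The fix is monotonicity in $m$: discarding copies of $\ket{\phi}$ does not increase approximate border rank. To see this, contract extra copies with $\bra{\phi}$; this is non-Gaussian, but one can instead use the vacuum-selection structure of $\ket{\phi}$. Alternatively, define $m_k$ as the lower end $\lceil pk - k^{2/3}\rceil$ of the window and reduce any typical $m \ge m_k$ to $m_k$. I expect this reduction in copy number to be the main obstacle.

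The plan for that step is to apply the same averaging once more. Measure $m - m_k$ of the $\ket{\phi}$ copies by a complete occupation-basis measurement, which is Gaussian and complete. Then select an outcome whose conditional error is at most the average, which is an averaged rather than worst-case error. The resulting factor is a Fock vector that can be stripped exactly. The price is a slight loosening of the error, which the gap between $\delta$ and $\delta'$ absorbs.
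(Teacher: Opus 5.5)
Your argument is correct and is essentially the paper's: refine the failure branches by a complete occupation measurement so that they output scalars, tensor the instrument across copies, and combine Chebyshev concentration of the success count with completeness-based averaging of the squared error and Gaussian monotonicity of border rank. Your conditional averaging inside the typical set is equivalent to the paper's union bound.

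The step you flag as the main obstacle is not needed. The statement only asks for some $m_k$ with $m_k/k \to p$, and the paper simply takes $m_k$ to be the success count of the selected record. Your copy-discarding reduction is nevertheless valid and costs no error: averaging over the occupation outcomes always produces an outcome whose conditional error ratio is at most the average. It therefore gives a slightly stronger conclusion, with $m_k$ independent of $\delta$, $\delta'$, and the approximant.
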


\begin{proof}
  Keep the successful map $K_{\omega_*}$ unchanged and measure every other output in the occupation basis.
  Denote the resulting maps by
  \begin{equation}
    L_* \coloneqq K_{\omega_*}, \quad L_{(\omega, y)} \coloneqq \bra*{y} K_\omega, \quad \omega \ne \omega_*, \quad y \in \set*{0, 1}^{n_\omega}.
    \label{eq:refined-outcome-maps}
  \end{equation}
  Let $\Lambda$ be their index set.
  The map $L_*$ takes values in $\mc H_{n'}$; all other maps take values in $\C$.
  Orthonormality of the occupation basis gives
  \begin{equation}
    \sum_{\lambda \in \Lambda}
    L_\lambda^\dagger L_\lambda = K_{\omega_*}^\dagger K_{\omega_*} + \sum_{\omega \ne \omega_*} K_\omega^\dagger \mleft(\sum_{y \in \set*{0, 1}^{n_\omega}} \ket*{y} \bra*{y}\mright) K_\omega = I.
    \label{eq:refined-outcome-completeness}
  \end{equation}

  For $\bm\lambda = (\lambda_1, \ldots, \lambda_k) \in \Lambda^k$, define
  \begin{equation}
    T_{\bm\lambda} \coloneqq L_{\lambda_1} \otimes \cdots \otimes L_{\lambda_k}, \quad m (\bm\lambda) \coloneqq \abs*{\set*{j \in [k] : \lambda_j = *}}.
    \label{eq:tensor-coefficient-maps}
  \end{equation}
  Thus $T_{\bm\lambda} \colon \mc H_{n k} \to \mc H_{n' m (\bm\lambda)}$, with the surviving blocks in their original order.
  Tensoring Eq.~\eqref{eq:refined-outcome-completeness} and evaluating on any $\ket*{\xi} \in \mc H_{n k}$ gives
  \begin{equation}
    \sum_{\bm\lambda \in \Lambda^k} \norm*{T_{\bm\lambda} \ket*{\xi}}_2^2
    = \norm*{\ket*{\xi}}_2^2.
    \label{eq:tensor-outcome-norm-identity}
  \end{equation}
  Each $T_{\bm\lambda}$ maps Gaussian vectors to Gaussian vectors or zero, even when the input is entangled across copies.
  Indeed, Lemma~\ref{lem:gaussian-occupation-projection} applies to each coefficient selection with all other modes retained; the Gaussian unitaries and any parity operators introduced by the fixed fermionic ordering also preserve Gaussianity.
  By linearity, each map sends a sum of $r$ Gaussian vectors to a sum of at most $r$ Gaussian vectors.
  Continuity extends this statement to limits of such sums, yielding the border-rank bound
  \begin{equation}
    \bchiG (T_{\bm\lambda} \ket*{\xi})
    \le \bchiG (\ket*{\xi}).
    \label{eq:tensor-outcome-rank-monotonicity}
  \end{equation}

  On the target state, the outcome probabilities factor as
  \begin{equation}
    P_{\bm\lambda} \coloneqq \norm*{T_{\bm\lambda} \ket*{\psi}^{\otimes k}}_2^2 = \prod_{j = 1}^k \norm*{L_{\lambda_j} \ket*{\psi}}_2^2, \quad \sum_{\bm\lambda \in \Lambda^k} P_{\bm\lambda} = 1.
    \label{eq:target-branch-probability}
  \end{equation}
  Let $\mb P$ be the probability measure on $\Lambda$ with $\mb P (\{\lambda\}) = \norm{L_\lambda \ket{\psi}}_2^2$, so $P_{\bm\lambda}$ is the probability of $\bm\lambda$ under $\mb P^{\otimes k}$.
  The outcomes are therefore independent, and each equals $*$ with probability $p$.
  Consequently, $m (\bm\lambda)$ is binomial with mean $k p$ and variance $k p (1 - p)$.
  Chebyshev's inequality gives
  \begin{equation}
    \mb P^{\otimes k} \mleft(\abs*{m (\bm\lambda) / k - p} > k^{-1 / 4}\mright)
    \le \frac{p (1 - p)}{\sqrt{k}}.
    \label{eq:typical-success-count}
  \end{equation}

  Choose $\ket*{\eta_k}$ attaining the approximate border rank, so that $\norm*{\ket*{\psi}^{\otimes k} - \ket*{\eta_k}}_2 \le \delta$ and $\bchiG (\ket*{\eta_k}) = \abchiG (\ket*{\psi}^{\otimes k})$.
  Write $E_{\bm\lambda} \coloneqq \norm*{T_{\bm\lambda} (\ket*{\psi}^{\otimes k} - \ket*{\eta_k})}_2^2$ for the squared error associated with outcome $\bm\lambda$.
  Summing the squared errors using Eq.~\eqref{eq:tensor-outcome-norm-identity} gives
  \begin{equation}
    \mb P^{\otimes k} (E_{\bm\lambda} > \delta'^2 P_{\bm\lambda})
    = \sum_{E_{\bm\lambda} > \delta'^2 P_{\bm\lambda}}
    P_{\bm\lambda} \le \frac{1}{\delta'^2} \sum_{\bm\lambda \in \Lambda^k} E_{\bm\lambda} \le \frac{\delta^2}{\delta'^2}.
    \label{eq:branch-error-budget}
  \end{equation}
  Combining Eqs.~\eqref{eq:branch-error-budget} and~\eqref{eq:typical-success-count}, we obtain
  \begin{equation}
    \mb P^{\otimes k} \mleft(E_{\bm\lambda} \le \delta'^2 P_{\bm\lambda}, \abs*{m (\bm\lambda) / k - p} \le k^{-1 / 4}\mright)
    \ge 1 - \frac{\delta^2}{\delta'^2} - \frac{p (1 - p)}{\sqrt{k}}.
    \label{eq:simultaneous-outcome-bounds}
  \end{equation}
  Since $\delta < \delta'$, the right-hand side is positive for all sufficiently large $k$.
  Thus at least one outcome satisfies both inequalities and has positive probability.
  Choose one, denote it by $\bm\lambda_k$, and set $m_k \coloneqq m (\bm\lambda_k)$.
  Then $m_k / k \to p > 0$, so $m_k$ is eventually positive.

  Each successful factor contributes $\sqrt{p} \ket*{\phi}$, while every other factor is a scalar.
  Thus the selected output has the form
  \begin{equation}
    T_{\bm\lambda_k} \ket*{\psi}^{\otimes k} = c_k \ket*{\phi}^{\otimes m_k}, \quad \abs*{c_k}^2 = P_{\bm\lambda_k} > 0.
    \label{eq:fermionic-typical-branch-target}
  \end{equation}
  Rescaling the corresponding approximant by $c_k^{-1}$ gives
  \begin{equation}
    \norm*{\ket*{\phi}^{\otimes m_k} - c_k^{-1}
    T_{\bm\lambda_k} \ket*{\eta_k}}_2^2 = \frac{E_{\bm\lambda_k}}{P_{\bm\lambda_k}} \le \delta'^2.
    \label{eq:typical-branch-distance}
  \end{equation}
  Finally, Eq.~\eqref{eq:tensor-outcome-rank-monotonicity} and invariance under nonzero scalar multiplication imply
  \begin{equation}
    \bchiG^{(\delta')} (\ket*{\phi}^{\otimes m_k})
    \le \bchiG (c_k^{-1} T_{\bm\lambda_k} \ket*{\eta_k})
    \le \bchiG (\ket*{\eta_k})
    = \abchiG (\ket*{\psi}^{\otimes k}),
    \label{eq:typical-branch-rank}
  \end{equation}
  as required.
\end{proof}

\subsubsection{Reduction to a four-mode normal form}
\label{appendix:four-mode-reduction}

We now construct a postselected output in the four-mode family
\begin{equation}
  \ket*{M_t}
  \coloneqq
  \frac{\ket*{0000} + t \ket*{1111}}
  {\sqrt{1 + t^2}},
  \quad 0 < t \le 1.
  \label{eq:four-mode-normal-form-family}
\end{equation}

An equivalent four-mode normal form appears in Ref.~\cite{gigena2015entanglementfermionsystems}.
The proof of Theorem~1 in the Supplemental Material of Ref.~\cite[pp.~3--4]{hebenstreit2019purefermionicnongaussianstates} reduces the number of modes once at a time.
Proposition~\ref{prop:postselected-four-mode-reduction} gives a direct construction with output $\ket*{M_t}$, up to a phase and a parity change.

\begin{proposition}[Four-mode normal form under Gaussian postselection]
  \label{prop:postselected-four-mode-reduction}
  Let $\ket*{\psi} \in \mc{H}_n$ be a normalized pure non-Gaussian state, with $n \ge 4$.
  There exist a Gaussian unitary $U$, real parameters $0 < p_\psi \le 1$ and $0 < t \le 1$, and a phase $\theta \in \R$ such that
  \begin{equation}
    (I \otimes \bra*{0^{n - 4}}) U \ket*{\psi}
    = e^{i \theta} \sqrt{p_\psi} \gamma_1^\epsilon \ket*{M_t},
    \label{eq:postselected-four-mode-output}
  \end{equation}
  where $\epsilon = 0$ for even $\ket*{\psi}$ and $\epsilon = 1$ for odd $\ket*{\psi}$.
  The operator $\gamma_1$ on the right acts on the first retained mode.
\end{proposition}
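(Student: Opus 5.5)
The plan has three stages: a Gaussian unitary that kills the two-particle amplitudes, a reduction to a state of the form $\ket{0^{2m}}+t\ket{1^{2m}}$, and an induction that lowers $2m$ to $4$.

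\textbf{Even case, first stage.} Let $\ket{\psi}$ be even. The pure even Gaussian states $U\ket{0^n}$ form a compact orbit, so choose a Gaussian unitary $U_0$ maximizing $\abs{\bra{0^n}U_0\ket{\psi}}$. Write $c_0$ for the resulting vacuum amplitude of $\ket{\psi'}\coloneqq U_0\ket{\psi}$; then $c_0\ne 0$, because the Fock states are Gaussian and $\ket{\psi}\neq 0$.

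Next, vary $U_0$ by $e^{\epsilon(X-X^\dagger)}$, where $X=\sum_{i<j}z_{ij}a_i^\dagger a_j^\dagger$. The first variation of $\abs{\bra{0^n}e^{\epsilon(X-X^\dagger)}\ket{\psi'}}^2$ is $2\epsilon\,\Re\bigl(\overline{c_0}\bra{0^n}(X-X^\dagger)\ket{\psi'}\bigr)=-2\epsilon\,\Re\bigl(\overline{c_0}\sum_{i<j}\overline{z_{ij}}\psi'_{ij}\bigr)$, since $\bra{0^n}X=0$. The complex parameters $z_{ij}$ are arbitrary, so stationarity forces every two-particle amplitude $\psi'_{ij}$ to vanish.

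\textbf{Even case, second stage.} With $Q=0$, the Pfaffian relations of Eq.~\eqref{eq:six-mode-pfaffian-relations} imply the following. If $\ket{\psi'}$ were Gaussian, its BCS form $c_0e^{X}\ket{0^n}$ would have $X=0$, so $\ket{\psi'}$ would be proportional to the vacuum. Hence $\ket{\psi'}$ has a nonzero amplitude of some lowest even degree $2m\ge4$; below that degree, only the vacuum amplitude is nonzero.

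Pick a $2m$-element mode set $S$ with $\psi'_S\ne0$. Permute modes so that $S$ consists of the first $2m$ modes; a mode permutation is a Gaussian unitary up to signs. Then postselect the complementary modes in the vacuum. The surviving amplitudes are indexed by subsets of $S$: the empty set, $S$ itself, and subsets of sizes strictly between $0$ and $2m$. The last group vanishes by minimality of $2m$. The output is therefore $c_0\ket{0^{2m}}+\psi'_S\ket{1^{2m}}$.

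\textbf{Even case, third stage (the step I expect to be hardest).} The delicate point is reducing $2m$ to $4$ while staying in this normal form; I would do it by induction. On the last two modes, apply the Gaussian pair rotation $e^{\varphi(a_{2m-1}^\dagger a_{2m}^\dagger-\mathrm{h.c.})}$. It maps $\ket{00}\mapsto\cos\varphi\ket{00}+\sin\varphi\ket{11}$ and $\ket{11}\mapsto\cos\varphi\ket{11}-\sin\varphi\ket{00}$, up to sign conventions. Postselecting these two modes in the vacuum leaves $c_0\cos\varphi\ket{0^{2m-2}}-\psi'_S\sin\varphi\ket{1^{2m-2}}$. For $0<\varphi<\pi/2$ both coefficients are nonzero.

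Repeat this until four modes remain. All the unitaries involved act on disjoint or retained modes, so they combine, after reordering, into a single Gaussian unitary $U$ followed by one projection $I\otimes\bra{0^{n-4}}$. To finish, make $t$ real and positive with a single-mode phase rotation $e^{i\alpha a_1^\dagger a_1}$, which is Gaussian. If the resulting $t$ exceeds $1$, apply the particle--hole transformation on all four retained modes; it is Gaussian and exchanges $\ket{0000}$ and $\ket{1111}$. The remaining overall factor is $e^{i\theta}\sqrt{p_\psi}$, and $p_\psi>0$ because the output vector is nonzero.

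\textbf{Odd case.} Apply the argument above to the even state $\gamma_1\ket{\psi}$, obtaining a Gaussian unitary $U'$ with $(I\otimes\bra{0^{n-4}})U'\gamma_1\ket{\psi}=e^{i\theta}\sqrt{p_\psi}\ket{M_t}$. Next, choose a Gaussian unitary $W$ such that $WU'\gamma_1U'^\dagger W^\dagger$ is the retained $\gamma_1$; this is possible because Gaussian conjugation acts transitively on unit Majorana vectors. The conjugated $\gamma_1$ acts only on retained modes, so it commutes past the projection, and the odd case follows with $U=U'$ conjugated by $W$.

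I have not fully checked the fermionic reordering signs or the precise bookkeeping in this last conjugation step; it may require choosing $U'$ compatibly with $W$ from the outset.
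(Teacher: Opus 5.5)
Your even-parity argument is sound and follows the paper's route: maximize the vacuum overlap, use the first-order condition to kill the pair amplitudes, pick the minimal-support set $S$, select the vacuum on the complement, and peel off pairs with Gaussian pair rotations. Two deviations are harmless. You use arbitrary angles $\varphi\in(0,\pi/2)$ and repair $t>1$ with the particle--hole unitary $\gamma_2\gamma_4\gamma_6\gamma_8$. The paper instead fixes $\varphi=\pi/4$, which preserves the ratio $\abs{c_S}/\abs{c_\varnothing}$, so $t\le 1$ follows from maximality because $\ket{S}$ is itself Gaussian. The BCS/Pfaffian detour is unnecessary, and the implication you state is not the one you use. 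You only need that $U_0\ket{\psi}$ is non-Gaussian and hence not a multiple of the vacuum.

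The genuine gap is in the odd case. Write $\Pi\coloneqq I\otimes\bra{0^{n-4}}$ and choose $W$ with $WU'\gamma_1U'^\dagger W^\dagger=\gamma_1$. This gives $WU'\gamma_1=\gamma_1WU'$, and therefore only $\Pi WU'\ket{\psi}=\gamma_1\Pi WU'\gamma_1\ket{\psi}$. Since $W$ acts on all $n$ modes, $\Pi WU'\gamma_1\ket{\psi}$ is not the normal form you computed for $\Pi U'\gamma_1\ket{\psi}$. Taking $U=WU'W^\dagger$ fares no better. Transitivity on unit Majorana vectors is the wrong tool: the $\gamma_1$ that must pass the projection is the one outside $U'$, and it already sits on retained mode $1$.

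The missing observation is that $\gamma_1$ itself normalizes the Gaussian unitaries. Conjugation by $\gamma_1$ is the reflection $\gamma_1\mapsto\gamma_1$, $\gamma_\mu\mapsto-\gamma_\mu$ for $\mu\ne1$, which maps quadratic generators to quadratic generators. Hence $U\coloneqq\gamma_1U'\gamma_1$ is Gaussian. Because $\gamma_1$ acts as $\gamma_1\otimes I$ with respect to the split into the first four and last $n-4$ modes, you get $\Pi U\ket{\psi}=\gamma_1\Pi U'\gamma_1\ket{\psi}=e^{i\theta}\sqrt{p_\psi}\,\gamma_1\ket{M_t}$. This is exactly the paper's construction, and no auxiliary $W$ is needed.
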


\begin{proof}
  Suppose first that $\ket*{\psi}$ has even parity.
  Choose a Gaussian unitary $V$ that maximizes $\abs*{\bra*{0^n} V \ket*{\psi}}$.
  Compactness of the Gaussian unitary group ensures that a maximizer exists.
  The maximum is positive because every even occupation vector is a Gaussian image of the vacuum, up to a phase.
  For $J \subset [n]$, write $\ket*{J}$ for the occupation vector with precisely the modes in $J$ occupied, and expand
  \begin{equation}
    V \ket*{\psi} = \sum_{J \subset [n]} c_J \ket*{J}, \quad c_{\varnothing} \ne 0.
    \label{eq:reduction-maximal-vacuum-overlap}
  \end{equation}
  Maximality forces every two-particle coefficient to vanish.
  To see this, fix $1 \le r < s \le n$ and a real phase $\varphi$, and put $K_\varphi \coloneqq e^{i \varphi} a_r^\dagger a_s^\dagger - e^{-i \varphi} a_s a_r$.
  Since $K_\varphi$ is quadratic and anti-Hermitian, $G_\varphi (u) \coloneqq e^{u K_\varphi}$ is a Gaussian unitary for real $u$.
  The identities $K_\varphi \ket*{0^n} = e^{i \varphi} \ket*{\set*{r, s}}$ and $K_\varphi \ket*{\set*{r, s}} = -e^{-i \varphi} \ket*{0^n}$ give
  \begin{equation}
    G_\varphi (u) \ket*{0^n} = \cos u \ket*{0^n} + e^{i \varphi} \sin u \ket*{\set*{r, s}}.
    \label{eq:reduction-gaussian-pair-rotation}
  \end{equation}
  Applying maximality of $V$ to the Gaussian unitary $G_\varphi (u)^\dagger V$ gives
  \begin{equation}
    \abs*{c_{\varnothing} \cos u + e^{-i \varphi} c_{\set*{r, s}} \sin u}^2
    = \abs*{\bra*{0^n}
    G_\varphi (u)^\dagger V \ket*{\psi}}^2 \le \abs*{c_{\varnothing}}^2.
    \label{eq:reduction-pair-overlap-bound}
  \end{equation}
  Equality holds at $u = 0$, where the derivative must vanish:
  \begin{equation}
    0
    = \evaluate*{\odv{}{u} \abs*{c_{\varnothing} \cos u + e^{-i \varphi} c_{\set*{r, s}} \sin u}^2}_{u = 0}
    = 2 \Re (\overline{c_{\varnothing}} e^{-i \varphi} c_{\set*{r, s}}).
    \label{eq:reduction-vanishing-pair-coefficients}
  \end{equation}
  Taking $\varphi = 0$ and $\varphi = \pi / 2$ gives $c_{\set*{r, s}} = 0$, since $c_{\varnothing} \ne 0$.

  Since $V$ preserves Gaussianity, $V \ket*{\psi}$ is not a multiple of the vacuum.
  Choose a nonempty $S \subset [n]$ of smallest cardinality with $c_S \ne 0$, and set $d \coloneqq \abs*{S}$.
  Parity and Eq.~\eqref{eq:reduction-vanishing-pair-coefficients} imply that $d$ is even and $d \ge 4$.
  Write $S = \set*{s_1 < \cdots < s_d}$.
  Choose a passive Gaussian unitary $P$ implementing a mode permutation that sends $s_j$ to $j$ for $1 \le j \le d$.
  Since $P$ preserves particle number, it fixes the vacuum up to a phase $e^{i \kappa}$:
  \begin{equation}
    P \ket*{0^n} = e^{i \kappa} \ket*{0^n}, \quad P a_{s_j}^\dagger P^\dagger = a_j^\dagger, \quad 1 \le j \le d.
    \label{eq:reduction-permutation-action}
  \end{equation}
  The increasing order of the occupied modes gives $P \ket*{S} = e^{i \kappa} \ket*{1^d} \otimes \ket*{0^{n - d}}$ with the same phase and no additional fermionic sign.
  Selecting the vacuum on the last $n - d$ modes now removes every term with $J \not \subset S$, while minimality of $S$ gives $c_J = 0$ for every proper nonempty subset $J \subset S$.
  Only the vacuum and $S$ terms remain:
  \begin{equation}
    (I \otimes \bra*{0^{n - d}}) P V \ket*{\psi}
    = e^{i \kappa} (c_{\varnothing} \ket*{0^d} + c_S \ket*{1^d}).
    \label{eq:reduction-minimal-support}
  \end{equation}

  It remains to remove $d - 4$ modes without changing the ratio of the two coefficients.
  All unitaries act on the original $n$-mode space until the final vacuum selection; $I_{\mc H_q}$ denotes the identity on $\mc H_q$.
  For each adjacent pair $(r, s) = (5 + 2 j, 6 + 2 j)$, where $0 \le j < (d - 4) / 2$, use the Gaussian unitary $Q_{r s} \colon \mc H_n \to \mc H_n$ given by
  \begin{equation}
    Q_{r s} \coloneqq \exp \mleft[\frac{\pi}{4} (a_s a_r - a_r^\dagger a_s^\dagger)\mright].
    \label{eq:reduction-pair-unitary}
  \end{equation}
  Applying $Q_{r s}$ and then selecting the vacuum on modes $r$, $s$ gives the following identity of maps from $\mc H_n$ to $\mc H_{n - 2}$:
  \begin{equation}
    (I_{\mc H_{r - 1}} \otimes \bra*{00} \otimes I_{\mc H_{n - s}}) Q_{r s}
    = I_{\mc H_{r - 1}} \otimes \frac{\bra*{00} + \bra*{11}}{\sqrt{2}} \otimes I_{\mc H_{n - s}}.
    \label{eq:reduction-pair-selection}
  \end{equation}
  The unselected modes retain their original order, and each pair contributes $1 / \sqrt{2}$ to both terms in Eq.~\eqref{eq:reduction-minimal-support}.
  The operators $Q_{r s}$ commute because their quadratic generators act on disjoint pairs.
  Their product has the form $W = W_d \otimes I_{\mc H_{n - d}}$; for $d = 4$, take $W = I_{\mc H_n}$.
  Set $U \coloneqq W P V$.
  The final vacuum selection gives
  \begin{equation}
    \begin{split}
      (I_{\mc H_4} \otimes \bra*{0^{n - 4}}) U \ket*{\psi}
      & = (I_{\mc H_4} \otimes \bra*{0^{d - 4}}) W_d
      (I_{\mc H_d} \otimes \bra*{0^{n - d}}) P V \ket*{\psi} \\
      & = e^{i \kappa} (I_{\mc H_4} \otimes \bra*{0^{d - 4}}) W_d
      (c_{\varnothing} \ket*{0^d} + c_S \ket*{1^d}) \\
      & = e^{i \kappa} 2^{-(d - 4) / 4}
      (c_{\varnothing} \ket*{0000} + c_S \ket*{1111}).
    \end{split}
    \label{eq:reduction-four-mode-coefficient}
  \end{equation}
  The success probability is $p_\psi = 2^{-(d - 4) / 2} (\abs*{c_{\varnothing}}^2 + \abs*{c_S}^2)$.
  It is positive because both coefficients are nonzero, and at most one by Eq.~\eqref{eq:occupation-coefficient-norms}.

  Since $\ket*{S}$ is Gaussian, the maximal vacuum overlap also gives $\abs*{c_S} \le \abs*{c_{\varnothing}}$.
  Hence $t \coloneqq \abs*{c_S} / \abs*{c_{\varnothing}} \in (0, 1]$.
  Write $c_{\varnothing} = \abs*{c_{\varnothing}} e^{i \alpha}$ and $c_S = \abs*{c_S} e^{i \beta}$, with real $\alpha$, $\beta$.
  On the retained four modes, the Gaussian unitary
  \begin{equation}
    R \coloneqq \exp \mleft[i (\alpha - \beta) \mleft(a_1^\dagger a_1 - \frac{1}{2}\mright)\mright]
    \label{eq:reduction-relative-phase-rotation}
  \end{equation}
  removes the relative phase: its factors on the vacuum and fully occupied vectors are $e^{-i (\alpha - \beta) / 2}$ and $e^{i (\alpha - \beta) / 2}$, respectively.
  The rotation $R$ acts only on the retained modes, so it may be applied before or after the vacuum selection.
  Absorb the permutation phase into $\theta \coloneqq \kappa + (\alpha + \beta) / 2$.
  Eq.~\eqref{eq:reduction-four-mode-coefficient} then becomes
  \begin{equation}
    \begin{split}
      (I_{\mc H_4} \otimes \bra*{0^{n - 4}})
      (R \otimes I_{\mc H_{n - 4}}) U \ket*{\psi}
      & = R (I_{\mc H_4} \otimes \bra*{0^{n - 4}}) U \ket*{\psi} \\
      & = e^{i \theta} 2^{-(d - 4) / 4}
      (\abs*{c_{\varnothing}} \ket*{0000} + \abs*{c_S} \ket*{1111}) \\
      & = e^{i \theta} \sqrt{p_\psi} \ket*{M_t}.
    \end{split}
    \label{eq:reduction-postselected-normal-form}
  \end{equation}
  Absorbing $R \otimes I_{\mc H_{n - 4}}$ into $U$ proves the even case.
  For $n = 4$, necessarily $d = 4$, so no modes are removed and $p_\psi = 1$.

  For odd $\ket*{\psi}$, apply the even construction to $\gamma_1 \ket*{\psi}$, obtaining $U_{\mathrm e}$.
  Conjugation by $\gamma_1$ preserves quadratic Majorana generators, so $U \coloneqq \gamma_1 U_{\mathrm e} \gamma_1$ is Gaussian.
  Since $\gamma_1$ acts on the first retained mode, Eq.~\eqref{eq:reduction-postselected-normal-form} gives
  \begin{equation}
    (I_{\mc H_4} \otimes \bra*{0^{n - 4}}) U \ket*{\psi} = \gamma_1 (I_{\mc H_4} \otimes \bra*{0^{n - 4}}) U_{\mathrm e} (\gamma_1 \ket*{\psi}) = e^{i \theta} \sqrt{p_\psi} \gamma_1 \ket*{M_t}.
    \label{eq:reduction-odd-parity-output}
  \end{equation}
  This is Eq.~\eqref{eq:postselected-four-mode-output} with $\epsilon = 1$, completing the proof.
\end{proof}

We can now apply Proposition~\ref{prop:typical-branch-lifting} to the reduction in Proposition~\ref{prop:postselected-four-mode-reduction}.
Choose $U$, $p_\psi$, $t$, $\theta$, $\epsilon$ from that reduction and include every occupation outcome on the last $n - 4$ modes:
\begin{equation}
  K_x \coloneqq (I \otimes \bra*{x}) U \colon \mc H_n \to \mc H_4, \quad x \in \set*{0, 1}^{n - 4}.
  \label{eq:normal-form-complete-outcomes}
\end{equation}
Orthonormality of the occupation basis gives
\begin{equation}
  \sum_{x \in \set*{0, 1}^{n - 4}}
  K_x^\dagger K_x = U^\dagger \mleft(I \otimes \sum_{x \in \set*{0, 1}^{n - 4}} \ket*{x} \bra*{x}\mright) U = I.
  \label{eq:normal-form-outcome-completeness}
\end{equation}
The vacuum outcome satisfies $K_{0^{n - 4}} \ket*{\psi} = e^{i \theta} \sqrt{p_\psi} \gamma_1^\epsilon \ket*{M_t}$.
Multiplication by $e^{i \theta} \gamma_1^\epsilon$ preserves $\norm{\cdot }_2$ and maps pure Gaussian states bijectively onto pure Gaussian states.
The same holds for its tensor powers, so removing the phase and Majorana factor from the normalized output does not change its approximate border Gaussian rank.
Proposition~\ref{prop:typical-branch-lifting} therefore gives, for every fixed $0 \le \delta < \delta' < 1$, positive integers $m_k$ with $m_k / k \to p_\psi$ such that, for all sufficiently large $k$,
\begin{equation}
  \abchiG (\ket*{\psi}^{\otimes k})
  \ge \bchiG^{(\delta')} (\ket*{M_t}^{\otimes m_k}).
  \label{eq:general-state-normal-form-rank-bound}
\end{equation}
Here $p_\psi > 0$ and $t \in (0, 1]$ depend on $\ket*{\psi}$, not on the error tolerances.

\subsubsection{Gaussian rank lower bounds for \texorpdfstring{\IfFontSeriesContextTF{bf}{\boldmath}{}$\ket{M_t}^{\otimes k}$}{the k-fold tensor power of ket M with parameter t}}
\label{appendix:tunable}
\label{appendix:weighted-majorana-entropy}

The reduction in Proposition~\ref{prop:postselected-four-mode-reduction} leaves us with $\ket*{M_t}$, where $0 < t \le 1$.
For each fixed $t$, the spectrum of a weighted local Gram matrix will give an exponential lower bound.
The base is determined by an entropy maximization over three real parameters and equals $343 / 243$ at $t = 1$.

Set
\begin{equation}
  \theta_t \coloneqq \frac{t^2}{1 + t^2}, \quad
  N_t \coloneqq \exp (h (\theta_t)).
  \label{eq:weighted-entropy-parameters}
\end{equation}
Since $0 < \theta_t \le 1 / 2$, we have $1 < N_t \le 2$.

We first record an elementary upper bound for comparison with the exponential lower bound.
Fix $0 < t \le 1$.
For $x \in \set{0, 1}^k$, let $\ket*{b (x)}$ be the occupation vector whose $j$-th four-mode block is empty when $x_j = 0$ and fully occupied when $x_j = 1$.
The tensor expansion gives
\begin{equation}
  \ket*{M_t}^{\otimes k}
  = \sum_{x \in \set{0, 1}^k} \sqrt{q_k (x)} \ket*{b (x)},
  \quad q_k (x) \coloneqq \theta_t^{\abs{x}} (1 - \theta_t)^{k - \abs{x}}.
  \label{eq:four-mode-occupation-expansion}
\end{equation}
Under the probability distribution $q_k$, the coordinates are independent Bernoulli variables with parameter $\theta_t$.
Hence $ - \log q_k (x)$ is a sum of $k$ independent variables, each with mean $h (\theta_t)$ and variance $v_t \coloneqq \theta_t (1 - \theta_t) (\log ((1 - \theta_t) / \theta_t))^2$.
Retain the terms indexed by
\begin{equation}
  \mc T_k \coloneqq \set*{x \in \set{0, 1}^k : -\log q_k (x) \le k h (\theta_t) + k^{3 / 4}},
  \label{eq:upper-bound-typical-set}
\end{equation}
and denote the resulting truncation by $\ket*{\eta_k}$.
Orthogonality of the occupation vectors and Chebyshev's inequality give
\begin{equation}
  \norm*{\ket*{M_t}^{\otimes k} - \ket*{\eta_k}}_2^2
  = \sum_{x \notin \mc T_k} q_k (x)
  \le \frac{k v_t}{(k^{3 / 4})^2}
  = \frac{v_t}{\sqrt{k}} \longrightarrow 0.
  \label{eq:occupation-truncation-error}
\end{equation}
By Eq.~\eqref{eq:upper-bound-typical-set}, every retained squared coefficient is at least $\exp (-k h (\theta_t) - k^{3 / 4})$.
Their sum is at most one, so $\abs{\mc T_k} \le \exp (k^{3 / 4}) N_t^k$.
Each occupation vector is Gaussian, and therefore $\chiG (\ket*{\eta_k}) \le \abs{\mc T_k}$.
For every fixed $0 < \delta < 1$, Eq.~\eqref{eq:occupation-truncation-error} makes $\ket*{\eta_k}$ an admissible approximant for all sufficiently large $k$, yielding
\begin{equation}
  \abchiG (\ket*{M_t}^{\otimes k})
  \le \achiG (\ket*{M_t}^{\otimes k})
  \le \exp (k^{3 / 4}) N_t^k.
  \label{eq:weighted-approximate-upper-bound}
\end{equation}

We now turn to the lower bound.
On one four-mode block, take the identity, the eight single Majoranas, $\gamma_1 \gamma_2$, and the six quadratic monomials
\begin{equation}
  \gamma_1 \gamma_3, \quad \gamma_2 \gamma_3, \quad
  \gamma_1 \gamma_5, \quad \gamma_2 \gamma_5, \quad
  \gamma_1 \gamma_7, \quad \gamma_2 \gamma_7.
  \label{eq:local-quadratic-pairs}
\end{equation}
Acting on $\ket*{M_t}$, these give one degree-zero, eight degree-one, and seven degree-two normalized columns.
To compute their overlaps, let $e_j$ denote the four-bit string with only mode $j$ occupied and let $\overline{x}$ denote the bitwise complement of $x$.
Ordering the creation operators by mode gives, for $1 \le j \le 4$,
\begin{equation}
  \gamma_{2 j - 1} \ket*{M_t} = \frac{\ket*{e_j} + (-1)^{j - 1} t \ket*{\overline{e}_j}}{\sqrt{1 + t^2}}, \quad - i \gamma_{2 j} \ket*{M_t} = \frac{\ket*{e_j} - (-1)^{j - 1} t \ket*{\overline{e}_j}}{\sqrt{1 + t^2}}.
  \label{eq:tunable-odd-pairs-app}
\end{equation}
For $2 \le j \le 4$, let $x_j$ have precisely modes $1$ and $j$ occupied.
The six quadratic monomials in Eq.~\eqref{eq:local-quadratic-pairs} give
\begin{equation}
  \gamma_1 \gamma_{2 j - 1} \ket*{M_t} = \frac{\ket*{x_j} + (-1)^{j - 1} t \ket*{\overline{x}_j}}{\sqrt{1 + t^2}}, \quad - i \gamma_2 \gamma_{2 j - 1} \ket*{M_t} = \frac{\ket*{x_j} - (-1)^{j - 1} t \ket*{\overline{x}_j}}{\sqrt{1 + t^2}}.
  \label{eq:tunable-even-pairs-app}
\end{equation}
In each pair, orthogonality of the two occupation vectors gives the overlap
\begin{equation}
  \sigma_t \coloneqq \frac{1 - t^2}{1 + t^2}.
  \label{eq:weighted-pair-overlap}
\end{equation}
The four odd pairs have disjoint supports $\set*{e_j, \overline{e}_j}$ in the one- and three-particle sectors.
The three even pairs have disjoint supports $\set*{1100, 0011}$, $\set*{1010, 0101}$, and $\set*{1001, 0110}$ in the two-particle sector.
These supports contain neither the vacuum nor the fully occupied configuration, so every pair is orthogonal to the identity column $\ket*{M_t}$.
The remaining quadratic monomial satisfies
\begin{equation}
  - i \gamma_1 \gamma_2 \ket*{M_t}
  = \frac{\ket*{0000} - t \ket*{1111}}{\sqrt{1 + t^2}}.
  \label{eq:local-vacuum-full-pair}
\end{equation}
This column is orthogonal to all seven pairs, and its overlap with the identity column is $\sigma_t$.
Together with the identity column, it therefore gives an eighth pair with the same Gram matrix.
After grouping and rephasing the columns as in Eqs.~\eqref{eq:tunable-odd-pairs-app}, \eqref{eq:tunable-even-pairs-app}, and \eqref{eq:local-vacuum-full-pair}, the full Gram matrix is
\begin{equation}
  \begin{bmatrix}
    1        & \sigma_t \\
    \sigma_t & 1
  \end{bmatrix}^{\oplus 8}.
  \label{eq:local-gram-main}
\end{equation}
Each $2 \times 2$ block has eigenvalues
\begin{equation}
  1 + \sigma_t = \frac{2}{1 + t^2}, \quad 1 - \sigma_t = \frac{2 t^2}{1 + t^2}.
  \label{eq:weighted-pair-eigenvalues}
\end{equation}
Eq.~\eqref{eq:local-gram-main} shows that the Gram matrix is the identity precisely when $t = 1$.

For tensor powers of $\ket*{M_t}$, blockwise factorization holds up to fermionic signs, which multiply individual columns by $ \pm 1$.
Changing these signs conjugates the Gram matrix by a diagonal unitary and leaves its spectrum unchanged.

Divide the 16 Majorana monomials into four groups: the identity, the eight single Majoranas, the six quadratic monomials in Eq.~\eqref{eq:local-quadratic-pairs}, and $\gamma_1 \gamma_2$.
Assign total weights $w_0$, $w_1$, $w_2$, $w_3$ to these groups, distributed equally among the monomials within each group.
The allowed weights form the simplex
\begin{equation}
  \Delta_3 \coloneqq \set*{\bm w = (w_0, w_1, w_2, w_3) \in \R^4 : w_j \ge 0, \quad w_0 + w_1 + w_2 + w_3 = 1}.
  \label{eq:weighted-action-simplex}
\end{equation}
Multiply each column by the square root of its weight.
After rephasing and grouping the columns, their Gram matrix is
\begin{equation}
  \rho_t (\bm w)
  \coloneqq
  \begin{bmatrix}
    w_0                     & \sigma_t \sqrt{w_0 w_3} \\
    \sigma_t \sqrt{w_0 w_3} & w_3
  \end{bmatrix}
  \oplus
  \mleft(\frac{w_1}{8}
    \begin{bmatrix}
      1        & \sigma_t \\
      \sigma_t & 1
  \end{bmatrix}\mright)^{\oplus 4}
  \oplus
  \mleft(\frac{w_2}{6}
    \begin{bmatrix}
      1        & \sigma_t \\
      \sigma_t & 1
  \end{bmatrix}\mright)^{\oplus 3}.
  \label{eq:weighted-local-gram}
\end{equation}
This matrix is positive semidefinite and has trace one.
Its rows and columns are indexed by the selected Majorana monomials.
The first block has eigenvalues
\begin{equation}
  \frac{w_0 + w_3 \pm \sqrt{(w_0 - w_3)^2 + 4 \sigma_t^2 w_0 w_3}}{2}.
  \label{eq:weighted-local-spectrum}
\end{equation}
The remaining blocks contribute $w_1 (1 \pm \sigma_t) / 8$, each with multiplicity four, and $w_2 (1 \pm \sigma_t) / 6$, each with multiplicity three.
The four groups have degrees zero, one, two, and two, respectively, so their weighted mean degree is $\mu (\bm w) = w_1 + 2 w_2 + 2 w_3$.
Define the base $B_t$ by
\begin{equation}
  \log B_t
  \coloneqq \max_{\bm w \in \Delta_3}
  \set*{H (\rho_t (\bm w)) - 4 h (\mu (\bm w) / 4)}.
  \label{eq:weighted-entropy-base}
\end{equation}
The objective is continuous on the whole simplex, including its boundary, so compactness ensures that the maximum is attained.

\begin{theorem}[Gaussian rank lower bound from a weighted Gram matrix]
  \label{thm:weighted-spectral-bound}
  For every $0 < t \le 1$, there is a sequence $f_t (k) > 0$, independent of the error tolerance, with $\log f_t (k) / k \to 0$ as $k \to \infty$.
  For every fixed $0 \le \delta < 1$ and all sufficiently large $k$,
  \begin{equation}
    \abchiG (\ket*{M_t}^{\otimes k})
    \ge (1 - \delta^2) f_t (k) B_t^k.
    \label{eq:weighted-robust-rank-bound}
  \end{equation}
  The base $B_t$ is continuous on $(0, 1]$ and satisfies
  \begin{equation}
    1 < B_t \le N_t,
    \quad
    B_1 = \frac{343}{243},
    \quad
    \lim_{t \to 0}
    B_t = 1.
    \label{eq:weighted-base-properties}
  \end{equation}
\end{theorem}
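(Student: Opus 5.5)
The plan is to run the proof of Theorem~\ref{thm:main_app} again, but with weighted columns. Concentration (typicality) will replace exact orthonormality, and Ky Fan's inequality for the top eigenvalues will replace the ``all singular values equal one'' step.

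\emph{Setup.} Fix $t$ and a maximizer $\bm w\in\Delta_3$ of Eq.~\eqref{eq:weighted-entropy-base}. Write $H\coloneqq H(\rho_t(\bm w))$ and $\mu\coloneqq\mu(\bm w)$.
\begin{itemize}
\item Each local monomial $J$ gets the weight $\omega_J$ assigned in the text before Eq.~\eqref{eq:weighted-local-gram}. A word $\bm J\in\mc C^k$, built from the sixteen local monomials, gets the product weight $\omega_{\bm J}=\prod_i\omega_{J_i}$.
\item Let $\mc W_k$ be the set of words whose total degree lies in $[\mu k-k^{3/4},\mu k+k^{3/4}]$. Let $F(\ket\phi)$ be the matrix with columns $\sqrt{\omega_{\bm J}}\,\gamma_{\bm J}\ket\phi$ for $\bm J\in\mc W_k$.
\item Over all of $\mc C^k$, the Gram matrix of these columns on $\ket{M_t}^{\otimes k}$ is $G=\rho_t(\bm w)^{\otimes k}$, up to a diagonal sign conjugation. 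Restricting to $\mc W_k$ gives the compression $PGP$ by the coordinate projection $P$.
\item The local degree is additive, so $\Tr(PGP)=\Pr[\text{degree}\in\text{window}]\to1$ by Chebyshev.
\end{itemize}

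\emph{Gaussian side.} By Lemma~\ref{lem:pure-rho-app}, a pure Gaussian state has rank at most $C_{4k,p}$ on the degree-$p$ columns. Summing over the $O(k^{3/4})$ degrees in the window gives $\rank F(\ket G)\le O(k^{3/4})\max_p C_{4k,p}$.
\begin{itemize}
\item Since $\mu\le2$, we have $p/(4k)\le1/2+o(1)$. Hence $C_{4k,p}\le e^{4k h(\mu/4)+o(k)}$, using the standard entropy bound on binomial sums together with the continuity of $h$ (which absorbs the window width).
\item Lower semicontinuity, as in Eq.~\eqref{eq:majorana-rank-cap-app}, gives $\rank F(\ket\eta)\le R\coloneqq r\,e^{4kh(\mu/4)+o(k)}$ whenever $\bchiG(\ket\eta)\le r$.
\end{itemize}

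\emph{Target side.} Each Majorana monomial is unitary and $\sum_{\bm J}\omega_{\bm J}\le1$, so
\[
\norm{F(\ket{M_t}^{\otimes k})-F(\ket\eta)}_{\mathrm F}^2\le\delta^2 .
\]
The squared singular values of $F(\ket{M_t}^{\otimes k})$ are the eigenvalues of $PGP$. Lemma~\ref{lem:eckart-young-app} therefore gives
\[
\Tr(PGP)-\textstyle\sum_{j\le R}\lambda_j(PGP)\le\delta^2 .
\]
By Ky Fan's inequality for compressions, $\sum_{j\le R}\lambda_j(PGP)\le\sum_{j\le R}\lambda_j(G)$. The eigenvalues of $G$ are products of the local eigenvalues, so typicality applies.
\begin{itemize}
\item Eigenvalues above $e^{-kH-k^{3/4}}$ are atypical (large $-\log$ deviation below) and carry mass $o(1)$.
\item Every other eigenvalue is at most $e^{-kH+k^{3/4}}$.
\end{itemize}
Hence $\sum_{j\le R}\lambda_j(G)\le o(1)+R\,e^{-kH+k^{3/4}}$. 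Rearranging,
\[
r\ge(1-\delta^2-o(1))\,e^{k(H-4h(\mu/4))-o(k)} .
\]
The $o(1)$ terms come from Chebyshev and do not depend on $\delta$. So for large $k$, $1-\delta^2-o(1)\ge(1-\delta^2)/2$ would suffice only with $\delta$-dependent thresholds. To match the stated form, I will put $\max\{1-o(1)/(1-\delta^2),\tfrac12\}$-type factors into ``sufficiently large $k$''. The $\delta$-free subexponential factors go into $f_t(k)$.

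\emph{Properties of $B_t$.}
\begin{itemize}
\item \emph{Continuity.} The objective is jointly continuous in $(t,\bm w)$ on $(0,1]\times\Delta_3$, and a maximum over a compact set of a continuous function is continuous.
\item \emph{Upper bound.} Comparing with Eq.~\eqref{eq:weighted-approximate-upper-bound} at fixed $\delta$ forces $B_t\le N_t$. Since $N_t\to1$ as $t\to0$, and $B_t>1$ (next item), this gives $B_t\to1$.
\item \emph{$B_t>1$.} Take $\bm w=(1-\epsilon,\epsilon,0,0)$. Then $H\approx h(\epsilon)+\epsilon\log 8+\epsilon\,h((1+\sigma_t)/2)$, while $4h(\epsilon/4)\approx\epsilon\log(4e/\epsilon)$. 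The difference is $\epsilon(\log2+h((1+\sigma_t)/2))+o(\epsilon)>0$.
\item \emph{$B_1$.} At $t=1$ we have $\sigma_1=0$, so $\rho_1$ is diagonal. The optimization becomes an explicit three-variable Lagrange problem. Its stationarity conditions should reproduce $\zeta=2/7$ and $\mu=8/9$, as in Eqs.~\eqref{eq:tensor-shell-moving-saddle}--\eqref{eq:tensor-shell-saddle-limits}, giving $\log(343/243)$.
\end{itemize}

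The main obstacle is the target-side step: controlling the top-$R$ spectral mass of the degree-restricted compression $PGP$, which is not a tensor power. I will handle it with Ky Fan and typicality of $G$, and keep the degree window only as a trace-loss term. That separation is what lets the Gaussian rank cap and the spectral tail be controlled independently.
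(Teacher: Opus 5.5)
Your argument is correct, and its overall architecture matches the paper's. Both proofs use a maximizing weight $\bm w$, spectral typicality of $\rho_t(\bm w)^{\otimes k}$ at threshold $e^{-kH+k^{3/4}}$, and Eckart--Young against the Gaussian cap $C_{4k,p}$ with lower semicontinuity. Both also use the same compactness and comparison arguments for continuity, $B_t\le N_t$, and $B_t\to1$.

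You differ in how you reconcile the single-degree rank cap with nonuniform weights. The paper restricts to one type class $\bm n_k$ of group counts, so all selected words share one degree $p_k$ and one weight. It then works with unweighted columns and the renormalized matrix $\widehat\rho_k=q_{\bm n_k}^{-1}\Pi_{\bm n_k}\rho^{\otimes k}\Pi_{\bm n_k}$. That renormalization forces the selection in Eq.~\eqref{eq:weighted-selected-type}: a class of probability at least $1/(2(k+1)^4)$, chosen by a Markov argument so that its share $b_{\bm n_k}$ of the atypical part is at most $k^{-1/8}q_{\bm n_k}$. It also needs a Stirling estimate for $C_{4k,p_k}$.

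You keep the weights on a whole degree window instead. The Frobenius budget $\delta^2$ is then compared with a total trace $1-O(k^{-1/2})$, and nothing is renormalized. Your interlacing step is the paper's bound on $\Tr(\Pi\widehat\rho_k)$ via $\Pi\le\Pi_{\bm n_k}$, without the division by $q_{\bm n_k}$. The cost is a sum of caps over $O(k^{3/4})$ degrees, which is only a polynomial factor, and you need just a one-sided binomial entropy bound. This is a genuine streamlining of the paper's middle step.

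Three details need repair.

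\emph{Typicality threshold.} Your first typicality bullet has the threshold reversed. The eigenvalues carrying mass $O(k^{-1/2})$ are those above $e^{-kH+k^{3/4}}$; the typical ones lie above $e^{-kH-k^{3/4}}$. Your second bullet already uses the correct threshold.

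\emph{The bound $B_t>1$.} The eight degree-one columns form four pairs with eigenvalues $\epsilon(1\pm\sigma_t)/8$. Hence $H(\rho_t(1-\epsilon,\epsilon,0,0))=h(\epsilon)+\epsilon\log4+\epsilon h(\theta_t)$, using $(1-\sigma_t)/2=\theta_t$; the term is $\epsilon\log4$, not $\epsilon\log8$. The gain is therefore $\epsilon\log N_t+O(\epsilon^2)$, as in Eq.~\eqref{eq:weighted-positive-rate}. With your $\epsilon\log8$, the gain would be at least $\epsilon\log2+O(\epsilon^2)$ uniformly in $t$, which contradicts $B_t\to1$. The conclusion still holds, since $N_t>1$ for $t>0$.

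\emph{The value $B_1$.} The Lagrange conditions do force $w_j\propto g_jz^{d_j}$, with group sizes $(g_j)=(1,8,6,1)$, degrees $(d_j)=(0,1,2,2)$, and $z=\mu/(4-\mu)$. This gives $(1+z)P'(z)=4P(z)$, so $z=2/7$ is the unique interior stationary point. However, the objective is an entropy minus a concave function of $\mu$, so it need not be concave. You must therefore also exclude maxima on the faces of $\Delta_3$:
\begin{itemize}
\item on faces with $\mu>0$, the entropy has infinite inward derivative, so no maximum lies there;
\item the vertex with $\mu=0$ gives value zero.
\end{itemize}
The saddle equations you cite from the proof of Theorem~\ref{thm:main_app} only confirm that $z=2/7$ corresponds to $\mu=8/9$. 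The paper avoids this issue differently. It first maximizes the entropy at fixed $\mu$ via the Gibbs identity \eqref{eq:weighted-balanced-entropy-comparison}, and then analyzes the sign of $\Phi'(\mu)$.
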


\begin{proof}
  We first show that $B_t > 1$ and that every maximizing weight in Eq.~\eqref{eq:weighted-entropy-base} has mean degree strictly between zero and two.
  At $\bm w = (1 - u, u, 0, 0)$, the objective is
  \begin{equation}
    H (\rho_t (1 - u, u, 0, 0)) - 4 h (u / 4) = h (u) + u \log (4 N_t) - 4 h (u / 4).
    \label{eq:weighted-positive-rate}
  \end{equation}
  This vanishes at $u = 0$, and its derivative for $0 < u < 1$ is $\log (N_t (1 - u) / (1 - u / 4))$, which tends to $\log N_t > 0$ as $u \to 0$.
  For sufficiently small positive $u$, the objective is positive.
  Hence $\log B_t > 0$.
  At mean degree zero, only the identity has positive weight and the objective is zero.
  At mean degree two, only the seven quadratic monomials can have positive weight, so the objective is at most $\log 7 - 4 \log 2 < 0$.

  Fix a maximizing weight $\bm w$, and write $\rho = \rho_t (\bm w)$, $H = H (\rho)$, and $\mu = \mu (\bm w) \in (0, 2)$.
  The positive eigenvalues of $\rho$, counted with multiplicity, sum to one and define a probability distribution.
  Assigning the value $ - \log \lambda$ to each eigenvalue $\lambda$, with probability $\lambda$, gives mean $H$ and finite variance $v$.

  Let $D_k$ agree with $\rho^{\otimes k}$ on the eigenspaces whose eigenvalues exceed $\exp (-k H + k^{3 / 4})$, and vanish on the remaining eigenspaces.
  Set $A_k = \rho^{\otimes k} - D_k$.
  Under the product eigenvalue distribution, the negative logarithm has mean $k H$ and variance $k v$.
  The trace $\Tr D_k$ is the probability that this random variable falls more than $k^{3 / 4}$ below its mean, which Chebyshev's inequality bounds by $k v / k^{3 / 2} = v / \sqrt{k}$.
  Together with the construction of $A_k$ and $D_k$, this gives
  \begin{equation}
    0 \le A_k \le \exp (-k H + k^{3 / 4})
    I, \quad D_k \ge 0, \quad \Tr D_k \le \frac{v}{\sqrt{k}}.
    \label{eq:weighted-spectral-truncation}
  \end{equation}
  Matrix inequalities are in the positive semidefinite order: $A \le B$ means that $B - A$ is positive semidefinite.

  To apply the matrix rank bound in Eq.~\eqref{eq:majorana-rank-cap-app}, we must select columns corresponding to monomials of one degree.
  The standard coordinate basis for $\rho^{\otimes k}$ is indexed by words of $k$ Majorana monomials, one per block.
  For each tuple $\bm n = (n_0, n_1, n_2, n_3)$ of nonnegative integers summing to $k$, define $\Pi_{\bm n}$ as the diagonal matrix in this basis whose entry is one when the word contains exactly $n_j$ monomials from each group $j$, and zero otherwise.
  There are at most $(k + 1)^4$ such projections, and they are mutually orthogonal and sum to the identity.
  The probability of drawing a word with group counts $\bm n$ is
  \begin{equation}
    q_{\bm n} \coloneqq \Tr (\Pi_{\bm n} \rho^{\otimes k} \Pi_{\bm n})
    = \binom{k}{n_0, n_1, n_2, n_3}
    w_0^{n_0} w_1^{n_1} w_2^{n_2} w_3^{n_3},
    \label{eq:weighted-type-probability}
  \end{equation}
  where every factor with exponent zero is one.
  This is the multinomial law obtained by choosing the group independently on each block.
  The local degrees are therefore independent and equal to zero, one, or two with probabilities $w_0$, $w_1$, and $w_2 + w_3$, respectively.
  Each has mean $\mu$ and variance at most one, since its values lie in $[0, 2]$.
  Their sum $p_{\bm n} = n_1 + 2 n_2 + 2 n_3$ therefore has mean $k \mu$ and variance at most $k$.
  Put $b_{\bm n} = \Tr (\Pi_{\bm n} D_k \Pi_{\bm n})$; these numbers sum to $\Tr D_k$.
  Chebyshev's inequality for the degree and Markov's inequality for the remainder, followed by Eq.~\eqref{eq:weighted-spectral-truncation}, give
  \begin{equation}
    \sum_{\abs*{p_{\bm n} / k - \mu} > k^{-1 / 4}} q_{\bm n} + \sum_{b_{\bm n} > k^{-1 / 8} q_{\bm n}} q_{\bm n} \le k^{-1 / 2} + k^{1 / 8} \Tr D_k \le k^{-1 / 2} + v k^{-3 / 8} \longrightarrow 0.
    \label{eq:weighted-type-concentration}
  \end{equation}
  For all sufficiently large $k$, tuples satisfying both $\abs*{p_{\bm n} / k - \mu} \le k^{-1 / 4}$ and $b_{\bm n} \le k^{-1 / 8} q_{\bm n}$ have total probability at least $1 / 2$.
  Among these, choose a tuple $\bm n_k$ maximizing $q_{\bm n}$.
  Since there are at most $(k + 1)^4$ tuples, this choice satisfies
  \begin{equation}
    \abs*{p_{\bm n_k} / k - \mu} \le k^{-1 / 4}, \quad
    q_{\bm n_k} \ge \frac{1}{2 (k + 1)^4}, \quad
    b_{\bm n_k} \le k^{-1 / 8} q_{\bm n_k}.
    \label{eq:weighted-selected-type}
  \end{equation}

  Let $\mc L_k$ index all words with group counts $\bm n_k$, and put $p_k = p_{\bm n_k}$.
  Write $F = F_{\mc L_k} (\ket*{M_t}^{\otimes k})$ for the matrix of unweighted columns.
  Since the group counts are fixed, every selected word has the same weight, which is positive because $q_{\bm n_k} > 0$.
  Multiplying each column of $F$ by the square root of the common weight multiplies its Gram matrix by that weight.
  Restricted to the selected coordinates, $\Pi_{\bm n_k} \rho^{\otimes k} \Pi_{\bm n_k}$ is therefore proportional to $F^\dagger F$, up to diagonal unitary conjugation accounting for fermionic signs.
  The two Gram matrices have traces $q_{\bm n_k}$ and $\abs*{\mc L_k}$, respectively, because the unweighted columns have norm one.
  On the selected coordinate subspace, define
  \begin{equation}
    \widehat\rho_k = q_{\bm n_k}^{-1} \Pi_{\bm n_k} \rho^{\otimes k} \Pi_{\bm n_k}.
    \label{eq:weighted-normalized-shell-gram}
  \end{equation}
  Normalization removes the common weight, so $\widehat\rho_k$ is unitarily conjugate to $F^\dagger F / \abs*{\mc L_k}$.

  Let $\Pi$ be a rank-$R$ orthogonal projection on this subspace, extended by zero on its orthogonal complement, so that $0 \le \Pi \le \Pi_{\bm n_k}$ and $\Tr \Pi = R$.
  The bound on $A_k$ in Eq.~\eqref{eq:weighted-spectral-truncation} gives $\Tr (\Pi A_k) \le R \exp (-k H + k^{3 / 4})$.
  Since $D_k \ge 0$, we also have $\Tr (\Pi D_k) \le \Tr (\Pi_{\bm n_k} D_k \Pi_{\bm n_k}) = b_{\bm n_k}$.
  Using Eq.~\eqref{eq:weighted-normalized-shell-gram} and then Eq.~\eqref{eq:weighted-selected-type}, we obtain
  \begin{equation}
    \Tr (\Pi \widehat\rho_k) = \frac{\Tr (\Pi A_k) + \Tr (\Pi D_k)}{q_{\bm n_k}} \le \frac{R \exp (-k H + k^{3 / 4}) + b_{\bm n_k}}{q_{\bm n_k}} \le \frac{R \exp (-k H + k^{3 / 4})}{q_{\bm n_k}} + k^{-1 / 8}.
    \label{eq:weighted-shell-spectral-mass}
  \end{equation}

  If $\ket*{\eta}$ approximates $\ket*{M_t}^{\otimes k}$ within distance $\delta$, unitarity of the Majorana monomials gives
  \begin{equation}
    \norm*{F_{\mc L_k} (\ket*{M_t}^{\otimes k}) - F_{\mc L_k} (\ket*{\eta})}_{\mathrm F}^2
    = \abs*{\mc L_k} \norm*{\ket*{M_t}^{\otimes k} - \ket*{\eta}}_2^2
    \le \abs*{\mc L_k} \delta^2.
    \label{eq:weighted-action-approximation-error}
  \end{equation}
  Put $R = \rank F_{\mc L_k} (\ket*{\eta})$.
  By Eqs.~\eqref{eq:eckart-young-bound} and~\eqref{eq:weighted-action-approximation-error}, the eigenvalues of $F^\dagger F$ beyond the largest $R$ sum to at most $\abs*{\mc L_k} \delta^2$.
  The eigenvalues of $\widehat\rho_k$ are those of $F^\dagger F$ divided by $\abs*{\mc L_k}$, so the eigenvalues beyond the largest $R$ sum to at most $\delta^2$.
  Since $\Tr \widehat\rho_k = 1$, its largest $R$ eigenvalues sum to at least $1 - \delta^2$.
  Choose $\Pi$ to project onto the corresponding $R$ eigenvectors of $\widehat\rho_k$, so that $\Tr (\Pi \widehat\rho_k)$ is the sum of these eigenvalues.
  Eq.~\eqref{eq:weighted-shell-spectral-mass} then gives
  \begin{equation}
    1 - \delta^2 \le \Tr (\Pi \widehat\rho_k)
    \le \frac{R \exp (-k H + k^{3 / 4})}{q_{\bm n_k}} + k^{-1 / 8}.
    \label{eq:weighted-approximation-spectral-mass}
  \end{equation}
  Using $R \le C_{4 k, p_k} \bchiG (\ket*{\eta})$ from Eq.~\eqref{eq:majorana-rank-cap-app} and minimizing over $\ket*{\eta}$ gives
  \begin{equation}
    \abchiG (\ket*{M_t}^{\otimes k})
    \ge \frac{(1 - \delta^2 - k^{-1 / 8}) q_{\bm n_k}
    \exp (k H - k^{3 / 4})}{C_{4 k, p_k}}.
    \label{eq:weighted-finite-rank-bound}
  \end{equation}

  We now extract the exponential base.
  Since $0 \le p_k \le 2 k$, each binomial coefficient in $C_{4 k, p_k}$ is bounded by the first:
  \begin{equation}
    \binom{4 k}{p_k} \le C_{4 k, p_k} \le (k + 1) \binom{4 k}{p_k}.
    \label{eq:rank-cap-binomial-comparison}
  \end{equation}
  Since $p_k / k \to \mu \in (0, 2)$, the factorial arguments in the binomial coefficient tend to infinity.
  Stirling's formula, applied as in Eq.~\eqref{eq:binomial-stirling-limit}, therefore gives $\log C_{4 k, p_k} / k \to 4 h (\mu / 4)$.
  Define, for all sufficiently large $k$,
  \begin{equation}
    f_t (k)
    \coloneqq \frac{q_{\bm n_k} \exp (k H - k^{3 / 4})}
    {2 C_{4 k, p_k} B_t^k},
    \label{eq:weighted-prefactor-definition}
  \end{equation}
  and set the remaining values to one.
  The selected weights and tuples are independent of $\delta$.
  Eq.~\eqref{eq:weighted-selected-type} gives $\log q_{\bm n_k} / k \to 0$, and maximality gives $H - 4 h (\mu / 4) = \log B_t$, so $\log f_t (k) / k \to 0$.
  For fixed $\delta < 1$, we eventually have $1 - \delta^2 - k^{-1 / 8} \ge (1 - \delta^2) / 2$; Eq.~\eqref{eq:weighted-finite-rank-bound} then proves Eq.~\eqref{eq:weighted-robust-rank-bound}.

  It remains to calculate $B_1$.
  At $t = 1$, the eigenvalues of $\rho_1 (\bm w)$ are the weights of the individual monomials.
  Put $P (z) = 1 + 8 z + 7 z^2$.
  For $z > 0$, the weights
  \begin{equation}
    \bm u = \frac{(1, 8 z, 6 z^2, z^2)}{P (z)}, \quad
    \mu = \frac{z P' (z)}{P (z)}
    \label{eq:weighted-balanced-maximizer}
  \end{equation}
  maximize the entropy at mean degree $\mu$.
  Indeed, for any $\bm w$ with this mean degree,
  \begin{equation}
    H (\rho_1 (\bm w)) = \log P (z) - \mu \log z - \sum_{w_j > 0} w_j \log \frac{w_j}{u_j}.
    \label{eq:weighted-balanced-entropy-comparison}
  \end{equation}
  The sum is nonnegative by convexity of $x \log x$, extended continuously at zero, and vanishes at $\bm w = \bm u$.
  The mean degree increases from zero to two, since
  \begin{equation}
    \odv{}{z} \mleft(\frac{z P' (z)}{P (z)}\mright)
    = \frac{8 + 28 z + 56 z^2}{P (z)^2} > 0.
    \label{eq:weighted-balanced-mean}
  \end{equation}
  Thus each $0 < \mu < 2$ determines a unique $z > 0$, depending smoothly on $\mu$.
  At this mean degree, the maximum of the objective in Eq.~\eqref{eq:weighted-entropy-base} is $\Phi (\mu) = \log P (z) - \mu \log z - 4 h (\mu / 4)$.
  Upon differentiation, the terms involving the derivative of $z$ cancel because $z P' (z) / P (z) = \mu$, giving
  \begin{equation}
    \Phi' (\mu)
    = \log \frac{\mu}{(4 - \mu) z}
    = \log \frac{P' (z)}{4 P (z) - z P' (z)}.
    \label{eq:weighted-balanced-derivative}
  \end{equation}
  The denominator is positive, and the derivative has the sign of
  \begin{equation}
    (1 + z) P' (z) - 4 P (z) = 2 (1 + z) (2 - 7 z).
    \label{eq:weighted-balanced-stationary-equation}
  \end{equation}
  Hence the maximum occurs at $z = 2 / 7$, where $\mu = 4 z / (1 + z) = 8 / 9$.
  At this point, $4 h (\mu / 4) = 4 \log (1 + z) - \mu \log z$, so Eq.~\eqref{eq:weighted-balanced-entropy-comparison} at $\bm w = \bm u$ gives
  \begin{equation}
    B_1 = \frac{P (2 / 7)}{(1 + 2 / 7)^4} = \frac{343}{243}.
    \label{eq:weighted-balanced-base}
  \end{equation}

  The objective in Eq.~\eqref{eq:weighted-entropy-base} is jointly continuous in $t$ and $\bm w$, so its maximum over the compact simplex is continuous in $t$.

  It remains to prove $B_t \le N_t$ and the limit as $t \to 0$.
  For fixed $0 < \delta < 1$, combining Eqs.~\eqref{eq:weighted-robust-rank-bound} and~\eqref{eq:weighted-approximate-upper-bound}, taking logarithms, and dividing by $k$, we obtain
  \begin{equation}
    \log B_t
    \le \log N_t + k^{-1 / 4}
    - \frac{\log (1 - \delta^2) + \log f_t (k)}{k}.
    \label{eq:rate-comparison-logarithmic-bound}
  \end{equation}
  Since $\log f_t (k) / k \to 0$, Eq.~\eqref{eq:rate-comparison-logarithmic-bound} gives $B_t \le N_t$ upon letting $k \to \infty$.
  Finally, $N_t \to 1$ as $t \to 0$, so $1 < B_t \le N_t$ implies $B_t \to 1$, completing the proof of Eq.~\eqref{eq:weighted-base-properties}.
\end{proof}

\begin{wgrblock}
  At $t = 1$, the same Gram-matrix argument allows the sharper choice $f_1 (k) = 1$ in Theorem~\ref{thm:weighted-spectral-bound}.
  This replaces the generic prefactor in Eq.~\eqref{eq:weighted-prefactor-definition}, rather than bounding that particular sequence away from zero.

  Put $P (z) = 1 + 8 z + 7 z^2$, $\zeta = 2 / 7$, and $p_k = \floor*{8 k / 9}$.
  At the maximizing weights in Eq.~\eqref{eq:weighted-balanced-maximizer}, evaluated at $z = \zeta$, each individual monomial of degree $d$ has weight $\zeta^d / P (\zeta)$.
  Instead of fixing all four group counts, let $\mc L_k$ contain all words of total degree $p_k$.
  Every such word has weight $\zeta^{p_k} / P (\zeta)^k$.
  Since $\sigma_1 = 0$, their unweighted Gram matrix is the identity, and $\widehat\rho_k = I / \abs*{\mc L_k}$.
  The sum of its largest $R$ eigenvalues is exactly $R / \abs*{\mc L_k}$.
  Eqs.~\eqref{eq:eckart-young-bound}, \eqref{eq:weighted-action-approximation-error}, and~\eqref{eq:majorana-rank-cap-app} therefore give
  \begin{equation}
    \abchiG (\ket*{M_1}^{\otimes k})
    \ge (1 - \delta^2) \frac{\abs*{\mc L_k}}{C_{4 k, p_k}},
    \quad
    \abs*{\mc L_k} = [z^{p_k}] P (z)^k.
    \label{eq:balanced-full-shell-rank-bound}
  \end{equation}

  The polynomial $P$ has positive coefficients in degrees zero, one, and two, infinite radius of convergence, and $T = \lim_{z \to \infty} z P' (z) / P (z) = 2$, so Theorem~\ref{thm:large-powers-saddle} applies.
  The saddle for $p_k / k \to 8 / 9$ tends to $\zeta$, with curvature $392 / 81$.
  Since $\abs*{p_k / k - 8 / 9} < 1 / k$ and the derivative in Eq.~\eqref{eq:weighted-balanced-mean} is positive at $\zeta$, the moving saddle lies within a constant times $1 / k$ of $\zeta$.
  Taylor's theorem at the moving saddle bounds the change in the exponent per copy by a constant times $1 / k^2$, so replacing it by $\zeta$ does not change the normalized coefficient limit.
  The same argument applies to the binomial saddle in Eq.~\eqref{eq:binomial-stirling-limit}.
  The variances entering the two normalization factors are $\zeta^2 (392 / 81) = 32 / 81$ and $4 (2 / 9) (7 / 9) = 56 / 81$, respectively; the common factor $\zeta^{-p_k}$ cancels.
  Using Eq.~\eqref{eq:weighted-balanced-base}, we obtain
  \begin{equation}
    B_1^{-k} \abs*{\mc L_k} \binom{4 k}{p_k}^{-1} \longrightarrow \sqrt{\frac{56 / 81}{32 / 81}} = \frac{\sqrt{7}}{2}.
    \label{eq:balanced-shell-binomial-limit}
  \end{equation}

  To recover $C_{4 k, p_k}$, divide its summands by $\binom{4 k}{p_k}$ and extend them by zero beyond their range.
  For each fixed $h$, the normalized $h$-th summand tends to $(4 / 49)^h$.
  Each successive nonzero term is at most the preceding term times
  \begin{equation}
    \frac{p_k (p_k - 1)}{(4 k - p_k + 1) (4 k - p_k + 2)}
    \longrightarrow \frac{4}{49} < \frac{1}{2}.
    \label{eq:balanced-rank-cap-geometric-ratio}
  \end{equation}
  Thus the normalized summands are bounded by $2^{-h}$ for all sufficiently large $k$, and dominated convergence gives
  \begin{equation}
    C_{4 k, p_k} \binom{4 k}{p_k}^{-1} \longrightarrow \frac{49}{45}.
    \label{eq:balanced-rank-cap-binomial-limit}
  \end{equation}
  Combining Eqs.~\eqref{eq:balanced-shell-binomial-limit} and~\eqref{eq:balanced-rank-cap-binomial-limit}, we obtain
  \begin{equation}
    B_1^{-k} \frac{\abs*{\mc L_k}}{C_{4 k, p_k}} \longrightarrow \frac{45 \sqrt{7}}{98} > 1.
    \label{eq:balanced-constant-prefactor-limit}
  \end{equation}
  Eq.~\eqref{eq:balanced-full-shell-rank-bound} therefore proves the theorem with $f_1 (k) = 1$ for all sufficiently large $k$, without a spectral truncation loss.
\end{wgrblock}

\begin{wgrblock}
  The base in Eq.~\eqref{eq:weighted-entropy-base} also admits a variational characterization in terms of arbitrary selections of tensor words from the same 16 Majorana monomials.
  More precisely, for each $k$, let $\mc L$ range over nonempty selections of words of a common degree $p$, where $0 \le p \le 2 k$, and let $\rho_{\mc L}$ be their Gram matrix on $\ket*{M_t}^{\otimes k}$, normalized to trace one.
  For $0 < a < 1$, write $R_a (\rho)$ for the smallest number of eigenvalues of $\rho$ whose sum is at least $a$.
  Then
  \begin{equation}
    \inf_{0 < a < 1} \liminf_{k \to \infty} \frac{1}{k}
    \log \max_{p, \mc L} \frac{R_a (\rho_{\mc L})}{C_{4 k, p}}
    = \log B_t.
    \label{eq:variational-spectral-base-equivalence}
  \end{equation}
  The selected group counts in Eq.~\eqref{eq:weighted-selected-type}, together with Eq.~\eqref{eq:weighted-shell-spectral-mass}, give the lower bound for every fixed $a > 0$.

  For the reverse bound, average the outer products of the selected normalized columns.
  The resulting density matrix has the same nonzero eigenvalues as $\rho_{\mc L}$.
  Entropy subadditivity and concavity bound its entropy by $k$ times the entropy of its average one-block marginal.
  Permuting the four odd pairs or the three two-particle pairs, and exchanging the two vectors within each pair, are unitary symmetries of the local column ensemble that preserve degree.
  Averaging over these symmetries makes the weights equal within groups without decreasing entropy; these unitaries are used only for the entropy comparison and need not be Gaussian.
  Consequently, $H (\rho_{\mc L}) \le k H (\rho_t (\bm w))$ for some $\bm w \in \Delta_3$ with $\mu (\bm w) = p / k$.
  Markov's inequality applied to the random variable $ - \log \lambda$, with an eigenvalue sampled with probability $\lambda$, gives $\log R_a (\rho) \le H (\rho) / (1 - a)$; the zero-entropy case is immediate.
  Combining this with $C_{4 k, p} \ge \binom{4 k}{p} \ge \exp (4 k h (p / (4 k))) / (4 k + 1)$ yields the uniform estimate
  \begin{equation}
    \frac{1}{k} \log \frac{R_a (\rho_{\mc L})}{C_{4 k, p}}
    \le \log B_t
    + \frac{a \log 16}{1 - a}
    + \frac{\log (4 k + 1)}{k}.
    \label{eq:variational-spectral-base-upper-bound}
  \end{equation}
  First let $k \to \infty$, then let $a \to 0$, to obtain the reverse inequality in Eq.~\eqref{eq:variational-spectral-base-equivalence}.
  This identifies the asymptotic base, not the finite-copy lower bounds or their prefactors, and asserts no optimality beyond this fixed collection of Majorana monomials.
\end{wgrblock}

\subsubsection{Completion of the proof}
\label{appendix:fermionic-proof-completion}

We can now combine the reduction and the four-mode estimate, keeping the success probability and error tolerance explicit.

\begin{theorem}[Exponential lower bound on fermionic Gaussian rank]
  \label{thm:fermionic-exponential-transfer}
  Let $\ket*{\psi}$ be a normalized pure non-Gaussian state on finitely many fermionic modes.
  There exist $p_\psi \in (0, 1]$ and $t \in (0, 1]$, depending only on $\ket*{\psi}$, such that, for every fixed $0 \le \delta < \delta' < 1$, there are positive integers $m_k$ with $m_k / k \to p_\psi$ for which
  \begin{equation}
    \abchiG (\ket*{\psi}^{\otimes k})
    \ge \bchiG^{(\delta')} (\ket*{M_t}^{\otimes m_k})
    \ge (1 - \delta'^2) f_t (m_k) B_t^{m_k}
    \label{eq:universal-fermionic-rank-transfer}
  \end{equation}
  for all sufficiently large $k$.
  Here $f_t$ and $B_t$ are as in Theorem~\ref{thm:weighted-spectral-bound}; neither depends on the error tolerances.
  In particular, for every fixed $0 \le \delta < 1$,
  \begin{equation}
    \liminf_{k \to \infty} \frac{1}{k} \log \abchiG (\ket*{\psi}^{\otimes k})
    \ge p_\psi \log B_t > 0.
    \label{eq:universal-fermionic-rate}
  \end{equation}
  For four-mode target states, one may take $p_\psi = 1$, $m_k = k$, and $\delta' = \delta$.
  The first inequality in Eq.~\eqref{eq:universal-fermionic-rank-transfer} is then an equality for every $k \ge 1$; the lower bound holds for all sufficiently large $k$.
\end{theorem}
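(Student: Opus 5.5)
The plan is to chain the three ingredients already in place: the four-mode reduction (Proposition~\ref{prop:postselected-four-mode-reduction}), the postselection transfer (Proposition~\ref{prop:typical-branch-lifting}), and the weighted Gram bound (Theorem~\ref{thm:weighted-spectral-bound}).

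First, we dispose of small mode numbers. For $n \le 3$ every pure fermionic state is Gaussian, so the non-Gaussian hypothesis forces $n \ge 4$. Proposition~\ref{prop:postselected-four-mode-reduction} then supplies a Gaussian unitary $U$, parameters $p_\psi \in (0,1]$ and $t \in (0,1]$, a phase $\theta$, and a parity exponent $\epsilon$. These depend only on $\ket{\psi}$. We form the complete family $K_x = (I \otimes \bra{x}) U$ of Eq.~\eqref{eq:normal-form-complete-outcomes}, whose completeness is Eq.~\eqref{eq:normal-form-outcome-completeness}. The vacuum outcome gives $K_{0^{n-4}}\ket{\psi} = \sqrt{p_\psi}\, e^{i\theta}\gamma_1^\epsilon \ket{M_t}$.

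Second, we apply Proposition~\ref{prop:typical-branch-lifting} with target output $e^{i\theta}\gamma_1^\epsilon\ket{M_t}$. For fixed $\delta < \delta'$, this yields integers $m_k$ with $m_k/k \to p_\psi$ and a lower bound by $\bchiG^{(\delta')}$ of the $m_k$-th tensor power of that output. The tensor power of $e^{i\theta}\gamma_1^\epsilon$ is a norm-preserving map. It sends pure Gaussian vectors bijectively to pure Gaussian vectors, and by continuity it preserves border rank and the approximation ball. We may therefore strip it off, which is the first inequality in Eq.~\eqref{eq:universal-fermionic-rank-transfer}. Since $m_k \to \infty$, Theorem~\ref{thm:weighted-spectral-bound} applied at tolerance $\delta'$ gives the second inequality for all large $k$.

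Third, we extract the rate. Taking logarithms and dividing by $k$ gives
\[
  \frac{1}{k}\log \abchiG(\ket{\psi}^{\otimes k}) \ge \frac{\log(1-\delta'^2)}{k} + \frac{m_k}{k}\cdot\frac{\log f_t(m_k)}{m_k} + \frac{m_k}{k}\log B_t .
\]
Here $\log f_t(m)/m \to 0$ and $m_k/k \to p_\psi$, so the liminf is at least $p_\psi \log B_t$, and this is positive because $B_t > 1$. To obtain the statement for an arbitrary $\delta < 1$, we simply choose some $\delta' \in (\delta, 1)$; the rate does not depend on that choice.

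For four-mode targets, Proposition~\ref{prop:postselected-four-mode-reduction} gives $d=4$, so no modes are discarded and $p_\psi = 1$. In this case $U\ket{\psi} = e^{i\theta}\gamma_1^\epsilon\ket{M_t}$ exactly. Since $U^{\otimes k}$ together with the phase and Majorana factors forms a Gaussian-preserving isometric bijection, approximate border ranks agree exactly. This gives equality with $m_k = k$ and $\delta' = \delta$, bypassing the typicality argument.

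The only delicate point is checking that the phase and Majorana factor, and their tensor powers with fermionic reordering signs, genuinely preserve $\bchiG^{(\delta')}$. Because the map is invertible, this should follow from monotonicity applied in both directions. Apart from that, the proof is bookkeeping of limits, with no new estimate required.
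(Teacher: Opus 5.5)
Your proposal is correct and follows the paper's proof essentially step for step. You rule out $n\le 3$, feed the complete outcome family $K_x$ from the four-mode reduction into Proposition~\ref{prop:typical-branch-lifting}, strip the phase and Majorana factor, apply Theorem~\ref{thm:weighted-spectral-bound} at tolerance $\delta'$ using $m_k\to\infty$, and handle four-mode targets by the exact isometric, Gaussian-preserving identification with $p_\psi=1$. The paper resolves the point you flag about the phase and Majorana factor exactly as you suggest, by noting that the map preserves distance and is a bijection on pure Gaussian vectors.
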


\begin{proof}
  Every pure fermionic state on at most three modes is Gaussian, so the target state has at least four modes.
  Eq.~\eqref{eq:general-state-normal-form-rank-bound} supplies $p_\psi$, $t$ and, for each fixed $\delta < \delta'$, the integers $m_k$.
  Since $m_k / k \to p_\psi > 0$, the number of output copies tends to infinity.
  Theorem~\ref{thm:weighted-spectral-bound} therefore applies for all sufficiently large $k$ and gives Eq.~\eqref{eq:universal-fermionic-rank-transfer}.
  The factor $1 - \delta'^2$ is positive and independent of $k$.
  Since $\log f_t (m) / m \to 0$,
  \begin{equation}
    \lim_{k \to \infty} \frac{\log f_t (m_k)}{k}
    = \lim_{k \to \infty} \frac{m_k}{k}
    \frac{\log f_t (m_k)}{m_k} = 0.
    \label{eq:transferred-subexponential-factor}
  \end{equation}
  Taking logarithms in Eq.~\eqref{eq:universal-fermionic-rank-transfer} and dividing by $k$ now proves Eq.~\eqref{eq:universal-fermionic-rate}.

  For four-mode target states, Proposition~\ref{prop:postselected-four-mode-reduction} has success probability one and gives $U \ket*{\psi} = e^{i \theta} \gamma_1^\epsilon \ket*{M_t}$.
  These transformations and their tensor powers preserve distance and map Gaussian states bijectively onto Gaussian states.
  Hence, for every $k \ge 1$ and $0 \le \delta < 1$,
  \begin{equation}
    \abchiG (\ket*{\psi}^{\otimes k})
    = \abchiG (\ket*{M_t}^{\otimes k}).
    \label{eq:metric-normal-form-app}
  \end{equation}
  Theorem~\ref{thm:weighted-spectral-bound} completes the proof without any loss in copy number or error tolerance.
\end{proof}

Theorem~\ref{thm:fermionic-exponential-transfer} proves Theorem~\ref{thm:fermion-universal-main}.
For Gaussian target states, every tensor power has approximate border Gaussian rank one whenever $0 \le \delta < 1$.
For non-Gaussian target states, Eq.~\eqref{eq:universal-fermionic-rate}, with $\delta' = (1 + \delta) / 2$, gives the eventual lower bound $e^{c_\psi k}$ with $c_\psi = p_\psi \log B_t / 2 > 0$, proving the fermionic part of Theorem~\ref{thm:unified-dichotomy}.
This constant is independent of $\delta$.

\section{Bosonic proofs}
\label{appendix:bosonic-proofs}

\subsection{Proof of Theorem~\ref{thm:boson-universal-main}}
\label{appendix:main-proof}

We couple the target state to a vacuum environment by a Gaussian isometry.
After a photon cutoff, each Gaussian term has bounded Schmidt rank, whereas the environment entropy gives a lower bound on the Schmidt rank needed to approximate the target.
Comparing these bounds yields a Gaussian rank lower bound.
For product states, the entropy grows additively, and the difference between its rate and that of the cutoff bound determines the exponential base.

To apply this argument to an arbitrary non-Gaussian target state, we use Gaussian postselection to obtain states close to a number state in the norm of Eq.~\eqref{eq:photon-moment-norm}.
We then construct an isometry for which all these successful outputs satisfy the same moment and entropy bounds.
Finally, completeness of the instrument ensures a measurement record with both a linear number of successes and sufficiently small approximation error.

\subsubsection{Gaussian rank lower bounds from environment entropy}
\label{appendix:certificate}

\begin{lemma}[Schmidt rank upper bound under a photon cutoff]
  \label{lem:photon-cap}
  Let $n \in \Z_{> 0}$, $\ell$, $p \in \Z_{\ge 0}$, and let $V : \F_{n} \to \F_{n} \otimes \F_{\ell}$ be any Gaussian isometry.
  The two output factors define the split $S : E$, and $\Pi_{\le p}^E$ retains at most $p$ photons in $E$.
  For every pure Gaussian vector $\ket{G} \in \F_{n}$,
  \begin{equation}
    \SR_{S : E} (\Pi_{\le p}^E V \ket{G}) \le \binom{n + p}{p}.
    \label{eq:gaussian-schmidt-cap}
  \end{equation}
  More generally, for $\ket{\eta} \in \F_n$ and an integer $r \ge 0$, if $\bchiG (\ket{\eta}) \le r$, then $\SR_{S : E} (\Pi_{\le p}^E V \ket{\eta}) \le r \binom{n + p}{p}$.
\end{lemma}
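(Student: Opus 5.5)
The plan is to reduce the first claim to the vacuum and then pass to sums and limits by subadditivity and lower semicontinuity, mirroring the fermionic argument around Eq.~\eqref{eq:majorana-rank-cap-app}.

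\textbf{Gaussian case.} Since $V$ is a Gaussian isometry, $V\ket{G}$ is a pure Gaussian vector on $n+\ell$ modes. I would like a Gaussian isometry $W$ with $V\ket{G} = W(\ket{0'}\otimes\ket{0^{\ell}})$-type structure, but a cleaner route uses the Bargmann representation, Eq.~\eqref{eq:pure-gaussian-bargmann}. Write the Bargmann function of $V\ket{G}$ in variables $(z,w)\in\C^n\times\C^\ell$ as
\begin{equation}
  C_0\exp\mleft(\tfrac12 z^{\mathsf T}A_{SS}z + z^{\mathsf T}A_{SE}w + \tfrac12 w^{\mathsf T}A_{EE}w + \bm b_S^{\mathsf T}z + \bm b_E^{\mathsf T}w\mright).
\end{equation}
By Eq.~\eqref{eq:bargmann-dictionary}, $\Pi_{\le p}^E$ keeps exactly the part of total degree at most $p$ in $w$ of this entire function (expanded in $w$ with coefficients entire in $z$). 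Factor out $g(z)=C_0 e^{\frac12 z^{\mathsf T}A_{SS}z+\bm b_S^{\mathsf T}z}$, which multiplies all system components and does not affect Schmidt rank. The remaining factor is $\exp(u^{\mathsf T}w)\exp(\tfrac12 w^{\mathsf T}A_{EE}w+\bm b_E^{\mathsf T}w)$ with $u = A_{SE}^{\mathsf T}z\in\C^\ell$ linear in $z$. Its $w$-degree-$\le p$ truncation is a polynomial in $w$ whose coefficients are polynomials in $u$ of degree at most $p$. Hence the truncated function lies in $\sum_{|\alpha|\le p} g(z)\,(A_{SE}^{\mathsf T}z)^{\alpha}\otimes(\text{functions of }w)$. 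More sharply, since $u$ depends on $z$ only through $A_{SE}^{\mathsf T}z$, the system factors lie in $g(z)\cdot\{\text{polynomials of degree}\le p \text{ in the } \le n \text{ linear forms } (A_{SE}z)\}$. The number of variables is at most $\operatorname{rank}A_{SE}\le n$, so this space has dimension at most $\binom{n+p}{p}$, which gives Eq.~\eqref{eq:gaussian-schmidt-cap}.

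One must check that these system factors, times $e^{-|z|^2}$, are square integrable, so they are genuine vectors of $\F_n$. Polynomial times $g$ is fine because $\norm{A_{SS}}<1$, which follows from the normalizability $\norm{A}<1$. Equivalently, one can bypass the Bargmann calculus by pairing with environment basis vectors $\bra{x}$, $|x|\le p$, and identifying the coefficient vectors.

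\textbf{Sums and border limits.} For $\chiG(\ket\eta)\le r$, linearity of $\Pi_{\le p}^E V$ and subadditivity of Schmidt rank give $\SR\le r\binom{n+p}{p}$. For border rank, take $\ket{\eta_h}\to\ket\eta$ with $\chiG\le r$. Since $\Pi_{\le p}^E V$ is bounded, the images converge in norm. Schmidt rank at most $R$ is a closed condition: the reduced operator $\Tr_E$ of rank at most $R$ is closed under trace-norm limits, and norm convergence of vectors gives trace-norm convergence of reduced states. Lower semicontinuity then gives the claim.

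\textbf{Main obstacle.} The delicate step is the Gaussian case: justifying the truncation-by-degree identification rigorously and controlling normalizability. The Gaussian exponent can contain environment-only squeezing, which only rescales environment functions and so is harmless.
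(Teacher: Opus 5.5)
Your proposal is correct and follows the paper's proof: you factor the joint Bargmann function as $g(z)\exp(z^{\mathsf T}Bw+\tfrac12 w^{\mathsf T}Cw+\bm d^{\mathsf T}w)$, note that each retained environment coefficient is $g$ times a polynomial of degree at most $p$ in $z$, and count $\binom{n+p}{p}$ such functions. The extension to sums and border limits is also the same in substance: the paper shows closedness through vanishing Gram determinants of the retained coefficient vectors, and you use trace-norm closedness of reduced states of rank at most $R$, which is an equivalent argument.
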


\begin{proof}
  Set $\ket{\Psi} \coloneqq V \ket{G}$.
  The environment expansion in Eq.~\eqref{eq:environment-coefficient-vectors} gives
  \begin{equation}
    \SR_{S : E} (\Pi_{\le p}^E \ket{\Psi})
    = \dim \spn \set*{\ket{\Psi_{x}} : \abs{x} \le p}.
    \label{eq:cutoff-coefficient-span}
  \end{equation}
  Any $R$-term product decomposition places the coefficients in a common span of dimension at most $R$.
  Conversely, expanding the coefficients in a basis of their span gives one product term per basis vector.

  Let $F_\Psi (z, w)$ be the Bargmann function of $\ket{\Psi}$, with $z \in \C^{n}$ for the system and $w \in \C^{\ell}$ for the environment.
  Since $\ket{\Psi}$ is Gaussian, Eq.~\eqref{eq:pure-gaussian-bargmann} gives
  \begin{equation}
    F_\Psi (z, w) = g (z) \exp \mleft(z^{\mathsf T} B w + \frac12 w^{\mathsf T} C w + \bm d^{\mathsf T} w\mright),
    \label{eq:joint-gaussian-bargmann}
  \end{equation}
  where $g (z) \coloneqq F_\Psi (z, 0)$ is a nonzero Gaussian function, $B \in \C^{n \times \ell}$, $C \in \C^{\ell \times \ell}$ is symmetric, and $\bm d \in \C^\ell$.

  Define
  \begin{equation}
    P_{x} (z) \coloneqq \sqrt{x!} [w^{x}]
    \exp \mleft(z^{\mathsf T} B w + \frac12 w^{\mathsf T} C w + \bm d^{\mathsf T} w\mright),
    \label{eq:conditional-polynomial-definition}
  \end{equation}
  where the brackets extract the indicated coefficient.
  Each factor of $z$ comes from the bilinear term $z^{\mathsf T} B w$ and is accompanied by a factor of $w$.
  Extracting $w^{x}$ therefore leaves a polynomial $P_{x}$ of degree at most $\abs{x}$.
  By Eq.~\eqref{eq:bargmann-dictionary}, the system coefficient $\ket{\Psi_{x}}$ has Bargmann function $\sqrt{x!} [w^{x}] F_\Psi = g P_{x}$.

  Every retained coefficient function therefore lies in the span of $z^{y} g (z)$ with $\abs{y} \le p$.
  There are $\binom{n + p}{p}$ monomials of total degree at most $p$ in $n$ variables: introducing a slack exponent turns the count into the number of $n + 1$ nonnegative integers summing to $p$.
  Hence
  \begin{equation}
    \dim \spn \set*{z^{y} g (z) : y \in \Z_{\ge 0}^{n}, \abs{y} \le p}
    = \binom{n + p}{p}.
    \label{eq:gaussian-polynomial-span}
  \end{equation}
  Multiplication by the same nonzero Gaussian function preserves their linear independence.
  Equation~\eqref{eq:cutoff-coefficient-span} proves Eq.~\eqref{eq:gaussian-schmidt-cap}.
  The common Gaussian factor $g$ may have components at arbitrarily high occupations; only the dimension of the system span is bounded.

  For a sum of $r$ Gaussian vectors, the system spans together have dimension at most $R \coloneqq r \binom{n + p}{p}$.
  If $\bchiG (\ket{\eta}) \le r$, choose such sums $\ket{\eta_h}$ converging to $\ket{\eta}$.
  Since $V$ is an isometry and the cutoff is an orthogonal projection,
  \begin{equation}
    \norm*{\Pi_{\le p}^E V (\ket{\eta_h} - \ket{\eta})}_2
    \le \norm*{\ket{\eta_h} - \ket{\eta}}_2
    \longrightarrow 0.
    \label{eq:cutoff-border-convergence}
  \end{equation}
  Each retained coefficient converges in norm because coefficient extraction is bounded.
  Every Gram determinant of $R + 1$ coefficients therefore remains zero in the limit, so their span still has dimension at most $R$.
\end{proof}

Lemma~\ref{lem:photon-cap} bounds the Schmidt rank supplied by each Gaussian term.
The environment entropy will bound the rank needed to approximate the target.
For products, $d_n$ in Eq.~\eqref{eq:dn-definition} describes the growth rate of the binomial factor.
The moment assumptions allow a cutoff above the mean total environment photon number whose discarded probability tends to zero uniformly over the target states.

\begin{proposition}[Uniform Gaussian rank lower bound for product states]
  \label{prop:finite-copy-certificate}
  \label{cor:uniform-bosonic-rate}
  Fix $n \in \Z_{> 0}$, $\ell \in \Z_{\ge 0}$, $s > 1$, and a Gaussian isometry $V : \F_n \to \F_n \otimes \F_{\ell}$.
  Let $\K \subset \F_n$ be a nonempty family of normalized vectors, and set $\rho_\phi \coloneqq \Tr_S V \ket{\phi} \bra{\phi} V^\dagger$.
  Assume a common mean bound $\mu > 0$ and bounded variance:
  \begin{equation}
    \sup_{\ket{\phi} \in \K} \Tr (N_E \rho_\phi) \le \mu,
    \quad
    \sup_{\ket{\phi} \in \K} \Var (N_E; \rho_\phi) < \infty.
    \label{eq:uniform-family-data}
  \end{equation}
  Assume also that the R\'enyi entropy exceeds $d_n (\mu)$ by a uniform positive amount:
  \begin{equation}
    c_\K \coloneqq \inf_{\ket{\phi} \in \K}
    H_s (\rho_\phi) - d_n (\mu) > 0.
    \label{eq:compact-rate}
  \end{equation}
  Set $B_\K \coloneqq e^{c_\K}$.
  There is an explicit sequence $f_\K (k) > 0$ independent of $\delta$, with $\log f_\K (k) / k \to 0$.
  For every fixed $0 \le \delta < 1$ and all sufficiently large $k$, uniformly over $\ket{\phi_1}$, $\ldots$, $\ket{\phi_k} \in \K$,
  \begin{equation}
    \abchiG \mleft(\bigotimes_{i = 1}^k \ket{\phi_i}\mright)
    \ge (1 - \delta^2)^{s / (s - 1)} f_\K (k) B_\K^k.
    \label{eq:heterogeneous-exponent}
  \end{equation}
\end{proposition}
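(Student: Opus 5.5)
We apply the construction to the product. Let $\ket{\Phi}\coloneqq\bigotimes_i\ket{\phi_i}\in\F_{nk}$ and $V^{\otimes k}:\F_{nk}\to\F_{nk}\otimes\F_{\ell k}$. This is again a Gaussian isometry, and its environment state is $\rho\coloneqq\bigotimes_i\rho_{\phi_i}$.

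\textbf{Cutoff.} Set the cutoff at $p_k\coloneqq\lceil k\mu+k^{3/4}\rceil$ total environment photons. The total environment photon number under $\rho$ is a sum of independent variables, each with mean at most $\mu$ and variance at most $v\coloneqq\sup_{\K}\Var(N_E;\rho_\phi)$. Chebyshev's inequality therefore gives a discarded probability
\begin{equation}
  \varepsilon_k\coloneqq\Tr\bigl[(I-\Pi^E_{\le p_k})\rho\bigr]\le v k^{-1/2},
\end{equation}
uniformly over the choice of $\ket{\phi_i}\in\K$.

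\textbf{Rank side.} Let $\ket{\eta}$ satisfy $\norm{\ket{\eta}-\ket{\Phi}}_2\le\delta$ and have $\bchiG(\ket{\eta})=r$. Lemma~\ref{lem:photon-cap}, applied with $nk$ system modes, gives
\begin{equation}
  \SR\bigl(\Pi^E_{\le p_k}V^{\otimes k}\ket{\eta}\bigr)\le R\coloneqq r\binom{nk+p_k}{p_k}.
\end{equation}
Since $V^{\otimes k}$ is an isometry and the cutoff is a projection,
\begin{equation}
  \norm{\Pi^E_{\le p_k}V^{\otimes k}\ket{\Phi}-\Pi^E_{\le p_k}V^{\otimes k}\ket{\eta}}_2\le\delta.
\end{equation}
We also have $\norm{\Pi^E_{\le p_k}V^{\otimes k}\ket{\Phi}}_2^2=1-\varepsilon_k$, and its Schmidt coefficients are those of the full output projected. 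To pass to the full output, use the triangle inequality: $\ket{\Psi}\coloneqq V^{\otimes k}\ket{\Phi}$ lies within $\delta+\sqrt{\varepsilon_k}$ of a vector of Schmidt rank at most $R$. By Eckart--Young (Lemma~\ref{lem:eckart-young-app}, viewing $\ket{\Psi}$ as a Hilbert--Schmidt operator $E\to S$), the sum of the largest $R$ eigenvalues $\lambda_1\ge\lambda_2\ge\cdots$ of $\rho$ is at least
\begin{equation}
  1-(\delta+\sqrt{\varepsilon_k})^2.
\end{equation}

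\textbf{Entropy side.} By Hölder's inequality,
\begin{equation}
  \sum_{j\le R}\lambda_j\le R^{(s-1)/s}\norm{\rho}_s=R^{(s-1)/s}\exp\Bigl(-\tfrac{s-1}{s}H_s(\rho)\Bigr).
\end{equation}
Rényi entropy is additive on tensor products, so $H_s(\rho)=\sum_iH_s(\rho_{\phi_i})\ge k(d_n(\mu)+c_\K)$. Combining this with the rank side gives
\begin{equation}
  R\ge\bigl(1-(\delta+\sqrt{\varepsilon_k})^2\bigr)^{s/(s-1)}e^{k(d_n(\mu)+c_\K)}.
\end{equation}
Thus $r\ge\bigl(1-(\delta+\sqrt{\varepsilon_k})^2\bigr)^{s/(s-1)}e^{k(d_n(\mu)+c_\K)}\binom{nk+p_k}{p_k}^{-1}$.

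\textbf{Binomial asymptotics.} Stirling's formula gives $\log\binom{nk+p_k}{p_k}=d_{nk}(p_k)+O(\log k)$, and $d_{nk}(p_k)=k\,d_n(p_k/k)$. Since $p_k/k=\mu+O(k^{-1/4})$ and $d_n$ is smooth at $\mu>0$, this equals $k\,d_n(\mu)+O(k^{3/4})$. So we define
\begin{equation}
  f_\K(k)\coloneqq\tfrac12\, e^{k d_n(\mu)}\binom{nk+p_k}{p_k}^{-1},
\end{equation}
which is explicit, independent of $\delta$, and satisfies $\log f_\K(k)/k\to0$.

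\textbf{The error factor (main obstacle).} The remaining task is to replace $1-(\delta+\sqrt{\varepsilon_k})^2$ by $(1-\delta^2)$ up to the absorbed constant. Because $\varepsilon_k\to0$ uniformly, for fixed $\delta$ and large $k$ we have $1-(\delta+\sqrt{\varepsilon_k})^2\ge2^{-(s-1)/s}(1-\delta^2)$; the factor $\tfrac12$ in $f_\K$ absorbs the loss. A cleaner route avoids the triangle inequality: apply Eckart--Young directly to the cutoff vector, with mass $1-\varepsilon_k$ and error $\delta^2$, and bound its top-$R$ mass by that of $\rho$ via the operator inequality $\Tr_S\Pi\ket{\Psi}\bra{\Psi}\Pi\le\ldots$. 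This ordering needs care, so I would keep the triangle-inequality version. Uniformity over $\ket{\phi_i}$ holds because every estimate used only $\mu$, $v$ and $c_\K$.
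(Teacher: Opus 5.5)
Your proof is correct and is essentially the paper's argument: a Chebyshev cutoff at $\mu k+o(k)$ environment photons, Lemma~\ref{lem:photon-cap} applied to $V^{\otimes k}$, Eckart--Young on the uncut output, H\"older with additivity of $H_s$, and Stirling for the binomial, with the finite-$k$ loss absorbed into a constant in $f_\K$ (the paper uses margin $k^{2/3}$ and constant $2^{-s/(s-1)}$ where you use $k^{3/4}$ and $\tfrac12$). The one difference is that the paper replaces your triangle inequality by the exact Pythagorean split $\norm{V^{\otimes k}\ket{\Phi}-\Pi^E_{\le p_k}V^{\otimes k}\ket{\eta}}_2^2=\varepsilon_k+\norm{\Pi^E_{\le p_k}V^{\otimes k}(\ket{\Phi}-\ket{\eta})}_2^2\le\varepsilon_k+\delta^2$, which holds because the discarded part is orthogonal to the range of the cutoff; this is the ``cleaner route'' you hesitated over, and it needs no operator inequality, only Eckart--Young applied to the uncut vector.
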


\begin{proof}
  We first derive a bound for one target state and an arbitrary cutoff.
  Let $\ket{\phi}$ be normalized, let $p \in \Z_{\ge 0}$, and suppose
  \begin{equation}
    \Tr \mleft[(\id - \Pi_{\le p}^E) \rho_\phi\mright] \le \varepsilon,
    \quad 0 \le \varepsilon < 1 - \delta^2.
    \label{eq:finite-copy-tail-assumption}
  \end{equation}
  Set $\ket{\Psi} \coloneqq V \ket{\phi}$.
  If $r \coloneqq \abchiG (\ket{\phi})$ is infinite, the lower bound in Eq.~\eqref{eq:finite-copy-certificate} is immediate.
  Otherwise choose $\ket{\eta}$ with $\norm*{\ket{\phi} - \ket{\eta}}_2 \le \delta$ and $\bchiG (\ket{\eta}) = r$.
  Lemma~\ref{lem:photon-cap} bounds the Schmidt rank of $\Pi_{\le p}^E V \ket{\eta}$ by $r \binom{n + p}{p}$.
  The discarded part of $\ket{\Psi}$ is orthogonal to the retained approximation error, hence
  \begin{equation}
    \norm*{\ket{\Psi} - \Pi_{\le p}^E V \ket{\eta}}_2^2 = \norm*{(\id - \Pi_{\le p}^E) \ket{\Psi}}_2^2 + \norm*{\Pi_{\le p}^E V (\ket{\phi} - \ket{\eta})}_2^2 \le \varepsilon + \delta^2.
    \label{eq:orthogonal-projection-error}
  \end{equation}
  Let $R \coloneqq \SR_{S : E} (\Pi_{\le p}^E V \ket{\eta})$, and list the eigenvalues of $\rho_\phi$ in decreasing order as $\lambda_j$.
  The best approximation to $\ket{\Psi}$ by a vector of Schmidt rank at most $R$ retains its largest $R$ Schmidt coefficients; the squared error is $\sum_{j > R} \lambda_j$.
  Thus Eq.~\eqref{eq:orthogonal-projection-error} implies $\sum_{j = 1}^R \lambda_j \ge 1 - \delta^2 - \varepsilon$.
  Applying H\"older's inequality to the vectors $(\lambda_1, \ldots, \lambda_R)$ and $(1, \ldots, 1)$, with exponents $s$ and $s / (s - 1)$, gives
  \begin{equation}
    1 - \delta^2 - \varepsilon
    \le \sum_{j = 1}^R \lambda_j
    \le R^{(s - 1) / s} \mleft(\sum_j \lambda_j^s\mright)^{1 / s}.
    \label{eq:schmidt-holder-bound}
  \end{equation}
  Substitute $\sum_j \lambda_j^s = \exp ((1 - s) H_s (\rho_\phi))$ and $R \le r \binom{n + p}{p}$ into Eq.~\eqref{eq:schmidt-holder-bound} and solve for $r$:
  \begin{equation}
    \abchiG (\ket{\phi})
    \ge (1 - \delta^2 - \varepsilon)^{s / (s - 1)}
    \exp (H_s (\rho_\phi)) \binom{n + p}{p}^{-1}.
    \label{eq:finite-copy-certificate}
  \end{equation}

  We now choose a cutoff whose discarded probability tends to zero and whose binomial factor has exponential growth rate $d_n (\mu)$.
  Choose $v > 0$ such that $\Var (N_E; \rho_\phi) \le v$ for every $\ket{\phi} \in \K$, as permitted by Eq.~\eqref{eq:uniform-family-data}.
  Apply $V^{\otimes k}$, grouping all $n k$ system outputs before the $\ell k$ environment outputs.
  For the product state $\bigotimes_{i = 1}^k \ket{\phi_i}$, the environment state is $\bigotimes_{i = 1}^k \rho_i$, where $\rho_i \coloneqq \rho_{\phi_i}$.
  The identity $\Tr [(\bigotimes_i \rho_i)^s] = \prod_i \Tr (\rho_i^s)$ makes R\'enyi entropy additive:
  \begin{equation}
    H_s \mleft(\bigotimes_{i = 1}^k \rho_i\mright) = \sum_{i = 1}^k H_s (\rho_i).
    \label{eq:product-environment-renyi-moment}
  \end{equation}
  The photon counts are independent, so their total has mean at most $k \mu$ and variance at most $k v$.
  We choose a cutoff margin $k^{2 / 3}$, which grows faster than the standard deviation but is $o (k)$.
  Set
  \begin{equation}
    p_k \coloneqq \ceil*{\mu k + k^{2 / 3}}.
    \label{eq:uniform-photon-cutoff}
  \end{equation}
  Exceeding this cutoff requires a deviation from the mean greater than $k^{2 / 3}$.
  Chebyshev's inequality gives
  \begin{equation}
    \Tr \mleft[(\id - \Pi_{\le p_k}^E) \bigotimes_{i = 1}^k \rho_i\mright]
    \le \frac{k v}{(k^{2 / 3})^2} = v k^{-1 / 3}.
    \label{eq:cutoff-tail-chebyshev}
  \end{equation}
  For all sufficiently large $k$, Eq.~\eqref{eq:finite-copy-certificate}, applied on $n k$ system modes, gives
  \begin{equation}
    \abchiG \mleft(\bigotimes_{i = 1}^k \ket{\phi_i}\mright)
    \ge (1 - \delta^2 - v k^{-1 / 3})^{s / (s - 1)}
    \exp \mleft(\sum_{i = 1}^k H_s (\rho_i)\mright) \binom{n k + p_k}{p_k}^{-1}.
    \label{eq:uniform-finite-copy-bound}
  \end{equation}

  The logarithmic form of Stirling's estimate, $\log (h!) = h \log h - h + O (\log (h + 1))$, applied to the three factorials in the binomial coefficient gives $\log \binom{n k + p_k}{p_k} = k d_n (p_k / k) + O (\log k)$ when $p_k / k$ stays bounded.
  Since $p_k / k \to \mu > 0$, continuity of $d_n$ gives
  \begin{equation}
    \binom{n k + p_k}{p_k} = \exp (k d_n (\mu) + o (k)).
    \label{eq:binomial-entropy-rate}
  \end{equation}
  To write the lower bound as an exponential factor $B_\K^k$ times a subexponential factor, define
  \begin{equation}
    f_\K (k) \coloneqq
    2^{-s / (s - 1)}\,\exp (k d_n (\mu))
    \binom{n k + \ceil*{\mu k + k^{2 / 3}}}{\ceil*{\mu k + k^{2 / 3}}}^{-1}.
    \label{eq:bosonic-subexponential-factor}
  \end{equation}
  Equation~\eqref{eq:binomial-entropy-rate} gives $\log f_\K (k) / k \to 0$.
  For all sufficiently large $k$, the tail bound satisfies $v k^{-1 / 3} \le (1 - \delta^2) / 2$.
  Substituting this inequality and $\sum_i H_s (\rho_i) \ge k[c_\K + d_n (\mu)]$ into Eq.~\eqref{eq:uniform-finite-copy-bound} proves Eq.~\eqref{eq:heterogeneous-exponent}.
  The cutoff and tail bound do not depend on the chosen states, so the threshold in $k$ is uniform over all products from $\K$.
\end{proof}

\subsubsection{Gaussian postselection near a number state}
\label{appendix:regularization}

For integers $n, m \ge 1$, write $\ket{\xi_{n, m}} \coloneqq \ket{m} \otimes \ket{0}^{\otimes (n - 1)} \in \F_n$.
We use the fact that every non-Gaussian pure state is orthogonal to some coherent state.
Indeed, let $\ket{u} \in \F_n$ be normalized and suppose its Bargmann function has no zeros.
Since $F_u$ is entire and nowhere zero on the simply connected domain $\C^n$, it admits an entire logarithm $g : \C^n \to \C$, so $F_u = \exp g$.
Fix $v \in \C^n$ with $\abs{v} = 1$.
The function $w \mapsto g (w v)$ is entire on $\C$, so its Taylor expansion $g (w v) = \sum_{j \ge 0} \gamma_j w^j$ converges for every $w \in \C$; the coefficients $\gamma_j \in \C$ depend on $v$.
For every $R > 0$, Eq.~\eqref{eq:bargmann-growth} implies that on $\abs{w} = R$ the function $R^2 / 2 - \operatorname{Re} g (w v)$ is nonnegative.
Its average over this circle is $R^2 / 2 - \operatorname{Re} \gamma_0$, and its $j$th Fourier coefficient is $-\gamma_j R^j / 2$ for $j \ge 1$.
The triangle inequality gives
\begin{equation}
  \abs{\gamma_j}
  R^j / 2 \le R^2 / 2 - \operatorname{Re} \gamma_0 \quad (j \ge 1).
  \label{eq:zero-free-bargmann-coefficients}
\end{equation}
Letting $R \to \infty$ in Eq.~\eqref{eq:zero-free-bargmann-coefficients} gives $\gamma_j = 0$ for $j > 2$.
Each homogeneous Taylor term of $g$ of degree greater than two therefore vanishes on every unit vector, and hence identically.
Thus $g (z) = c + \bm b^{\mathsf T} z + z^{\mathsf T} A z / 2$, with $c \in \C$, $\bm b \in \C^n$, and $A = A^{\mathsf T} \in \C^{n \times n}$.
Normalizability and Eq.~\eqref{eq:pure-gaussian-bargmann} then imply that $\ket{u}$ is Gaussian; see also Ref.~\cite[Proposition~5.3]{gröchenig2018zeroswignerdistributionshorttime}.
For a non-Gaussian input, $F_u$ must therefore have a zero, which gives an orthogonal coherent state by Eq.~\eqref{eq:bargmann-function}.

\begin{lemma}[Number-state approximation by Gaussian postselection]
  \label{lem:number-state-postselection}
  Let $\ket{\psi} \in \F_n$ be a normalized non-Gaussian state, with $n \ge 1$.
  There is an integer $m \ge 1$ such that, for every $\varepsilon > 0$, there is a complete Gaussian instrument with a measurable success set $B$ of positive probability.
  Its Kraus operators $K_\omega$ take values in $\F_n$ for $\omega \in B$ and in $\C$ otherwise.
  For every $\omega \in B$, the vector $K_\omega \ket{\psi}$ is nonzero and its normalization satisfies
  \begin{equation}
    \ket{\phi_\omega} = \frac{K_\omega \ket{\psi}}{\norm{K_\omega \ket{\psi}}_2},
    \quad \norm{\ket{\phi_\omega} - \ket{\xi_{n, m}}}_D < \varepsilon.
    \label{eq:number-state-approximation}
  \end{equation}
\end{lemma}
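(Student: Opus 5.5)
We will build the instrument in three stages: a displacement to move a zero of the Bargmann function to the origin, a mode rotation so the lowest-order term is a pure power of $z_1$, and then a "zooming" postselection that isolates that lowest-order term.

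\emph{Stage 1: displacement and rotation.} By the discussion preceding the statement, $F_\psi$ has a zero $z_0 \in \C^n$. A displacement moves it to the origin. Indeed, $F_{D(\bm\alpha)\psi}(z)$ equals $F_\psi(z-\bm\alpha)$ times a nonvanishing Gaussian factor, so for a suitable $\bm\alpha$ the state $\ket{\psi'} = D(\bm\alpha)\ket{\psi}$ has $F_{\psi'}(0)=0$. Since $F_{\psi'}\not\equiv 0$, let $m\ge 1$ be the lowest degree of a nonzero homogeneous Taylor term $P_m$ of $F_{\psi'}$. Choose a unit vector $v$ with $P_m(v)\neq 0$. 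A passive unitary, which acts on Bargmann functions by $z\mapsto Uz$, rotates $v$ to $e_1$. Then the coefficient of $z_1^m$ is nonzero, and no monomials of degree $<m$ occur. This integer $m$ depends only on $\ket{\psi}$, as required.

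\emph{Stage 2: zooming by attenuation.} To isolate the term $z_1^m$, I would dilate by a small factor $\lambda$: send $F(z)\mapsto F(\lambda z)$, whose Taylor coefficients of degree $d$ get multiplied by $\lambda^d$. The dilation $F(z)\mapsto F(\lambda z)$ with $0<\lambda<1$ is the Kraus operator $\lambda^{N_{\mathrm{tot}}}$. This operator is realized by beam splitters onto vacuum environment modes, with the environment then heterodyned. Conditioned on heterodyne outcome $\beta$, the Kraus operator is $\lambda^{N}$ composed with a displacement of size $O(\sqrt{1-\lambda^2}\,\beta)$. The displacement can be undone by a Gaussian unitary, leaving $D$-type corrections controlled by $\beta$. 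I would therefore take $B$ to be the set of outcomes with $|\beta|\le 1$, after applying the correcting unitary. This set has positive probability because the outcome density is continuous and positive near $0$.

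After this step, the output's Bargmann function is $\lambda^m[c z_1^m + \lambda R_\lambda(z)]$ plus corrections, where $R_\lambda$ collects the degree $>m$ terms with weights $\lambda^{d-m-1}$. Applying $\lambda^N$ again to the mode-$1$ factor leaves it unchanged, so a second stage of the same type suppresses the other $z_1^j z'^{\,y}$ components. The $D$-norm of the remainder is bounded by $\sum_d \lambda^{d-m}(1+d)\norm{\text{degree-}d\text{ part}}$. Since $\sum_d \lambda^{2(d-m)}(1+d)^2\norm{\cdot}^2<\infty$ for $\lambda<1$ even without any moment assumption on $\psi$, this is $O(\lambda)$. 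Normalizing therefore gives $\norm{\phi_\omega-\xi_{n,m}}_D\to 0$ as $\lambda\to 0$. The phase of $c$ can be removed by a phase rotation.

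Completeness of the whole instrument follows from Eq.~\eqref{eq:coherent-resolution} for the heterodyne step, together with the observation that outcomes outside $B$ can be further heterodyned on all remaining modes to yield scalar outputs. Each $K_\omega$ is then a composition of a Gaussian unitary, a vacuum insertion, and a coherent-state contraction, as required.

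\emph{Main obstacle.} The hardest step is uniform control of the residual displacement for outcomes $\beta\in B$. The displaced output must stay within $\varepsilon$ of $\xi_{n,m}$ in $D$-norm uniformly over $B$, not just at $\beta=0$. I would handle this by shrinking $B$ to $|\beta|\le \lambda$, so the displacement is $O(\lambda)$, which keeps positive probability. I would then use that $D(\gamma)$ is continuous from the $D$-norm domain to $\ell^2$ with $\norm{(D(\gamma)-I)u}_D\le C|\gamma|\,\norm{(N+1)^{3/2}u}$ on finite-photon vectors. Since $\xi_{n,m}$ has finite photon number, this bound combines with the $O(\lambda)$ remainder estimate.
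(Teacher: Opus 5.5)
Your construction has the same structure as the paper's: move a zero of $F_\psi$ to the origin, rotate so the coefficient of $z_1^m$ is nonzero, attenuate with a beam splitter whose ancilla is heterodyned, keep a small ball of outcomes, and heterodyne everything else on failure. The gap is in the step you flag as the main obstacle, which is resolved incorrectly.

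The heterodyne back-action is not a displacement that a Gaussian unitary can undo afterwards. For outcome $\beta$, the output Bargmann function is $e^{-\abs{\beta}^2/2}F(\lambda z + r\overline{\beta})$ with $r=\sqrt{1-\lambda^2}$. Equivalently, $T_\beta = D(\lambda\beta/r)\,\lambda^{N_{\mathrm{tot}}}D(-\beta/r)$. Feed-forward can remove only the outer displacement. The inner one acts before the zoom and is magnified by $1/\lambda$: with $G(z)=F(\lambda z)$, the output is $G(z+r\overline{\beta}/\lambda)$. Its leading part is therefore $c\lambda^m(z_1+r\overline{\beta}_1/\lambda)^m$, not $c\lambda^m z_1^m$.

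Consequently, your choice $\abs{\beta}\le\lambda$ leaves a shift of order one. Take $\ket{\psi}=\ket{1}$ and $\abs{\beta}=\lambda$: the normalized output is $(\lambda\ket{1}+r\overline{\beta}\ket{0})/\sqrt{\lambda^2+r^2\abs{\beta}^2}$. Its $\norm{\cdot}_2$-distance from $\ket{1}$ tends to $\sqrt{2-\sqrt{2}}$ as $\lambda\to0$, not to zero. Your estimate on $\norm{(D(\gamma)-I)u}_D$ for finite-photon $u$ does not help. The displacement that cannot be removed acts on the input, not on $\ket{\xi_{n,m}}$, and the input need not have finite photon-number moments.

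The repair is to fix $\lambda$ first and take the ball of radius small compared with $\varepsilon\lambda$. The paper does this by fixing the attenuation $\kappa$ so that the zero-outcome output is within $\varepsilon/2$. It then uses continuity of $\omega\mapsto K_\omega\ket{\psi}$ in $\norm{\cdot}_D$, which follows from three facts combined with the factorization above:
\begin{itemize}
\item strong continuity of displacements in $\norm{\cdot}_2$ on the input side;
\item the smoothing bound $\norm{\kappa^{N_{\mathrm{tot}}}u}_D\le\norm{u}_2$ for $\kappa\le 1/2$;
\item the bound $\norm{D(\alpha)u}_D\le(1+\abs{\alpha})^2\norm{u}_D$ on the output side.
\end{itemize}
Continuity then supplies a small ball of good outcomes with positive probability.

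A smaller point concerns your second stage. Uniform attenuation of all modes cannot separate $z_1^m$ from the other degree-$m$ monomials, since $P_m(\lambda z)=\lambda^m P_m(z)$. Your second stage must therefore attenuate modes $2,\ldots,n$ strictly more strongly than mode $1$, and it has the same back-action issue. The paper instead heterodynes those $n-1$ modes and keeps outcomes near zero, which approximately projects them onto the vacuum, before a single-mode attenuation.
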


\begin{proof}
  Choose a coherent state $\ket{\bm\alpha}$ orthogonal to $\ket{\psi}$.
  The displaced state $D (-\bm\alpha) \ket{\psi}$ has zero vacuum amplitude because $\bra{0^n} D (-\bm\alpha) \ket{\psi} = \braket{\bm\alpha}{\psi} = 0$.
  Let $m \ge 1$ be the lowest occupied photon number of the displaced state.
  The vectors $(\sum_j v_j a_j^\dagger)^m \ket{0^n}$, with $v \in \C^n$, span the $m$-photon sector.
  Indeed, their expansions are $m! \sum_{\abs{x} = m} v^x \ket{x} / \sqrt{x!}$; a vector orthogonal to all of them would give a polynomial in $v$ that vanishes identically, forcing all its occupation coefficients to vanish.
  Choose $v$ so that $(\sum_j v_j a_j^\dagger)^m \ket{0^n}$ has nonzero overlap with the displaced state.
  Normalize $v$ and choose a number-preserving Gaussian unitary sending $\sum_j v_j a_j^\dagger$ to $a_1^\dagger$.
  This unitary fixes the vacuum and makes the coefficient of $\ket{m, 0, \ldots, 0}$ nonzero.
  Combining these unitaries into $U_0$, choosing its overall phase so that $c_m > 0$, and projecting the other modes onto vacuum gives
  \begin{equation}
    \ket{\eta} = (\id \otimes \bra{0}^{\otimes (n - 1)}) U_0 \ket{\psi}
    = \sum_{j \ge m} c_j \ket{j}, \quad c_m > 0.
    \label{eq:lowest-number-coefficient}
  \end{equation}
  In the one-mode formulas below, $N_{\mathrm{tot}} = a^\dagger a$.
  Attenuation isolates the lowest occupation as $\kappa \to 0$.
  Since $\kappa^{N_{\mathrm{tot}}} \ket{j} = \kappa^j \ket{j}$, the definition of $\norm{\cdot }_D$ in Eq.~\eqref{eq:photon-moment-norm} gives, for $0 < \kappa \le 1 / 2$,
  \begin{equation}
    \norm{\kappa^{-m} \kappa^{N_{\mathrm{tot}}} \ket{\eta} - c_m \ket{m}}_D^2 = \sum_{j > m} (j + 1)^2 \abs{c_j}^2 \kappa^{2 (j - m)} \le (m + 2)^2 \kappa^2 \norm{\ket{\eta}}_2^2 \longrightarrow 0.
    \label{eq:attenuation-number-state-limit}
  \end{equation}
  For $j \ge m + 1$, the terms $(j + 1) \kappa^{j - m - 1}$ start at $m + 2$ and decrease, since their successive ratio is $\kappa (j + 2) / (j + 1) < 1$; this proves the inequality.
  Since $\norm{\cdot }_2 \le \norm{\cdot }_D$, the norm $\norm{\kappa^{-m} \kappa^{N_{\mathrm{tot}}} \ket{\eta}}_2$ tends to $c_m > 0$; dividing by it gives convergence of the normalized output to $\ket{m}$ in $\norm{\cdot }_D$.

  We implement attenuation by a beam splitter with a vacuum ancilla and heterodyne detection.
  Let $r = \sqrt{1 - \kappa^2}$, and choose its unitary $U_\kappa$ so that
  \begin{equation}
    U_\kappa \ket{j, 0} = \sum_{l = 0}^j \sqrt{\binom{j}{l}} \kappa^l r^{j - l} \ket{l, j - l}.
    \label{eq:attenuation-beam-splitter}
  \end{equation}
  The Kraus operator $T_\alpha : \F_1 \to \F_1$ for outcome $\alpha \in \C$ is
  \begin{equation}
    T_\alpha = (\id \otimes \bra{\alpha}) U_\kappa (\id \otimes \ket{0}), \quad T_0 = \kappa^{N_{\mathrm{tot}}}, \quad \int_\C T_\alpha^\dagger T_\alpha \frac{\odif[order=2]{\alpha}}{\pi} = \id.
    \label{eq:attenuation-instrument}
  \end{equation}
  Completeness follows from Eq.~\eqref{eq:coherent-resolution}.

  The attenuation operator can also be written in terms of displacements:
  \begin{equation}
    T_\alpha = D (\kappa \alpha / r) \kappa^{N_{\mathrm{tot}}} D (-\alpha / r).
    \label{eq:attenuation-displacement-identity}
  \end{equation}
  Indeed, the beam splitter transforms displacements by $U_\kappa (D (-\alpha / r) \otimes I) = (D (-\kappa \alpha / r) \otimes D (-\alpha)) U_\kappa$.
  Taking vacuum matrix elements on the environment and using $\bra{0} D (-\alpha) = \bra{\alpha}$ gives $\kappa^{N_{\mathrm{tot}}} D (-\alpha / r) = D (-\kappa \alpha / r) T_\alpha$, which proves Eq.~\eqref{eq:attenuation-displacement-identity}.
  The displacement identity
  \begin{equation}
    D (\alpha)^\dagger N_{\mathrm{tot}} D (\alpha)
    = N_{\mathrm{tot}} + \alpha a^\dagger + \overline\alpha a + \abs{\alpha}^2
    \label{eq:displaced-number-operator}
  \end{equation}
  and the bounds $\norm{a \ket{u}}_2, \norm{a^\dagger \ket{u}}_2 \le \norm{\ket{u}}_D$ give
  $\norm{D (\alpha) \ket{u}}_D \le (1 + \abs{\alpha})^2 \norm{\ket{u}}_D$.
  Displaced finite occupation sums have finite $\norm{\cdot }_D$ by Eq.~\eqref{eq:coherent-occupation-expansion}, so the calculation is valid on these vectors; occupation truncation and closedness of $N_{\mathrm{tot}}$ extend the bound to every vector of finite $\norm{\cdot }_D$.
  Multiplying Eq.~\eqref{eq:displaced-number-operator} by $D (\alpha)$ and using its strong continuity shows that $\alpha \mapsto D (\alpha) \ket{u}$ is continuous in $\norm{\cdot }_D$ for each such vector.

  For $\kappa \le 1 / 2$, the bound $(j + 1) \kappa^j \le 1$ gives $\norm{\kappa^{N_{\mathrm{tot}}} \ket{u}}_D \le \norm{\ket{u}}_2$ for every $\ket{u} \in \F_1$.
  Equation~\eqref{eq:attenuation-displacement-identity} therefore gives $\norm{T_\alpha \ket{u}}_D \le (1 + \kappa \abs{\alpha} / r)^2 \norm{\ket{u}}_2$.
  The attenuation bound, together with strong continuity and local boundedness of the displacements, implies that $\alpha \mapsto T_\alpha \ket{u}$ is continuous in $\norm{\cdot }_D$ for every $\ket{u} \in \F_1$.

  After measuring the other $n - 1$ modes, append vacuum modes to restore the output to $\F_n$; this leaves $\norm{\cdot }_D$ unchanged.
  For a measurement record $\omega = (\beta, \alpha) \in \C^{n - 1} \times \C$, the conditional operator $K_\omega : \F_n \to \F_n$ is
  \begin{equation}
    K_\omega \ket{u} = [T_\alpha (\id \otimes \bra{\beta}) U_0 \ket{u}] \otimes \ket{0}^{\otimes (n - 1)}.
    \label{eq:number-state-kraus-operators}
  \end{equation}
  For $n = 1$, $\C^0$ is a point of measure one and $\id \otimes \bra\beta = \id$.
  By Eq.~\eqref{eq:attenuation-number-state-limit}, choose $\kappa$ so that the normalized output at $\omega = 0$ is within $\varepsilon / 2$ of $\ket{\xi_{n, m}}$ in $\norm{\cdot }_D$.
  Equation~\eqref{eq:attenuation-instrument} and the resolution of the identity by coherent states, integrated successively, give
  \begin{equation}
    \int_{\C^n} \norm{K_\omega \ket{u}}_2^2 \frac{\odif[order={2 n}]{\omega}}{\pi^n} = \norm{\ket{u}}_2^2.
    \label{eq:conditional-completeness}
  \end{equation}
  Tonelli's theorem justifies the integrations; polarization gives completeness.
  Coherent-state bras are norm-continuous, and the local bound on $T_\alpha$ therefore makes $\omega \mapsto K_\omega \ket{\psi}$ jointly continuous in $\norm{\cdot }_D$.
  Since $K_0 \ket{\psi} \ne 0$, there is a closed ball $B$ of positive radius about zero on which $\norm{K_\omega \ket{\psi}}_2 \ge \norm{K_0 \ket{\psi}}_2 / 2$ and all normalized outputs lie within $\varepsilon$ of $\ket{\xi_{n, m}}$ in $\norm{\cdot }_D$.
  The lower bound on the output norm and the positive volume of $B$ give
  \begin{equation}
    \int_B \norm{K_\omega \ket{\psi}}_2^2 \frac{\odif[order={2 n}]{\omega}}{\pi^n} > 0.
    \label{eq:number-state-success-probability}
  \end{equation}
  For $\omega \notin B$, measure the remaining modes by heterodyne detection, with outcome $\zeta \in \C^n$.
  The complete instrument is indexed by
  \begin{equation}
    \Omega = B \sqcup ((\C^n \setminus B) \times \C^n), \quad
    K_{\omega, \zeta} = \bra{\zeta} K_\omega \quad (\omega \notin B),
    \label{eq:number-state-failure-operators}
  \end{equation}
  retaining $K_\omega$ for $\omega \in B$.
  The measures on the two components are $\pi^{-n}\odif[order={2 n}]{\omega}$ and $\pi^{-2 n}\odif[order={2 n}]{\omega}\,\odif[order={2 n}]{\zeta}$.
  The coherent-state resolution of the identity and Eq.~\eqref{eq:conditional-completeness} prove completeness.
  Every Kraus operator is Gaussian, bounded, and weakly measurable.
\end{proof}

\subsubsection{An entropy gap near a number state}
\label{appendix:entropy}

\begin{lemma}[Uniform entropy gap]
  \label{lem:number-state-entropy-gap}
  For all integers $n, m \ge 1$, there are a Gaussian isometry $V : \F_n \to \F_n \otimes \F_{n + 1}$ and constants $s > 1$, $\mu, c, C, \varepsilon > 0$ such that every normalized $\ket{\phi} \in \F_n$ with $\norm{\ket{\phi} - \ket{\xi_{n, m}}}_D < \varepsilon$ satisfies
  \begin{equation}
    \Tr (N_E \rho_\phi) < \mu, \quad
    \Tr (N_E^2 \rho_\phi) \le C, \quad
    H_s (\rho_\phi) > d_n (\mu) + c.
    \label{eq:number-state-uniform-gap}
  \end{equation}
\end{lemma}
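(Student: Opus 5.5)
The plan is to build $V$ explicitly from the couplings in Eq.~\eqref{eq:beam-splitter-and-squeezer}. I would first prove the three inequalities at the exact state $\ket{\xi_{n,m}}$, with strict margins, and then extend them to a $\norm{\cdot}_D$-ball by continuity.

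\textbf{Construction.} Label the environment modes $b_0, b_1, \ldots, b_n$. For each system mode $a_j$, $j \in [n]$, apply a two-mode squeezer $S_r$ coupling $a_j$ to $b_j$. In addition, apply a beam splitter $B_\theta$ coupling $a_1$ to $b_0$, acting first on the vacuum-appended input. Take $\sin^2\theta$ and $\sinh^2 r$ as small parameters, possibly of different sizes: set $\sinh^2 r = t$ and $\sin^2\theta = \tau t$, with $\tau > 0$ a fixed ratio to be tuned. This $V : \F_n \to \F_n \otimes \F_{n+1}$ is a Gaussian isometry.

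\textbf{Exact target.} For $\ket{\xi_{n,m}}$, Eqs.~\eqref{eq:beam-splitter-expansion} and \eqref{eq:two-mode-squeezer-expansion} give the joint output explicitly.
\begin{itemize}
\item Modes $b_2, \ldots, b_n$ are independent thermal modes, each with mean photon number $t/(1+t)$, up to the convention for $S_r$.
\item The modes $(b_0, b_1)$ carry the contribution of $a_1$. Their mean photon numbers are about $\tau m t$ and $(m+1)t$, and their cross-correlations are of higher order in $t$.
\item Hence $\mu_0 \coloneqq \Tr(N_E \rho_\xi) = (\tau m + m + n)t + O(t^2)$, and all moments are finite.
\end{itemize}
Next, expand the spectrum of $\rho_\xi$ to first order in $t$. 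The vacuum eigenvalue is $1 - \mu_0 + O(t^2)$. The one-photon eigenvalues are approximately $\mu_i = \tau m t$, $(m+1)t$, $t, \ldots, t$, and everything else is $O(t^2)$. This gives
\begin{equation}
  H(\rho_\xi) = \sum_i \mu_i \log\frac{e}{\mu_i} + O(t^2 \log(1/t)).
\end{equation}
For small $\mu$ we also have $d_n(\mu_0) = \mu_0 \log(n e/\mu_0) + O(t^2)$. Therefore
\begin{equation}
  H(\rho_\xi) - d_n(\mu_0) = \mu_0 \bigl[H(q) - \log n\bigr] + O(t^2 \log(1/t)),
\end{equation}
where $q = (\tau m, m+1, 1, \ldots, 1)/(\tau m + m + n)$ is a distribution on $n+1$ cells.

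Positivity of $H(q) - \log n$ is the crux: it is the point where the extra environment mode beats the $n$-mode thermal bound. I would choose $\tau$ to make $q$ as close to uniform on $n+1$ cells as possible, namely $\tau m \approx 1$. This requires the mode-$1$ squeezer to contribute comparably as well. If the $(m+1)$ weight is too lopsided, I would attenuate mode $1$'s squeezing separately, so that $H(q)$ is close to $\log(n+1) > \log n$. Then fix $t$ small enough that the $O(t^2 \log(1/t))$ error is below half the leading gap.

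\textbf{Rényi order and margins.} $H_s$ is continuous in $s$ at $s = 1$ for this state, since it has a geometric tail. So choose $s > 1$ close to $1$ with $H_s(\rho_\xi) > d_n(\mu_0) + 3c$. Then take $\mu$ slightly larger than $\mu_0$, so that $d_n(\mu) < d_n(\mu_0) + c$ by continuity of $d_n$, and set $C = \Tr(N_E^2 \rho_\xi) + 1$.

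\textbf{Continuity on the ball.} The map $\ket{\phi} \mapsto V\ket{\phi}$ is continuous from $\norm{\cdot}_D$ into the analogous second-moment norm on the output. This holds because $V$ is generated by quadratic Hamiltonians, so $V^\dagger (N_E + I) V \le c_V (N_{\mathrm{tot}} + I)$. Consequently $\Tr(N_E \rho_\phi)$ and $\Tr(N_E^2 \rho_\phi)$ depend continuously on $\ket{\phi}$ in $\norm{\cdot}_D$. Moreover, $\norm{\rho_\phi - \rho_\xi}_1 \le 2\norm{\ket{\phi} - \ket{\xi}}_2$, and $\norm{\cdot}_s \le \norm{\cdot}_1$, so $H_s(\rho_\phi)$ is continuous for fixed $s > 1$. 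A small $\varepsilon$ then preserves all three strict inequalities in Eq.~\eqref{eq:number-state-uniform-gap}.

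\textbf{Main obstacle.} The hardest step is the first-order entropy comparison, $H(q) > \log n$, with rigorous control of the $O(t^2 \log(1/t))$ remainders for the correlated pair $(b_0, b_1)$. I would handle the remainders by writing $\rho_\xi$ as a perturbation of the vacuum projector and bounding the multi-photon eigenvalue mass by $O(t^2)$.
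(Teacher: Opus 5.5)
Your route is correct, and it differs from the paper's. One caveat first. The separate attenuation of mode~1, which you offer only as a contingency, is in fact mandatory. With a common squeezing strength $\sinh^2 r = t$ on every system mode, the leading coefficient $H(q) - \log n$ is negative for every $\tau > 0$ once $m$ is moderately large relative to $n$. For example, at $n = 3$, $m = 6$ the maximum of $H(q)$ over $\tau$ is about $1.092 < \log 3$. So build the tuning into the construction: take $\sin^2\theta = t/m$ and $\sinh^2 r_1 = t/(m+1)$. Then all first-order one-photon eigenvalues equal $t$, $q$ is uniform on $n+1$ cells, and the leading gap is $(n+1)t\log(1 + 1/n)$.

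Two smaller points:
\begin{itemize}
\item The operator inequality $V^\dagger(N_E + I)V \le c_V(N_{\mathrm{tot}} + I)$ controls only first moments. The bound on $\Tr(N_E^2\rho_\phi)$ needs the norm estimate $\norm{N_E V\ket{\phi}}_2 \le M\norm{\ket{\phi}}_D$, which follows because $U^\dagger N_E U$ is quadratic in the ladder operators.
\item For the lower bound on $H(\rho_\xi)$ you need no two-sided remainder estimate. Discarding the $O(t^2)$ eigenvalues can only lower the entropy.
\end{itemize}

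The paper keeps $\sin^2\theta = \sinh^2 r = t$ on the occupied mode and instead raises the thermal mean of each of the other $n-1$ environment modes to $(2m+1)t$. Then $d_n(\mu) = n\,d_1((2m+1)t)$ splits blockwise, and the thermal blocks saturate their share exactly. The entire gap $(2m+1)t\,h(m/(2m+1))$ comes from the two-mode environment of the occupied mode beating a single thermal mode of the same mean. The paper passes to $s > 1$ by merging that block's spectrum into a three-point distribution, which can only lower $H_s$. This avoids perturbation theory on the infinite spectrum and any $s \to 1$ limit for the full state. The strict mean bound also comes for free from the $-mt^2$ correction.

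Your single formula $\mu_0[H(q) - \log n]$ costs more bookkeeping. In exchange it exposes the mechanism: one extra environment mode beats the $n$-mode thermal bound. With uniform $q$ it also yields a larger first-order gap per environment photon, $\log(1 + 1/n)$ rather than $h(m/(2m+1))/n$.
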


\begin{proof}
  Couple the occupied mode to two vacuum environment modes by $S_r B_\theta$, as defined in Eq.~\eqref{eq:beam-splitter-and-squeezer}.
  We will choose $t > 0$ sufficiently small.

  For the input $\ket{m, 0, 0}$, Eqs.~\eqref{eq:beam-splitter-expansion} and~\eqref{eq:two-mode-squeezer-expansion} show that the environment vector conditional on system occupation $k$ lies in the span of $\ket{m - j, k - j}$, with $0 \le j \le \min (m, k)$.
  The difference between the two environment occupations is $m - k$, so the conditional vectors for distinct $k$ are orthogonal.
  Their squared norms are therefore the nonzero environment eigenvalues, indexed by $k \in \Z_{\ge 0}$:
  \begin{equation}
    \lambda_k (t) = \frac1{(1 + t)^{k + 1}}
    \sum_{j = 0}^{\min (m, k)}\binom m j\binom k j
    (1 - t)^j t^{m + k - 2 j}.
    \label{eq:number-state-environment-spectrum}
  \end{equation}
  Merge all but $\lambda_m$ and $\lambda_{m - 1}$ into one entry:
  \begin{equation}
    q_t = (\lambda_m, \lambda_{m - 1}, 1 - \lambda_m - \lambda_{m - 1}), \quad H_s (\rho_E) \ge H_s (q_t),
    \label{eq:coarse-renyi-bound}
  \end{equation}
  where $\rho_E = \Tr_S S_r B_\theta \ket{m, 0, 0} \bra{m, 0, 0}
  B_\theta^\dagger S_r^\dagger$.
  Merging decreases R\'enyi entropy for $s > 1$, since $\sum_j x_j^s \le (\sum_j x_j)^s$.
  The terms $j = m$ and $j = m - 1$ in Eq.~\eqref{eq:number-state-environment-spectrum} give $\lambda_m = 1 - (2 m + 1) t + O (t^2)$ and $\lambda_{m - 1} = m t + O (t^2)$, respectively; the remaining terms in these two sums contribute $O (t^2)$.
  Thus the third entry of $q_t$ is $(m + 1) t + O (t^2)$.
  Substitution into $H (q_t) - d_1 ((2 m + 1) t)$ cancels the terms in $t \log t$ and gives, as $t \to 0$,
  \begin{equation}
    \frac{H (q_t) - d_1 ((2 m + 1) t)} t
    = (2 m + 1) \log (2 m + 1) - m \log m - (m + 1) \log (m + 1) + O (t \abs{\log t}).
    \label{eq:number-state-entropy-gap-limit}
  \end{equation}
  The limit is positive because it equals $(2 m + 1)$ times the binary entropy of $m / (2 m + 1) \in (0, 1)$.

  The mean total environment photon number is $m t + t[m (1 - t) + 1] = (2 m + 1) t - m t^2$: the first term is the loss, and the second follows from $S_r^\dagger b_2 S_r = b_2\cosh r + a^\dagger\sinh r$.
  Couple each remaining vacuum system mode to a new environment mode by a squeezer whose thermal environment has mean photon number $(2 m + 1) t$.
  This state $\tau$ has entropy $d_1 ((2 m + 1) t)$, and $\Tr \tau^s = [(1 + (2 m + 1) t)^s - ((2 m + 1) t)^s]^{-1}$, by the geometric series in Eq.~\eqref{eq:two-mode-squeezer-expansion}.
  Thus the product isometry satisfies
  \begin{equation}
    \rho_{\xi_{n, m}} = \rho_E \otimes \tau^{\otimes (n - 1)}, \quad
    \Tr (N_E \rho_{\xi_{n, m}}) = \mu - m t^2, \quad \mu = n (2 m + 1) t.
    \label{eq:number-state-reference-moments}
  \end{equation}
  Fix $t > 0$ so small that Eq.~\eqref{eq:number-state-entropy-gap-limit} gives $H (q_t) > d_1 ((2 m + 1) t)$.
  Since $d_n (\mu) = n d_1 ((2 m + 1) t)$, choose $s > 1$ sufficiently close to one to obtain
  \begin{equation}
    H_s (\rho_{\xi_{n, m}}) \ge H_s (q_t) + (n - 1) H_s (\tau) > d_n (\mu).
    \label{eq:number-state-reference-gap}
  \end{equation}
  Here $H_s (q_t) \to H (q_t)$ and $H_s (\tau) \to d_1 ((2 m + 1) t)$ as $s \to 1$.

  Write $V \ket{\phi} = U (\ket{\phi} \otimes \ket{0}^{\otimes (n + 1)})$, where $U$ is Gaussian.
  Then $U^\dagger N_E U$ is quadratic in creation and annihilation operators.
  Each product of at most two ladder operators sends distinct occupation vectors to orthogonal vectors and has occupation-basis coefficient at most $k + 2$ on total occupation $k$.
  Since $k + 2 \le 2 (k + 1)$, its value on $\ket{\phi} \otimes \ket{0}^{\otimes (n + 1)}$ has norm at most $2 \norm{\ket{\phi}}_D$.
  Summing the finitely many terms gives
  \begin{equation}
    \norm{N_E V \ket{\phi}}_2 \le M \norm{\ket{\phi}}_D
    \label{eq:gaussian-isometry-moment-bound}
  \end{equation}
  for a constant $M > 0$ depending only on $V$.
  For finite occupation sums, Eqs.~\eqref{eq:beam-splitter-expansion} and~\eqref{eq:two-mode-squeezer-expansion} give geometrically decaying occupation tails, so the conjugation and moment calculations are valid.
  Finite occupation truncations converge in $\norm{\cdot }_D$; closedness of $N_E$ therefore extends Eq.~\eqref{eq:gaussian-isometry-moment-bound} to every vector of finite $\norm{\cdot }_D$.
  If $\norm{\ket{\phi}}_2 = 1$ and $\norm{\ket{\phi} - \ket{\xi_{n, m}}}_D < \varepsilon \le 1$, then $\norm{\ket{\phi}}_D \le m + 2$, since $\norm{\ket{\xi_{n, m}}}_D = m + 1$.
  Expanding the difference of the mean photon numbers and applying Cauchy--Schwarz and Eq.~\eqref{eq:gaussian-isometry-moment-bound} gives
  \begin{equation}
    \abs{\Tr (N_E \rho_\phi) - \Tr (N_E \rho_{\xi_{n, m}})} \le M (m + 3) \varepsilon, \quad
    \Tr (N_E^2 \rho_\phi) = \norm{N_E V \ket{\phi}}_2^2 \le M^2 (m + 2)^2.
    \label{eq:number-state-moment-continuity}
  \end{equation}
  These trace identities follow by finite spectral projections of $N_E$.
  The Schatten-norm triangle inequality and partial-trace contractivity give
  \begin{equation}
    \abs{\norm{\rho_\phi}_s - \norm{\rho_{\xi_{n, m}}}_s} \le \norm{\rho_\phi - \rho_{\xi_{n, m}}}_1 \le 2 \norm{\ket{\phi} - \ket{\xi_{n, m}}}_2.
    \label{eq:environment-schatten-continuity}
  \end{equation}
  Equation~\eqref{eq:environment-schatten-continuity} and the entropy formula in Eq.~\eqref{eq:bosonic-renyi-entropy} show that $H_s (\rho_\phi)$ is continuous.
  Choose $c > 0$ below the gap in Eq.~\eqref{eq:number-state-reference-gap}, then choose $0 < \varepsilon \le 1$ so that $M (m + 3) \varepsilon < m t^2$ and $H_s (\rho_\phi) > d_n (\mu) + c$ throughout the $\varepsilon$-neighborhood.
  Equations~\eqref{eq:number-state-reference-moments} and~\eqref{eq:number-state-moment-continuity} give the mean bound and the uniform second-moment bound with $C = M^2 (m + 2)^2$.
\end{proof}

\label{sec:local-core}

\begin{proposition}[Uniform entropy inequality on a positive-probability set]
  \label{prop:local-core}
  Let $\ket{\psi} \in \F_n$ be normalized and non-Gaussian, with $n \ge 1$.
  There is a complete Gaussian instrument on a $\sigma$-finite standard Borel space $(\Omega, \nu)$, with bounded Gaussian Kraus operators, and a measurable success set $\Omega_{\mathrm{suc}}$ of probability $p_\psi > 0$, such that
  \begin{equation}
    K_\omega \ket{\psi} =
    \begin{cases}
      c_\omega \ket{\phi_\omega} \in \F_n, & \omega \in \Omega_{\mathrm{suc}},    \\
      d_\omega \in \C,                     & \omega \notin \Omega_{\mathrm{suc}},
    \end{cases}
    \quad c_\omega \ne 0, \quad \norm{\ket{\phi_\omega}}_2 = 1.
    \label{eq:local-core-outputs}
  \end{equation}
  The set of normalized outputs $\K_\psi = \set{\ket{\phi_\omega} : \omega \in \Omega_{\mathrm{suc}}}$ admits a common Gaussian isometry $V : \F_n \to \F_n \otimes \F_{n + 1}$ and constants $s > 1$, $\mu, c, C > 0$ such that
  \begin{equation}
    \sup_{\ket{\phi} \in \K_\psi} \Tr (N_E \rho_\phi) \le \mu, \quad
    \sup_{\ket{\phi} \in \K_\psi} \Var (N_E; \rho_\phi) \le C, \quad
    \inf_{\ket{\phi} \in \K_\psi}
    H_s (\rho_\phi) \ge d_n (\mu) + c.
    \label{eq:local-core-uniform-bounds}
  \end{equation}
  The isometry and constants may depend on $\ket{\psi}$, but are common to all successful outcomes.
\end{proposition}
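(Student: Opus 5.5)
The plan is to combine Lemma~\ref{lem:number-state-postselection} with Lemma~\ref{lem:number-state-entropy-gap}. First, apply Lemma~\ref{lem:number-state-postselection} to $\ket{\psi}$. This produces an integer $m \ge 1$ that does not depend on $\varepsilon$. Next, feed $n$ and this $m$ into Lemma~\ref{lem:number-state-entropy-gap}. That gives a Gaussian isometry $V : \F_n \to \F_n \otimes \F_{n+1}$ and constants $s > 1$, $\mu, c, C, \varepsilon_0 > 0$. These constants control every normalized $\ket{\phi}$ in the open $\norm{\cdot}_D$-ball of radius $\varepsilon_0$ around $\ket{\xi_{n,m}}$.

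Now return to Lemma~\ref{lem:number-state-postselection} with $\varepsilon = \varepsilon_0$. It yields a complete Gaussian instrument with a success set $B$ of positive probability. On $B$, each $K_\omega$ maps into $\F_n$, the output $K_\omega\ket{\psi}$ is nonzero, and its normalization $\ket{\phi_\omega}$ lies within $\varepsilon_0$ of $\ket{\xi_{n,m}}$ in $\norm{\cdot}_D$. Set $\Omega_{\mathrm{suc}} \coloneqq B$, $c_\omega \coloneqq \norm{K_\omega\ket{\psi}}_2$, and $p_\psi \coloneqq \int_B \norm{K_\omega\ket{\psi}}_2^2\,\odif{\nu}$, which is positive by Eq.~\eqref{eq:number-state-success-probability}. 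Off $B$, the Kraus operators $K_{\omega,\zeta} = \bra{\zeta}K_\omega$ take values in $\C$, so $K\ket{\psi} = d_\omega$ is a scalar. The outcome space $B \sqcup ((\C^n\setminus B)\times\C^n)$, with its Lebesgue-type measures, is a $\sigma$-finite standard Borel space. Completeness, boundedness, and weak measurability are already part of that lemma.

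It remains to check Eq.~\eqref{eq:local-core-uniform-bounds}. Every $\ket{\phi}\in\K_\psi$ lies in the $\varepsilon_0$-ball, so Eq.~\eqref{eq:number-state-uniform-gap} gives three bounds at once. The first is $\Tr(N_E\rho_\phi) < \mu$. The second is $\Var(N_E;\rho_\phi) \le \Tr(N_E^2\rho_\phi) \le C$. The third is $H_s(\rho_\phi) > d_n(\mu)+c$. All three hold uniformly over $\K_\psi$, so the suprema and infimum satisfy the stated inequalities.

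The only subtle point is the order of choices. The radius $\varepsilon_0$ must be fixed by the entropy-gap lemma before the instrument is built. This works because $m$ is determined by $\ket{\psi}$ alone, via its lowest occupied photon number after displacement, and not by $\varepsilon$. The $\norm{\cdot}_D$ closeness in Lemma~\ref{lem:number-state-postselection} is exactly the hypothesis needed by Lemma~\ref{lem:number-state-entropy-gap}, so nothing further is required.
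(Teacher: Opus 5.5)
Your proposal is correct and follows the paper's proof essentially step for step: you fix $m$ from Lemma~\ref{lem:number-state-postselection}, obtain $V$, $s$, $\mu$, $c$, $C$, $\varepsilon$ from Lemma~\ref{lem:number-state-entropy-gap}, rerun the postselection lemma with that $\varepsilon$, take $\Omega_{\mathrm{suc}} = B$, and bound the variance by the second moment. Your explicit remark that $m$ is independent of $\varepsilon$, which justifies the order of choices, is exactly the point the paper's brief proof uses implicitly.
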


\begin{proof}
  Choose $m$ from Lemma~\ref{lem:number-state-postselection}.
  Lemma~\ref{lem:number-state-entropy-gap} supplies the isometry $V$ and constants $s$, $\mu$, $c$, $C$, and $\varepsilon$.
  Apply Lemma~\ref{lem:number-state-postselection} with this $\varepsilon$, and set $\Omega_{\mathrm{suc}} = B$.
  The resulting instrument has the outputs in Eq.~\eqref{eq:local-core-outputs}, with $c_\omega = \norm{K_\omega \ket{\psi}}_2 > 0$ on $B$.
  Equation~\eqref{eq:number-state-uniform-gap} gives Eq.~\eqref{eq:local-core-uniform-bounds}, since variance is at most the second moment.
\end{proof}

\subsubsection{Completion of the proof}
\label{appendix:lifting}

Start from an approximant with error at most $\delta$, where $0 \le \delta < 1$, and apply the instrument separately to each copy of the target and to the corresponding blocks of the approximant.
For each record with nonzero target output, rescale both outputs by the same scalar so that the target becomes the normalized product of successful states.
Although this may amplify the error, completeness bounds its mean square under the target's outcome distribution by $\delta^2$.
For any fixed $\delta < \delta' < 1$, records with rescaled error at most $\delta'$ therefore have probability at least $1 - \delta^2 / \delta'^2 > 0$, uniformly in $k$.
Since the success count concentrates near $p_\psi k$, at least one such record also has a typical success count for all sufficiently large $k$.

For a record $\bm\omega = (\omega_1, \ldots, \omega_k) \in \Omega^k$, define
\begin{equation}
  T_{\bm\omega} \coloneqq K_{\omega_1} \otimes \cdots \otimes K_{\omega_k}, \quad m (\bm\omega) \coloneqq \abs*{\set*{j \in [k] : \omega_j \in \Omega_{\mathrm{suc}}}}.
  \label{eq:bosonic-record-map}
\end{equation}
The map $T_{\bm\omega} : \F_{n k} \to \F_{n m (\bm\omega)}$ retains the successful blocks in their original order.
Completeness gives
\begin{equation}
  \int_{\Omega^k} \norm*{T_{\bm\omega} \ket{\xi}}_2^2 \odif{\nu^{\otimes k}} (\bm\omega)
  = \norm*{\ket{\xi}}_2^2
  \quad (\ket{\xi} \in \F_{n k}).
  \label{eq:continuous-product-completeness}
\end{equation}

Expand $\ket{\xi} = \sum_j \ket{\xi_j} \otimes \ket{e_j}$ in an orthonormal basis of all blocks except the first.
Orthogonality and one-copy completeness give
\begin{equation}
  \int_\Omega \norm*{(K_\omega \otimes \id) \ket{\xi}}_2^2 \odif{\nu} (\omega) = \sum_j \int_\Omega \norm*{K_\omega \ket{\xi_j}}_2^2 \odif{\nu} (\omega) = \sum_j \norm*{\ket{\xi_j}}_2^2 = \norm*{\ket{\xi}}_2^2.
  \label{eq:completeness-with-spectators}
\end{equation}
All summands are nonnegative, so Tonelli's theorem permits the sum and integral to be interchanged.
Applying the identity block by block proves Eq.~\eqref{eq:continuous-product-completeness}.
The integrands are measurable: finite sums of product vectors give measurable outputs, and boundedness of each outcome map passes measurability to their pointwise limits.

The tensor product $T_{\bm\omega}$ is a Gaussian operator, so it maps each Gaussian summand to a Gaussian vector or zero.
An $r$-term Gaussian decomposition therefore gives an output decomposition with at most $r$ terms.
For border rank at most $r$, approximate the input in norm by sums of at most $r$ Gaussian vectors.
Boundedness of $T_{\bm\omega}$ makes the corresponding outputs converge in norm, giving
\begin{equation}
  \bchiG (T_{\bm\omega} \ket{\xi}) \le \bchiG (\ket{\xi}).
  \label{eq:gaussian-map-rank}
\end{equation}

We can now transfer the bound from successful outputs to the original target state.
With the data of Proposition~\ref{prop:local-core}, define
\begin{equation}
  B_\psi \coloneqq B_{\K_\psi} = \exp \mleft(\inf_{\ket{\phi} \in \K_\psi} H_s (\rho_\phi) - d_n (\mu)\mright).
  \label{eq:bosonic-input-base}
\end{equation}
All quantities in this expression depend only on the target state, and the uniform entropy gap ensures $B_\psi > 1$.

\begin{theorem}[Exponential lower bound on bosonic Gaussian rank]
  \label{thm:bosonic-exponential-transfer}
  Let $\ket{\psi} \in \F_n$ be normalized and non-Gaussian.
  Choose the instrument and successful family $\K_\psi$ of Proposition~\ref{prop:local-core}, with success probability $p_\psi > 0$, and the associated Gaussian isometry $V$, R\'enyi order $s > 1$, and mean bound $\mu$ satisfying Proposition~\ref{prop:finite-copy-certificate}.
  These choices depend only on $\ket{\psi}$; in particular, the same order $s$ applies to every state in $\K_\psi$ and is independent of the copy number and error tolerances.
  For every fixed $0 \le \delta < \delta' < 1$, there are positive integers $m_k$ with $m_k / k \to p_\psi$ and states $\ket{\phi_{k, j}} \in \K_\psi$ such that, for all sufficiently large $k$,
  \begin{equation}
    \abchiG (\ket{\psi}^{\otimes k}) \ge \bchiG^{(\delta')} \mleft(\bigotimes_{j = 1}^{m_k} \ket{\phi_{k, j}}\mright) \ge (1 - \delta'^2)^{s / (s - 1)} f_{\K_\psi} (m_k) B_{\K_\psi}^{m_k}.
    \label{eq:bosonic-transfer-main}
  \end{equation}
  Here $B_{\K_\psi} = \exp \mleft(\inf_{\ket{\phi} \in \K_\psi} H_s (\rho_\phi) - d_n (\mu)\mright) > 1$, and $f_{\K_\psi}$ is given by Eq.~\eqref{eq:bosonic-subexponential-factor}.
  The sequence satisfies $\log f_{\K_\psi} (m) / m \to 0$.
  Consequently,
  \begin{equation}
    \liminf_{k \to \infty} \frac{1}{k} \log \abchiG (\ket{\psi}^{\otimes k})
    \ge p_\psi \log B_{\K_\psi} > 0.
    \label{eq:bosonic-final-rate}
  \end{equation}
\end{theorem}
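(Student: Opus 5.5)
The proof will mirror the fermionic transfer in Proposition~\ref{prop:typical-branch-lifting}, with the finite outcome set replaced by the continuous instrument of Proposition~\ref{prop:local-core}.

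First, fix $0 \le \delta < \delta' < 1$. For each $k$ choose an approximant $\ket{\eta_k}$ with $\norm{\ket{\psi}^{\otimes k} - \ket{\eta_k}}_2 \le \delta$ and $\bchiG(\ket{\eta_k}) = \abchiG(\ket{\psi}^{\otimes k})$. For records $\bm\omega \in \Omega^k$ put
\[
P_{\bm\omega} = \norm{T_{\bm\omega}\ket{\psi}^{\otimes k}}_2^2 = \prod_j \norm{K_{\omega_j}\ket{\psi}}_2^2,
\qquad
E_{\bm\omega} = \norm{T_{\bm\omega}(\ket{\psi}^{\otimes k} - \ket{\eta_k})}_2^2 .
\]
By Eq.~\eqref{eq:continuous-product-completeness}, $P_{\bm\omega}\,\odif{\nu^{\otimes k}}$ is a product probability measure, and $\int E_{\bm\omega}\,\odif{\nu^{\otimes k}} \le \delta^2$. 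Markov's inequality then gives
\[
\mb P(E_{\bm\omega} > \delta'^2 P_{\bm\omega}) \le \delta^2/\delta'^2 .
\]
Under the product law, $m(\bm\omega)$ is binomial with parameters $k$ and $p_\psi$. Chebyshev's inequality bounds the probability of $\abs{m(\bm\omega)/k - p_\psi} > k^{-1/4}$ by $p_\psi(1-p_\psi)/\sqrt{k}$.

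For large $k$ the intersection of the two good events therefore has positive probability. Since $P_{\bm\omega} > 0$ almost everywhere on it, some record $\bm\omega_k$ satisfies $P_{\bm\omega_k} > 0$ together with both bounds. Measurability of $\bm\omega \mapsto E_{\bm\omega}, P_{\bm\omega}$ is supplied by the remark after Eq.~\eqref{eq:completeness-with-spectators}. Set $m_k = m(\bm\omega_k)$, so that $m_k/k \to p_\psi$.

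On $\bm\omega_k$ the target output is
\[
T_{\bm\omega_k}\ket{\psi}^{\otimes k} = c \bigotimes_{j} \ket{\phi_{\omega_{j}}},
\]
where the product runs over the successful indices and $c = \prod c_\omega \prod d_\omega$ satisfies $\abs{c}^2 = P_{\bm\omega_k}$. Rescale by $c^{-1}$. The rescaled approximant $c^{-1} T_{\bm\omega_k}\ket{\eta_k}$ is then within $\delta'$ of $\bigotimes_j \ket{\phi_{k,j}}$, with $\ket{\phi_{k,j}} \in \K_\psi$. By Eq.~\eqref{eq:gaussian-map-rank} and scalar invariance, its border rank is at most $\bchiG(\ket{\eta_k})$, which yields the first inequality of Eq.~\eqref{eq:bosonic-transfer-main}.

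The second inequality follows from Proposition~\ref{prop:finite-copy-certificate}, applied to the family $\K_\psi$ with tolerance $\delta'$. The hypotheses in Eq.~\eqref{eq:uniform-family-data} and Eq.~\eqref{eq:compact-rate} are exactly Eq.~\eqref{eq:local-core-uniform-bounds}, with $c_{\K} \ge c > 0$. The key point is that the conclusion of that proposition is uniform over arbitrary heterogeneous products from $\K_\psi$, and its threshold in copy number is uniform as well. Because $m_k \to \infty$, it applies for all large $k$ even though the particular states $\ket{\phi_{k,j}}$ change with $k$.

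Finally, take logarithms and divide by $k$. Using $m_k/k \to p_\psi$, $\log f_{\K_\psi}(m)/m \to 0$, and the fact that $(1-\delta'^2)^{s/(s-1)}$ is a fixed positive constant, we obtain Eq.~\eqref{eq:bosonic-final-rate}.

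The main obstacle is the measure-theoretic selection of a single good record in the continuous outcome space. One must ensure that the good set has positive measure, not merely positive outer probability, and that $P_{\bm\omega} > 0$ on it so the rescaling is legitimate. The plan handles this through the measurability remarks and by noting that records with $P_{\bm\omega} = 0$ carry zero probability. The remaining steps follow directly from the uniformity built into Proposition~\ref{prop:finite-copy-certificate}.
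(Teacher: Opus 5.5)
Your proposal is correct and follows the paper's proof essentially step for step. You bound the integrated error by completeness, use Markov and Chebyshev to select a record with positive target weight, rescaled error at most $\delta'$ and a typical success count, then rescale and apply monotonicity of border rank under the Gaussian record map, and finally invoke the uniform heterogeneous bound of Proposition~\ref{prop:finite-copy-certificate} at tolerance $\delta'$ using $m_k\to\infty$. The paper additionally spells out the trivial case $\abchiG(\ket{\psi}^{\otimes k})=\infty$, where the first inequality is automatic and $\ket{\eta_k}=\ket{\psi}^{\otimes k}$ is used for the record selection.
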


\begin{proof}
  Fix $0 \le \delta < \delta' < 1$.
  The one-copy outcome law is the probability measure
  \begin{equation}
    \odif{\mb P} (\omega) \coloneqq \norm*{K_\omega \ket{\psi}}_2^2 \odif{\nu} (\omega).
    \label{eq:target-outcome-law}
  \end{equation}
  Put $r_k \coloneqq \abchiG (\ket{\psi}^{\otimes k})$.
  If $r_k = \infty$, set $\ket{\eta_k} = \ket{\psi}^{\otimes k}$ for the record-selection argument below; the rank inequality is then automatic.
  For finite $r_k$, choose an approximant satisfying
  \begin{equation}
    \norm*{\ket{\psi}^{\otimes k} - \ket{\eta_k}}_2 \le \delta,
    \quad \bchiG (\ket{\eta_k}) = r_k.
    \label{eq:lifting-approximant}
  \end{equation}
  For the tensor-power target state, define
  \begin{equation}
    Q_k (\bm\omega) \coloneqq \norm*{T_{\bm\omega} \ket{\psi}^{\otimes k}}_2^2 = \prod_{i = 1}^k \norm*{K_{\omega_i} \ket{\psi}}_2^2.
    \label{eq:record-density}
  \end{equation}
  Thus $Q_k$ is the density of $\mb P^{\otimes k}$ relative to $\nu^{\otimes k}$.
  Write the corresponding squared error as
  \begin{equation}
    E_k (\bm\omega) \coloneqq \norm*{T_{\bm\omega} (\ket{\psi}^{\otimes k} - \ket{\eta_k})}_2^2.
    \label{eq:record-density-error}
  \end{equation}
  Equation~\eqref{eq:continuous-product-completeness} gives the integrated squared-error bound
  \begin{equation}
    \int_{\Omega^k}
    E_k \odif{\nu^{\otimes k}} \le \delta^2.
    \label{eq:continuous-error-budget}
  \end{equation}
  On the set where $Q_k > 0$, the ratio $E_k / Q_k$ is the squared error after dividing both output vectors by the target's norm.
  On a record with $E_k > \delta'^2 Q_k$, we have $Q_k < E_k / \delta'^2$.
  Integrating this inequality over those records and using Eq.~\eqref{eq:continuous-error-budget} gives
  \begin{equation}
    \mb P^{\otimes k} (Q_k > 0 \text{ and }
    E_k > \delta'^2 Q_k) \le \delta'^{-2} \int_{\Omega^k} E_k \odif{\nu^{\otimes k}} \le \frac{\delta^2}{\delta'^2} < 1.
    \label{eq:bad-record-bound}
  \end{equation}
  Under $\mb P^{\otimes k}$, the success count $m$ has a binomial distribution with parameters $k$ and $p_\psi$, hence mean $k p_\psi$ and variance $k p_\psi (1 - p_\psi)$.
  Chebyshev's inequality gives
  \begin{equation}
    \mb P^{\otimes k} \mleft(\abs*{m / k - p_\psi} > k^{-1 / 4}\mright) \le \frac{p_\psi (1 - p_\psi)}{\sqrt{k}}.
    \label{eq:success-count-typical}
  \end{equation}
  The probability that $E_k > \delta'^2 Q_k$ with $Q_k > 0$ is bounded strictly below one, while the probability that $\abs*{m / k - p_\psi} > k^{-1 / 4}$ tends to zero.
  Their sum is therefore less than one for all sufficiently large $k$, even though the two events need not be independent.
  Also $\mb P^{\otimes k} (Q_k = 0) = \int_{\{Q_k = 0\}} Q_k\,\odif{\nu^{\otimes k}} = 0$.
  The complement of these two events thus contains a record $\bm\omega_k$ satisfying
  \begin{equation}
    Q_k (\bm\omega_k) > 0, \quad E_k (\bm\omega_k) \le \delta'^2 Q_k (\bm\omega_k), \quad \abs*{m (\bm\omega_k) / k - p_\psi} \le k^{-1 / 4}.
    \label{eq:good-typical-record}
  \end{equation}
  Its success count $m_k \coloneqq m (\bm\omega_k)$ satisfies
  \begin{equation}
    m_k / k \longrightarrow p_\psi,
    \quad m_k \longrightarrow \infty.
    \label{eq:selected-success-frequency}
  \end{equation}

  Denote the normalized successful outputs, in their original order, by $\ket{\phi_j} \in \K_\psi$, $1 \le j \le m_k$.
  The selected states can change with $k$; this is where uniformity over the whole family $\K_\psi$ is essential.
  Collecting the scalar factors from Eq.~\eqref{eq:local-core-outputs} gives
  \begin{equation}
    T_{\bm\omega_k} \ket{\psi}^{\otimes k} = c_k \bigotimes_{j = 1}^{m_k} \ket{\phi_j}, \quad \abs*{c_k}^2 = Q_k (\bm\omega_k) > 0.
    \label{eq:typical-branch-target}
  \end{equation}
  Rescale the corresponding approximant by the same scalar, setting $\ket{\zeta_k} \coloneqq c_k^{-1} T_{\bm\omega_k} \ket{\eta_k}$.
  Equation~\eqref{eq:good-typical-record} gives
  \begin{equation}
    \norm*{\bigotimes_{j = 1}^{m_k} \ket{\phi_j} - \ket{\zeta_k}}_2^2
    = \frac{E_k (\bm\omega_k)}{Q_k (\bm\omega_k)}
    \le \delta'^2.
    \label{eq:successful-product-error}
  \end{equation}
  Multiplication by a nonzero scalar preserves border Gaussian rank.
  Equation~\eqref{eq:gaussian-map-rank} therefore gives
  \begin{equation}
    \bchiG^{(\delta')} \mleft(\bigotimes_{j = 1}^{m_k} \ket{\phi_j}\mright)
    \le \bchiG (\ket{\zeta_k})
    \le \bchiG (\ket{\eta_k}) = r_k.
    \label{eq:selected-record-rank-transfer}
  \end{equation}
  Applying Eq.~\eqref{eq:heterogeneous-exponent} to the selected product in Eq.~\eqref{eq:selected-record-rank-transfer} gives Eq.~\eqref{eq:bosonic-transfer-main}.
  Since $m_k / k \to p_\psi > 0$,
  \begin{equation}
    \frac{\log f_{\K_\psi} (m_k)}{k}
    = \frac{m_k}{k}\,\frac{\log f_{\K_\psi} (m_k)}{m_k} \longrightarrow 0.
    \label{eq:bosonic-transferred-factor}
  \end{equation}
  The factor $(1 - \delta'^2)^{s / (s - 1)}$ is positive and independent of $k$.
  Taking logarithms in Eq.~\eqref{eq:bosonic-transfer-main}, dividing by $k$, and using $m_k / k \to p_\psi$ therefore proves Eq.~\eqref{eq:bosonic-final-rate}.
\end{proof}

For each $0 \le \delta < 1$, choose $\delta' = (1 + \delta) / 2$ in Theorem~\ref{thm:bosonic-exponential-transfer}.
Since $B_\psi = B_{\K_\psi}$, Eq.~\eqref{eq:bosonic-final-rate} proves Theorem~\ref{thm:boson-universal-main} with the rate $p_\psi \log B_\psi$.

\subsection{Gaussian rank lower bounds for \texorpdfstring{\IfFontSeriesContextTF{bf}{\boldmath}{}$\ket{1}^{\otimes k}$}{the k-fold tensor power of ket 1}}
\label{appendix:boson-single-photon-rate}

\begin{proof}[Proof of Corollary~\ref{thm:boson-single-photon-main}]
  Use the Gaussian isometry $V \ket{\phi} = S_r B_\theta (\ket{\phi} \otimes \ket{0^2}_E)$ with $t = 1 / 6$.
  For $\rho_E \coloneqq \Tr_S V \ket{1} \bra{1} V^\dagger$, Eq.~\eqref{eq:number-state-environment-spectrum} with $m = 1$ gives the nonzero eigenvalues
  \begin{equation}
    q_0 = \frac17, \quad q_h = \frac{30 h + 1}{7^{h + 1}} \quad (h \ge 1).
    \label{eq:single-photon-environment-spectrum}
  \end{equation}
  By Eqs.~\eqref{eq:beam-splitter-expansion} and~\eqref{eq:two-mode-squeezer-expansion}, the environment occupations $\ket{1, h}$ and $\ket{0, h}$ have probabilities $7^{-h - 1}$ and $30 (h + 1) 7^{-h - 2}$, respectively, for $h \ge 0$.
  Hence the photon-number generating function is
  \begin{equation}
    f (x) \coloneqq \Tr (x^{N_E} \rho_E)
    = \sum_{h \ge 0} \mleft(\frac{x^{h + 1}}{7^{h + 1}} + \frac{30 (h + 1) x^h}{7^{h + 2}}\mright)
    = \frac{x}{7 - x} + \frac{30}{(7 - x)^2}, \quad 0 < x < 7.
    \label{eq:single-photon-number-generating-function}
  \end{equation}
  Differentiation gives
  \begin{equation}
    \Tr (N_E \rho_E) = f' (1) = \frac{17}{36}, \quad
    \Var (N_E; \rho_E) = f'' (1) + f' (1) - f' (1)^2 = \frac{587}{1296}.
    \label{eq:single-photon-environment-moments}
  \end{equation}

  Set $s = 21 / 20$ and merge the eigenvalues with $h \ge 9$ into one entry of weight $\sum_{h \ge 9} q_h = 46 / 40353607$.
  Merging decreases R\'enyi entropy, so
  \begin{equation}
    H_s (\rho_E) \ge -20 \log \mleft[\sum_{h = 0}^{8} q_h^{21 / 20} + \mleft(\sum_{h \ge 9} q_h\mright)^{21 / 20}\mright] > 20 \log \frac{20}{19}.
    \label{eq:single-photon-entropy-certificate}
  \end{equation}
  The last inequality follows by direct calculation, as does
  \begin{equation}
    \mleft(\frac{19}{20}\mright)^{720}
    \mleft(\frac{53}{36}\mright)^{53}
    \mleft(\frac{11}{10}\mright)^{36}
    < \mleft(\frac{17}{36}\mright)^{17}.
    \label{eq:single-photon-rational-gap}
  \end{equation}
  Taking logarithms in Eq.~\eqref{eq:single-photon-rational-gap}, dividing by $36$, and using Eq.~\eqref{eq:single-photon-entropy-certificate} gives
  \begin{equation}
    H_s (\rho_E) - d_1 (17 / 36) > \log (11 / 10).
    \label{eq:single-photon-explicit-gap}
  \end{equation}

  Proposition~\ref{prop:finite-copy-certificate} applies to $\K = \{\ket{1}\}$ with $\mu = 17 / 36$ and $s = 21 / 20$.
  Since $\log f_\K (k) / k \to 0$, its bound gives, for each fixed $0 \le \delta < 1$,
  \begin{equation}
    \liminf_{k \to \infty} \frac1k \log \abchiG (\ket{1}^{\otimes k})
    \ge H_s (\rho_E) - d_1 (17 / 36) > \log (11 / 10).
    \label{eq:single-photon-asymptotic-rate}
  \end{equation}
  Therefore $\abchiG (\ket{1}^{\otimes k}) \ge (11 / 10)^k$ for all sufficiently large $k$.
\end{proof}

\begin{proof}[Proof of Eq.~\eqref{equation:single-photon-border-bound}]
  Use the same Gaussian isometry $V$ as in the proof of Corollary~\ref{thm:boson-single-photon-main}.
  By Eqs.~\eqref{eq:beam-splitter-expansion} and~\eqref{eq:two-mode-squeezer-expansion}, the environment vector conditional on system occupation $h$ has support exactly $\{\ket{1, 0}\}$ when $h = 0$, and $\{\ket{1, h}, \ket{0, h - 1}\}$ when $h \ge 1$.
  These supports are disjoint for distinct $h$; for $k$ copies, the environment supports for distinct system occupation tuples are likewise disjoint.
  A photon cutoff preserves this orthogonality, so the Schmidt rank equals the number of system occupation tuples whose environment vectors remain nonzero.

  Fix an integer cutoff $p \ge 0$.
  A zero system occupation requires at least one environment photon, whereas a positive occupation $h$ requires at least $h - 1$.
  If exactly $j$ system occupations are zero, the tuple therefore survives precisely when the excesses of the remaining $k - j$ occupations above one sum to at most $p - j$.
  For $0 \le j \le \min (k, p)$, there are $\binom{k}{j}$ choices of the zero positions and $\binom{k + p - 2 j}{k - j}$ choices of these nonnegative excesses, by the same counting argument as in Eq.~\eqref{eq:gaussian-polynomial-span}.
  Hence
  \begin{equation}
    \SR_{S : E} \mleft(\Pi_{\le p}^E V^{\otimes k} \ket{1}^{\otimes k}\mright)
    = N_{k, p} \coloneqq \sum_{j = 0}^{\min(k, p)} \binom{k}{j} \binom{k + p - 2 j}{k - j}.
  \end{equation}
  Lemma~\ref{lem:photon-cap}, including its extension to border rank, therefore implies
  \begin{equation}
    \bchiG (\ket{1}^{\otimes k}) \ge \frac{N_{k, p}}{\binom{k + p}{p}}.
    \label{equation:single-photon-border-count}
  \end{equation}

  The ratio is maximized at $p = k$.
  To see this, write its $j$th summand as
  \begin{equation}
    A_{k, j} (p) \coloneqq \binom{k}{j} \binom{k + p - 2 j}{k - j} \binom{k + p}{p}^{-1}.
  \end{equation}
  For $1 \le j \le \min(k, p)$,
  \begin{equation}
    \frac{A_{k, j} (p + 1)}{A_{k, j} (p)}
    = \frac{(p + 1)(k + p + 1 - 2 j)}{(p + 1 - j)(k + p + 1)}.
  \end{equation}
  This quotient is greater than, equal to, or less than one according as $p$ is less than, equal to, or greater than $k - 1$.
  The $j = 0$ summand is constant, and additional positive summands enter as $p$ increases up to $k$.
  Hence the full ratio is maximized at $p = k$.

  At this cutoff,
  \begin{equation}
    A_{k, j} (k) = \binom{k}{j} 4^{-j} \prod_{i = 0}^{j - 1} \frac{2 k - 2 i}{2 k - 2 i - 1}.
  \end{equation}
  For $j \ge 1$, every factor in the product exceeds one.
  Summing over $j$ gives
  \begin{equation}
    \frac{N_{k, k}}{\binom{2 k}{k}} > \sum_{j = 0}^k \binom{k}{j} 4^{-j} = \mleft(\frac54\mright)^k,
  \end{equation}
  which proves Eq.~\eqref{equation:single-photon-border-bound}.
\end{proof}


\begin{thebibliography}{50}%
\makeatletter
\providecommand \@ifxundefined [1]{%
 \@ifx{#1\undefined}
}%
\providecommand \@ifnum [1]{%
 \ifnum #1\expandafter \@firstoftwo
 \else \expandafter \@secondoftwo
 \fi
}%
\providecommand \@ifx [1]{%
 \ifx #1\expandafter \@firstoftwo
 \else \expandafter \@secondoftwo
 \fi
}%
\providecommand \natexlab [1]{#1}%
\providecommand \enquote  [1]{``#1''}%
\providecommand \bibnamefont  [1]{#1}%
\providecommand \bibfnamefont [1]{#1}%
\providecommand \citenamefont [1]{#1}%
\providecommand \href@noop [0]{\@secondoftwo}%
\providecommand \href [0]{\begingroup \@sanitize@url \@href}%
\providecommand \@href[1]{\@@startlink{#1}\@@href}%
\providecommand \@@href[1]{\endgroup#1\@@endlink}%
\providecommand \@sanitize@url [0]{\catcode `\\12\catcode `\$12\catcode `\&12\catcode `\#12\catcode `\^12\catcode `\_12\catcode `\%12\relax}%
\providecommand \@@startlink[1]{}%
\providecommand \@@endlink[0]{}%
\providecommand \url  [0]{\begingroup\@sanitize@url \@url }%
\providecommand \@url [1]{\endgroup\@href {#1}{\urlprefix }}%
\providecommand \urlprefix  [0]{URL }%
\providecommand \Eprint [0]{\href }%
\providecommand \doibase [0]{https://doi.org/}%
\providecommand \selectlanguage [0]{\@gobble}%
\providecommand \bibinfo  [0]{\@secondoftwo}%
\providecommand \bibfield  [0]{\@secondoftwo}%
\providecommand \translation [1]{[#1]}%
\providecommand \BibitemOpen [0]{}%
\providecommand \bibitemStop [0]{}%
\providecommand \bibitemNoStop [0]{.\EOS\space}%
\providecommand \EOS [0]{\spacefactor3000\relax}%
\providecommand \BibitemShut  [1]{\csname bibitem#1\endcsname}%
\let\auto@bib@innerbib\@empty
\bibitem [{\citenamefont {Bravyi}\ \emph {et~al.}(2016)\citenamefont {Bravyi}, \citenamefont {Smith},\ and\ \citenamefont {Smolin}}]{bravyi2015tradingclassicalquantumcomputational}%
  \BibitemOpen
  \bibfield  {author} {\bibinfo {author} {\bibfnamefont {S.}~\bibnamefont {Bravyi}}, \bibinfo {author} {\bibfnamefont {G.}~\bibnamefont {Smith}},\ and\ \bibinfo {author} {\bibfnamefont {J.~A.}\ \bibnamefont {Smolin}},\ }\bibfield  {title} {\bibinfo {title} {Trading classical and quantum computational resources},\ }\href {https://doi.org/10.1103/PhysRevX.6.021043} {\bibfield  {journal} {\bibinfo  {journal} {Phys. Rev. X}\ }\textbf {\bibinfo {volume} {6}},\ \bibinfo {pages} {021043} (\bibinfo {year} {2016})}\BibitemShut {NoStop}%
\bibitem [{\citenamefont {Bravyi}\ \emph {et~al.}(2019)\citenamefont {Bravyi}, \citenamefont {Browne}, \citenamefont {Calpin}, \citenamefont {Campbell}, \citenamefont {Gosset},\ and\ \citenamefont {Howard}}]{bravyi2019simulationquantumcircuitslowrank}%
  \BibitemOpen
  \bibfield  {author} {\bibinfo {author} {\bibfnamefont {S.}~\bibnamefont {Bravyi}}, \bibinfo {author} {\bibfnamefont {D.}~\bibnamefont {Browne}}, \bibinfo {author} {\bibfnamefont {P.}~\bibnamefont {Calpin}}, \bibinfo {author} {\bibfnamefont {E.}~\bibnamefont {Campbell}}, \bibinfo {author} {\bibfnamefont {D.}~\bibnamefont {Gosset}},\ and\ \bibinfo {author} {\bibfnamefont {M.}~\bibnamefont {Howard}},\ }\bibfield  {title} {\bibinfo {title} {Simulation of quantum circuits by low-rank stabilizer decompositions},\ }\href {https://doi.org/10.22331/q-2019-09-02-181} {\bibfield  {journal} {\bibinfo  {journal} {{Quantum}}\ }\textbf {\bibinfo {volume} {3}},\ \bibinfo {pages} {181} (\bibinfo {year} {2019})}\BibitemShut {NoStop}%
\bibitem [{\citenamefont {Dias}\ and\ \citenamefont {Koenig}(2025)}]{dias2025samplingcomplexitycoherentsuperpositions}%
  \BibitemOpen
  \bibfield  {author} {\bibinfo {author} {\bibfnamefont {B.}~\bibnamefont {Dias}}\ and\ \bibinfo {author} {\bibfnamefont {R.}~\bibnamefont {Koenig}},\ }\href@noop {} {\bibinfo {title} {On the sampling complexity of coherent superpositions}} (\bibinfo {year} {2025}),\ \Eprint {https://arxiv.org/abs/2501.17071} {arXiv:2501.17071 [quant-ph]} \BibitemShut {NoStop}%
\bibitem [{\citenamefont {Bravyi}\ and\ \citenamefont {Gosset}(2016)}]{bravyi2019improvedclassicalsimulationquantum}%
  \BibitemOpen
  \bibfield  {author} {\bibinfo {author} {\bibfnamefont {S.}~\bibnamefont {Bravyi}}\ and\ \bibinfo {author} {\bibfnamefont {D.}~\bibnamefont {Gosset}},\ }\bibfield  {title} {\bibinfo {title} {Improved classical simulation of quantum circuits dominated by {Clifford} gates},\ }\href {https://doi.org/10.1103/PhysRevLett.116.250501} {\bibfield  {journal} {\bibinfo  {journal} {Phys. Rev. Lett.}\ }\textbf {\bibinfo {volume} {116}},\ \bibinfo {pages} {250501} (\bibinfo {year} {2016})}\BibitemShut {NoStop}%
\bibitem [{\citenamefont {Huang}\ and\ \citenamefont {Love}(2019)}]{huang2018approximatestabilizerrankimproved}%
  \BibitemOpen
  \bibfield  {author} {\bibinfo {author} {\bibfnamefont {Y.}~\bibnamefont {Huang}}\ and\ \bibinfo {author} {\bibfnamefont {P.}~\bibnamefont {Love}},\ }\bibfield  {title} {\bibinfo {title} {Approximate stabilizer rank and improved weak simulation of clifford-dominated circuits for qudits},\ }\href {https://doi.org/10.1103/PhysRevA.99.052307} {\bibfield  {journal} {\bibinfo  {journal} {Phys. Rev. A}\ }\textbf {\bibinfo {volume} {99}},\ \bibinfo {pages} {052307} (\bibinfo {year} {2019})}\BibitemShut {NoStop}%
\bibitem [{\citenamefont {Qassim}\ \emph {et~al.}(2019)\citenamefont {Qassim}, \citenamefont {Wallman},\ and\ \citenamefont {Emerson}}]{qassim2019cliffordrecompilationfasterclassical}%
  \BibitemOpen
  \bibfield  {author} {\bibinfo {author} {\bibfnamefont {H.}~\bibnamefont {Qassim}}, \bibinfo {author} {\bibfnamefont {J.~J.}\ \bibnamefont {Wallman}},\ and\ \bibinfo {author} {\bibfnamefont {J.}~\bibnamefont {Emerson}},\ }\bibfield  {title} {\bibinfo {title} {Clifford recompilation for faster classical simulation of quantum circuits},\ }\href {https://doi.org/10.22331/q-2019-08-05-170} {\bibfield  {journal} {\bibinfo  {journal} {{Quantum}}\ }\textbf {\bibinfo {volume} {3}},\ \bibinfo {pages} {170} (\bibinfo {year} {2019})}\BibitemShut {NoStop}%
\bibitem [{\citenamefont {Kocia}(2022)}]{kocia2022improvedstrongsimulationuniversal}%
  \BibitemOpen
  \bibfield  {author} {\bibinfo {author} {\bibfnamefont {L.}~\bibnamefont {Kocia}},\ }\href@noop {} {\bibinfo {title} {Improved strong simulation of universal quantum circuits}} (\bibinfo {year} {2022}),\ \Eprint {https://arxiv.org/abs/2012.11739} {arXiv:2012.11739 [quant-ph]} \BibitemShut {NoStop}%
\bibitem [{\citenamefont {Huang}\ and\ \citenamefont {Love}(2021)}]{huang2021feynmanpathtypesimulationusing}%
  \BibitemOpen
  \bibfield  {author} {\bibinfo {author} {\bibfnamefont {Y.}~\bibnamefont {Huang}}\ and\ \bibinfo {author} {\bibfnamefont {P.}~\bibnamefont {Love}},\ }\bibfield  {title} {\bibinfo {title} {Feynman-path-type simulation using stabilizer projector decomposition of unitaries},\ }\href {https://doi.org/10.1103/PhysRevA.103.022428} {\bibfield  {journal} {\bibinfo  {journal} {Phys. Rev. A}\ }\textbf {\bibinfo {volume} {103}},\ \bibinfo {pages} {022428} (\bibinfo {year} {2021})}\BibitemShut {NoStop}%
\bibitem [{\citenamefont {Qassim}\ \emph {et~al.}(2021)\citenamefont {Qassim}, \citenamefont {Pashayan},\ and\ \citenamefont {Gosset}}]{qassim2021improvedupperboundsstabilizer}%
  \BibitemOpen
  \bibfield  {author} {\bibinfo {author} {\bibfnamefont {H.}~\bibnamefont {Qassim}}, \bibinfo {author} {\bibfnamefont {H.}~\bibnamefont {Pashayan}},\ and\ \bibinfo {author} {\bibfnamefont {D.}~\bibnamefont {Gosset}},\ }\bibfield  {title} {\bibinfo {title} {Improved upper bounds on the stabilizer rank of magic states},\ }\href {https://doi.org/10.22331/q-2021-12-20-606} {\bibfield  {journal} {\bibinfo  {journal} {{Quantum}}\ }\textbf {\bibinfo {volume} {5}},\ \bibinfo {pages} {606} (\bibinfo {year} {2021})}\BibitemShut {NoStop}%
\bibitem [{\citenamefont {Labib}(2022)}]{labib2022stabilizerrankhigherorderfourier}%
  \BibitemOpen
  \bibfield  {author} {\bibinfo {author} {\bibfnamefont {F.}~\bibnamefont {Labib}},\ }\bibfield  {title} {\bibinfo {title} {Stabilizer rank and higher-order {F}ourier analysis},\ }\href {https://doi.org/10.22331/q-2022-02-09-645} {\bibfield  {journal} {\bibinfo  {journal} {{Quantum}}\ }\textbf {\bibinfo {volume} {6}},\ \bibinfo {pages} {645} (\bibinfo {year} {2022})}\BibitemShut {NoStop}%
\bibitem [{\citenamefont {Peleg}\ \emph {et~al.}(2022)\citenamefont {Peleg}, \citenamefont {Shpilka},\ and\ \citenamefont {Volk}}]{peleg2022lowerboundsstabilizerrank}%
  \BibitemOpen
  \bibfield  {author} {\bibinfo {author} {\bibfnamefont {S.}~\bibnamefont {Peleg}}, \bibinfo {author} {\bibfnamefont {A.}~\bibnamefont {Shpilka}},\ and\ \bibinfo {author} {\bibfnamefont {B.~L.}\ \bibnamefont {Volk}},\ }\bibfield  {title} {\bibinfo {title} {Lower {B}ounds on {S}tabilizer {R}ank},\ }\href {https://doi.org/10.22331/q-2022-02-15-652} {\bibfield  {journal} {\bibinfo  {journal} {{Quantum}}\ }\textbf {\bibinfo {volume} {6}},\ \bibinfo {pages} {652} (\bibinfo {year} {2022})}\BibitemShut {NoStop}%
\bibitem [{\citenamefont {Lovitz}\ and\ \citenamefont {Steffan}(2022)}]{lovitz2022newtechniquesboundingstabilizer}%
  \BibitemOpen
  \bibfield  {author} {\bibinfo {author} {\bibfnamefont {B.}~\bibnamefont {Lovitz}}\ and\ \bibinfo {author} {\bibfnamefont {V.}~\bibnamefont {Steffan}},\ }\bibfield  {title} {\bibinfo {title} {New techniques for bounding stabilizer rank},\ }\href {https://doi.org/10.22331/q-2022-04-20-692} {\bibfield  {journal} {\bibinfo  {journal} {{Quantum}}\ }\textbf {\bibinfo {volume} {6}},\ \bibinfo {pages} {692} (\bibinfo {year} {2022})}\BibitemShut {NoStop}%
\bibitem [{\citenamefont {Kissinger}\ \emph {et~al.}(2022)\citenamefont {Kissinger}, \citenamefont {van~de Wetering},\ and\ \citenamefont {Vilmart}}]{kissinger2022classicalsimulationquantumcircuits}%
  \BibitemOpen
  \bibfield  {author} {\bibinfo {author} {\bibfnamefont {A.}~\bibnamefont {Kissinger}}, \bibinfo {author} {\bibfnamefont {J.}~\bibnamefont {van~de Wetering}},\ and\ \bibinfo {author} {\bibfnamefont {R.}~\bibnamefont {Vilmart}},\ }\bibfield  {title} {\bibinfo {title} {{Classical Simulation of Quantum Circuits with Partial and Graphical Stabiliser Decompositions}},\ }in\ \href {https://doi.org/10.4230/LIPIcs.TQC.2022.5} {\emph {\bibinfo {booktitle} {17th Conference on the Theory of Quantum Computation, Communication and Cryptography (TQC 2022)}}},\ \bibinfo {series} {Leibniz International Proceedings in Informatics (LIPIcs)}, Vol.\ \bibinfo {volume} {232},\ \bibinfo {editor} {edited by\ \bibinfo {editor} {\bibfnamefont {F.}~\bibnamefont {Le~Gall}}\ and\ \bibinfo {editor} {\bibfnamefont {T.}~\bibnamefont {Morimae}}}\ (\bibinfo  {publisher} {Schloss Dagstuhl -- Leibniz-Zentrum f{\"u}r Informatik},\ \bibinfo {address} {Dagstuhl, Germany},\ \bibinfo {year} {2022})\ pp.\ \bibinfo {pages} {5:1--5:13}\BibitemShut {NoStop}%
\bibitem [{\citenamefont {de~Silva}\ \emph {et~al.}(2024)\citenamefont {de~Silva}, \citenamefont {Yin},\ and\ \citenamefont {Strelchuk}}]{desilva2024basesoptimisingstabiliserdecompositions}%
  \BibitemOpen
  \bibfield  {author} {\bibinfo {author} {\bibfnamefont {N.}~\bibnamefont {de~Silva}}, \bibinfo {author} {\bibfnamefont {M.}~\bibnamefont {Yin}},\ and\ \bibinfo {author} {\bibfnamefont {S.}~\bibnamefont {Strelchuk}},\ }\bibfield  {title} {\bibinfo {title} {Bases for optimising stabiliser decompositions of quantum states},\ }\href {https://doi.org/10.1088/2058-9565/ad53fc} {\bibfield  {journal} {\bibinfo  {journal} {Quantum Science and Technology}\ }\textbf {\bibinfo {volume} {9}},\ \bibinfo {pages} {045004} (\bibinfo {year} {2024})}\BibitemShut {NoStop}%
\bibitem [{\citenamefont {Mehraban}\ and\ \citenamefont {Tahmasbi}(2024)}]{mehraban2024quadraticlowerboundsapproximate}%
  \BibitemOpen
  \bibfield  {author} {\bibinfo {author} {\bibfnamefont {S.}~\bibnamefont {Mehraban}}\ and\ \bibinfo {author} {\bibfnamefont {M.}~\bibnamefont {Tahmasbi}},\ }\bibfield  {title} {\bibinfo {title} {Quadratic lower bounds on the approximate stabilizer rank: A probabilistic approach},\ }in\ \href {https://doi.org/10.1145/3618260.3649733} {\emph {\bibinfo {booktitle} {Proceedings of the 56th Annual ACM Symposium on Theory of Computing}}},\ \bibinfo {series and number} {STOC ’24}\ (\bibinfo  {publisher} {ACM},\ \bibinfo {year} {2024})\ pp.\ \bibinfo {pages} {608--619}\BibitemShut {NoStop}%
\bibitem [{\citenamefont {Mehraban}\ and\ \citenamefont {Tahmasbi}(2025)}]{mehraban2024improvedboundstestinglow}%
  \BibitemOpen
  \bibfield  {author} {\bibinfo {author} {\bibfnamefont {S.}~\bibnamefont {Mehraban}}\ and\ \bibinfo {author} {\bibfnamefont {M.}~\bibnamefont {Tahmasbi}},\ }\bibfield  {title} {\bibinfo {title} {Improved bounds for testing low stabilizer complexity states},\ }in\ \href {https://doi.org/10.1145/3717823.3718228} {\emph {\bibinfo {booktitle} {Proceedings of the 57th Annual ACM Symposium on Theory of Computing}}},\ \bibinfo {series and number} {STOC ’25}\ (\bibinfo  {publisher} {ACM},\ \bibinfo {year} {2025})\ pp.\ \bibinfo {pages} {1222--1233}\BibitemShut {NoStop}%
\bibitem [{\citenamefont {Arunachalam}\ and\ \citenamefont {Dutt}(2025)}]{arunachalam2025learningstabilizerstructurequantum}%
  \BibitemOpen
  \bibfield  {author} {\bibinfo {author} {\bibfnamefont {S.}~\bibnamefont {Arunachalam}}\ and\ \bibinfo {author} {\bibfnamefont {A.}~\bibnamefont {Dutt}},\ }\href@noop {} {\bibinfo {title} {Learning stabilizer structure of quantum states}} (\bibinfo {year} {2025}),\ \Eprint {https://arxiv.org/abs/2510.05890} {arXiv:2510.05890 [quant-ph]} \BibitemShut {NoStop}%
\bibitem [{\citenamefont {Kjelstr{\o{}}m}\ \emph {et~al.}(2026)\citenamefont {Kjelstr{\o{}}m}, \citenamefont {Pavlogiannis},\ and\ \citenamefont {van~de Pol}}]{kjelstrøm2026efficientsimulationhighlevelquantum}%
  \BibitemOpen
  \bibfield  {author} {\bibinfo {author} {\bibfnamefont {A.~H.}\ \bibnamefont {Kjelstr{\o{}}m}}, \bibinfo {author} {\bibfnamefont {A.}~\bibnamefont {Pavlogiannis}},\ and\ \bibinfo {author} {\bibfnamefont {J.}~\bibnamefont {van~de Pol}},\ }\bibfield  {title} {\bibinfo {title} {Efficient {S}imulation of {H}igh-{L}evel {Q}uantum {G}ates},\ }\href {https://doi.org/10.22331/q-2026-05-05-2093} {\bibfield  {journal} {\bibinfo  {journal} {{Quantum}}\ }\textbf {\bibinfo {volume} {10}},\ \bibinfo {pages} {2093} (\bibinfo {year} {2026})}\BibitemShut {NoStop}%
\bibitem [{\citenamefont {Gosset}\ \emph {et~al.}(2026)\citenamefont {Gosset}, \citenamefont {Kothari},\ and\ \citenamefont {Wu}}]{gosset2026quantumstatepreparationoptimal}%
  \BibitemOpen
  \bibfield  {author} {\bibinfo {author} {\bibfnamefont {D.}~\bibnamefont {Gosset}}, \bibinfo {author} {\bibfnamefont {R.}~\bibnamefont {Kothari}},\ and\ \bibinfo {author} {\bibfnamefont {K.}~\bibnamefont {Wu}},\ }\bibfield  {title} {\bibinfo {title} {Quantum state preparation with optimal {T}-count},\ }\href {https://doi.org/10.22331/q-2026-07-22-2168} {\bibfield  {journal} {\bibinfo  {journal} {{Quantum}}\ }\textbf {\bibinfo {volume} {10}},\ \bibinfo {pages} {2168} (\bibinfo {year} {2026})}\BibitemShut {NoStop}%
\bibitem [{\citenamefont {Kalra}\ and\ \citenamefont {Sinha}(2026)}]{kalra2026stabilizerranksbarneswall}%
  \BibitemOpen
  \bibfield  {author} {\bibinfo {author} {\bibfnamefont {A.~R.}\ \bibnamefont {Kalra}}\ and\ \bibinfo {author} {\bibfnamefont {P.}~\bibnamefont {Sinha}},\ }\bibfield  {title} {\bibinfo {title} {Stabilizer {R}anks, {B}arnes {W}all {L}attices and {M}agic {M}onotones},\ }\href {https://doi.org/10.22331/q-2026-07-29-2179} {\bibfield  {journal} {\bibinfo  {journal} {{Quantum}}\ }\textbf {\bibinfo {volume} {10}},\ \bibinfo {pages} {2179} (\bibinfo {year} {2026})}\BibitemShut {NoStop}%
\bibitem [{\citenamefont {Labib}\ and\ \citenamefont {Russo}(2026)}]{labib2026stabilizerrankboundsmagicstate}%
  \BibitemOpen
  \bibfield  {author} {\bibinfo {author} {\bibfnamefont {F.}~\bibnamefont {Labib}}\ and\ \bibinfo {author} {\bibfnamefont {V.}~\bibnamefont {Russo}},\ }\href@noop {} {\bibinfo {title} {Stabilizer rank bounds for magic-state orbits}} (\bibinfo {year} {2026}),\ \Eprint {https://arxiv.org/abs/2605.28586} {arXiv:2605.28586 [quant-ph]} \BibitemShut {NoStop}%
\bibitem [{\citenamefont {Valiant}(2002)}]{valiant2002quantumcircuitssimulatedclassicallypolynomialtime}%
  \BibitemOpen
  \bibfield  {author} {\bibinfo {author} {\bibfnamefont {L.~G.}\ \bibnamefont {Valiant}},\ }\bibfield  {title} {\bibinfo {title} {Quantum circuits that can be simulated classically in polynomial time},\ }\href {https://doi.org/10.1137/s0097539700377025} {\bibfield  {journal} {\bibinfo  {journal} {SIAM Journal on Computing}\ }\textbf {\bibinfo {volume} {31}},\ \bibinfo {pages} {1229} (\bibinfo {year} {2002})}\BibitemShut {NoStop}%
\bibitem [{\citenamefont {Terhal}\ and\ \citenamefont {DiVincenzo}(2002)}]{terhal2001classicalsimulationnoninteractingfermionquantum}%
  \BibitemOpen
  \bibfield  {author} {\bibinfo {author} {\bibfnamefont {B.~M.}\ \bibnamefont {Terhal}}\ and\ \bibinfo {author} {\bibfnamefont {D.~P.}\ \bibnamefont {DiVincenzo}},\ }\bibfield  {title} {\bibinfo {title} {Classical simulation of noninteracting-fermion quantum circuits},\ }\href {https://doi.org/10.1103/PhysRevA.65.032325} {\bibfield  {journal} {\bibinfo  {journal} {Phys. Rev. A}\ }\textbf {\bibinfo {volume} {65}},\ \bibinfo {pages} {032325} (\bibinfo {year} {2002})}\BibitemShut {NoStop}%
\bibitem [{\citenamefont {Bravyi}(2005)}]{bravyi2004lagrangianrepresentationfermioniclinear}%
  \BibitemOpen
  \bibfield  {author} {\bibinfo {author} {\bibfnamefont {S.}~\bibnamefont {Bravyi}},\ }\bibfield  {title} {\bibinfo {title} {Lagrangian representation for fermionic linear optics},\ }\href {https://doi.org/10.26421/qic5.3-3} {\bibfield  {journal} {\bibinfo  {journal} {Quantum Information and Computation}\ }\textbf {\bibinfo {volume} {5}},\ \bibinfo {pages} {216} (\bibinfo {year} {2005})}\BibitemShut {NoStop}%
\bibitem [{\citenamefont {Jozsa}\ and\ \citenamefont {Miyake}(2008)}]{jozsa2008matchgatesclassicalsimulationquantum}%
  \BibitemOpen
  \bibfield  {author} {\bibinfo {author} {\bibfnamefont {R.}~\bibnamefont {Jozsa}}\ and\ \bibinfo {author} {\bibfnamefont {A.}~\bibnamefont {Miyake}},\ }\bibfield  {title} {\bibinfo {title} {Matchgates and classical simulation of quantum circuits},\ }\href {https://doi.org/10.1098/rspa.2008.0189} {\bibfield  {journal} {\bibinfo  {journal} {Proceedings of the Royal Society A: Mathematical, Physical and Engineering Sciences}\ }\textbf {\bibinfo {volume} {464}},\ \bibinfo {pages} {3089} (\bibinfo {year} {2008})}\BibitemShut {NoStop}%
\bibitem [{\citenamefont {Bartlett}\ \emph {et~al.}(2002)\citenamefont {Bartlett}, \citenamefont {Sanders}, \citenamefont {Braunstein},\ and\ \citenamefont {Nemoto}}]{bartlett2002efficientclassicalsimulationcontinuous}%
  \BibitemOpen
  \bibfield  {author} {\bibinfo {author} {\bibfnamefont {S.~D.}\ \bibnamefont {Bartlett}}, \bibinfo {author} {\bibfnamefont {B.~C.}\ \bibnamefont {Sanders}}, \bibinfo {author} {\bibfnamefont {S.~L.}\ \bibnamefont {Braunstein}},\ and\ \bibinfo {author} {\bibfnamefont {K.}~\bibnamefont {Nemoto}},\ }\bibfield  {title} {\bibinfo {title} {Efficient classical simulation of continuous variable quantum information processes},\ }\href {https://doi.org/10.1103/PhysRevLett.88.097904} {\bibfield  {journal} {\bibinfo  {journal} {Phys. Rev. Lett.}\ }\textbf {\bibinfo {volume} {88}},\ \bibinfo {pages} {097904} (\bibinfo {year} {2002})}\BibitemShut {NoStop}%
\bibitem [{\citenamefont {Hebenstreit}\ \emph {et~al.}(2019)\citenamefont {Hebenstreit}, \citenamefont {Jozsa}, \citenamefont {Kraus}, \citenamefont {Strelchuk},\ and\ \citenamefont {Yoganathan}}]{hebenstreit2019purefermionicnongaussianstates}%
  \BibitemOpen
  \bibfield  {author} {\bibinfo {author} {\bibfnamefont {M.}~\bibnamefont {Hebenstreit}}, \bibinfo {author} {\bibfnamefont {R.}~\bibnamefont {Jozsa}}, \bibinfo {author} {\bibfnamefont {B.}~\bibnamefont {Kraus}}, \bibinfo {author} {\bibfnamefont {S.}~\bibnamefont {Strelchuk}},\ and\ \bibinfo {author} {\bibfnamefont {M.}~\bibnamefont {Yoganathan}},\ }\bibfield  {title} {\bibinfo {title} {All pure fermionic non-{Gaussian} states are magic states for matchgate computations},\ }\href {https://doi.org/10.1103/PhysRevLett.123.080503} {\bibfield  {journal} {\bibinfo  {journal} {Phys. Rev. Lett.}\ }\textbf {\bibinfo {volume} {123}},\ \bibinfo {pages} {080503} (\bibinfo {year} {2019})}\BibitemShut {NoStop}%
\bibitem [{\citenamefont {Lloyd}\ and\ \citenamefont {Braunstein}(1999)}]{lloyd1998quantumcomputationcontinuousvariables}%
  \BibitemOpen
  \bibfield  {author} {\bibinfo {author} {\bibfnamefont {S.}~\bibnamefont {Lloyd}}\ and\ \bibinfo {author} {\bibfnamefont {S.~L.}\ \bibnamefont {Braunstein}},\ }\bibfield  {title} {\bibinfo {title} {Quantum computation over continuous variables},\ }\href {https://doi.org/10.1103/PhysRevLett.82.1784} {\bibfield  {journal} {\bibinfo  {journal} {Phys. Rev. Lett.}\ }\textbf {\bibinfo {volume} {82}},\ \bibinfo {pages} {1784} (\bibinfo {year} {1999})}\BibitemShut {NoStop}%
\bibitem [{\citenamefont {Gottesman}\ \emph {et~al.}(2001)\citenamefont {Gottesman}, \citenamefont {Kitaev},\ and\ \citenamefont {Preskill}}]{gottesman2001encodingqubitoscillator}%
  \BibitemOpen
  \bibfield  {author} {\bibinfo {author} {\bibfnamefont {D.}~\bibnamefont {Gottesman}}, \bibinfo {author} {\bibfnamefont {A.}~\bibnamefont {Kitaev}},\ and\ \bibinfo {author} {\bibfnamefont {J.}~\bibnamefont {Preskill}},\ }\bibfield  {title} {\bibinfo {title} {Encoding a qubit in an oscillator},\ }\href {https://doi.org/10.1103/PhysRevA.64.012310} {\bibfield  {journal} {\bibinfo  {journal} {Phys. Rev. A}\ }\textbf {\bibinfo {volume} {64}},\ \bibinfo {pages} {012310} (\bibinfo {year} {2001})}\BibitemShut {NoStop}%
\bibitem [{\citenamefont {Cudby}\ and\ \citenamefont {Strelchuk}(2025)}]{cudby2025gaussiandecompositionmagicstates}%
  \BibitemOpen
  \bibfield  {author} {\bibinfo {author} {\bibfnamefont {J.}~\bibnamefont {Cudby}}\ and\ \bibinfo {author} {\bibfnamefont {S.}~\bibnamefont {Strelchuk}},\ }\href@noop {} {\bibinfo {title} {Gaussian decomposition of magic states for matchgate computations}} (\bibinfo {year} {2025}),\ \Eprint {https://arxiv.org/abs/2307.12654} {arXiv:2307.12654 [quant-ph]} \BibitemShut {NoStop}%
\bibitem [{\citenamefont {Dias}\ and\ \citenamefont {Koenig}(2024)}]{dias2024classicalsimulationnongaussianfermionic}%
  \BibitemOpen
  \bibfield  {author} {\bibinfo {author} {\bibfnamefont {B.}~\bibnamefont {Dias}}\ and\ \bibinfo {author} {\bibfnamefont {R.}~\bibnamefont {Koenig}},\ }\bibfield  {title} {\bibinfo {title} {Classical simulation of non-{G}aussian fermionic circuits},\ }\href {https://doi.org/10.22331/q-2024-05-21-1350} {\bibfield  {journal} {\bibinfo  {journal} {{Quantum}}\ }\textbf {\bibinfo {volume} {8}},\ \bibinfo {pages} {1350} (\bibinfo {year} {2024})}\BibitemShut {NoStop}%
\bibitem [{\citenamefont {Reardon-Smith}\ \emph {et~al.}(2024)\citenamefont {Reardon-Smith}, \citenamefont {Oszmaniec},\ and\ \citenamefont {Korzekwa}}]{reardonsmith2024improvedsimulationquantumcircuits}%
  \BibitemOpen
  \bibfield  {author} {\bibinfo {author} {\bibfnamefont {O.}~\bibnamefont {Reardon-Smith}}, \bibinfo {author} {\bibfnamefont {M.}~\bibnamefont {Oszmaniec}},\ and\ \bibinfo {author} {\bibfnamefont {K.}~\bibnamefont {Korzekwa}},\ }\bibfield  {title} {\bibinfo {title} {Improved simulation of quantum circuits dominated by free fermionic operations},\ }\href {https://doi.org/10.22331/q-2024-12-04-1549} {\bibfield  {journal} {\bibinfo  {journal} {{Quantum}}\ }\textbf {\bibinfo {volume} {8}},\ \bibinfo {pages} {1549} (\bibinfo {year} {2024})}\BibitemShut {NoStop}%
\bibitem [{\citenamefont {Dias}\ and\ \citenamefont {K{\"o}nig}(2024{\natexlab{a}})}]{dias2025classicalsimulationnongaussianbosonic}%
  \BibitemOpen
  \bibfield  {author} {\bibinfo {author} {\bibfnamefont {B.}~\bibnamefont {Dias}}\ and\ \bibinfo {author} {\bibfnamefont {R.}~\bibnamefont {K{\"o}nig}},\ }\bibfield  {title} {\bibinfo {title} {Classical simulation of non-{Gaussian} bosonic circuits},\ }\href {https://doi.org/10.1103/PhysRevA.110.042402} {\bibfield  {journal} {\bibinfo  {journal} {Phys. Rev. A}\ }\textbf {\bibinfo {volume} {110}},\ \bibinfo {pages} {042402} (\bibinfo {year} {2024}{\natexlab{a}})},\ \bibinfo {note} {see also Erratum, Phys. Rev. A 113, 059901 (2026)}\BibitemShut {NoStop}%
\bibitem [{\citenamefont {Dias}\ and\ \citenamefont {K{\"o}nig}(2024{\natexlab{b}})}]{dias2026erratumclassicalsimulationnongaussianbosoniccircuits}%
  \BibitemOpen
  \bibfield  {author} {\bibinfo {author} {\bibfnamefont {B.}~\bibnamefont {Dias}}\ and\ \bibinfo {author} {\bibfnamefont {R.}~\bibnamefont {K{\"o}nig}},\ }\bibfield  {title} {\bibinfo {title} {Classical simulation of non-gaussian bosonic circuits},\ }\href {https://doi.org/10.1103/PhysRevA.110.042402} {\bibfield  {journal} {\bibinfo  {journal} {Phys. Rev. A}\ }\textbf {\bibinfo {volume} {110}},\ \bibinfo {pages} {042402} (\bibinfo {year} {2024}{\natexlab{b}})}\BibitemShut {NoStop}%
\bibitem [{\citenamefont {Hahn}\ \emph {et~al.}(2025)\citenamefont {Hahn}, \citenamefont {Takagi}, \citenamefont {Ferrini},\ and\ \citenamefont {Yamasaki}}]{hahn2026classicalsimulationquantumresource}%
  \BibitemOpen
  \bibfield  {author} {\bibinfo {author} {\bibfnamefont {O.}~\bibnamefont {Hahn}}, \bibinfo {author} {\bibfnamefont {R.}~\bibnamefont {Takagi}}, \bibinfo {author} {\bibfnamefont {G.}~\bibnamefont {Ferrini}},\ and\ \bibinfo {author} {\bibfnamefont {H.}~\bibnamefont {Yamasaki}},\ }\bibfield  {title} {\bibinfo {title} {Classical simulation and quantum resource theory of non-{G}aussian optics},\ }\href {https://doi.org/10.22331/q-2025-10-13-1881} {\bibfield  {journal} {\bibinfo  {journal} {{Quantum}}\ }\textbf {\bibinfo {volume} {9}},\ \bibinfo {pages} {1881} (\bibinfo {year} {2025})}\BibitemShut {NoStop}%
\bibitem [{\citenamefont {Dias}\ \emph {et~al.}(2026)\citenamefont {Dias}, \citenamefont {Bosse},\ and\ \citenamefont {Seddon}}]{dias2026optimalimprovedgatedecompositions}%
  \BibitemOpen
  \bibfield  {author} {\bibinfo {author} {\bibfnamefont {B.}~\bibnamefont {Dias}}, \bibinfo {author} {\bibfnamefont {J.~L.}\ \bibnamefont {Bosse}},\ and\ \bibinfo {author} {\bibfnamefont {J.~R.}\ \bibnamefont {Seddon}},\ }\href@noop {} {\bibinfo {title} {Optimal and improved gate decompositions for accelerated classical simulation of near-gaussian fermionic circuits}} (\bibinfo {year} {2026}),\ \Eprint {https://arxiv.org/abs/2603.18869} {arXiv:2603.18869 [quant-ph]} \BibitemShut {NoStop}%
\bibitem [{\citenamefont {Cottier}\ and\ \citenamefont {Chabaud}(2026)}]{cottier2026lowerboundscoherentstate}%
  \BibitemOpen
  \bibfield  {author} {\bibinfo {author} {\bibfnamefont {F.}~\bibnamefont {Cottier}}\ and\ \bibinfo {author} {\bibfnamefont {U.}~\bibnamefont {Chabaud}},\ }\href@noop {} {\bibinfo {title} {Lower bounds on coherent state rank}} (\bibinfo {year} {2026}),\ \Eprint {https://arxiv.org/abs/2604.00766} {arXiv:2604.00766 [quant-ph]} \BibitemShut {NoStop}%
\bibitem [{\citenamefont {Chabaud}\ \emph {et~al.}(2021)\citenamefont {Chabaud}, \citenamefont {Ferrini}, \citenamefont {Grosshans},\ and\ \citenamefont {Markham}}]{chabaud2021classicalsimulationgaussianquantum}%
  \BibitemOpen
  \bibfield  {author} {\bibinfo {author} {\bibfnamefont {U.}~\bibnamefont {Chabaud}}, \bibinfo {author} {\bibfnamefont {G.}~\bibnamefont {Ferrini}}, \bibinfo {author} {\bibfnamefont {F.}~\bibnamefont {Grosshans}},\ and\ \bibinfo {author} {\bibfnamefont {D.}~\bibnamefont {Markham}},\ }\bibfield  {title} {\bibinfo {title} {Classical simulation of {Gaussian} quantum circuits with non-{Gaussian} input states},\ }\href {https://doi.org/10.1103/PhysRevResearch.3.033018} {\bibfield  {journal} {\bibinfo  {journal} {Phys. Rev. Res.}\ }\textbf {\bibinfo {volume} {3}},\ \bibinfo {pages} {033018} (\bibinfo {year} {2021})}\BibitemShut {NoStop}%
\bibitem [{\citenamefont {Hackl}\ and\ \citenamefont {Bianchi}(2021)}]{hackl2024bosonicfermionicgaussianstates}%
  \BibitemOpen
  \bibfield  {author} {\bibinfo {author} {\bibfnamefont {L.}~\bibnamefont {Hackl}}\ and\ \bibinfo {author} {\bibfnamefont {E.}~\bibnamefont {Bianchi}},\ }\bibfield  {title} {\bibinfo {title} {Bosonic and fermionic gaussian states from kähler structures},\ }\href {https://doi.org/10.21468/SciPostPhysCore.4.3.025} {\bibfield  {journal} {\bibinfo  {journal} {SciPost Phys. Core}\ }\textbf {\bibinfo {volume} {4}},\ \bibinfo {pages} {025} (\bibinfo {year} {2021})}\BibitemShut {NoStop}%
\bibitem [{\citenamefont {Chabaud}\ and\ \citenamefont {Mehraban}(2022)}]{chabaud2022holomorphicrepresentationquantumcomputations}%
  \BibitemOpen
  \bibfield  {author} {\bibinfo {author} {\bibfnamefont {U.}~\bibnamefont {Chabaud}}\ and\ \bibinfo {author} {\bibfnamefont {S.}~\bibnamefont {Mehraban}},\ }\bibfield  {title} {\bibinfo {title} {Holomorphic representation of quantum computations},\ }\href {https://doi.org/10.22331/q-2022-10-06-831} {\bibfield  {journal} {\bibinfo  {journal} {{Quantum}}\ }\textbf {\bibinfo {volume} {6}},\ \bibinfo {pages} {831} (\bibinfo {year} {2022})}\BibitemShut {NoStop}%
\bibitem [{\citenamefont {Lami}\ \emph {et~al.}(2018)\citenamefont {Lami}, \citenamefont {Regula}, \citenamefont {Wang}, \citenamefont {Nichols}, \citenamefont {Winter},\ and\ \citenamefont {Adesso}}]{lami2018gaussianquantumresourcetheories}%
  \BibitemOpen
  \bibfield  {author} {\bibinfo {author} {\bibfnamefont {L.}~\bibnamefont {Lami}}, \bibinfo {author} {\bibfnamefont {B.}~\bibnamefont {Regula}}, \bibinfo {author} {\bibfnamefont {X.}~\bibnamefont {Wang}}, \bibinfo {author} {\bibfnamefont {R.}~\bibnamefont {Nichols}}, \bibinfo {author} {\bibfnamefont {A.}~\bibnamefont {Winter}},\ and\ \bibinfo {author} {\bibfnamefont {G.}~\bibnamefont {Adesso}},\ }\bibfield  {title} {\bibinfo {title} {Gaussian quantum resource theories},\ }\href {https://doi.org/10.1103/PhysRevA.98.022335} {\bibfield  {journal} {\bibinfo  {journal} {Phys. Rev. A}\ }\textbf {\bibinfo {volume} {98}},\ \bibinfo {pages} {022335} (\bibinfo {year} {2018})}\BibitemShut {NoStop}%
\bibitem [{\citenamefont {Landsberg}\ and\ \citenamefont {Manivel}(2003)}]{landsberg2002projectivegeometryhomogeneousspaces}%
  \BibitemOpen
  \bibfield  {author} {\bibinfo {author} {\bibfnamefont {J.}~\bibnamefont {Landsberg}}\ and\ \bibinfo {author} {\bibfnamefont {L.}~\bibnamefont {Manivel}},\ }\bibfield  {title} {\bibinfo {title} {On the projective geometry of rational homogeneous varieties},\ }\href {https://doi.org/10.1007/s000140300003} {\bibfield  {journal} {\bibinfo  {journal} {Commentarii Mathematici Helvetici}\ }\textbf {\bibinfo {volume} {78}},\ \bibinfo {pages} {65} (\bibinfo {year} {2003})}\BibitemShut {NoStop}%
\bibitem [{\citenamefont {Freire}\ \emph {et~al.}(2020)\citenamefont {Freire}, \citenamefont {Massarenti},\ and\ \citenamefont {Rischter}}]{freire2018projectiveaspectsgeometrylagrangian}%
  \BibitemOpen
  \bibfield  {author} {\bibinfo {author} {\bibfnamefont {A.~B.}\ \bibnamefont {Freire}}, \bibinfo {author} {\bibfnamefont {A.}~\bibnamefont {Massarenti}},\ and\ \bibinfo {author} {\bibfnamefont {R.}~\bibnamefont {Rischter}},\ }\bibfield  {title} {\bibinfo {title} {Projective aspects of the geometry of lagrangian grassmannians and spinor varieties},\ }\href {https://doi.org/10.1016/j.bulsci.2019.102829} {\bibfield  {journal} {\bibinfo  {journal} {Bulletin des Sciences Mathématiques}\ }\textbf {\bibinfo {volume} {159}},\ \bibinfo {pages} {102829} (\bibinfo {year} {2020})}\BibitemShut {NoStop}%
\bibitem [{\citenamefont {Manivel}(2009)}]{manivel2009spinorvarietiessecants}%
  \BibitemOpen
  \bibfield  {author} {\bibinfo {author} {\bibfnamefont {L.}~\bibnamefont {Manivel}},\ }\bibfield  {title} {\bibinfo {title} {On spinor varieties and their secants},\ }\href {https://doi.org/10.3842/sigma.2009.078} {\bibfield  {journal} {\bibinfo  {journal} {Symmetry, Integrability and Geometry: Methods and Applications}\ }\textbf {\bibinfo {volume} {5}},\ \bibinfo {pages} {078} (\bibinfo {year} {2009})}\BibitemShut {NoStop}%
\bibitem [{\citenamefont {Bhoja}\ and\ \citenamefont {Krasnov}(2025)}]{bhoja2023spinorspurespinors}%
  \BibitemOpen
  \bibfield  {author} {\bibinfo {author} {\bibfnamefont {N.}~\bibnamefont {Bhoja}}\ and\ \bibinfo {author} {\bibfnamefont {K.}~\bibnamefont {Krasnov}},\ }\bibfield  {title} {\bibinfo {title} {Spinors from pure spinors},\ }\href {https://doi.org/10.1093/qmath/haaf025} {\bibfield  {journal} {\bibinfo  {journal} {The Quarterly Journal of Mathematics}\ }\textbf {\bibinfo {volume} {76}},\ \bibinfo {pages} {659} (\bibinfo {year} {2025})}\BibitemShut {NoStop}%
\bibitem [{\citenamefont {Eckart}\ and\ \citenamefont {Young}(1936)}]{eckart1936approximationonematrixanotherlowerrank}%
  \BibitemOpen
  \bibfield  {author} {\bibinfo {author} {\bibfnamefont {C.}~\bibnamefont {Eckart}}\ and\ \bibinfo {author} {\bibfnamefont {G.}~\bibnamefont {Young}},\ }\bibfield  {title} {\bibinfo {title} {The approximation of one matrix by another of lower rank},\ }\href {https://doi.org/10.1007/BF02288367} {\bibfield  {journal} {\bibinfo  {journal} {Psychometrika}\ }\textbf {\bibinfo {volume} {1}},\ \bibinfo {pages} {211} (\bibinfo {year} {1936})}\BibitemShut {NoStop}%
\bibitem [{\citenamefont {Mirsky}(1960)}]{mirsky1960symmetricgaugefunctionsunitarilyinvariantnorms}%
  \BibitemOpen
  \bibfield  {author} {\bibinfo {author} {\bibfnamefont {L.}~\bibnamefont {Mirsky}},\ }\bibfield  {title} {\bibinfo {title} {Symmetric gauge functions and unitarily invariant norms},\ }\href {https://doi.org/10.1093/qmath/11.1.50} {\bibfield  {journal} {\bibinfo  {journal} {The Quarterly Journal of Mathematics}\ }\textbf {\bibinfo {volume} {11}},\ \bibinfo {pages} {50} (\bibinfo {year} {1960})}\BibitemShut {NoStop}%
\bibitem [{\citenamefont {Flajolet}\ and\ \citenamefont {Sedgewick}(2009)}]{flajolet2009analyticcombinatorics}%
  \BibitemOpen
  \bibfield  {author} {\bibinfo {author} {\bibfnamefont {P.}~\bibnamefont {Flajolet}}\ and\ \bibinfo {author} {\bibfnamefont {R.}~\bibnamefont {Sedgewick}},\ }\href@noop {} {\emph {\bibinfo {title} {Analytic Combinatorics}}}\ (\bibinfo  {publisher} {Cambridge University Press},\ \bibinfo {address} {Cambridge},\ \bibinfo {year} {2009})\BibitemShut {NoStop}%
\bibitem [{\citenamefont {Gigena}\ and\ \citenamefont {Rossignoli}(2015)}]{gigena2015entanglementfermionsystems}%
  \BibitemOpen
  \bibfield  {author} {\bibinfo {author} {\bibfnamefont {N.}~\bibnamefont {Gigena}}\ and\ \bibinfo {author} {\bibfnamefont {R.}~\bibnamefont {Rossignoli}},\ }\bibfield  {title} {\bibinfo {title} {Entanglement in fermion systems},\ }\href {https://doi.org/10.1103/PhysRevA.92.042326} {\bibfield  {journal} {\bibinfo  {journal} {Phys. Rev. A}\ }\textbf {\bibinfo {volume} {92}},\ \bibinfo {pages} {042326} (\bibinfo {year} {2015})}\BibitemShut {NoStop}%
\bibitem [{\citenamefont {Gr{\"o}chenig}\ \emph {et~al.}(2020)\citenamefont {Gr{\"o}chenig}, \citenamefont {Jaming},\ and\ \citenamefont {Malinnikova}}]{gröchenig2018zeroswignerdistributionshorttime}%
  \BibitemOpen
  \bibfield  {author} {\bibinfo {author} {\bibfnamefont {K.}~\bibnamefont {Gr{\"o}chenig}}, \bibinfo {author} {\bibfnamefont {P.}~\bibnamefont {Jaming}},\ and\ \bibinfo {author} {\bibfnamefont {E.}~\bibnamefont {Malinnikova}},\ }\bibfield  {title} {\bibinfo {title} {Zeros of the {Wigner} distribution and the short-time {Fourier} transform},\ }\href {https://doi.org/10.1007/s13163-019-00335-w} {\bibfield  {journal} {\bibinfo  {journal} {Revista Matem{\'a}tica Complutense}\ }\textbf {\bibinfo {volume} {33}},\ \bibinfo {pages} {723} (\bibinfo {year} {2020})}\BibitemShut {NoStop}%
\end{thebibliography}
\end{document}